\providecommand{\tabularnewline}{\\}
\theoremstyle{plain}
\newtheorem{thm}{\protect\theoremname}
  \theoremstyle{plain}
  \newtheorem{prop}[thm]{\protect\propositionname}
  \theoremstyle{remark}
  \newtheorem{rem}[thm]{\protect\remarkname}
  \theoremstyle{definition}
  \newtheorem{example}[thm]{\protect\examplename}
  \theoremstyle{plain}
  \newtheorem{cor}[thm]{\protect\corollaryname}
\let\@fnsymbol\@arabic
  \providecommand{\corollaryname}{Corollary}
  \providecommand{\examplename}{Example}
  \providecommand{\propositionname}{Proposition}
  \providecommand{\remarkname}{Remark}
\providecommand{\theoremname}{Theorem}
\begin{document}
\global\long\def\u{u}
\global\long\def\p{P}
\global\long\def\X{X}
\global\long\def\me{\mu_{e}}
\global\long\def\sym{\textrm{sym}}
\global\long\def\grad{\nabla}
\global\long\def\le{\lambda_{e}}
\global\long\def\tr{\textrm{tr}}
\global\long\def\mc{\mu_{c}}
\global\long\def\skew{\textrm{skew}}
\global\long\def\curl{\textrm{Curl}}
\global\long\def\ac{\alpha_{c}}
\global\long\def\mh{\mu_{\mathrm{mic}}}
\global\long\def\lh{\lambda_{\textrm{mic}}}
\global\long\def\B{\mathscr{B}}
\global\long\def\R{\mathbb{R}}
\global\long\def\fr{\rightarrow}
\global\long\def\Q{\mathcal{Q}}
\global\long\def\A{\mathscr{A}}
\global\long\def\L{\mathscr{L}}
\global\long\def\D{\mathscr{D}}
\foreignlanguage{italian}{}\global\long\def\id{\mathds{1}}
\global\long\def\ds{\textrm{dev sym}}
\global\long\def\sph{\textrm{sph}}
\global\long\def\eg{\boldsymbol{\varepsilon}}
\global\long\def\aa{\boldsymbol{\alpha}}
\global\long\def\o{\boldsymbol{\omega}}
\global\long\def\ege{\boldsymbol{\varepsilon}_{e}}
\global\long\def\egp{\boldsymbol{\varepsilon}_{p}}
\global\long\def\punto{\,.\,}
\global\long\def\so{\mathfrak{so}\left(3\right)}
\global\long\def\Sym{\textrm{Sym}\left(3\right)}
\global\long\def\MM{\boldsymbol{\mathfrak{M}}}
\global\long\def\C{\mathbb{C}}
\global\long\def\gl{\mathfrak{gl}\left(3\right)}
\global\long\def\P{P}
\global\long\def\Lin{\textrm{Lin}}
\global\long\def\D{\boldsymbol{D}}
\global\long\def\a{\alpha}
\global\long\def\b{\beta}
\global\long\def\lle{\lambda_{e}}
\global\long\def\ce{\mathbb{C}_{e}}
\global\long\def\cm{\mathbb{C}_{\mathrm{micro}}}
\global\long\def\cc{\mathbb{C}_{c}}
\global\long\def\axl{\textrm{axl}}
\global\long\def\dev{\textrm{dev}}
\global\long\def\Ls{\widehat{\mathbb{L}}}
\global\long\def\mum{\mu_{\mathrm{macro}}}
\global\long\def\lam{\lambda_{\mathrm{macro}}}
\global\long\def\mh{\mu_{\mathrm{micro}}}
\global\long\def\lh{\lambda_{\textrm{micro}}}
\global\long\def\vau{\omega_{1}^{int}}
\global\long\def\vad{\omega_{2}^{int}}
\global\long\def\axl{\textrm{axl}}

\title{A panorama of dispersion curves for the weighted isotropic relaxed
micromorphic model}

\author{Marco Valerio d\textquoteright Agostino\thanks{Marco Valerio d'Agostino, corresponding author, marco-valerio.dagostino@insa-lyon.fr,
LGCIE, INSA-Lyon, Université de Lyon, 20 avenue Albert Einstein, 69621,
Villeurbanne cedex, France} \,and Gabriele Barbagallo\thanks{Gabriele Barbagallo, gabriele.barbagallo@insa-lyon.fr, LaMCoS-CNRS
\& LGCIE, INSA-Lyon, Universitité de Lyon, 20 avenue Albert Einstein,
69621, Villeurbanne cedex, France} \,and Ionel-Dumitrel Ghiba\thanks{Ionel-Dumitrel Ghiba, dumitrel.ghiba@uaic.ro, Lehrstuhl für Nichtlineare
Analysis und Modellierung, Fakultät für Mathematik, Universität Duisburg-Essen,
Thea-Leymann Str. 9, 45127 Essen, Germany; Alexandru Ioan Cuza University
of Ia\c{s}i, Department of Mathematics, Blvd. Carol I, no. 11, 700506
Ia\c{s}i, Romania; and Octav Mayer Institute of Mathematics of the
Romanian Academy, Ia\c{s}i Branch, 700505 Ia\c{s}i, Romania} \\
and Angela Madeo\thanks{Angela Madeo, angela.madeo@insa-lyon.fr, LGCIE, INSA-Lyon, Université
de Lyon, 20 avenue Albert Einstein, 69621, Villeurbanne cedex, France} $\;$and Patrizio Neff\,\thanks{Patrizio Neff, corresponding author, patrizio.neff@uni-due.de, Head
of Chair for Nonlinear Analysis and Modelling, Fakultät für Mathematik,
Universität Duisburg-Essen, Mathematik-Carrée, Thea-Leymann-Straße
9, 45127 Essen, Germany}\textit{}\linebreak{}
\textit{}\\
\textit{}\\
\textit{Dedicated to David J. Steigmann on the occasion of his 60th
birthday}}
\maketitle
\begin{abstract}
We consider the weighted isotropic relaxed micromorphic model and
provide an in depth investigation of the characteristic dispersion
curves when the constitutive parameters of the model are varied. The
weighted relaxed micromorphic model generalizes the classical relaxed
micromorphic model previously introduced by the authors, since it
features the Cartan-Lie decomposition of the tensors $\P_{,t}$ and
$\curl\,\P$ in their $\dev\,\sym$, $\skew$ and spheric part. It
is shown that the split of the tensor $\P_{,t}$ in the micro-inertia
provide an independent control of the cut-offs of the optic banches.
This is crucial for the future calibration of the relaxed micromorphic
model on real band-gap metamaterials.

Even if the physical interest of the introduction of the split of
the tensor $\curl\,\P$ is less evident than in the previous case,
we discuss in detail which is its effect on the dispersion curves.
Finally, we also provide a complete parametric study involving all
the constitutive parameters of the introduced model, so giving rise
to an exhaustive panorama of dispersion curves for the relaxed micromorphic
model.
\end{abstract}
\addtocounter{footnote}{5} \vspace{6mm}
\textbf{Keywords:} planar harmonic waves, relaxed micromorphic model,
generalized continua, dynamic problem, micro-elasticity, size effects,
wave propagation, band gaps.

\vspace{2mm}
\textbf{}\\
\textbf{AMS 2010 subject classification:} 74A10 (stress), 74A30 (nonsimple
materials), 74A35 (polar materials), 74A60 (micromechanical theories),
74B05 (classical linear elasticity), 74M25 (micromechanics), 74Q15
(effective constitutive equations), 74J05 (Linear waves).

\newpage{}

\tableofcontents{}

\section*{Introduction}

The micromorphic framework is increasingly used as an algorithmic
device to regularize gradient-elasticity or gradient plasticity models
(see e.g. \cite{forest2009micromorphic,forest2016nonlinear}). In
these cases, the problem of understanding the genuine physical meaning
which can be associated to micromorphic models does not arise, since
the micromorphic framework is simply used as a tool for the regularization
of higher order models. With a completely different perspective, in
a series of works \cite{madeo2015band,madeo2015wave,madeo2016first,madeo2016reflection},
we started looking for real situations in which micromorphic models
can be used to properly convey important physical informations to
the modeling of the actual mechanical behavior of some microstructured
materials. More particularly, we focused our attention on the newly
introduced relaxed micromorphic model\footnote{We use the term \textbf{relaxed} in its proper english meaning and
not in the sense of finding the lower semi-continuous hull. Indeed,
the relaxed micromorphic continuum is always lower semi-continuous,
but, contrary to the classical micromorphic model, the assumption
on the constitutive coefficients are much weakened (relaxed). Notably,
constraining the micro-distortion $P=\nabla u$ does not lead to a
second-gradient model but leads back to classical linear elasticity
without characteristic length scale.} (see \cite{barbagallo2016transparent,ghiba2015relaxed,neff2004material,neff2006existence,neff2014existence,neff2014unifying,neff2015relaxed})
to investigate the unorthodox dynamical properties of band-gap metamaterials,
i.e. microstructured materials which are able to inhibit wave propagation
in precise frequency ranges. Similarly to the classical micromorphic
models originally introduced by Mindlin and Eringen \cite{mindlin1964micro,eringen2012microcontinuum},
the relaxed micromorphic model features an enriched kinematics to
account e.g. for microscopic motions in the interior of the considered
macroscopic continuum. Additionally to the classical macroscopic \textit{displacement}
vector field $u(x,t)$, the micromorphic models typically introduce
supplementary, microstructure-related, degrees of freedom by means
of a second order tensor field $P(x,t)$ which is known as \textit{micro-distortion}
tensor. The relaxed micromorphic model differs from more classical
micromorphic ones in the sense that the higher order space derivatives
of the field $P$ are constitutively introduced in the strain energy
density not through the whole gradient of $P$, but only through its
$\mathrm{Curl}$. The fact of using the $\mathrm{Curl}$ of the micro-distortion
tensor is rather common when dealing with dislocation based gradient
plasticity (see e.g. \cite{cordero2010size,berdichevskii1967dynamic,birsan2016dislocation,claus1969three,claus1971dislocation,ebobisse2010existence,ebobisse2015existence,ebobisse2016canonical,eringen1969micromorphic,kroner1discussion,neff2009notes,nesenenko2012well,popov1994dynamics,svendsen2009constitutive}),
but this is indeed not the case when considering pure elasticity in
which the standard formulations commonly introduce the whole gradient
$\nabla\,P$ of the micro-distortion tensor $P$. As a matter of fact,
the use of micromorphic models which only consider the $\mathrm{Curl}$
of the micro-distortion in a purely linear-elastic framework can shed
light on the modeling of non-local metamaterials which exhibit band-gap
behaviors \cite{madeo2015band,madeo2015wave,madeo2016first,madeo2016reflection,madeo2016complete,madeo2016review,madeo2016role}.

In the present work, we provide a generalization of the isotropic
relaxed micromorphic model used in \cite{madeo2015band,madeo2015wave,madeo2016first,madeo2016reflection}
based on the Cartan-Lie decomposition of the micro-distortion tensor
$P$ and of its $\mathrm{Curl}$. Such decomposition allows us to
introduce, in the isotropic setting, three parameters for the micro-inertia
and three internal length associated to the space derivatives of $P$
appearing through $\curl\,\P$. If the physical meaning of the three
micro-inertia parameters may be rather intuitively related to a distinction
of the weights attributed to the distortional, rotational and volumetric
expansion vibration modes at the level of the unit cell, a clear interpretation
of the introduction of three different characteristic lengths is less
immediate. In the view of applications, we will be able to show in
the short term whether it is worth introducing three different micro-inertia
parameters for real band-gap metamaterials. The phenomenological interest
of the actual distinction of the non-localities associated to three
different internal lengths will be also investigated in further works.

In this paper, we discuss the effect that the introduced split of
the micro-inertia and of the internal lengths has on the dispersion
curves of the considered relaxed micromorphic model. We present and
discuss in detail the specific effects that the micro-inertia parameters
and the characteristic lengths have on the characteristic of the dispersion
curves, in general, and of the band-gaps, in particular. The split
on the micro-inertia is found to be fundamental for the description
of real metamaterials, since it gives the possibility of controlling
separately the cut-offs of the optic curves in the dispersion diagram.

We obtain the previously introduced results \cite{madeo2015wave}
with a unique micro-inertia parameter and internal length as a suitable
limiting case of the more general model presented here. We then focus
our attention on another particular limiting case that is the one
with vanishing internal lengths. Such particular case of the relaxed
micromorphic model in which no derivatives of the micro-distortion
tensor appear can be called as an ``internal variable model'' (in
the Cosserat framework this approach has been named ``reduced Cosserat
model'', see \cite{kulesh2009problem,grekova2012linear}) and may
be of interest for the description of some band gap metamaterials
for which the so-called hypothesis of separation of scales is verified
(see e.g. \cite{sridhar2016homogenization,pham2013transient}).

For all the proposed cases, we show the direct effect of the variation
of any single parameter on the dispersion curves and on the band gap
characteristics. This paper is now organized as follows: 
\begin{itemize}
\item in chapter 1 we introduce the notations used in the paper, 
\item in chapter 2 we present the weighted relaxed micromorphic model in
the unbounded domain $\R^{3}$ in a variational form and we derive
the PDEs governing the system,
\item in chapter 3 we show how it is possible to recover the classical linear
elasticity model from the relaxed micromorphic model,
\item in chapter 4 we introduce the plane wave ansatz on the unknown kinematical
fields in order to show how it is possible to reduce the system of
governing PDEs to an algebraic problem, finding also the dispersion
curves.
\item in chapter 5 we perform a parametric study on the influence of the
material parameters on the behavior of the dispersion curves.
\end{itemize}

\section{Notation}

Throughout this paper the Einstein convention of sum over repeated
indexes is used if not differently specified. We denote by $\R^{3\times3}$
the set of real $3\times3$ second order tensors and by $\R^{3\times3\times3}$
the set of real $3\times3\times3$ third order tensors. The standard
Euclidean scalar product on $\R^{3\times3}$ is given by $\left\langle X,Y\right\rangle {}_{\R^{3\times3}}=\tr(X\cdot Y^{T})$
and, thus, the Frobenius tensor norm is $\|X\|^{2}=\left\langle X,X\right\rangle {}_{\R^{3\times3}}$.
Moreover, the identity tensor on $\R^{3\times3}$ will be denoted
by $\mathds{1}$, so that $\tr(X)=\left\langle X,\mathds{1}\right\rangle $.
We adopt the usual abbreviations of Lie-algebra theory, i.e.: 
\begin{itemize}
\item $\Sym:=\{X\in\R^{3\times3}\;|X^{T}=X\}$ denotes the vector-space
of all symmetric $3\times3$ matrices 
\item $\so:=\{X\in\R^{3\times3}\;|X^{T}=-X\}$ is the Lie-algebra of skew
symmetric tensors 
\item $\mathfrak{sl}(3):=\{X\in\R^{3\times3}\;|\tr(X)=0\}$ is the Lie-algebra
of traceless tensors 
\item $\R^{3\times3}\simeq\mathfrak{gl}(3)=\{\mathfrak{sl}(3)\cap\Sym\}\oplus\so\oplus\R\!\cdot\!\mathds{1}$
is the \emph{orthogonal Cartan-decomposition of the Lie-algebra} 
\end{itemize}
In other words, for all $X\in\R^{3\times3}$, we consider the decomposition
\begin{align}
X=\ds X+\skew X+\frac{1}{3}\mathrm{tr}(X)\,\mathds{1}\label{eq:cartan lie}
\end{align}
where: 
\begin{itemize}
\item $\sym\,X=\frac{1}{2}(X^{T}+X)\in\Sym$ is the symmetric part of $X$, 
\item $\skew\,X=\frac{1}{2}(X-X^{T})\in\so$ is the skew-symmetric part
of $X$, 
\item $\dev\,X=X-\frac{1}{3}\tr(X)\,\mathds{1}\in\mathfrak{sl}(3)$ is the
deviatoric part of $X$. 
\end{itemize}
Throughout this paper we denote: 
\begin{itemize}
\item the sixth order tensors $\Ls:\R^{3\times3\times3}\rightarrow\R^{3\times3\times3}$,
by a hat,
\item the fourth order tensors $\overline{\C}:\R^{3\times3}\rightarrow\R^{3\times3}$,
by an overline,
\item without superscripts, the classical fourth order tensors acting only
on symmetric matrices \\
 $\C:\Sym\rightarrow\Sym$ or skew-symmetric ones $\cc:\so\rightarrow\so$
,
\item the second order tensors $\widetilde{\cc}:\R^{3}\rightarrow\R^{3}$
appearing as elastic stiffness, by a tilde. 
\end{itemize}
We denote by $\overline{\C}\:X$ the linear application of a $4^{th}$
order tensor to a $2^{nd}$ order tensor and also for the linear application
of a $6^{th}$ order tensor $\Ls$ to a $3^{rd}$ order tensor. In
symbols, we have: 
\begin{align}
\left(\overline{\C}\:X\right)_{ij}=\overline{\C}_{ijhk}X_{hk}\,,\qquad\left(\Ls\:A\right)_{ijh}=\Ls_{ijhpqr}A_{pqr}\,.
\end{align}
The operation of simple contraction between tensors of suitable order
is denoted by a central dot as, for example: 
\begin{align}
\left(\widetilde{\C}\cdot v\right)_{i}=\widetilde{\C}_{ij}v_{j}\,,\qquad\left(\widetilde{\C}\cdot X\right)_{ij}=\widetilde{\C}_{ih}X_{hj}\,.
\end{align}

\noindent Typical conventions for differential operations are implied
such as a comma followed by a subscript to denote the partial derivative
with respect to the corresponding Cartesian coordinate, i.e. $\left(\cdot\right)_{,j}=\frac{\partial(\cdot)}{\partial x_{j}}$.

\noindent The $\textrm{curl}$ of a vector field $v$ is defined as\footnote{Given a third order tensors $A$ and a second order tensor $B$, the
double contraction $A:B$ is defined as $\left(A:B\right)_{i}=A_{ijk}B_{kj}$. }
\[
\left(\textrm{curl}\,v\right)_{i}=\varepsilon_{ijk}v_{k,j},
\]
where $\varepsilon_{ijk}$ is the Levi-Civita third order permutation
tensor. Let $X$ be a two order tensor field and $X_{1},X_{2},X_{3}$
three vector fields such that
\[
X=\begin{pmatrix}X_{1}^{T}\\
X_{2}^{T}\\
X_{3}^{T}
\end{pmatrix}.
\]

\noindent The $\curl$ of $X$ is defined as follows:
\[
\curl\,X=\begin{pmatrix}\left(\textrm{curl}\,X_{1}\right)^{T}\\
\left(\textrm{curl}\,X_{2}\right)^{T}\\
\left(\textrm{curl}\,X_{3}\right)^{T}
\end{pmatrix},
\]
that in indices is 
\[
\left(\curl\,X\right)_{ij}=\varepsilon_{jmn}X_{in,m}.
\]
 For the iterated $\curl$ we find
\begin{align*}
\left(\curl\:\curl\:P\right)_{ij} & =\varepsilon_{jmn}\left(\curl\,P\right)_{in,m}=\varepsilon_{jmn}\left(\varepsilon_{nab}P_{ib,a}\right)_{,m}=\varepsilon_{jmn}\varepsilon_{nab}P_{ib,am}\\
 & =-\,\varepsilon_{nmj}\varepsilon_{nab}P_{ib,am}=\left(\delta_{ma}\delta_{jb}-\delta_{mb}\delta_{ja}\right)P_{ib,am}=P_{im,jm}-P_{ij,mm}.
\end{align*}
The divergence $\textrm{div}\,v$ of a vector field $v$ is defined
as $\textrm{div}\,v=v_{i,i}$ and the divergence $\textrm{Div}\,X$
of a matrix $X$ as
\[
\textrm{Div}\,X=\begin{pmatrix}\textrm{div}\,X_{1}\\
\textrm{div}\,X_{2}\\
\textrm{div}\,X_{3}
\end{pmatrix}=\begin{pmatrix}\left(X_{1}\right)_{i,i}\\
\left(X_{2}\right)_{i,i}\\
\left(X_{3}\right)_{i,i}
\end{pmatrix}.
\]
Given two differentiable vector fields $u,v:\Omega\subseteq\R^{3}\fr\R^{3},$
we have that
\begin{equation}
\textrm{div}\left(u\times v\right)=\left\langle \textrm{curl}\,u,v\right\rangle -\left\langle u,\textrm{curl}\,v\right\rangle ,\label{eq:id div}
\end{equation}
since
\begin{align*}
\left(\varepsilon_{ijk}u_{j}v_{k}\right)_{,i} & =\varepsilon_{ijk}u_{j,i}v_{k}+\varepsilon_{ijk}u_{j}v_{k,i}=\varepsilon_{kij}u_{j,i}v_{k}-u_{j}\varepsilon_{jik}v_{k,i}\\
 & =\left\langle \textrm{curl}\,u,v\right\rangle -\left\langle u,\textrm{curl}\,v\right\rangle .
\end{align*}

\noindent 

\section{Variational formulation of the relaxed model}

The kinematical fields of the problem are the displacement $u$ and
the micro-distortion tensor field $P$:
\begin{gather*}
u:\overline{\Omega}\times I\fr\R^{3},\quad\left(x,t\right)\mapsto\u\left(x,t\right),\qquad P:\overline{\Omega}\times I\fr\R^{3\times3},\quad\left(x,t\right)\mapsto P\left(x,t\right),
\end{gather*}
where $\Omega$ is an open bounded domain in $\R^{3}$ with a piecewise
smooth boundary $\partial\Omega$ and closure $\overline{\Omega}$,
and $I=\left[0,T\right]\subseteq\R$ is the time interval. The mechanical
model is formulated in the variational context. This means that we
consider an action functional on an appropriate function-space. Setting
$\Omega_{0}=\Omega\times\left\{ 0\right\} $, the space of configurations
of the problem is 
\[
\mathcal{Q}:=\left\{ \left(\u,\P\right)\in\mathscr{C}^{1}\left(\overline{\Omega}\times I,\R^{3}\right)\times\mathscr{C}^{1}\left(\overline{\Omega}\times I,\R^{3\times3}\right):\left(\u,\P\right)\textrm{ verifies conditions }\left(\mathsf{B}_{1}\right)\textrm{ and }\left(\mathsf{B}_{2}\right)\right\} 
\]
where
\begin{itemize}
\item $\left(\mathsf{B}_{1}\right)$ are the boundary conditions $u\left(x,t\right)=\varphi\left(x,t\right)$
and $P_{i}\left(x,t\right)\times n=\psi_{i}\left(x,t\right)$, $i=1,2,3$,
$\quad\left(x,t\right)\in\partial\Omega\times\left[0,T\right]$, where
$n$ is the unit outward normal vector on $\partial\Omega\times\left[0,T\right]$,
$P_{i},\,i=1,2,3$ are the rows of $P$ and $\varphi,\psi_{i}$ are
prescribed functions,
\item $\left(\mathsf{B}_{2}\right)$ are the initial conditions $\left.\u\right|_{\Omega_{0}}=\u_{0},\left.\u_{,t}\right|_{\Omega_{0}}=\underline{\u}_{0},\left.P\right|_{\Omega_{0}}=P_{0},\left.P_{,t}\right|_{\Omega_{0}}=\underline{P}_{0}\;\textrm{in }\Omega_{0}$,
where $u_{0}\left(x\right),\underline{\u}_{0}\left(x\right),$ $P_{0}\left(x\right),\underline{P}_{0}\left(x\right)$
are prescribed functions.
\end{itemize}
The action functional $\mathscr{A}:\mathcal{Q}\fr\R,$ is the sum
of the internal and external action functionals $\mathscr{A}_{\mathscr{L}}^{int},\mathscr{A}^{ext}:\mathcal{Q}\fr\R$
defined as follows
\begin{align}
\mathscr{A}_{\mathscr{L}}^{int}\left[\left(\u,\P\right)\right] & :=\int_{I}\int_{\Omega}\L\left(\u_{,t},\P_{,t},\nabla\u,\P,\curl\,P\right)dv\,dt,\\
\mathscr{A}^{ext}\left[\left(\u,\P\right)\right] & :=\int_{I}\int_{\Omega}\left(\left\langle f^{ext},u\right\rangle +\left\langle M^{ext},P\right\rangle \right)dv\,dt,\nonumber 
\end{align}
where $\L$ is the Lagrangian density of the system and $f^{ext},M^{ext}$
are the body force and double body force. In this work we will consider
$f^{ext}=0,M^{ext}=0$. In order to find the stationary points of
the action functional, we have to calculate its first variation:
\begin{gather*}
\delta\mathscr{A}=\delta\mathscr{A}_{\L}^{int}=\delta\int_{I}\int_{\Omega}\L\left(\u_{,t},\P_{,t},\nabla\u,\P,\curl\,P\right)dv\,dt.
\end{gather*}
Well-posedness of this variational problem (existence, uniqueness
and stability of solution) has been proved in \cite{ghiba2015relaxed,neff2014unifying,neff2015relaxed}.

\subsection{Constitutive assumptions on the energy density and equations of motion
in strong form}

For the Lagrangian energy density we assume the standard split in
kinetic minus potential energy:

\[
\mathscr{L}\left(\u_{,t},\P_{,t},\nabla\u,\P,\curl\,\p\right)=J\left(\u_{,t},\P_{,t}\right)-W\left(\nabla\u,\P,\curl\,\p\right),
\]
 In general anisotropic linear elastic micromorphic media, as clearly
stated in \cite{barbagallo2016transparent,neff2014unifying}, we have
that the kinetic energy density and the potential have the following
expression
\begin{align*}
J\left(\u_{,t},\P_{,t}\right) & =\frac{1}{2}\left\langle \rho\,u_{,t},u_{,t}\right\rangle +\frac{1}{2}\left\langle \overline{\mathbb{J}}\:\P_{,t},\P_{,t}\right\rangle \\
W\left(\nabla\u,\P,\curl\,\p\right) & =\underbrace{\frac{1}{2}\left\langle \mathbb{C}_{e}\,\sym\left(\grad\u-P\right),\sym\left(\grad\u-P\right)\right\rangle _{\R^{3\times3}}}_{\textrm{anisotropic elastic - energy}}+\underbrace{\frac{1}{2}\left\langle \cm\,\sym\,P,\sym\,P\right\rangle _{\R^{3\times3}}}_{\textrm{micro - self - energy}}\\
 & \qquad+\underbrace{\frac{1}{2}\left\langle \cc\:\skew\left(\grad\u-P\right),\skew\left(\grad\u-P\right)\right\rangle _{\R^{3\times3}}}_{\textrm{invariant local anisotropic rotational elastic coupling}}+\underbrace{\mu\:\frac{L_{c}^{2}}{2}\left\langle \overline{\mathbb{L}}_{\textrm{aniso}}\,\curl\,P,\curl\,P\right\rangle _{\R^{3\times3}}}_{\textrm{ curvature}},
\end{align*}
where
\[
\begin{cases}
\rho:\Omega\fr\R^{+} & \textrm{is the macro-inertia density},\\
\overline{\mathbb{J}}:\R^{3\times3}\fr\R^{3\times3} & \textrm{is the \ensuremath{4^{th}}order micro-inertia density tensor},\\
\C_{e},\cm:\Sym\fr\Sym & \textrm{are the \ensuremath{4^{th}}order elasticity tensors with 21 independent components, }\\
\cc:\so\fr\so & \textrm{is a dimensionless \ensuremath{4^{th}} order tensor with 6 independent components},\\
\overline{\mathbb{L}}_{\textrm{aniso}}:\R^{3\times3}\fr\R^{3\times3} & \textrm{is a dimensionless \ensuremath{4^{th}} order tensor with almost 45 independent components},
\end{cases}
\]
and $L_{c}$ is the characteristic length of the relaxed micromorphic
model. We demand that the bilinear forms induced by $\overline{\mathbb{J}},\C_{e},\cm,\overline{\mathbb{L}}_{\textrm{aniso}}$
are positive definite, 
\begin{gather}
\exists\,c^{+},c_{e}^{+},c_{\textrm{micro}}^{+},c_{l}^{+}>0:\,\forall S\in\text{Sym}(3)\quad\begin{cases}
\left\langle \overline{\mathbb{J}}\:S,S\right\rangle {}_{\R^{3\times3}}\geq c^{+}\|S\|_{\R^{3\times3}}^{2},\\
\left\langle \ce\:S,S\right\rangle {}_{\R^{3\times3}}\geq c_{e}^{+}\|S\|_{\R^{3\times3}}^{2},\\
\left\langle \cm\:S,S\right\rangle {}_{\R^{3\times3}}\geq c_{\textrm{micro}}^{+}\|S\|_{\R^{3\times3}}^{2},\\
\left\langle \overline{\mathbb{L}}_{\textrm{aniso}}\:S,S\right\rangle {}_{\R^{3\times3}}\geq c_{\textrm{l}}^{+}\|S\|_{\R^{3\times3}}^{2},
\end{cases}
\end{gather}
and, in sharp contrast to the Mindlin-Eringen format, that the bilinear
form induced by $\cc$ is only positive semi-definite, i.e.\footnote{It is in virtue of such weakening of the theoretical framework needed
to prove its well posedness that the word ``relaxed'' was chosen
to distinguish the relaxed micromorphic model from Mindlin's one (see
\cite{barbagallo2016transparent,ghiba2015relaxed,neff2004material,neff2006existence,neff2014existence,neff2014unifying,neff2015relaxed}).}
\begin{equation}
\forall\overline{A}\in\so:\quad\left\langle \cc\:\overline{A},\overline{A}\right\rangle {}_{\R^{3\times3}}\geq0.
\end{equation}

In this work we introduce the hypothesis according to which the micromorphic
medium is \textbf{homogeneous and isotropic. }This leads to the following
particular expression for the kinetic and strain energy densities:
\begin{align}
J\left(\u_{,t},\P_{,t}\right) & \:{\displaystyle =\frac{1}{2}\,\rho\left\Vert \u_{,t}\right\Vert ^{2}+\frac{1}{2}\left(\eta_{\,1}\left\Vert \textrm{dev sym}\,\P_{,t}\right\Vert ^{2}+\eta_{\,2}\left\Vert \textrm{skew}\,\P_{,t}\right\Vert ^{2}+\frac{1}{3}\,\eta_{\,3}\left(\textrm{tr}\,\P_{,t}\right)^{2}\right)},\nonumber \\
W\left(\nabla\u,\P,\curl\,\p\right) & =\underbrace{{\displaystyle \me\left\Vert \sym\left(\grad\u-P\right)\right\Vert ^{2}+\frac{\le}{2}\left(\textrm{tr}\left(\grad\u-P\right)\right)^{2}+\mh\left\Vert \sym\,P\right\Vert ^{2}+\frac{\lh}{2}\left(\tr\,P\right)^{2}}{\displaystyle +\,\mc\left\Vert \skew\left(\grad\u-\P\right)\right\Vert ^{2}}}_{\mathsf{A}}\nonumber \\
 & \qquad+\underbrace{\mu_{e}\,\frac{L_{c}^{2}}{2}\left(\alpha_{1}\left\Vert \ds\,\curl\,\P\right\Vert ^{2}+\alpha_{2}\left\Vert \skew\,\curl\,\P\right\Vert ^{2}+\frac{1}{3}\,\alpha_{3}\left(\tr\,\curl\,\P\right)^{2}\right)}_{\mathsf{B}},\label{eq:lagrangian}
\end{align}
where $\rho$ is the macroscopic mass density, $L_{c}$ is the internal
length accounting for non-local effects, $\mc$ is the Cosserat couple
modulus, $\me,\le,\mh,\lh$ are the other elastic parameters featured
by the isotropic relaxed micromorphic model (see \cite{neff2014unifying}),
$\eta_{1},\eta_{2},\eta_{3}$ are the inertia weights and $\alpha_{1},\alpha_{2},\alpha_{3}$
are dimensionless parameters. It can be seen that the two tensor fields
$\P_{,t}$ and $\curl\,\P$ have been decomposed according to the
Cartan-Lie decomposition. Since the part $\mathsf{A}$ of the potential
energy is the same as in \cite{madeo2015wave}, in order to compute
the first variation of the action functional it is sufficient to evaluate
only the first variation of the kinetic energy and the second part
$\mathsf{B}$ of the potential energy.

We explicitly remark that the chosen expression for the micro-inertia
in terms of $\eta_{1},\eta_{2}$ and $\eta_{3}$ is more general than
the one introduced in \cite{madeo2015wave}. The same holds for the
non-local term in which the three constants $\alpha_{1},\alpha_{2}$
and $\alpha_{3}$ appear. A crucial point for further experimentally
oriented works will be the split of the kinetic energy that we introduce
here. Indeed, the fact of introducing three micro-inertia parameters
instead of one allows extra freedom for the fitting of the dispersion
curves on real band-gap metamaterials.

The particular case of the relaxed micromorphic model presented in
\cite{madeo2015wave} can be obtained by simply setting $\eta_{1}=\eta_{2}=\eta_{3}=10^{-2}\,\textrm{Kg}/\textrm{m}$,
and $\alpha_{1}=\alpha_{2}=\alpha_{3}=1$. The weights $\alpha_{1},\alpha_{2}$
and $\alpha_{3}$ allow to account for a refined splitting of the
non-localities present in the considered relaxed micromorphic model.
This possibility provides a certain freedom for future developments,
but it is too general to provide new physical understanding of band-gap
metamaterials currently studied. In fact, the most common band-gap
metamaterials are conceived letting non-local effects being very small
based on some sort of ``separation of scales'' hypothesis (see e.g.
\cite{pham2013transient,sridhar2016homogenization}). This means it
is sensible that, for such metamaterials, non-local effects may be
described by means of a unique characteristic length (case $\alpha_{1}=\alpha_{2}=\alpha_{3}=1$).
Nevertheless, the weighted higher-order terms presented here may allow
for more detailed descriptions of non-localities in new metamaterials
in which strong contrasts of the mechanical properties at the micro-level
occur.

The question is quite different for the isotropic weighted expression
of the micro-inertia which introduces the 3 parameters $\eta_{1},\eta_{2}$
and $\eta_{3}$. It is indeed sensible that, for some metamaterials,
the vibrations associated to distortion, rotation and volumetric expansion
of the unit cells at the micro-level do not occur with the same facility.
In other words, the three different modes might be more or less privileged
depending on the considered metamaterial.

The real interest of the presented micro-inertia splitting must be
tested by fitting the proposed relaxed micromorphic model on real
experiments on existing band-gap metamaterials. We leave this task
to a forthcoming paper, limiting ourselves here to discuss numerical
results which may be of interest for conceiving pertinent experimental
campaigns. 

We have shown elsewhere \cite{ghiba2015relaxed,neff2006existence,neff2014existence,neff2015relaxed},
that the static and dynamic problem in a bounded domain is well-posed
(existence and uniqueness) under the general assumptions on the constitutive
coefficients:
\begin{equation}
\begin{array}{ccccccccc}
3\,\le+2\,\me>0, &  & \me>0, &  & \mh>0, &  & 3\,\lh+2\,\mh>0, &  & \overline{\mathbb{J}}\textrm{ is positive definite},\\
\\
\rho>0, &  & \mc\geq0, &  & L_{c}>0 &  & \textrm{and} &  & \alpha_{1},\alpha_{2}>0,\,\alpha_{3}\geq0.
\end{array}\label{eq:J}
\end{equation}
Currently, it is not known whether assuming only
\begin{equation}
\left(\alpha_{1},\,\alpha_{2}>0,\,\alpha_{3}\geq0\right)\qquad\textrm{or}\qquad\left(\alpha_{1}>0,\,\alpha_{2},\,\alpha_{3}\geq0\right)\label{eq:conditions}
\end{equation}
is sufficient for well-posedness of the initial boundary value problem.
In our parametric study of the whole-space harmonic wave propagation
problem (\ref{eq:PDE system}), we will re-encounter the limit case
(\ref{eq:conditions}) showing no deficiency.

\bigskip{}

It is straightforward to derive (with the stronger regularity for
the kinematical fields $\left(u,P\right)\in\mathscr{C}^{2}\left(\overline{\Omega}\times I,\R^{3}\right)\times\mathscr{C}^{2}\left(\overline{\Omega}\times I,\R^{3\times3}\right)$)
the Euler-Lagrange equations corresponding to the Lagrangian associated
with the strain energy and kinetic energies (\ref{eq:lagrangian})
which, after projection on the orthogonal subspaces in (\ref{eq:cartan lie}),
read\footnote{The new calculations concerning the variation of the term $\mathsf{B}$
in (\ref{eq:lagrangian}) are presented in Appendix 1}:

\begin{figure}[H]
\centering{}%
\noindent\doublebox{\begin{minipage}[t]{1\columnwidth - 2\fboxsep - 7.5\fboxrule - 1pt}%
\smallskip{}

\begin{align}
\rho\,\u_{,tt} & =\textrm{Div}\left[2\,\me\,\sym\left(\grad\u-\P\right)+\le\,\tr\left(\grad\u-\P\right)\id+2\,\mc\,\skew\left(\grad\u-\P\right)\right],\nonumber \\
\nonumber \\
\eta_{\,1}\,\textrm{dev sym}\,\P_{,tt} & =2\,\me\,\ds\left(\grad\u-\P\right)-2\,\mh\,\ds\,\P\nonumber \\
 & \quad-\me\,L_{c}^{2}\,\ds\left(\alpha_{1}\,\curl\,\ds\,\curl\,\P+\alpha_{2}\,\curl\,\skew\,\curl\,\P+\frac{\alpha_{3}}{3}\,\curl\left(\tr\left(\curl\,P\right)\id\right)\right),\nonumber \\
\eta_{\,2}\,\textrm{skew}\,\P_{,tt} & =2\,\mc\,\skew\left(\grad\u-\P\right)\label{eq:PDE system}\\
 & \quad-\me\,L_{c}^{2}\,\skew\left(\alpha_{1}\,\curl\,\ds\,\curl\,\P+\alpha_{2}\,\curl\,\skew\,\curl\,\P+\frac{\alpha_{3}}{3}\,\curl\left(\tr\left(\curl\,P\right)\id\right)\right),\nonumber \\
\frac{1}{3}\,\eta_{\,3}\,\textrm{tr}\left(\P_{,tt}\right) & {\displaystyle \,\,=\left(\frac{2}{3}\,\me+\le\right)\,\tr\left(\grad\u-\P\right)-\left(\frac{2}{3}\,\mh+\lh\right)\,\tr\left(\P\right)}\nonumber \\
 & \quad-\me\,L_{c}^{2}\,\frac{1}{3}\,\tr\left(\alpha_{1}\,\curl\,\ds\,\curl\,\P+\alpha_{2}\,\curl\,\skew\,\curl\,\P+\frac{\alpha_{3}}{3}\,\curl\left(\tr\left(\curl\,P\right)\id\right)\right).\nonumber 
\end{align}

\smallskip{}
\end{minipage}}
\end{figure}

\subsection{Internal variable model}

The internal variable model can be easily obtained as a particular
case of the relaxed model simply setting the three parameters $\alpha_{1},\alpha_{2},\alpha_{3}$
to be simultaneously equal to zero and so setting to zero the energetic
part linked to the derivatives of the micro-distortion tensor $P$.
In this way we cannot directly control the space variation of $P$.
This hypothesis is reasonable if we are modeling the mechanical behavior
of a medium in which the variation of $P$ is very small, i.e. the
norm $\left\Vert \grad P\right\Vert $ is dominated by a small real
value $\varepsilon$. As we will see, this model represents, in a
suitable meaning, a pathological limit: the behavior of the dispersion
curves changes drastically with respect to the full relaxed micromorphic
case.

\section{Limit passage to classical linear elasticity for vanishing micro-inertia}

In this section we would like to show how to obtain classical linear
elasticity as a limit case of our relaxed micromorphic model. Indeed,
there are several ways to obtain classical linear elasticity. For
all shown cases we will also perform a limit dispersion analysis and
identify the limiting elastic moduli. 

Consider (for simplicity the relaxed micromorphic modulus $\le=0$),
$\mc=0$, $\alpha_{1}=\alpha_{2}=\alpha_{3}=1$ and $\eta_{1}=\eta_{2}=\eta_{3}=0$.
\begin{align}
\rho\,u_{,tt}=\textrm{Div}\left[2\,\me\,\sym\left(\nabla u-P\right)\right], & \qquad0=-\me\,L_{c}^{2}\,\curl\,\curl\,P+\sigma-s,\label{eq:aq ridotta 1}
\end{align}
where
\begin{gather*}
\sigma=2\,\me\:\sym\left(\nabla u-P\right),\qquad s=2\,\mh\,\sym\,P.
\end{gather*}
We look for solutions of (\ref{eq:aq ridotta 1}) in the form of
\begin{gather}
P=\beta^{+}\grad u\quad\textrm{with}\quad\beta^{+}>0.\label{eq:beta-1}
\end{gather}
Inserting (\ref{eq:beta-1}) into (\ref{eq:aq ridotta 1}) we obtain\footnote{We recall that $\curl\,\nabla u=0$.}
\begin{gather}
\rho\,u_{,tt}=\textrm{Div}\left[2\,\me\,\sym\left(\nabla u-\beta^{+}\grad u\right)\right],\quad\quad0=0+2\,\me\,\sym\left(\nabla u-\beta^{+}\grad u\right)-2\,\mh\,\sym\left(\beta^{+}\grad u\right)
\end{gather}
$\Longleftrightarrow$
\begin{gather*}
\rho\,u_{,tt}=\textrm{Div}\left[2\,\me\left(1-\beta^{+}\right)\sym\,\grad u\right],\quad\quad0=0+2\,\me\left(1-\beta^{+}\right)\sym\,\grad u-2\,\mh\,\beta^{+}\,\sym\,\grad u
\end{gather*}
$\Longleftrightarrow$
\begin{gather}
0=\left[2\,\me\left(1-\beta^{+}\right)-2\,\mh\,\beta^{+}\right]\sym\,\grad u.\label{eq:final equation}
\end{gather}
Since $\sym\,\grad u\neq0$ by assumption, equation (\ref{eq:final equation})
is verified if and only if
\begin{gather*}
2\,\me\left(1-\beta^{+}\right)-2\,\mh\,\beta^{+}=0,
\end{gather*}
this means
\begin{gather}
\me\left(1-\beta^{+}\right)=\mh\,\beta^{+}\quad\Longleftrightarrow\quad\frac{\me}{\mh}=\frac{\beta^{+}}{1-\beta^{+}}\quad\Longleftrightarrow\quad\beta^{+}=\frac{\me}{\me+\mh}.\label{eq:beta}
\end{gather}
Assuming generically that $\me<\mh$, we find the following inequalities
\begin{gather*}
\frac{\beta^{+}}{1-\beta^{+}}=\frac{\me}{\mh}<1\quad\Leftrightarrow\quad\beta^{+}<1-\beta^{+}\quad\Leftrightarrow\quad2\beta^{+}<1\quad\Leftrightarrow\quad\beta^{+}<\frac{1}{2}.
\end{gather*}
Inserting the last expression of $\beta^{+}$ in (\ref{eq:beta})
we find
\begin{gather*}
\rho\,u_{,tt}=\textrm{Div}[2\,\me\underbrace{\left(1-\beta^{+}\right)}_{=\,\,\frac{\underset{}{\mu_{\textrm{micro}}}}{\overset{}{\me\,+\,\mh}}}\sym\,\nabla u]
\end{gather*}
and therefore
\begin{gather}
\rho\,u_{,tt}=\textrm{Div}\left[2\,\frac{\me\,\mh}{\me+\mh}\sym\,\nabla u\right]=\textrm{Div}\left[2\,\mum\,\sym\,\nabla u\right],
\end{gather}
where we have set 
\[
\mum:=\frac{\me\,\mh}{\me+\mh}\qquad\qquad\textrm{(harmonic mean)},
\]
according to formula (50) in \cite{barbagallo2016transparent}. This
analysis can be repeated with $\le\neq0$ such that $2\,\me+3\,\le>0.$
In this case we obtain as limit model
\begin{gather}
\rho\,u_{,tt}=\textrm{Div}\left[2\,\mum\,\sym\,\nabla u+\lam\,\textrm{tr}\left(\nabla u\right)\id\right]
\end{gather}
with
\[
\mum:=\frac{\me\,\mh}{\me+\mh},\qquad\lam=\frac{1}{3}\frac{\left(2\mu_{e}+3\lambda_{e}\right)\left(2\mh+3\lh\right)}{2\left(\mu_{e}+\mh\right)+3\left(\lambda_{e}+\lh\right)}-\frac{2}{3}\frac{\mu_{e}\,\mh}{\mu_{e}+\mh}
\]
being consistent with
\[
\kappa_{\textrm{macro}}=\frac{2\,\mum+3\,\lam}{3}
\]
from \cite{barbagallo2016transparent}. Thus the relaxed micromorphic
model with $\mc=0$ and $\eta\equiv0$ provides a classical macroscopic,
first gradient solution with $\mum,\lam$ as elastic moduli, provided
that the micro-inertia is identically zero (or $\eta\fr0$).

\section{Plane wave propagation in isotropic relaxed micromorphic media}

In this section we introduce the plane wave ansatz on the unknown
kinematical fields. This hypothesis allows to study the main characteristics
of wave propagation of relaxed micromorphic media in the simplest
possible way. The problem of wave propagation still remains 3D (all
the components of the introduced unknown fields are non vanishing),
while the space dependence is only on one scalar direction $x_{1}$
which is also the direction of propagation of the plane wave. Under
this assumption, the bulk equations (\ref{eq:PDE system}) take a
simplified form because all the partial derivatives in $x_{2},x_{3}$-direction
are zero. 

Moreover, thanks to an opportune change of variables, we can completely
uncouple the system of PDE in (\ref{eq:PDE system}) as done in \cite{madeo2015wave}.
In order to do this, we project also the micro-distortion tensor $P$
on the component spaces of the Cartan-Lie decomposition of $\R^{3\times3}$.
We set for the deviatoric - symmetric part

\begin{equation}
\dev\,\sym\,\P=\frac{1}{2}\left(\P+\P^{T}\right)-\frac{1}{3}\textrm{tr}\left(\P\right)\id=\begin{pmatrix}P_{1}^{D} & P_{\left(12\right)} & P_{\left(13\right)}\\
P_{\left(12\right)} & P_{2}^{D} & P_{\left(23\right)}\\
P_{\left(13\right)} & P_{\left(23\right)} & P_{3}^{D}
\end{pmatrix},
\end{equation}
where we have defined 
\begin{align}
P_{\alpha}^{D}=P_{\alpha\alpha}-\frac{1}{3}\textrm{tr}\,\P,\qquad\textrm{and}\qquad P_{\left(\alpha\beta\right)}=P_{\left(\beta\alpha\right)}=\frac{1}{2}\left(P_{\alpha\beta}+P_{\beta\alpha}\right)\quad\textrm{if}\quad\alpha\neq\beta.\label{eq:symmetric components}
\end{align}
Moreover, for the skew-symmetric part of $P$, we set
\begin{equation}
\skew\,\P=\frac{1}{2}\left(\P-\P^{T}\right)=\begin{pmatrix}0 & P_{\left[12\right]} & P_{\left[13\right]}\\
-P_{\left[12\right]} & 0 & P_{\left[23\right]}\\
-P_{\left[13\right]} & -P_{\left[23\right]} & 0
\end{pmatrix},\label{eq:skew symmetric components}
\end{equation}
where $P_{\left[\alpha\beta\right]}=\frac{1}{2}\left(P_{\alpha\beta}-P_{\beta\alpha}\right)$
and $P_{\left[\beta\alpha\right]}=-P_{\left[\alpha\beta\right]}$
and finally for the spherical part, we introduce the variable 
\begin{equation}
P^{S}=\frac{1}{3}\,\textrm{tr}\,\P=\frac{1}{3}\sum_{\alpha=1}^{3}P_{\alpha\alpha}.\label{eq:ps}
\end{equation}
Further we introduce the last new variable 
\begin{equation}
P^{V}=P_{22}-P_{33}=P_{2}^{D}-P_{3}^{D},
\end{equation}
and remark the validity of the identity
\begin{equation}
\P_{22}+P_{33}=2\,\P^{S}-P_{1}^{D}.
\end{equation}
Also, in what follows, we set $\P^{D}=\P_{1}^{D}$. It can be checked
that the micro-distortion tensor $\P$ can be written in terms of
the new variables as:
\begin{equation}
P=\dev\,\sym\,\P+\skew\,\P+\frac{1}{3}\textrm{tr}\,\P=\begin{pmatrix}P^{D}+P^{S} & P_{\left(12\right)}+P_{\left[12\right]} & P_{\left(13\right)}+P_{\left[13\right]}\\
\\
P_{\left(12\right)}-P_{\left[12\right]} & P_{2}^{D}+P^{S} & P_{\left(23\right)}+P_{\left[23\right]}\\
\\
P_{\left(13\right)}-P_{\left[13\right]} & P_{\left(23\right)}-P_{\left[23\right]} & P_{3}^{D}+P^{S}
\end{pmatrix},
\end{equation}
and we find, starting from (\ref{eq:PDE system}), the following four
groups of completely uncoupled equations in the new unknown fields
(remembering the dependence of the kinematical fields only on the
$x_{1}-$ direction)
\begin{equation}
\left(u_{1},u_{2},u_{3},P^{D},P_{\left(12\right)},P_{\left(13\right)},\text{P}_{\left(23\right)},P_{\left[12\right]},P_{\left[13\right]},\text{P}_{\left[23\right]},P^{S},P^{V}\right):
\end{equation}

\begin{itemize}
\item a first group of PDEs in the unknowns $u_{1},P^{D},P^{S}$ (longitudinal
quantities)
\begin{align}
\u_{1,tt} & =\frac{2\,\me+\lle}{\rho}\,u_{1,11}-\frac{2\,\me}{\rho}\,P_{,1}^{D}-\frac{2\,\me+3\,\le}{\rho}\,P_{,1}^{S},\nonumber \\
\P_{,tt}^{D} & =\frac{4}{3}\,\frac{\me}{\eta_{\,1}}\,u_{1,1}+\frac{\alpha_{2}}{\eta_{\,1}}\,\frac{\me\,L_{c}^{2}}{3}\left(P_{,11}^{D}-2\,\P_{,11}^{S}\right)-\frac{2\left(\me+\mh\right)}{\eta_{1}}\,\P^{D},\label{eq:long_curves}\\
\P_{,tt}^{S} & =\frac{2\,\me+3\,\le}{3\,\eta_{3}}\,u_{1,1}-\frac{\alpha_{2}}{\eta_{3}}\,\frac{\me L_{c}^{2}}{3}\left(P_{,11}^{D}-2\,\P_{,11}^{S}\right)-\frac{3\left(\lle+\lh\right)+2\left(\me+\mh\right)}{\eta_{3}}\,P^{S},\nonumber 
\end{align}
\item a second and third group of PDEs involving only the transversal quantities
in the direction $x_{\xi}$ with $\xi\in\left\{ 2,3\right\} $
\begin{align}
\u_{\xi,tt} & =\frac{\me+\mc}{\rho}\,u_{\xi,11}-\frac{2\,\me}{\rho}\,P_{\left(1\xi\right),1}+\frac{2\,\mc}{\rho}\,P_{\left[1\xi\right],1},\nonumber \\
\P_{\left(1\xi\right),tt} & =\frac{\me}{\eta_{\,1}}\,u_{\left(\xi,1\right)}+\frac{\alpha_{1}+\alpha_{2}}{\eta_{\,1}}\,\frac{\mu_{e}\,L_{c}^{2}}{4}\left(\P_{\left(1\xi\right),11}+\P_{\left[1\xi\right],11}\right)-\frac{2\left(\me+\mh\right)}{\eta_{1}}\,\P_{\left(1\xi\right)},\label{eq:trans_curves12}\\
\P_{\left[1\xi\right],tt} & =-\frac{\mc}{\eta_{2}}\,\u_{\xi,1}+\frac{\alpha_{1}+\alpha_{2}}{\eta_{2}}\,\frac{\me\,L_{c}^{2}}{4}\left(\P_{\left(1\xi\right),11}+\P_{\left[1\xi\right],11}\right)-\frac{2\,\mc}{\eta_{2}}\,\P_{\left[1\xi\right]},\nonumber 
\end{align}
\item and three completely uncoupled equations
\begin{align}
\P_{\left(23\right),tt} & =-\frac{2\left(\me+\mh\right)}{\eta_{1}}\,\P_{\left(23\right)}+\frac{\alpha_{1}}{\eta_{1}}\,\me\,L_{c}^{2}\,\P_{\left(23\right),11},\nonumber \\
\P_{\left[23\right],tt} & =-\frac{2\,\mc}{\eta_{2}}\,\P_{\left[23\right]}+\frac{\alpha_{1}+2\,\alpha_{3}}{\eta_{2}}\,\frac{\me\,L_{c}^{2}}{3}\,\P_{\left[23\right],11},\label{eq:uncoupled curves}\\
\P_{,tt}^{V} & =-\frac{2\left(\me+\mh\right)}{\eta_{1}}\,\P^{V}+\frac{\alpha_{1}}{\eta_{1}}\,\me\,L_{c}^{2}\,\P_{,11}^{V}.\nonumber 
\end{align}
\end{itemize}
The systems (\ref{eq:long_curves}),(\ref{eq:trans_curves12}),(\ref{eq:uncoupled curves})
of PDEs are explicitly derived in the appendix. Now we consider the
plane wave form for the newly introduced fields, i.e. 
\begin{align}
\u\left(x,t\right) & =\widehat{u}\,e^{i\left(kx_{1}-\,\omega t\right)}\label{eq:wave strucutre-1}
\end{align}
where $\widehat{u}=\left(\widehat{u}_{1},\widehat{u}_{2},\widehat{u}_{3}\right)$
is the so called polarization vector in $\mathbb{C}^{3}$ and\footnote{The quantity $\omega$ is the (circular) frequency and $k$ is the
(possibly complex) wave number.} 
\[
\begin{array}{ccccc}
P_{\alpha}^{D}=\widehat{P}_{\alpha}^{D}\,e^{i\left(kx_{1}-\,\omega t\right)}, &  & P^{V}=\widehat{P}^{V}\,e^{i\left(kx_{1}-\,\omega t\right)}, &  & P^{S}=\widehat{P}^{S}\,e^{i\left(kx_{1}-\,\omega t\right)},\\
\\
P_{\left(\alpha\beta\right)}=\widehat{P}_{\left(\alpha\beta\right)}\,e^{i\left(kx_{1}-\,\omega t\right)}, &  & P_{\left[\alpha\beta\right]}=\widehat{P}_{\left[\alpha\beta\right]}\,e^{i\left(kx_{1}-\,\omega t\right)}, &  & \alpha,\beta\in\left\{ 1,2,3\right\} .
\end{array}
\]
Introducing the vector

\[
\boldsymbol{v}=\left(\widehat{u}_{1},\widehat{P}^{D},\widehat{P}^{S},\widehat{u}_{2},\widehat{P}_{\left(12\right)},\widehat{P}_{\left[12\right]},\widehat{u}_{3},\widehat{P}_{\left(13\right)},\widehat{P}_{\left[13\right]},\widehat{P}_{\left(23\right)},\widehat{P}_{\left[23\right]},\widehat{P}^{V}\right)\in\R^{12},
\]
if we divide the PDE system (\ref{eq:PDE system}) by $e^{i\left(kx_{1}-\,\omega t\right)}$
we obtain the associated algebraic system in the form
\[
\boldsymbol{D}\,\boldsymbol{v}=0,
\]
where the matrix $\boldsymbol{D}$ is a $12\times12$ matrix with
the following block-structure
\begin{equation}
\boldsymbol{D}=\begin{pmatrix}\mathsf{E}_{1} & 0 & 0 & 0\\
0 & \mathsf{E}_{2} & 0 & 0\\
0 & 0 & \mathsf{E}_{3} & 0\\
0 & 0 & 0 & \mathsf{E}_{4}
\end{pmatrix}\in\mathbb{C}^{12\times12},
\end{equation}
in which $\mathsf{E}_{1},\mathsf{E}_{2},\mathsf{E}_{3},\mathsf{E}_{4}$
are the following matrices in $\mathbb{C}^{3\times3}$:
\begin{align}
\mathsf{E}_{1} & =\left(\begin{array}{ccc}
k^{2}c_{\textrm{p}}^{2}-\omega^{2} & {\displaystyle 2ik\frac{\mu_{e}}{\rho}} & {\displaystyle ik\frac{\left(3\lambda_{e}+2\mu_{e}\right)}{\rho}}\\
\\
{\displaystyle -\frac{4}{3}\frac{\mu_{e}}{\eta_{\,1}}ik} & {\displaystyle \frac{1}{3}\frac{\alpha_{2}}{\eta_{\,1}}\,k^{2}\mu_{e}\,L_{c}^{2}+\omega_{s}^{2}-\omega^{2}} & -{\displaystyle \frac{2}{3}\,\frac{\alpha_{2}}{\eta_{\,1}}\,k^{2}\mu_{e}\,L_{c}^{2}}\\
\\
{\displaystyle -ik\left(\frac{3\lambda_{e}+2\mu_{e}}{3\,\eta_{\,3}}\right)} & {\displaystyle -\frac{1}{3}\frac{\alpha_{2}}{\eta_{3}}\,k^{2}\mu_{e}\,L_{c}^{2}} & {\displaystyle \frac{2}{3}\frac{\alpha_{2}}{\eta_{3}}\,k^{2}\mu_{e}\,L_{c}^{2}+\omega_{p}^{2}-\omega^{2}}
\end{array}\right),\nonumber \\
\nonumber \\
\mathsf{E}_{2}=\mathsf{E}_{3} & =\left(\begin{array}{ccc}
k^{2}c_{\mathrm{s}}^{2}-\omega^{2} & {\displaystyle 2ik\,\frac{\mu_{e}}{\rho}} & {\displaystyle -ik\,\omega_{r}^{2}\frac{\eta_{2}}{\rho}}\\
\\
{\displaystyle -ik\,\frac{\mu_{e}}{\eta_{\,1}}} & {\displaystyle k^{2}\mu_{e}\,L_{c}^{2}\,\frac{1}{4}\,\frac{\alpha_{1}+\alpha_{2}}{\eta_{\,1}}+\omega_{s}^{2}-\omega^{2}} & {\displaystyle k^{2}\mu_{e}\,L_{c}^{2}\frac{1}{4}\,\frac{\alpha_{1}+\alpha_{2}}{\eta_{\,1}}}\\
\\
{\displaystyle \frac{1}{2}ik\,\omega_{r}^{2}} & {\displaystyle k^{2}\mu_{e}\,L_{c}^{2}\,\frac{1}{4}\,\frac{\alpha_{1}+\alpha_{2}}{\eta_{\,2}}} & {\displaystyle k^{2}\mu_{e}\,L_{c}^{2}\,\frac{1}{4}\,\frac{\alpha_{1}+\alpha_{2}}{\eta_{\,2}}+\omega_{r}^{2}-\omega^{2}}
\end{array}\right),\nonumber \\
\nonumber \\
\mathsf{E}_{4} & =\left(\begin{array}{ccc}
k^{2}\left(c_{\textrm{m}}^{\textrm{d}}\right)^{2}+\omega_{s}^{2}-\omega^{2} & 0 & 0\\
\\
0 & {\displaystyle k^{2}\left(c_{\textrm{m}}^{\textrm{vd}}\right)^{2}+\omega_{r}^{2}-\omega^{2}} & 0\\
\\
0 & 0 & k^{2}\left(c_{\textrm{m}}^{\textrm{d}}\right)^{2}+\omega_{s}^{2}-\omega^{2}
\end{array}\right),\label{eq:E4}
\end{align}
where $c_{\textrm{p}},c_{\mathrm{s}},c_{\textrm{m}}^{\textrm{d}},c_{\textrm{m}}^{\textrm{vd}},\omega_{r},\omega_{s},\omega_{p}$,
are defined in (\ref{eq:omega}) and (\ref{eq:asintoti obliqui}).
Introducing the auxiliary matrices $\widehat{\mathsf{E}}_{1},\widehat{\mathsf{E}}_{2},\widehat{\mathsf{E}}_{3},\in\mathbb{C}^{12}\times\mathbb{C}^{12}$,$\quad\widehat{\mathsf{E}}_{4}\in\R^{12}\times\R^{12}$
\begin{align*}
\widehat{\mathsf{E}}_{1} & =\begin{pmatrix}\mathsf{E}_{1} & 0 & 0 & 0\\
0 & \id_{3} & 0 & 0\\
0 & 0 & \id_{3} & 0\\
0 & 0 & 0 & \id_{3}
\end{pmatrix}, & \widehat{\mathsf{E}}_{2} & =\begin{pmatrix}\id_{3} & 0 & 0 & 0\\
0 & \mathsf{E}_{2} & 0 & 0\\
0 & 0 & \id_{3} & 0\\
0 & 0 & 0 & \id_{3}
\end{pmatrix}, & \widehat{\mathsf{E}}_{3} & =\begin{pmatrix}\id_{3} & 0 & 0 & 0\\
0 & \id_{3} & 0 & 0\\
0 & 0 & \mathsf{E}_{3} & 0\\
0 & 0 & 0 & \id_{3}
\end{pmatrix}, & \widehat{\mathsf{E}}_{4} & =\begin{pmatrix}\id_{3} & 0 & 0 & 0\\
0 & \id_{3} & 0 & 0\\
0 & 0 & \id_{3} & 0\\
0 & 0 & 0 & \mathsf{E}_{4}
\end{pmatrix},
\end{align*}
where $\id_{3}$ is the identity of $\mathbb{C}^{3\times3}$ or $\mathbb{R}^{3\times3}$
, we remark that 
\[
\boldsymbol{D}=\widehat{\mathsf{E}}_{1}\,\widehat{\mathsf{E}}_{2}\,\widehat{\mathsf{E}}_{3}\,\widehat{\mathsf{E}}_{4}
\]
and therefore
\begin{align}
\det\,\boldsymbol{D} & =\det\,\widehat{\mathsf{E}}_{1}\cdot\det\,\widehat{\mathsf{E}}_{2}\cdot\det\,\widehat{\mathsf{E}}_{3}\cdot\det\,\widehat{\mathsf{E}}_{4}=\det\,\widehat{\mathsf{E}}_{1}\cdot\left(\det\,\widehat{\mathsf{E}}_{2}\right)^{2}\cdot\det\,\widehat{\mathsf{E}}_{4}\nonumber \\
 & =\det\,\mathsf{E}_{1}\cdot\left(\det\,\mathsf{E}_{2}\right)^{2}\cdot\det\,\mathsf{E}_{4}.
\end{align}
 In this way the study of the solutions $\widehat{\omega}=\widehat{\omega}\left(k\right)$
of $\det\,\boldsymbol{D}=\det\,\boldsymbol{D}\left(k,\omega\right)=0$
is equivalent of the study of the solutions of the three equations
\begin{align}
{\textstyle \det\,\mathsf{E}_{1}\left(k,\omega\right)} & =0, & \det\,\mathsf{E}_{2}\left(k,\omega\right) & =0, & \det\,\mathsf{E}_{4} & \left(k,\omega\right)=0.\label{eq:dets}
\end{align}
The solutions $\widehat{\omega}=\widehat{\omega}\left(k\right)$ of
these characteristic equations are known as the \textbf{dispersion
curves} of the considered continuum. Introducing the matrices
\begin{gather}
\mathsf{B}_{i}=\mathsf{E}_{i}+\omega^{2}\id,\qquad i\in\left\{ 1,\ldots,4\right\} ,\label{eq: matrici Bi}
\end{gather}
 we can regard the problems in (\ref{eq:dets}) equivalently as eigenvalue
problems
\begin{equation}
\det\left(\mathsf{B}_{i}-\omega^{2}\id\right)=0,\label{eq:eigenvalues problems}
\end{equation}
where $\mathsf{B}_{i}$ are the blocks of the \textbf{symmetric acoustic
tensor}.

\subsection{Analysis of dispersion curves}

The dispersion curves for the relaxed micromorphic model are the functions
$\widehat{\omega}_{i}=\widehat{\omega}_{i}\left(k\right),\;i\in\left\{ 1,\ldots,12\right\} $
that are solutions of the polynomial equations (\ref{eq:dets}) or
equivalently the eigenvalues of the matrices in (\ref{eq: matrici Bi}).
Thanks to the invariant property of the eigenvalues with respect to
similarity transformations, showing that our matrices $\mathsf{E}_{i}$
are similar to real symmetric matrices, we obtain that the dispersion
curves are real valued functions\footnote{The eigenvalues of a symmetric real matrix are always reals.}
\cite{mad2016}. To this aim, we introduce the scaling matrices
\begin{equation}
\mathsf{\mathsf{P}}_{1}:=\begin{pmatrix}\sqrt{\rho} & 0 & 0\\
0 & i\sqrt{\frac{3\,\eta_{1}}{2}} & 0\\
0 & 0 & i\sqrt{3\,\eta_{3}}
\end{pmatrix},\qquad\mathsf{\mathsf{P}}_{2}:=\begin{pmatrix}\sqrt{\rho} & 0 & 0\\
0 & i\sqrt{2\,\eta_{1}} & 0\\
0 & 0 & i\sqrt{2\,\eta_{2}}
\end{pmatrix}.
\end{equation}
It is immediately seen that 
\begin{align}
\mathsf{P}_{1}\cdot\mathsf{E}_{1}\cdot\mathsf{P}_{1}^{-1} & =\left(\begin{array}{ccc}
{\displaystyle c_{\textrm{p}}^{2}\,k^{2}-\omega^{2}} & {\displaystyle \frac{2\sqrt{2}\mu_{e}}{\sqrt{3\,\varrho\,\eta_{1}}}}\,k & {\displaystyle \frac{\left(3\lambda_{e}+2\mu_{e}\right)}{\sqrt{3\,\varrho\,\eta_{3}}}}\,k\\
\\
{\displaystyle \frac{2\sqrt{2}\mu_{e}}{\sqrt{3\,\varrho\,\eta_{1}}}}\,k & {\displaystyle \frac{\alpha_{2}}{3\eta_{1}}\,L_{c}^{2}\mu_{e}\,k^{2}+\omega_{s}^{2}-\omega^{2}} & {\displaystyle -\frac{\alpha_{2}}{\sqrt{\eta_{1}\eta_{3}}}\frac{\sqrt{2}\,\mu_{e}L_{c}^{2}}{3}\,k^{2}}\\
\\
{\displaystyle \frac{\left(3\lambda_{e}+2\mu_{e}\right)}{\sqrt{3\,\varrho\,\eta_{3}}}}\,k & {\displaystyle -\frac{\alpha_{2}}{\sqrt{\eta_{1}\eta_{3}}}\frac{\sqrt{2}\mu_{e}L_{c}^{2}}{3}\,k^{2}} & {\displaystyle \frac{\alpha_{2}}{\eta_{3}}\,\frac{2\,\mu_{e}L_{c}^{2}}{3}\,k^{2}+\omega_{p}^{2}-\omega^{2}}
\end{array}\right),\\
\nonumber \\
\mathsf{\mathsf{P}}_{2}\cdot\mathsf{E}_{2}\cdot\mathsf{\mathsf{P}}_{2}^{-1} & =\left(\begin{array}{ccc}
k^{2}c_{s}^{2}-\omega^{2} & {\displaystyle \frac{\sqrt{2}\mu_{e}}{\sqrt{\varrho\,\eta_{1}}}\,k} & {\displaystyle -\frac{\sqrt{2}\mu_{c}}{\sqrt{\varrho\,\eta_{2}}}\,k}\\
\\
{\displaystyle \frac{\sqrt{2}\mu_{e}}{\sqrt{\varrho\,\eta_{1}}}\,k} & {\displaystyle k^{2}\mu_{e}\,L_{c}^{2}\,\frac{1}{4}\,\frac{\alpha_{1}+\alpha_{2}}{\eta_{\,1}}+\omega_{s}^{2}-\omega^{2}} & {\displaystyle \frac{\alpha_{1}+\alpha_{2}}{\sqrt{\eta_{1}\eta_{2}}}\frac{\mu_{e}L_{c}^{2}}{4}\,k^{2}}\\
\\
{\displaystyle -\frac{\sqrt{2}\mu_{c}}{\sqrt{\varrho\,\eta_{2}}}\,k} & {\displaystyle \frac{\alpha_{1}+\alpha_{2}}{\sqrt{\eta_{1}\eta_{2}}}\frac{\mu_{e}L_{c}^{2}}{4}\,k^{2}} & {\displaystyle k^{2}\mu_{e}\,L_{c}^{2}\,\frac{1}{4}\,\frac{\alpha_{1}+\alpha_{2}}{\eta_{\,2}}+\omega_{r}^{2}-\omega^{2}}
\end{array}\right).
\end{align}
Since $\mathsf{E}_{4}$ has only two distinct eigenvalues, we have
only two distinct dispersion curves as solutions of the system $\det\,\mathsf{E}_{4}=0$. 

\subsubsection{Cut-off frequencies}

The cut-off frequencies are the solutions of the equation ${\textstyle \det}\,\D\left(k,\omega\right)=0$
when $k=0$ and give us the values of the dispersion curves $\widehat{\omega}_{i}\left(k\right)$
at $k=0$. We find only three different non trivial solutions for
the equation ${\textstyle \det}\,\D\left(0,\omega\right)=0:$ 
\begin{align}
\omega_{s}\left(\me,\mh,\eta_{1}\right)=\sqrt{\frac{2\left(\me+\mh\right)}{\eta_{1}}},\qquad\omega_{r}\left(\mc,\eta_{2}\right)=\sqrt{\frac{2\,\mc}{\eta_{2}}},\nonumber \\
\label{eq:omega}\\
\omega_{p}\left(\lle,\lh,\me,\mh,\eta_{3}\right)=\sqrt{\frac{3\left(\lle+\lh\right)+2\left(\me+\mh\right)}{\eta_{3}}},\nonumber 
\end{align}
 with multiplicity of 5,3,1, respectively. The null solution has multiplicity
3. This means that if 
\begin{itemize}
\item $\mc>0$ we have 3 acoustic curves, and 9 optic curves,
\item $\mc=0$ we have 6 acoustic curves, and 6 optic curves.
\end{itemize}
The first novel result with respect to \cite{madeo2015wave} is that
the presence of three micro-inertia terms $\eta_{1},\eta_{2},\eta_{3}$
makes the three cut-off frequencies completely independent. This means
that having fixed the parameters $\left(\lle,\lh,\me,\mh,\mc\right)$
we can obtain all positive values for the cut-offs by simply changing
the values of the three inertia parameters $\eta_{1},\eta_{2},\eta_{3}$.
Whether the fact of having $\eta_{1}\neq\eta_{2}\neq\eta_{3}$ may
be interesting for applications on real band-gap metamaterials must
be checked on real experiments. It will be the objective of a forthcoming
paper to show that this is indeed the case. 

\subsubsection{Oblique asymptotes }

In this sub-section we want to give a tool to determine the oblique
asymptotes to the unbounded dispersion curves $\widehat{\omega}\left(k\right)$,
solutions of the equation $\det\,\boldsymbol{D}\left(k,\omega\right)=0$.
First of all, it is useful to notice that the matrix $\D$ can be
written as:
\[
\D\left(k,\omega\right)=\boldsymbol{A}_{2}\,k^{2}+\boldsymbol{B}_{2}\,\omega^{2}+\boldsymbol{A}_{1}\,k+\boldsymbol{C}_{0},
\]
where $\boldsymbol{A}_{2},\boldsymbol{B}_{2},\boldsymbol{A}_{1}$
and $\boldsymbol{C}_{0}$ are suitable $12\times12$ constant real
matrices with $\boldsymbol{B}_{2}$ invertible. Thus we have that
\begin{align*}
\det\,\D\left(k,\omega\right) & =\det\left(\boldsymbol{A}_{2}\,k^{2}+\boldsymbol{B}_{2}\,\omega^{2}+\boldsymbol{A}_{1}\,k+\boldsymbol{C}_{0}\right)\\
 & =\det\,\boldsymbol{B}_{2}\cdot\det\left(\boldsymbol{B}_{2}^{-1}\boldsymbol{A}_{2}\,k^{2}+\omega^{2}\id+\boldsymbol{B}_{2}^{-1}\boldsymbol{A}_{1}\,k+\boldsymbol{B}_{2}^{-1}\boldsymbol{C}_{0}\right)\\
 & =k^{24}\,\det\boldsymbol{B}_{2}\cdot\det\left(\boldsymbol{B}_{2}^{-1}\boldsymbol{A}_{2}+\frac{\omega^{2}}{k^{2}}\,\id+\frac{1}{k}\,\boldsymbol{B}_{2}^{-1}\boldsymbol{A}_{1}+\frac{1}{k^{2}}\,\boldsymbol{B}_{2}^{-1}\boldsymbol{C}_{0}\right).
\end{align*}
Thus the equation $\det\,\boldsymbol{D}\left(k,\omega\right)=0$ is
equivalent to 
\begin{equation}
p\left(k,\omega\right)=\det\left(\boldsymbol{B}_{2}^{-1}\boldsymbol{A}_{2}+\frac{\omega^{2}}{k^{2}}\,\id+\frac{1}{k}\,\boldsymbol{B}_{2}^{-1}\boldsymbol{A}_{1}+\frac{1}{k^{2}}\,\boldsymbol{B}_{2}^{-1}\boldsymbol{C}_{0}\right)=0.\label{eq: caratt}
\end{equation}

\begin{prop}
\label{prop:oblique asy}Let us assume that the equation $\det\,\boldsymbol{D}\left(k,\omega\right)=0$
admits a non-empty set of solutions $\Delta=\left\{ \widehat{\omega}_{i}\left(k\right)\right\} _{i=1}^{n\in\mathbb{N}}.$
Let us consider the subset $\Delta_{\infty}=\left\{ \widehat{\omega}_{j}\right\} _{j=1}^{s\leq n}$
constituted by the solutions verifying the following conditions: 
\end{prop}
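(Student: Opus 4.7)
My plan is to pass to the limit $|k|\to\infty$ in the rewritten characteristic equation~(\ref{eq: caratt}) and then read off both slopes and intercepts of the oblique asymptotes. Assume that a branch $\widehat{\omega}\in\Delta_\infty$ admits an oblique asymptote $\omega = m\,k + q + o(1)$, so that $\omega^2/k^2\to m^2$ and $1/k\to 0$ as $|k|\to\infty$. Since the determinant is a continuous polynomial expression in its matrix entries, the limit of $p(k,\widehat{\omega}(k))=0$ gives the necessary condition
\begin{equation*}
\det\!\left(\boldsymbol{B}_{2}^{-1}\boldsymbol{A}_{2} + m^{2}\,\mathds{1}\right)=0,
\end{equation*}
so that $-m^{2}$ must be an eigenvalue of $\boldsymbol{B}_{2}^{-1}\boldsymbol{A}_{2}$ (equivalently, a generalized eigenvalue of the matrix pencil $\boldsymbol{A}_{2}+\lambda\,\boldsymbol{B}_{2}$). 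Conversely, any such admissible $m$ generates an asymptote: locally around the point at infinity, one applies the implicit function theorem to the polynomial $p$ in the variable $\omega/k$ to extend the limit slope into a true branch.

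I would then exploit the block factorization $\boldsymbol{D}=\widehat{\mathsf{E}}_{1}\widehat{\mathsf{E}}_{2}\widehat{\mathsf{E}}_{3}\widehat{\mathsf{E}}_{4}$ already established in the excerpt, which yields $\det\boldsymbol{D}=\det\mathsf{E}_{1}\cdot(\det\mathsf{E}_{2})^{2}\cdot\det\mathsf{E}_{4}$. The slopes split accordingly: $\mathsf{E}_{4}$ is already diagonal in $k^{2}$ and delivers directly the two distinct slopes $c_{\mathrm{m}}^{\mathrm{d}}$ and $c_{\mathrm{m}}^{\mathrm{vd}}$; for $\mathsf{E}_{1}$ and $\mathsf{E}_{2}$ I would use the symmetrized forms $\mathsf{P}_{1}\mathsf{E}_{1}\mathsf{P}_{1}^{-1}$ and $\mathsf{P}_{2}\mathsf{E}_{2}\mathsf{P}_{2}^{-1}$ built with the scaling matrices of the preceding section, since eigenvalues are invariant under similarity. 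Extracting the coefficient of $k^{2}$ in each symmetrized block produces an explicit real symmetric $3\times 3$ matrix whose three non-negative eigenvalues are precisely the squared slopes $m^{2}$ of the three oblique asymptotes attached to that block.

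For the intercept $q$ I would push the expansion of $\omega(k)$ one order further, writing $\omega = m\,k + q + O(1/k)$ and inserting in~(\ref{eq: caratt}). Collecting the $1/k$ contribution and using the fact that $-m^{2}$ is a (generically simple) eigenvalue of $\boldsymbol{B}_{2}^{-1}\boldsymbol{A}_{2}$, a standard first-order perturbation argument with the corresponding left and right eigenvectors isolates $q$ as a quotient of bilinear forms in $\boldsymbol{B}_{2}^{-1}\boldsymbol{A}_{1}$. In the purely isotropic setting, the matrix $\boldsymbol{A}_{1}$ (collecting the coefficients of $k$) contains only the off-diagonal couplings proportional to $\mu_{e}$ and $\mu_{c}$, and a direct inspection shows that its projection on any eigenspace of $\boldsymbol{B}_{2}^{-1}\boldsymbol{A}_{2}$ vanishes, so that $q=0$ for every block; hence the oblique asymptotes all pass through the origin.

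The main obstacle is to separate the branches $\widehat{\omega}_{i}(k)$ that do admit an oblique asymptote (the elements of $\Delta_\infty$) from those that approach horizontal asymptotes, since both kinds satisfy the same polynomial equation. Concretely this amounts to showing that the polynomial $\det\mathsf{E}_{i}(k,\omega)$, viewed as a polynomial in $\omega^{2}$ with coefficients depending on $k$, factors for $|k|\to\infty$ into a part with bounded roots (yielding the cut-offs and horizontal asymptotes) and a part whose roots scale linearly in $k$ (yielding the oblique asymptotes). This separation is guaranteed by the non-degeneracy of $\boldsymbol{B}_{2}$ and by the distinctness of the non-zero eigenvalues of $\boldsymbol{B}_{2}^{-1}\boldsymbol{A}_{2}$ under the positivity assumptions~(\ref{eq:J}) on the constitutive parameters, and is the technical content that needs to be verified block by block.
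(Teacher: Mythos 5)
Your opening step is, in essence, the paper's entire proof: the authors simply observe that, viewing $p\left(k,\omega\right)$ in (\ref{eq: caratt}) as a polynomial in $\omega/k$ whose coefficients depend on $k$, those coefficients converge as $k\fr\infty$ to the coefficients of $q\left(k,\omega\right)$ in (\ref{eq:prob obli 2}), and they invoke the continuous dependence of polynomial roots on coefficients. What you add is genuinely different in two respects, and one addition is valuable. The continuity-of-roots argument by itself only delivers $\widehat{\omega}_{j}\left(k\right)/k\fr m_{j}$, i.e. $\widehat{\omega}_{j}\left(k\right)-m_{j}k=o\left(k\right)$, whereas the proposition claims the stronger statement $\widehat{\omega}_{j}-\widetilde{\omega}_{j}\fr0$ with $\widetilde{\omega}_{j}\left(k\right)=m_{j}k$; your first-order perturbation step \textemdash{} expanding $\omega=mk+q+O\left(1/k\right)$, isolating the $1/k$ coefficient $\textrm{tr}\left(\textrm{adj}\left(M_{0}\right)\left(2mq\,\boldsymbol{B}_{2}+\boldsymbol{A}_{1}\right)\right)$ with $M_{0}=\boldsymbol{A}_{2}+m^{2}\boldsymbol{B}_{2}$, and noting that in the symmetrized blocks $\boldsymbol{A}_{1}$ couples only the displacement row to the micro-distortion rows while the relevant eigenvectors of $M_{0}$ lie in one subspace or the other, so the bilinear form vanishes and $q=0$ \textemdash{} is precisely what is needed to upgrade $o\left(k\right)$ to $o\left(1\right)$, and the paper does not supply it. Your closing discussion of separating bounded from unbounded branches is also more explicit than the paper, which simply builds that separation into the hypotheses (conditions 1 and 2). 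The remaining soft spot in your write-up is that you \emph{assume} at the outset that the branch admits an expansion $\omega=mk+q+o\left(1\right)$; condition 2 only gives existence of the limit of $\omega/k$, and the implicit-function-theorem step you invoke to produce the full asymptote is only sketched. That gap is no worse than the one in the published proof, but it is the step that would need to be made rigorous for the proposition as stated.
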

\begin{enumerate}
\item \textit{$\widehat{\omega}_{j}$ is a monotonically increasing function
of $k$ for $j=1,\ldots,s$,}
\item \textit{$\lim_{k\fr\infty}\frac{\widehat{\omega}_{j}\left(k\right)}{k}\neq0$
for $j=1,\ldots,s$, (which implies that $\widehat{\omega}_{j}$ is
unbounded and so without horizontal asymptote),}
\end{enumerate}
\textit{and we assume that $\Delta_{\infty}\neq\textrm{Ø}$. If we
consider a reduced problem
\begin{equation}
q\left(k,\omega\right)=\det\left(\boldsymbol{B}_{2}^{-1}\boldsymbol{A}_{2}+\left(\frac{\omega}{k}\right)^{2}\,\id\right)=0,\label{eq:prob obli 2}
\end{equation}
then this problem (\ref{eq:prob obli 2}) admits solutions $\left\{ \widetilde{\omega}_{j}\left(k\right)\right\} _{j=1}^{s}$
such that $\lim_{k\fr\infty}\left(\widehat{\omega}_{j}-\widetilde{\omega}_{j}\right)=0$
for every $j=1,\ldots,s$.}
\begin{proof}
This is a simple application of the property of the continuous dependence
of the roots of a polynomial on its coefficients. We can remark that
under condition 2 of the proposition, if we think the coefficients
of $p\left(k,\omega\right)$ as functions of $k$ (because we are
looking for solutions of the type $k\mapsto\left(k,\omega\left(k\right)\right)$),
then due to the continuity of the determinant, they converge to the
coefficients of $q\left(k,\omega\right)$ and so do its roots. 
\end{proof}

\begin{rem}
Proposition \ref{prop:oblique asy} does not work for the bounded
dispersion curves because in this case also the term $\frac{\widetilde{\omega}\left(k\right)}{k}$
converges to zero when $k\fr\infty$ because bounded curves violate
the conditions 2 of proposition \ref{prop:oblique asy}. It is for
this reason that we will give another argument to look for the horizontal
asymptotes.
\end{rem}
In our case the roots $\widetilde{\omega}_{j}\left(k\right)$ of the
reduced polynomial (\ref{eq:prob obli 2}) can be computed more easily
and are found to be straight lines with slopes:
\begin{align}
c_{\mathrm{m}}^{\mathrm{d}}=\sqrt{\frac{\alpha_{1}\,\me\,L_{c}^{2}}{\eta_{1}}}, &  & c_{\mathrm{m}}^{\mathrm{vd}}=\sqrt{\frac{\left(\alpha_{1}+2\,\alpha_{3}\right)\,\me\,L_{c}^{2}}{3\,\eta_{2}}}, &  & c_{\mathrm{m}}^{\mathrm{dr}}=\frac{1}{2}\sqrt{\frac{\left(\eta_{1}+\eta_{2}\right)}{\eta_{1}\,\eta_{2}}\left(\alpha_{1}+\alpha_{2}\right)\,\me\,L_{c}^{2}},\nonumber \\
\label{eq:asintoti obliqui}\\
c_{\mathrm{s}}=\sqrt{\frac{\me+\mc}{\rho}}, &  & c_{\mathrm{p}}=\sqrt{\frac{2\,\me+\lle}{\rho}}, &  & c_{\mathrm{m}}^{\mathrm{r}}=\sqrt{\frac{\left(2\,\eta_{1}+\eta_{3}\right)}{3\,\eta_{1}\,\eta_{3}}\,\alpha_{2}\,\me\,L_{c}^{2}}.\nonumber 
\end{align}

\subsubsection{Horizontal asymptotes}

In this subsection we want to investigate the behavior at infinity
of the dispersion curves that are bounded i.e. that have horizontal
asymptote. Thus, let $\widehat{\omega}\left(k\right)$ be a bounded
solution of the equation ${\textstyle \det}\,\D\left(k,\omega\right)=0$.
Under the assumption that this function is monotonically increasing
in $k$, setting 
\[
\widehat{\omega}_{*}:=\sup_{\R^{+}}\left\{ \widehat{\omega}\left(k\right)\right\} <\infty,
\]
it is straightforward to show that $\widehat{\omega}\left(k\right)$
admit a horizontal asymptote whose value is $\widehat{\omega}_{*}$.
Thanks to the particular expression of the function ${\textstyle \det}\,\D\left(k,\omega\right)$
we can find a necessary (and computable) condition on $\widehat{\omega}_{*}$
both in the general relaxed micromorphic case and in the internal
variable model. Indeed, in the general $12\times12$ case it can be
checked that the function ${\textstyle \det}\,\D\left(k,\omega\right)$
is a polynomial of even order in the two variables $k,\omega$, that
can be written as
\begin{equation}
{\textstyle \det}\,\D\left(k,\omega\right)=\sum_{h=0}^{12}c_{2h}\left(\omega^{2}\right)k^{2h},\qquad\textrm{with}\qquad c_{2h}:\left[0,+\infty\right]\fr\left[0,+\infty\right]\label{eq:poly}
\end{equation}
polynomial functions in $\omega^{2}$. Our calculation gives that
\[
c_{24}\left(\omega^{2}\right)=c_{22}\left(\omega^{2}\right)=c_{20}\left(\omega^{2}\right)\equiv0\qquad\textrm{and}\qquad c_{2h}\left(\omega^{2}\right)\neq0\quad\textrm{if}\quad h<10.
\]
In order to compare our relaxed model to the internal variable one
(which is obtained setting $\alpha_{1}=\alpha_{2}=\alpha_{3}=0$),
we can regard the polynomials $c_{2h}\left(\omega^{2}\right)$ as
functions of the three parameters $\alpha_{1},\alpha_{2}$ and $\alpha_{3}$.
Our calculation shows that the polynomials $c_{2h}\left(\omega^{2}\right)$
are zero for the following combinations of these three scalars:

\begin{table}[H]
\begin{centering}
\begin{tabular}{>{\centering}p{3cm}>{\centering}p{6cm}}
\noalign{\vskip4mm}
$c_{18}\left(\omega\right)$ & $\alpha_{1}=0\;\textrm{or}\;\alpha_{2}=0$\tabularnewline[2mm]
\hline 
\noalign{\vskip2mm}
$c_{16}\left(\omega\right)$ & $\alpha_{1}=0$\tabularnewline[2mm]
\hline 
\noalign{\vskip2mm}
$c_{14}\left(\omega\right)$ & $\alpha_{1}=0\;\textrm{and}\;\alpha_{2}=0$\tabularnewline[2mm]
\hline 
\noalign{\vskip2mm}
$c_{12}\left(\omega\right)$ & $\alpha_{1}=0\;\textrm{and}\;\alpha_{2}=0$\tabularnewline[2mm]
\hline 
\noalign{\vskip2mm}
$c_{10}\left(\omega\right)$ & $\alpha_{1}=0\;\textrm{and}\;\alpha_{2}=0$\tabularnewline[2mm]
\hline 
\noalign{\vskip2mm}
$c_{8}\left(\omega\right)$ & $\alpha_{1}=0\;\textrm{and}\;\alpha_{2}=0\;\textrm{and}\;\alpha_{3}=0$\tabularnewline[2mm]
\hline 
\noalign{\vskip2mm}
$c_{6}\left(\omega\right)$ & -\tabularnewline[2mm]
\hline 
\noalign{\vskip2mm}
$c_{4}\left(\omega\right)$ & -\tabularnewline[2mm]
\hline 
\noalign{\vskip2mm}
$c_{2}\left(\omega\right)$ & -\tabularnewline[2mm]
\hline 
\noalign{\vskip2mm}
$c_{0}\left(\omega\right)$ & -\tabularnewline[2mm]
\end{tabular}
\par\end{centering}
\caption{Effect of the parameters $\alpha_{1},\alpha_{2},\alpha_{3}$ on the
order in $k$ of the polynomial $\det_{\boldsymbol{D}}$.}
\end{table}
We can hence see that in the case of the internal variable model,
the order of ${\textstyle \det}\,\D\left(k,\omega\right)$ is smaller
and 
\[
{\textstyle \det^{\mathcal{I}}}\,\D\left(k,\omega\right)=\sum_{h=0}^{3}\hat{c}_{2h}\left(\omega^{2}\right)k^{2h},
\]
where the functions $\hat{c}_{2h}\left(\omega^{2}\right)$ and ${\textstyle \det^{\mathcal{I}}}\,\D\left(k,\omega\right)$
are obtained from the $c_{2h}\left(\omega^{2}\right)$ setting $\alpha_{1}=\alpha_{2}=\alpha_{3}=0$. 

Whit the purpose of clarify the general tool that we will find to
calculate the horizontal asymptote of ${\textstyle \det}\,\D\left(k,\omega\right)$,
we propose the following example.
\begin{example}
Let us consider the polynomial
\[
{\textstyle \det}\,\D\left(k,\omega\right)=c_{0}\left(\omega^{2}\right)1+c_{2}\left(\omega^{2}\right)k^{2}+c_{4}\left(\omega^{2}\right)k^{4}+c_{6}\left(\omega^{2}\right)k^{6}=p\left(k,\omega\right),
\]
where we assume that $c_{0},c_{2},c_{4},c_{6}:\R^{+}\fr\R^{+}$ are
continuous. We look for solutions $\widehat{\omega}=\widehat{\omega}\left(k\right)$
of 
\[
0=p\left(k,\widehat{\omega}\left(k\right)\right)
\]
$\Longleftrightarrow$
\begin{equation}
0=c_{0}\left(\left(\widehat{\omega}\left(k\right)\right)^{2}\right)+c_{2}\left(\left(\widehat{\omega}\left(k\right)\right)^{2}\right)k^{2}+c_{4}\left(\left(\widehat{\omega}\left(k\right)\right)^{2}\right)k^{4}+c_{6}\left(\left(\widehat{\omega}\left(k\right)\right)^{2}\right)k^{6}.\label{esempio1}
\end{equation}
Dividing (\ref{esempio1}) by $k^{6}$ we have equivalently
\begin{equation}
0=\frac{c_{0}\left(\left(\widehat{\omega}\left(k\right)\right)^{2}\right)}{k^{6}}+\frac{c_{2}\left(\left(\widehat{\omega}\left(k\right)\right)^{2}\right)}{k^{4}}+\frac{c_{4}\left(\left(\widehat{\omega}\left(k\right)\right)^{2}\right)}{k^{2}}+c_{6}\left(\left(\widehat{\omega}\left(k\right)\right)^{2}\right).\label{eq:kw}
\end{equation}
Since 
\[
\lim_{k\fr\infty}\frac{c_{0}\left(\left(\widehat{\omega}\left(k\right)\right)^{2}\right)}{k^{6}}=\lim_{k\fr\infty}\frac{c_{2}\left(\left(\widehat{\omega}\left(k\right)\right)^{2}\right)}{k^{4}}=\lim_{k\fr\infty}\frac{c_{4}\left(\left(\widehat{\omega}\left(k\right)\right)^{2}\right)}{k^{2}}=0,
\]
and
\begin{gather*}
0=\lim_{k\fr\infty}c_{6}\left(\left(\widehat{\omega}\left(k\right)\right)^{2}\right)=c_{6}\left(\omega_{*}^{2}\right)
\end{gather*}
we obtain the necessary condition
\begin{equation}
c_{6}\left(\omega_{*}^{2}\right)=0.\label{eq:cond necessaria}
\end{equation}

\begin{figure}[H]
\begin{centering}
\includegraphics[scale=0.4]{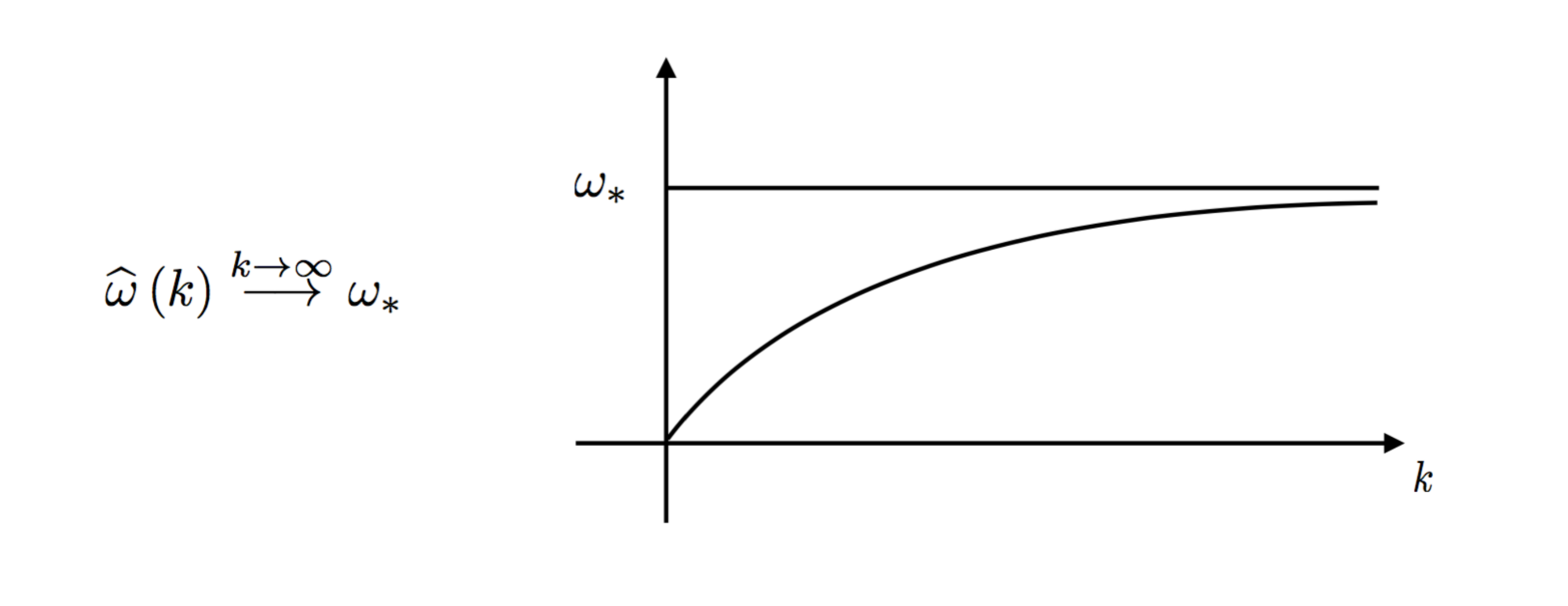}
\par\end{centering}
\caption{A bounded solution $\widehat{\omega}$ and its horizontal asymptote
$\omega_{*}$.}
\end{figure}
\end{example}
The condition (\ref{eq:cond necessaria}) is a necessary condition
that the horizontal asymptote has to satisfy. Because in our situation
we can not find an explicit expression for the dispersion curves,
the only possibility that we have to calculate the values of the horizontal
asymptote is to test the necessary condition (\ref{eq:cond necessaria}).
Adopting the notations proposed here, we can so finally prove the
following
\begin{prop}
\label{prop:CN}Let $\widehat{\omega}\left(k\right)$ be a bounded
solution of the problem ${\textstyle \det}\,\D\left(k,\omega\right)=0$
with horizontal asymptote $\omega_{*}$. Then $c_{18}\left(\omega_{*}^{2}\right)=0$. 
\end{prop}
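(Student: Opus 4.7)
The plan is to imitate the argument laid out in the example preceding the proposition, using $18$ as the top power of $k$ in $\det\boldsymbol{D}$ (since the coefficients $c_{24},c_{22},c_{20}$ vanish identically). The key ingredients are that the $c_{2h}$ are polynomials in $\omega^{2}$ (hence continuous), that $\widehat{\omega}(k)\to\omega_{*}$ by assumption, and that the characteristic equation $\det\boldsymbol{D}(k,\widehat{\omega}(k))=0$ holds for every $k>0$.

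Concretely, I would first recall from the representation
\[
\det\boldsymbol{D}(k,\omega)=\sum_{h=0}^{12}c_{2h}(\omega^{2})\,k^{2h}
\]
together with $c_{24}\equiv c_{22}\equiv c_{20}\equiv 0$ that the identity $\det\boldsymbol{D}(k,\widehat{\omega}(k))=0$ can be written
\[
0=\sum_{h=0}^{9}c_{2h}\!\left(\widehat{\omega}(k)^{2}\right)k^{2h}.
\]
I would then divide this identity by $k^{18}$ (legitimate for $k>0$), obtaining
\[
0=c_{18}\!\left(\widehat{\omega}(k)^{2}\right)+\sum_{h=0}^{8}\frac{c_{2h}\!\left(\widehat{\omega}(k)^{2}\right)}{k^{18-2h}}.
\]

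Next, I would pass to the limit $k\to\infty$. Since $\widehat{\omega}$ is bounded with horizontal asymptote $\omega_{*}$, we have $\widehat{\omega}(k)\to\omega_{*}$, and continuity of each polynomial $c_{2h}$ yields $c_{2h}(\widehat{\omega}(k)^{2})\to c_{2h}(\omega_{*}^{2})\in\mathbb{R}$. In particular each such value is finite, hence bounded in $k$, while for $h\le 8$ the denominators $k^{18-2h}$ tend to $+\infty$, so every term of the remainder sum tends to $0$. Taking the limit in the displayed identity therefore gives exactly
\[
0=\lim_{k\to\infty}c_{18}\!\left(\widehat{\omega}(k)^{2}\right)=c_{18}(\omega_{*}^{2}),
\]
which is the required necessary condition.

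The only delicate point, and the one I would treat carefully, is the justification that the leading power of $k$ in $\det\boldsymbol{D}$ is indeed $k^{18}$ (not some higher power), i.e.\ that the three leading coefficients $c_{20},c_{22},c_{24}$ actually vanish identically; this is precisely the computational claim stated just before the proposition and recorded in the table, so I would simply invoke it. Everything else is a soft continuity/limit argument, and no detailed calculation on the polynomials $c_{2h}(\omega^{2})$ is needed for this statement — the analogous computation would only be required when one wants to solve the resulting equation $c_{18}(\omega_{*}^{2})=0$ to read off explicit asymptote values.
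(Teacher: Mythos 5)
Your argument is correct and coincides with the paper's own proof: both divide the identity $\sum_{h=0}^{9}c_{2h}\left(\widehat{\omega}\left(k\right)^{2}\right)k^{2h}=0$ by $k^{18}$ and pass to the limit $k\fr\infty$ using the continuity of the polynomial coefficients $c_{2h}$ and the convergence $\widehat{\omega}\left(k\right)\fr\omega_{*}$. Your extra care in flagging the computational input $c_{24}=c_{22}=c_{20}\equiv0$ as the one non-soft ingredient is appropriate, but otherwise the two proofs are the same.
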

\begin{proof}
Being $\widehat{\omega}\left(k\right)$ a solution of ${\textstyle \det}\,\D\left(k,\omega\right)=0$,
we have ${\textstyle \det}\,\D\left(k,\widehat{\omega}\left(k\right)\right)=0\;\forall\,k\in\left(0,\infty\right),$
i.e.
\[
\sum_{h=0}^{9}c_{2h}\left(\left(\widehat{\omega}\left(k\right)\right)^{2}\right)k^{2h}=0\qquad\forall\,k\in\left(0,\infty\right).
\]
Dividing by $k^{18}$ we find
\begin{equation}
\sum_{h=0}^{9}c_{2h}\left(\left(\widehat{\omega}\left(k\right)\right)^{2}\right)k^{2h-18}=0\quad\forall\,k\in\left(0,\infty\right).\label{eq:asintito orizz}
\end{equation}
For the continuity of the $c_{i}$ functions we have
\[
\lim_{k\fr+\infty}c_{2h}\left(\left(\widehat{\omega}\left(k\right)\right)^{2}\right)=c_{2h}\left(\omega_{*}^{2}\right)\qquad\forall\,h.
\]
So passing to the limit in (\ref{eq:asintito orizz}) we find
\[
\lim_{k\fr+\infty}\sum_{h=0}^{9}c_{2h}\left(\left(\widehat{\omega}\left(k\right)\right)^{2}\right)k^{2h-18}=c_{18}\left(\omega_{*}^{2}\right)=0.
\]
\end{proof}
\begin{cor}
If we have $\alpha_{1}=\alpha_{2}=\alpha_{3}=0$, and $\widehat{\omega}\left(k\right)$
is a solution of the problem $\mathrm{det}\,\D\left(k,\omega\right)=0$
with horizontal asymptote $\omega_{*}$, then $\widehat{c}_{6}\left(\omega_{*}\right)=0$. 
\end{cor}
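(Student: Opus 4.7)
The plan is to mimic the argument of Proposition \ref{prop:CN} verbatim, with the only change being that the highest power of $k$ in $\det\D$ has dropped from $18$ to $6$ once we switch off all three weights $\alpha_1,\alpha_2,\alpha_3$. So there is really nothing new to prove: the corollary is a direct specialization.

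First, I would invoke the table preceding the statement together with the explicit formula
\[
{\textstyle \det^{\mathcal{I}}}\,\D(k,\omega)=\sum_{h=0}^{3}\widehat{c}_{2h}(\omega^{2})\,k^{2h}
\]
recorded just above the corollary, to conclude that under $\alpha_{1}=\alpha_{2}=\alpha_{3}=0$ the coefficients $c_{8},c_{10},\dots,c_{18}$ vanish identically, so that $\det\D=\det^{\mathcal{I}}\D$ has degree $6$ in $k$, with leading coefficient $\widehat{c}_{6}(\omega^{2})$.

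Next, for a bounded solution $\widehat{\omega}(k)$ with horizontal asymptote $\omega_{*}$, the identity $\det^{\mathcal{I}}\D(k,\widehat{\omega}(k))=0$ for all $k>0$ rewrites as
\[
\sum_{h=0}^{3}\widehat{c}_{2h}\bigl((\widehat{\omega}(k))^{2}\bigr)\,k^{2h}=0 \qquad \forall\,k\in(0,+\infty).
\]
Dividing through by $k^{6}$ and letting $k\fr+\infty$, boundedness of $\widehat{\omega}$ together with continuity of the $\widehat{c}_{2h}$ makes $\widehat{c}_{2h}((\widehat{\omega}(k))^{2})$ bounded for each $h$, so the three terms with $h\in\{0,1,2\}$ (which carry a factor $1/k^{6-2h}$ with $6-2h>0$) tend to zero. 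The remaining term $\widehat{c}_{6}((\widehat{\omega}(k))^{2})$ converges to $\widehat{c}_{6}(\omega_{*}^{2})$ by continuity and by $\widehat{\omega}(k)\fr\omega_{*}$, yielding $\widehat{c}_{6}(\omega_{*}^{2})=0$.

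There is no genuine obstacle: the argument is a one-line specialization of Proposition \ref{prop:CN} obtained by replacing $18$ by $6$ throughout. The only conceptual point worth flagging is the preliminary identification of the new leading coefficient, i.e.\ the fact that the entries $c_{8},\ldots,c_{18}$ indeed vanish when all $\alpha_i=0$, which is precisely what the preceding table certifies.
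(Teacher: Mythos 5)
Your proposal is correct and follows exactly the paper's route: the corollary is the specialization of Proposition \ref{prop:CN} to the internal variable case, where the table of vanishing coefficients reduces $\det\,\D$ to the degree-$6$ polynomial $\sum_{h=0}^{3}\widehat{c}_{2h}(\omega^{2})k^{2h}$, and the divide-by-$k^{6}$-and-pass-to-the-limit argument is precisely the one the paper itself carries out in the worked example preceding the proposition. Nothing is missing.
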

Performing the calculation for the general relaxed micromorphic model
and the internal variable one, and considering only the positive roots,
we find the following possible values $\omega_{*}$ for the horizontal
asymptotes 
\begin{equation}
c_{18}\left(\omega_{*}\right)=0\quad\Leftrightarrow\quad\omega_{*}\in\left\{ \sqrt{\frac{2\:\mh}{\eta_{1}+\eta_{2}}},\:\sqrt{\frac{3\left(\lh+2\:\mh\right)}{2\:\eta_{1}+\eta_{3}}}\right\} \label{eq:asintoti orizzontali}
\end{equation}
and
\begin{gather}
\widehat{c}_{6}\left(\omega_{*}\right)=0\quad\Leftrightarrow\quad\omega_{*}\in\left\{ \sqrt{\frac{2\,\mc}{\eta_{2}}},\:\sqrt{\frac{2\left(\me+\mc\right)}{\eta_{1}}},\:\sqrt{\frac{q_{1}\pm\sqrt{q_{2}}}{\eta_{1}\eta_{2}\left(\mu_{c}+\mu_{e}\right)}},\:\sqrt{\frac{p_{1}\pm\sqrt{\left(p_{2}\right)^{2}-p_{3}}}{6\,\eta_{1}\eta_{3}\left(\lambda_{e}+2\mu_{e}\right)}}\right\} ,\label{eq:as oriz variab interne}
\end{gather}
where 
\begin{align*}
q_{1} & =\eta_{1}\mu_{c}\,\mu_{e}+\eta_{2}\left(\mu_{c}\left(\mu_{e}+\mh\right)+\mu_{e}\,\mh\right),\\
q_{2} & =\left(\left(\eta_{1}+\eta_{2}\right)\mu_{c}\,\mu_{e}+\eta_{2}\,\mh\left(\mu_{c}+\mu_{e}\right)\right){}^{2}-4\,\eta_{1}\,\eta_{2}\,\mu_{c}\,\mu_{e}\,\mh\left(\mu_{c}+\mu_{e}\right),
\end{align*}
and
\begin{align*}
p_{1} & =2\eta_{3}\left(3\lambda_{e}\left(\mu_{e}+\mh\right)+2\mu_{e}\left(\mu_{e}+3\mh\right)\right)+\eta_{1}\left(3\lambda_{e}\left(4\mu_{e}+3\lh+2\mh\right)\right)\\
 & \qquad+2\eta_{3}2\mu_{e}\left(4\mu_{e}+9\lh+6\mh\right),\\
p_{2} & =2\eta_{3}\left(3\lambda_{e}\left(\mu_{e}+\mh\right)+2\mu_{e}\left(\mu_{e}+3\mh\right)\right)+\eta_{1}\left(3\lambda_{e}\left(4\mu_{e}+3\lh+2\mh\right)\right)\\
 & \qquad+2\eta_{3}2\mu_{e}\left(4\mu_{e}+9\lh+6\mh\right),\\
p_{3} & =72\eta_{1}\eta_{3}\left(\lambda_{e}+2\mu_{e}\right)\big(\lambda_{e}\left(3\lh\left(\mu_{e}+\mh\right)+2\mh\left(3\mu_{e}+\mh\right)\right)\\
 & \qquad+2\mu_{e}\left(\lh\left(\mu_{e}+3\mh\right)+2\mh\left(\mu_{e}+\mh\right)\right)\big).
\end{align*}
We set 
\[
\omega_{l}^{int}=\sqrt{\frac{p_{1}-\sqrt{\left(p_{2}\right)^{2}-p_{3}}}{6\,\eta_{1}\eta_{3}\left(\lambda_{e}+2\mu_{e}\right)}},\quad{\displaystyle \omega_{t}^{int}=\sqrt{\frac{q_{1}-\sqrt{q_{2}}}{\eta_{1}\eta_{2}\left(\mu_{c}+\mu_{e}\right)}}},\quad\omega_{1}^{int}=\sqrt{\frac{p_{1}+\sqrt{\left(p_{2}\right)^{2}-p_{3}}}{6\,\eta_{1}\eta_{3}\left(\lambda_{e}+2\mu_{e}\right)}},\quad{\displaystyle \omega_{2}^{int}=\sqrt{\frac{q_{1}+\sqrt{q_{2}}}{\eta_{1}\eta_{2}\left(\mu_{c}+\mu_{e}\right)}}}.
\]
Even if we leave not explicitly proven that the dispersion curves
are monotonically increasing for all values of the constitutive parameters,
we checked that it is indeed the case for a large number of numerical
values of the parameters respecting positive definiteness of the strain
energy density. Moreover, for all the checked values of the parameters,
the values of $\omega_{*}$ computed by setting the coefficient of
the higher order of $k$ appearing in the polynomial (\ref{eq:poly})
to be equal to zero, (see (\ref{eq:asintoti orizzontali}) for the
relaxed micromorphic model and (\ref{eq:as oriz variab interne})
for the internal variable one) are always seen to be the values of
the horizontal asymptotes of the bounded dispersion curves. Hence,
even if we do not have an explicit proof that setting $c_{18}=0$
(or $c_{6}=0$ for the internal variable model) is also a sufficient
condition for horizontal asymptotes, this is indeed the case for all
combinations of the parameters which are sensible to be interesting
for applications. We explicitly remark that the horizontal asymptotes
shown in (\ref{eq:asintoti orizzontali}) are those found for the
relaxed micromorphic model with $\alpha_{1},\alpha_{2},\alpha_{3}\neq0$,
while those shown in (\ref{eq:as oriz variab interne}) are relative
to the internal variable case $\alpha_{1},\alpha_{2},\alpha_{3}=0$.
We notice that, as shown in \cite{madeo2016reflection}, the horizontal
asymptotes for the internal variable model significantly differ from
those obtained with the full non-local model (with non-vanishing $\alpha_{1}$,$\alpha_{2}$
and $\alpha_{3}$). This means that the internal variable model is
a pathological limit of the relaxed micromorphic model, in the sense
that setting to zero $\alpha_{1}$,$\alpha_{2}$ and $\alpha_{3}$
drastically changes the asymptotic properties of the dispersion curves.

\subsubsection{Tangents in 0 to the acoustic curves}

Another very important geometric characteristics of the dispersion
curves are the slopes at the origin of the acoustic branches. In this
way we can also directly compare our relaxed model to classical isotropic
linear elasticity. In the case that we are studying in this paper,
the direct computation of these quantities, given the great complexity
of the involved expressions, is impossible. Therefore we work with
the implicit function theorem applied to the expression of the determinant
equation. First of all, we remark that the matrix $\mathsf{E}_{4}$
given in (\ref{eq:E4}) cannot generate acoustic branches since ${\textstyle \det\,\mathsf{E}_{4}\left(0,0\right)}\neq0$
(when\footnote{If $\mc=0$, having that $\omega_{r}=0$, one of the uncoupled branches
becomes acoustic. } $\mc>0$). Thus the two independent acoustic branches are generated
by the matrices $\mathsf{E}_{1}$ and $\mathsf{E}_{2}$. The acoustic
branches are those solutions $\widehat{\omega}{}_{\textrm{aco},\alpha}\left(k\right)$
of the equations
\[
{\textstyle \det}\,\mathsf{E}_{\a}\left(k,\omega\right)=0,\qquad\alpha=1,2,
\]
such that $\widehat{\omega}{}_{\textrm{aco}}\left(0\right)=0$. It
can be checked that, for all $k\geq0$, the two independent acoustic
curves $\widehat{\omega}{}_{\textrm{aco};1}\left(k\right)$ and $\widehat{\omega}{}_{\textrm{aco};2}\left(k\right)$
verify the identities

\begin{equation}
0={\textstyle \det}\,\mathsf{E}_{\a}\left(k,\widehat{\omega}{}_{\textrm{aco};\alpha}\left(k\right)\right)=\sum_{p,q=1}^{3}\psi_{pq}^{\left(\alpha\right)}\left(\boldsymbol{m}\right)k^{2p}\widehat{\omega}_{\textrm{aco};\alpha}^{2q}\left(k\right)+\sum_{p=1}^{3}\varphi_{p}^{\left(\a\right)}\left(\boldsymbol{m}\right)k^{2p}+\sum_{q=1}^{3}\zeta_{q}^{\left(\a\right)}\left(\boldsymbol{m}\right)\widehat{\omega}_{\textrm{aco};\alpha}^{2q}\left(k\right)+\sigma^{\left(\a\right)}\left(\boldsymbol{m}\right)\label{eq:tang 0 2}
\end{equation}
for every $k\geq0$ and $\alpha\in\left\{ 1,2\right\} $, where $\psi_{pq}^{\left(\alpha\right)},\varphi_{p}^{\left(\a\right)},\zeta_{q}^{\left(\a\right)},\sigma^{\left(\a\right)}$
are real scalar functions of the vector of material parameters of
the model
\begin{equation}
\boldsymbol{m}=\left(\me,\mh,\le,\lh,\rho,\eta_{1},\eta_{2},\eta_{3},\a_{1},\a_{2},\a_{3},L_{c}\right).
\end{equation}
 In order to isolate the quantity $\widehat{\omega}'_{\textrm{aco};\a}\left(0\right)$,
which is the slope of the acoustic curves in $k=0$, we remark that
\begin{equation}
\forall\,k\geq0,\quad0=\frac{d}{dk}\left[{\textstyle \det}\,\mathsf{E}_{\a}\left(k,\widehat{\omega}{}_{\textrm{aco};\alpha}\left(k\right)\right)\right]=\frac{\partial}{\partial k}\left[{\textstyle \det}\,\mathsf{E}_{\a}\left(k,\widehat{\omega}{}_{\textrm{aco};\alpha}\left(k\right)\right)\right]+\frac{\partial}{\partial\omega}\left[{\textstyle \det}\,\mathsf{E}_{\a}\left(k,\widehat{\omega}{}_{\textrm{aco};\alpha}\left(k\right)\right)\right]\cdot\underbrace{\frac{d\,\widehat{\omega}{}_{\textrm{aco};\alpha}}{dk}\left(k\right)}_{=\widehat{\omega}{}_{\textrm{aco};\alpha}'\left(k\right)}.
\end{equation}
However, since we compute that $\frac{\partial}{\partial\omega}\left.{\textstyle \det}\,\mathsf{E}_{\a}\left(k,\widehat{\omega}{}_{\textrm{aco};\alpha}\left(k\right)\right)\right|_{k=0}=0$,
the latter relation does not give any condition on $\widehat{\omega}'_{\textrm{aco};\a}\left(0\right).$
For this reason we perform also the second derivative (the calculations
are given in Appendix 3) finding then
\begin{gather}
\widehat{\omega}'_{\textrm{aco};\a}\left(0\right)=\sqrt{\frac{-\,\varphi_{1}^{\left(\a\right)}\left(\boldsymbol{m}\right)}{\zeta_{1}^{\left(\a\right)}\left(\boldsymbol{m}\right)}}
\end{gather}
with
\begin{align}
\varphi_{1}^{\left(1\right)}\left(\boldsymbol{m}\right) & =2\lambda_{e}\left(3\lh\left(\mu_{e}+\mh\right)+2\mh\left(3\mu_{e}+\mh\right)\right),\\
 & \quad+4\mu_{e}\left(\lh\left(\mu_{e}+3\mh\right)+2\mh\left(\mu_{e}+\mh\right)\right)\\
\zeta_{1}^{\left(1\right)}\left(\boldsymbol{m}\right) & =-2\varrho\left(\mu_{e}+\mh\right)\left(2\left(\mu_{e}+\mh\right)+3\lambda_{e}+3\lh\right),
\end{align}
and
\begin{equation}
\zeta_{1}^{\left(2\right)}\left(\boldsymbol{m}\right)\varphi_{1}^{\left(2\right)}\left(\boldsymbol{m}\right)=-4\,\varrho\,\mu_{c}\left(\mu_{e}+\mh\right),\qquad=4\,\mu_{c}\,\mu_{e}\,\mh.
\end{equation}
Recalling, the final expressions for the two slopes are
\begin{align}
\widehat{\omega}'_{\textrm{aco};1}\left(0\right) & =\sqrt{\frac{\lambda_{e}\left(3\lh\left(\mu_{e}+\mh\right)+2\mh\left(3\mu_{e}+\mh\right)\right)+2\mu_{e}\left(\lh\left(\mu_{e}+3\mh\right)+2\mh\left(\mu_{e}+\mh\right)\right)}{\rho\left(\mu_{e}+\mh\right)\left(2\left(\mu_{e}+\mh\right)+3\left(\lambda_{e}+\lh\right)\right)}}\nonumber \\
\label{eq:slopes}\\
\widehat{\omega}'_{\textrm{aco};2}\left(0\right) & =\sqrt{\frac{\mu_{e}\,\mh}{\rho\left(\mu_{e}+\mh\right)}}.\nonumber 
\end{align}
Remembering the basic relations in \cite{barbagallo2016transparent,neff2007geometrically}
between our relaxed micromorphic parameters and that 
\begin{equation}
\mum=\frac{\mu_{e}\,\mh}{\mu_{e}+\mh},\qquad\lambda_{\textrm{macro}}=\frac{1}{3}\frac{\left(2\mu_{e}+3\lambda_{e}\right)\left(2\mh+3\lh\right)}{2\left(\mu_{e}+\mh\right)+3\left(\lambda_{e}+\lh\right)}-\frac{2}{3}\frac{\mu_{e}\,\mh}{\mu_{e}+\mh},\label{eq:micro macro}
\end{equation}
the equations in (\ref{eq:slopes}) can be neatly written as
\begin{equation}
\widehat{\omega}'_{\textrm{aco};1}\left(0\right)=\sqrt{\frac{2\mum+\lambda_{\textrm{macro}}}{\rho}},\qquad\qquad\widehat{\omega}'_{\textrm{aco};2}\left(0\right)=\sqrt{\frac{\mum}{\rho}}.\label{eq:acoustic macro}
\end{equation}
Based on this result we see that the tangents to the acoustic curves
fully recover the format of classical isotropic linear elasticity,
if the latter model is taken with parameters $\mum,\lambda_{\textrm{macro}}$.
This result should be compared with \cite[ eq. 8.13]{mind1963} where
Mindlin also obtained the tangents of the transverse and longitudinal
acoustic branches in 0. However, his results for the more general
micromorphic model do not support the transparency of (\ref{eq:acoustic macro}).

\section{Action of the material parameters on the behavior of the dispersion
curves}

In the relaxed micromorphic model presented in this work, we have
considered the splitting, following the Lie-Cartan decomposition of
$\gl$, of the micro-inertia and of the potential part related to
$\curl\,P$. This allows us to separate the governing behavior of
the deformation mechanisms associated to the pure deformation, the
volumetric expansion and the rotation of the microstructure. In this
way, we can directly explore how each of these deformation modes affects
the behavior of the dispersion curves. In the following, we show how
varying independently the material parameters $\eta_{1},\eta_{2},\eta_{3}$
and $\alpha_{1},\alpha_{2},\alpha_{3}$ we can act on the behavior
of the dispersion curves with more freedom with respect to the non-weighted
relaxed micromorphic model presented in \cite{madeo2015wave}.  We
have numerically solved the equation ${\textstyle \det}\,\D\left(k,\omega\right)=0$
looking for solutions of the type $\widehat{\omega}_{i}=\widehat{\omega}_{i}\left(k\right)$
which are curves of the considered medium. In this section we analyze
the obtained solutions for different choices of the material parameters
in order to highlight their effect on the behavior of the dispersion
curves. The material parameters in Table \ref{tab:Material-parameters1}
are those used in the simulations if not differently specified.

\begin{table}[H]
\centering{}%
\begin{tabular}{ccccccc}
\noalign{\vskip1mm}
$\mu_{e}$ & $\lambda_{e}$ & $\mh$ & $\lh$ & $\mu_{c}$ & $L_{c}$ & $\rho$\tabularnewline[1mm]
\hline 
\hline 
\noalign{\vskip2mm}
$200$ & 400 & 100 & 100 & 440 & $3$ & $2000$\tabularnewline[1mm]
\hline 
\noalign{\vskip1mm}
$\textrm{MPa}$ & $\mathrm{MPa}$ & $\textrm{MPa}$ & $\textrm{MPa}$ & $\textrm{MPa}$ & $\textrm{mm}$ & $\textrm{kg/}\textrm{m}^{3}$\tabularnewline[1mm]
\hline 
\end{tabular}\caption{Values of the material parameters used in the numerical determination
of the dispersion curves.\label{tab:Material-parameters1}}
\end{table}

\subsection{Classical results}

Classical linear elasticity and the Cosserat model can be obtained
as limit cases starting from the relaxed micromorphic one. In order
to directly compare the dispersion curves of these models with those
of the relaxed one, we present them in this sub-section. We briefly
recall that the strain energy density for the classical linear elasticity
is given by
\[
W_{\textrm{macro}}\left(\nabla u\right)=\mum\left\Vert \sym\,\grad\u\right\Vert ^{2}+\frac{\lam}{2}\left(\textrm{tr}\,\grad\u\right)^{2}.
\]
Moreover, the kinetic energy density used for the classical model
is clearly
\[
J_{\textrm{macro}}=\frac{1}{2}\,\rho\left\langle u_{,t},u_{,t}\right\rangle .
\]
Classical Cauchy linear elasticity gives rise to the dispersion curves
presented in Fig. \ref{fig:Cauchy - Cosserat} (left). The dispersion
curves reduce to straight lines, which means that the speed of propagation
of waves is independent of the frequency of the traveling waves. The
slopes of the dispersion curves are given by $\sqrt{\frac{\mum}{\rho}}$
for transversal waves and $\sqrt{\frac{2\mum+\lam}{\rho}}$ for longitudinal
waves ($\mum$ and $\lam$ are the Lamé parameters of the considered
Cauchy continuum).

As for the weighted Cosserat model, it features a strain energy density
of the type:
\begin{align}
W_{\textrm{cos}} & =\me\left\Vert \sym\,\nabla u\right\Vert ^{2}+\mc\left\Vert \skew\left(\nabla u-\P\right)\right\Vert ^{2}+\frac{\le}{2}\left(\textrm{tr}\,\nabla u\right)^{2}\label{eq:coss1}\\
 & \quad+\mu_{e}\,\frac{L_{c}^{2}}{2}\left(\alpha_{1}\left\Vert \ds\,\curl\,\skew\,\P\right\Vert ^{2}+\alpha_{2}\left\Vert \skew\,\curl\,\skew\,\P\right\Vert ^{2}+\frac{1}{3}\,\alpha_{3}\left(\tr\,\curl\,\skew\,\P\right)^{2}\right),\nonumber 
\end{align}
where $\P$ is constrained to be skew-symmetric. The kinetic energy
considered for the Cosserat model takes the form:
\begin{equation}
J_{\textrm{cos}}=\frac{\rho}{2}\left\Vert u_{,t}\right\Vert ^{2}+\frac{\eta_{2}}{2}\left\Vert \skew\,\P_{,t}\right\Vert .\label{eq:coss2}
\end{equation}
Following the same procedures shown in section 4 for the relaxed micromorphic
model, the study of plane wave propagation in Cosserat media (we give
the corresponding Euler-Lagrange equations in the appendix) gives
rise to dispersion curves of the type shown in Fig.\ref{fig:Cauchy - Cosserat}
(right). The values of the asymptotes are given by

\begin{equation}
c_{1}^{\textrm{cos}}=\sqrt{\frac{\alpha_{1}+2\,\alpha_{3}}{3\,\eta_{2}}\,\me\,L_{c}^{2}}=c_{\textrm{m}}^{\textrm{vd}},\qquad\qquad c_{2}^{\textrm{cos}}=\sqrt{\frac{\mc+\me}{\rho}}=c_{\textrm{s}}^{\textrm{}},
\end{equation}
and for the values of the slopes of the acoustic branches we find
\begin{equation}
c_{3}^{\textrm{cos}}=\frac{1}{2}\,\sqrt{\frac{\alpha_{1}+\alpha_{2}}{\eta_{2}}\,\me\,L_{c}^{2}}=\lim_{\eta_{1}\fr\infty}c_{\textrm{m}}^{\textrm{dr}},\qquad\qquad c_{4}^{\textrm{cos}}=\sqrt{\frac{\le+2\,\me}{\rho}}=c_{\textrm{p}}.
\end{equation}
We will show in what follows that even if the Cosserat model allows
to account for some dispersion at higher frequencies, the behavior
of dispersion curves related to the relaxed micromorphic model is
richer as it allows for the description of band-gaps and account for
more complex microstructure motions. We will finally show that both
the classical cases presented in this subsection (Cauchy and Cosserat)
can be obtained as degenerate limit cases of the relaxed micromorphic
model.
\begin{figure}[H]
\begin{centering}
\begin{tabular}{ccc}
\includegraphics[scale=0.5]{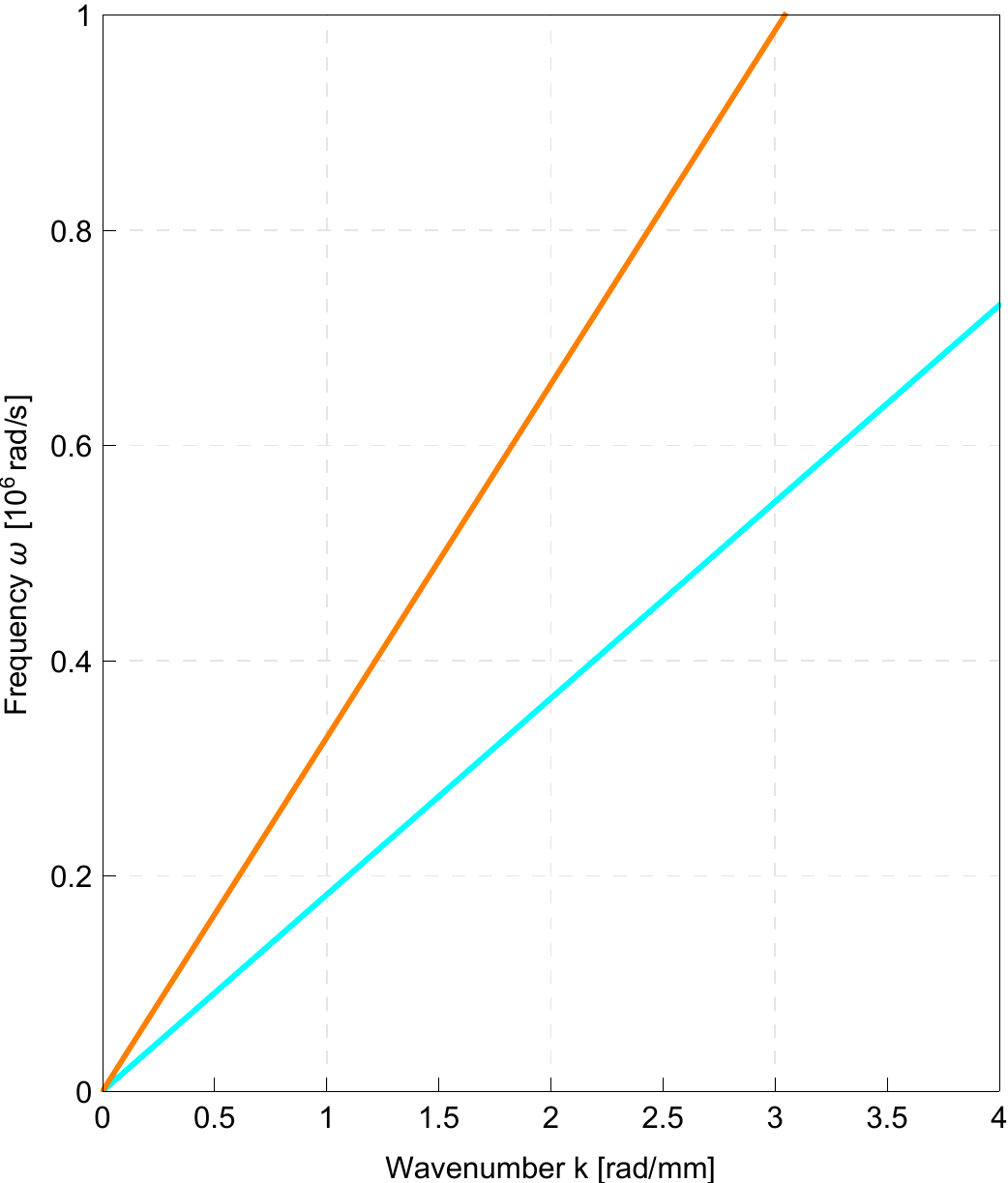} &  & \includegraphics[scale=0.5]{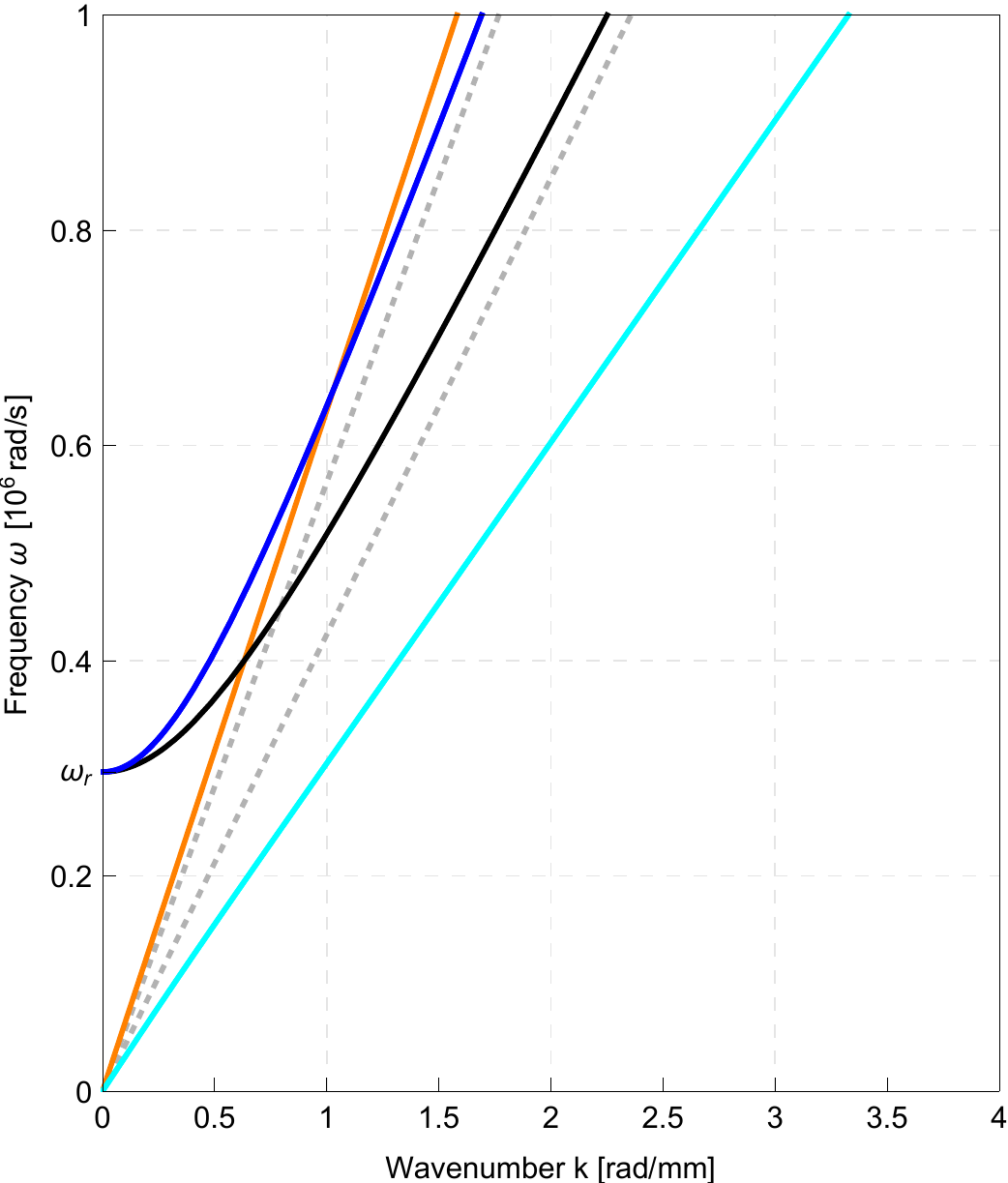}\tabularnewline
(a) &  & (b)\tabularnewline
\end{tabular}
\par\end{centering}
\caption{\label{fig:Cauchy - Cosserat}Dispersion curves for linear elasticity
(a) and the Cosserat model with $\eta_{2}=10^{-2}\left[\textrm{Kg}/\textrm{m}\right]$
(b).}
\end{figure}

\subsection{The classical relaxed micromorphic model and the internal variable
model}

Before proceeding, we report some results already available in the
literature \cite{madeo2015wave,madeo2016reflection} which we obtain
from our weighted model setting $\eta_{1}=\eta_{2}=\eta_{3}$ and
$\alpha_{1}=\alpha_{2}=\alpha_{3}$ (classical relaxed micromorphic
case). Moreover, we also study the limit case that is obtained setting
$\alpha_{1}=\alpha_{2}=\alpha_{3}=0$, which is also known as internal
variable model (no derivatives of the micro-distortion $\P$ appearing
in the strain energy density). The dispersion curves for the two models
are shown in Fig. \ref{fig:classical}.

\begin{figure}[h]
\centering{}%
\begin{tabular}{ccc}
\includegraphics[scale=0.55]{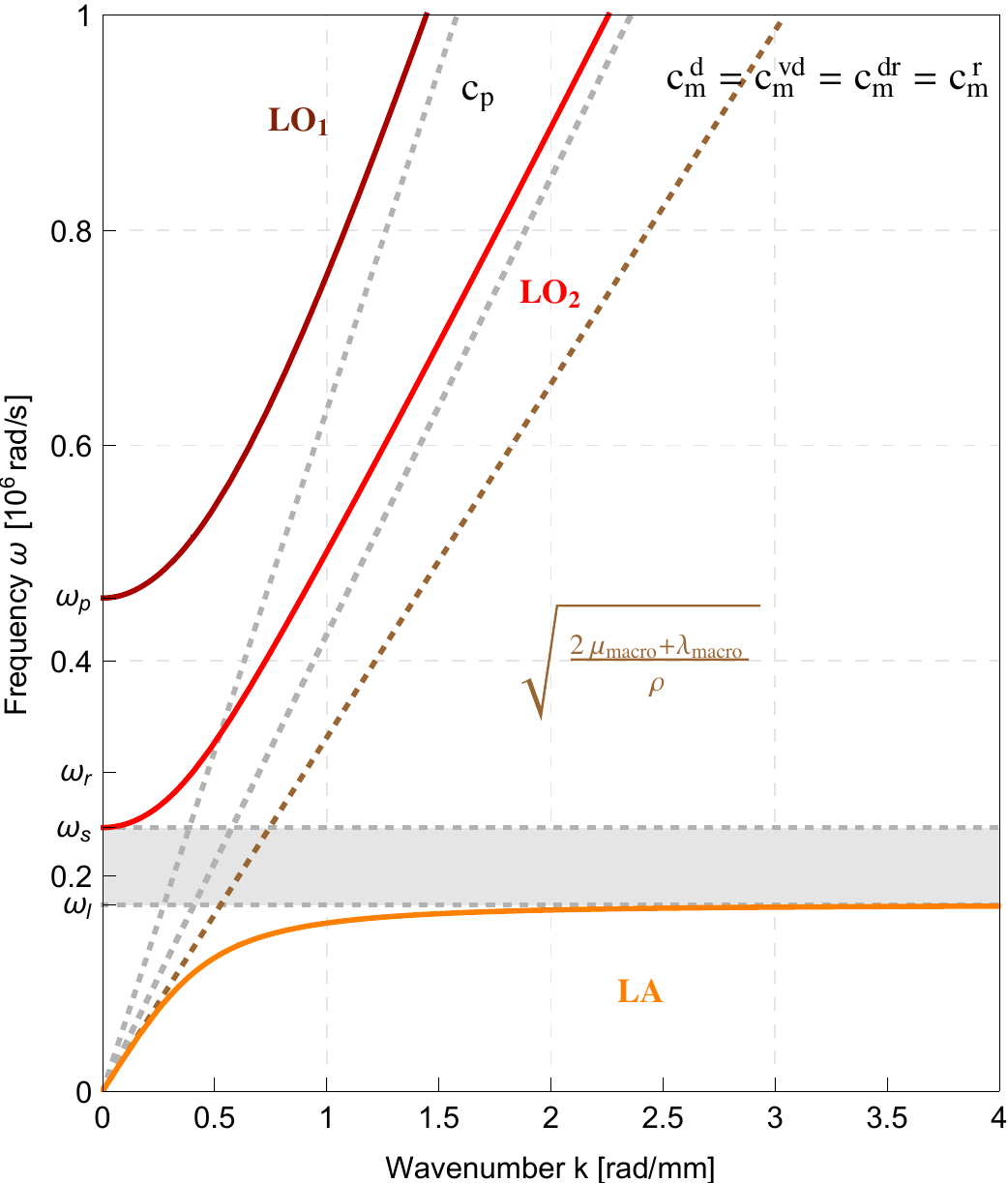} & \includegraphics[scale=0.55]{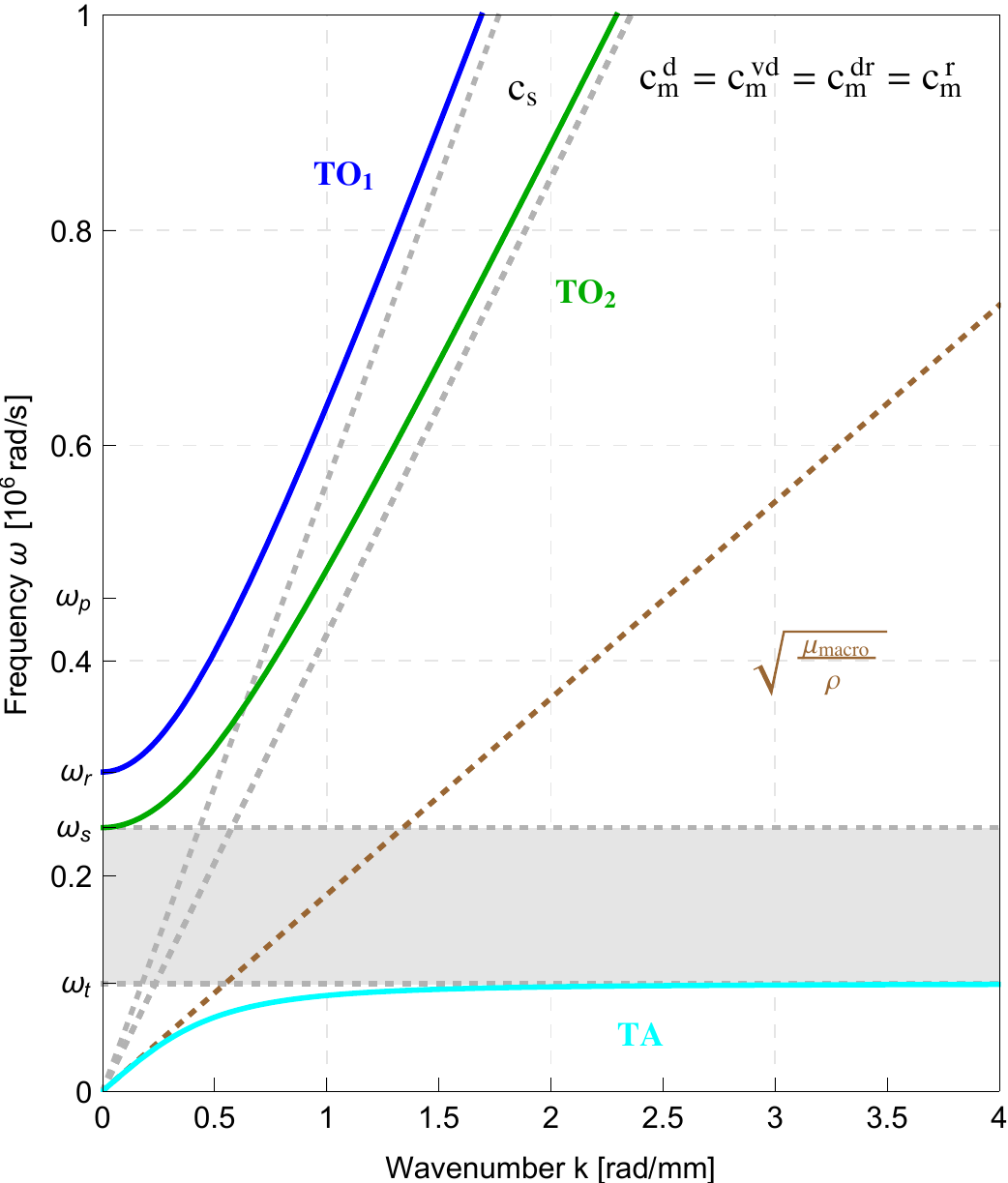} & \includegraphics[scale=0.55]{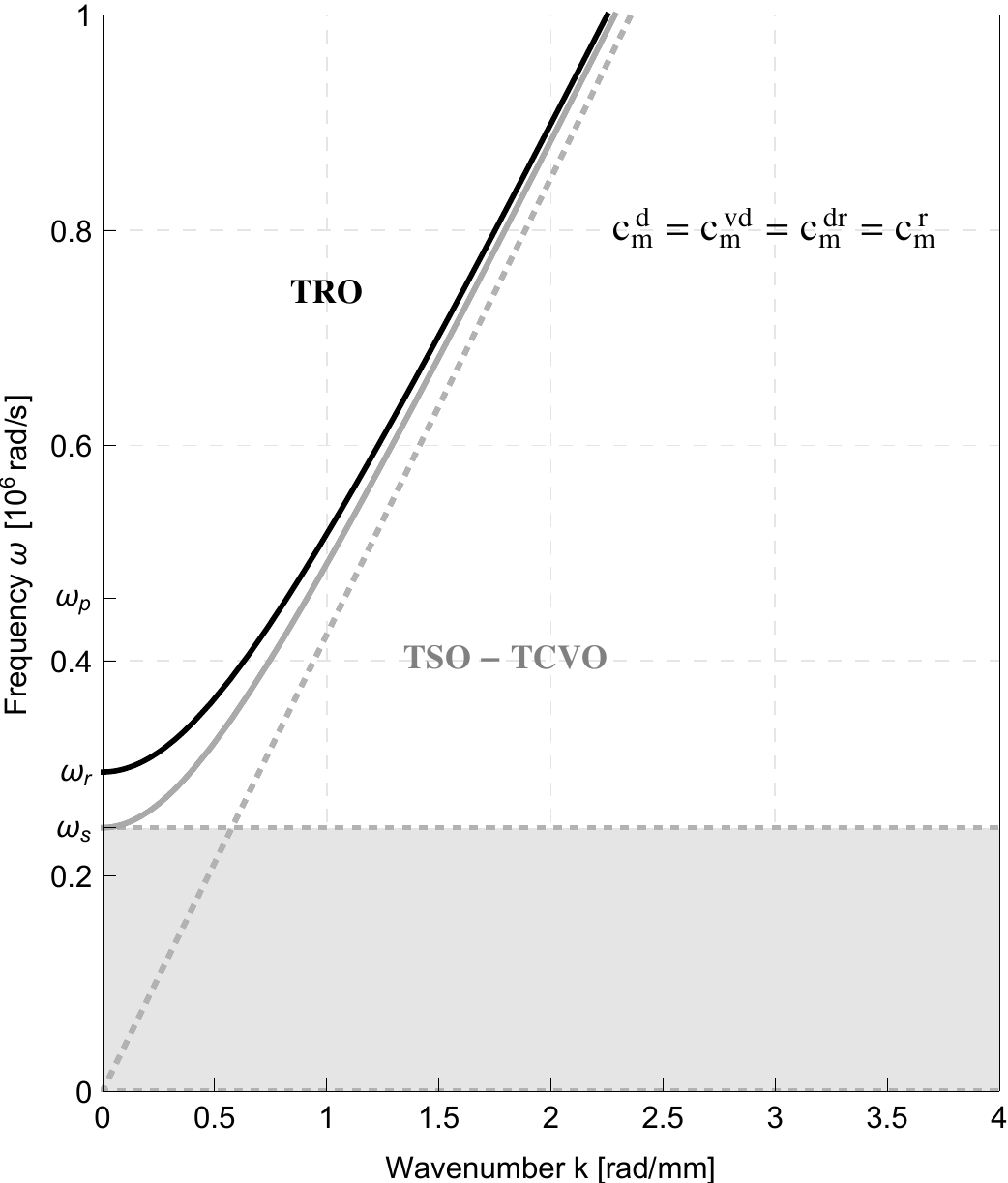}\tabularnewline
(a1) longitudinal dispersion curves & (a2) transversal dispersion curves & (a3) uncoupled dispersion curves\tabularnewline
\includegraphics[scale=0.55]{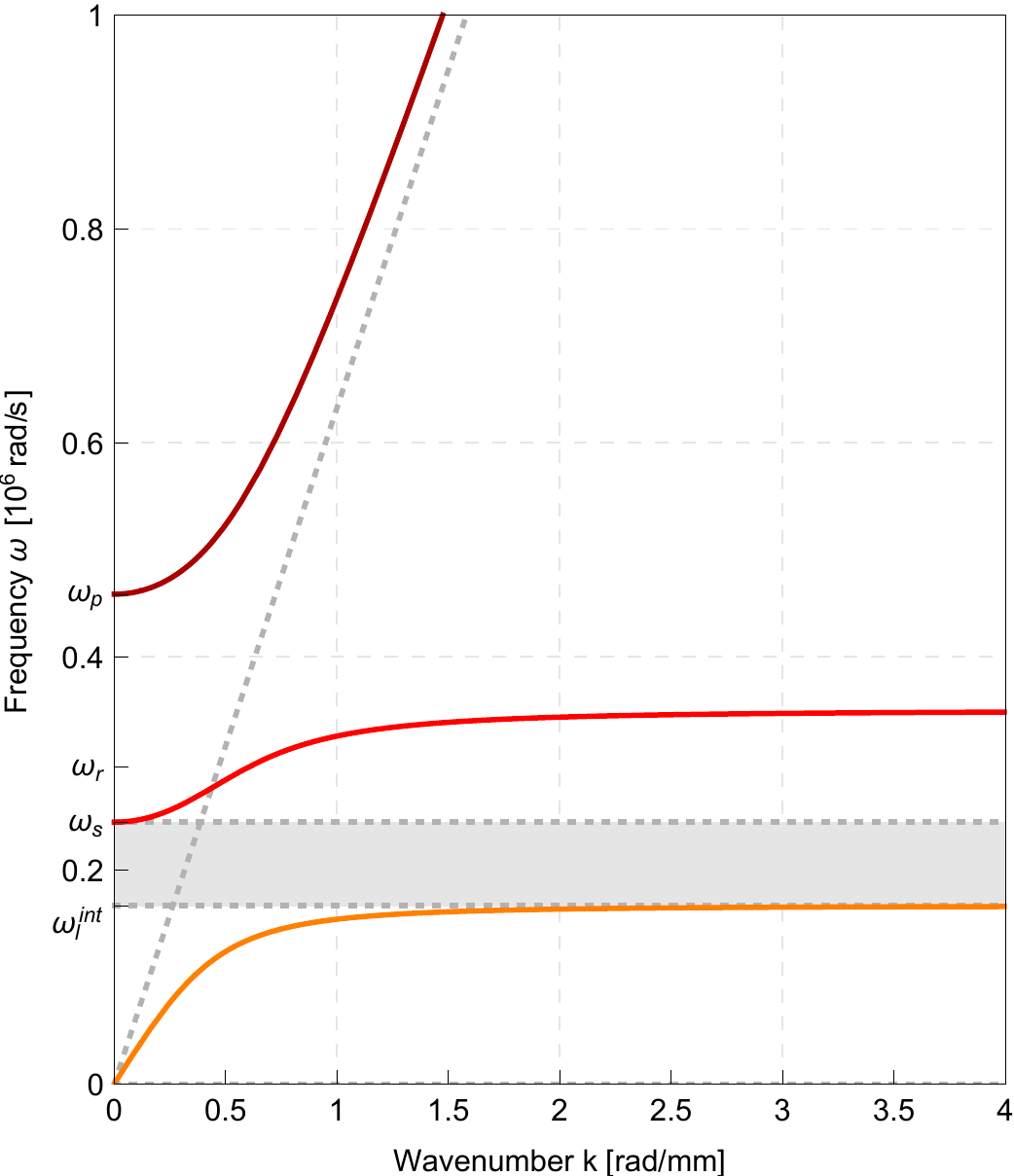} & \includegraphics[scale=0.55]{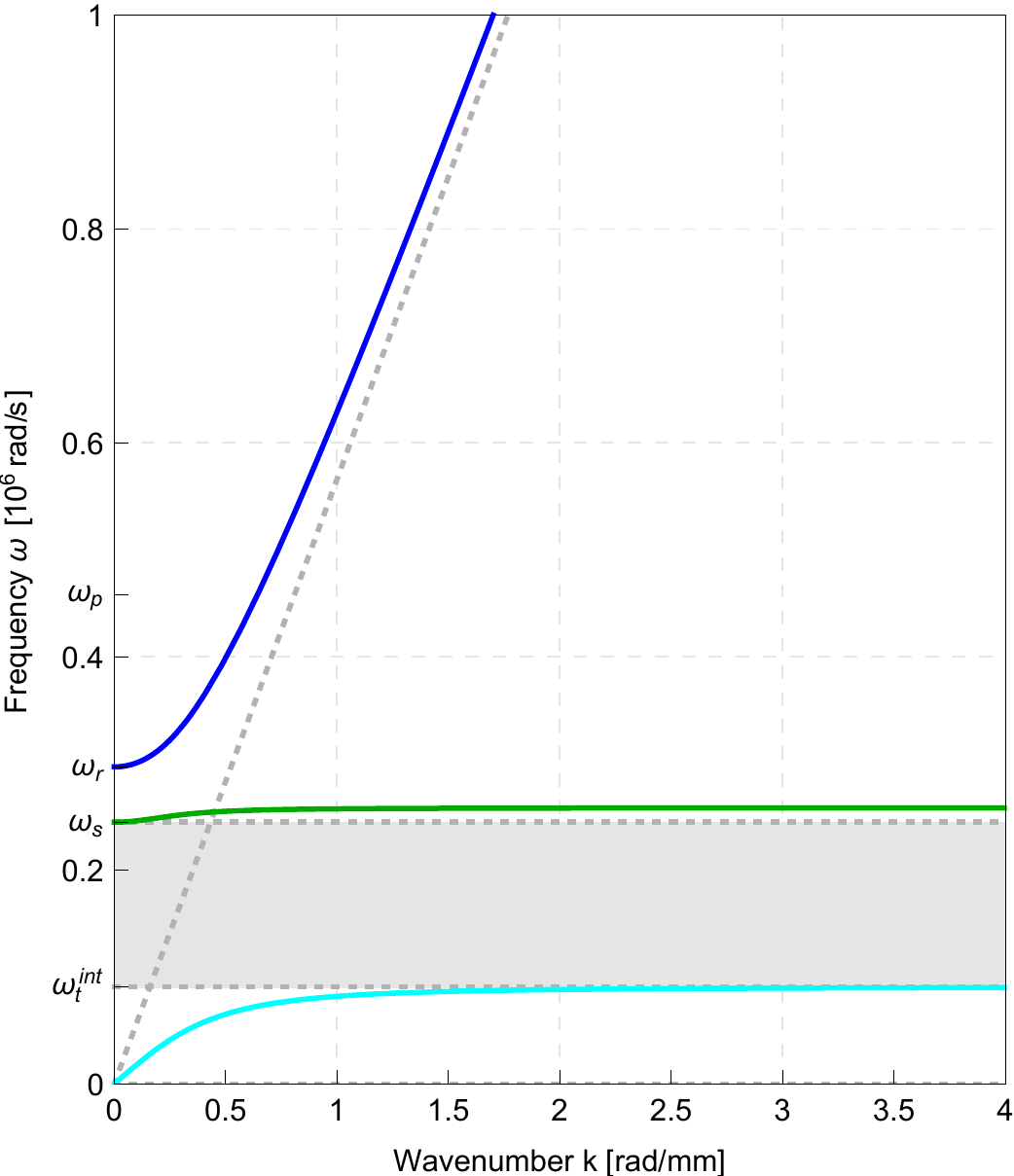} & \includegraphics[scale=0.55]{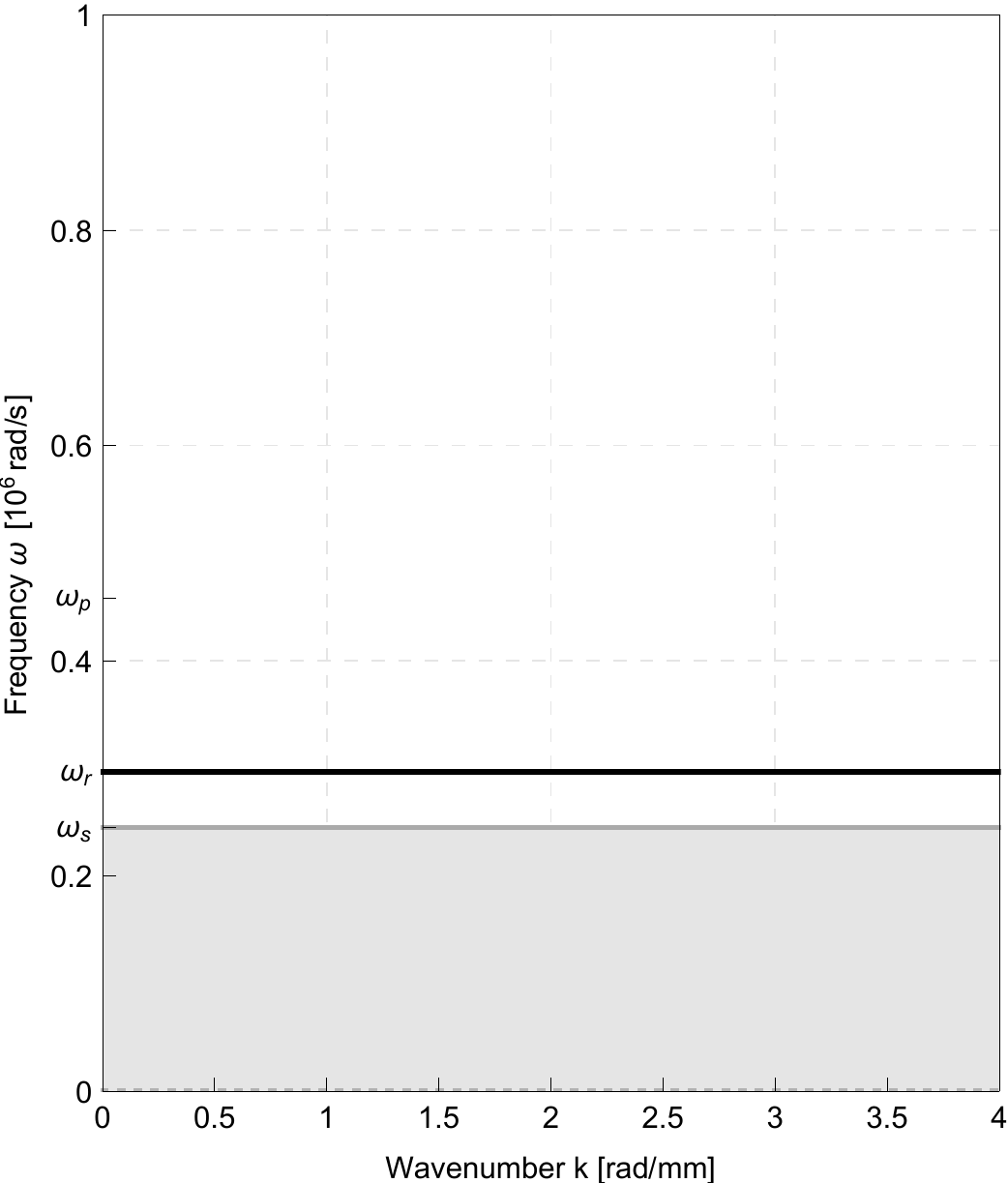}\tabularnewline
(b1) longitudinal dispersion curves & (b2) transversal dispersion curves & (b3) uncoupled dispersion curves\tabularnewline
\end{tabular}\caption{\label{fig:classical}Dispersion curves for the classical relaxed
micromorphic case $\alpha_{1}=\alpha_{2}=\alpha_{3}=1\quad\textrm{and}\quad\eta_{1}=\eta_{2}=\eta_{3}=10^{-2}\,Kg/m$
(a1,a2,a3), and for the internal variable case: $\alpha_{1}=\alpha_{2}=\alpha_{3}=0\quad\textrm{and}\quad\eta_{1}=\eta_{2}=\eta_{3}=10^{-2}Kg/m$
(b1,b2,b3).\label{fig:Dispersion-curves-for}}
\end{figure}

In both cases, we find 8 different dispersion curves instead of 12.
Indeed, we would expect 12 curves since the kinematics is given by
the 3 components of the displacement field $u$, plus the 9 components
of the micro-distortion field $P$. This means that there are overlapped
curves (our check exhibits 4 couples of overlapping curves). The band-gap
region is clearly present in both models. These results are fully
compatible with what has been found in \cite{madeo2015wave} and in
\cite{madeo2016reflection}. In particular, we find the same behavior
presented in \cite{madeo2015wave} and \cite{madeo2016reflection},
where sufficient conditions on the constitutive parameters have been
found that guarantee the existence of band gaps. Moreover, for the
internal-variable case, it has been shown in \cite{madeo2016reflection}
that two optic branches become horizontal and four horizontal asymptotes
can be found which are not related with the two horizontal asymptotes
of the case $\alpha_{1}=\alpha_{2}=\alpha_{3}\neq0$. 

In the present work, in order to present a parametric study on the
behavior of the dispersion curves varying a great number of material
parameters, we have decided to represent, in the newly studied cases,
all the dispersion curves in the same picture associating to every
curve a specific color. With respect to the nomenclature states in
\cite{madeo2015wave} the correspondence with our choice of colors
is as follows (Table \ref{tab:Nomenclature-and-colors}):
\begin{table}[H]
\begin{centering}
\begin{tabular}{lllllllll}
\noalign{\vskip2mm}
longitudinal curves &  &  &  & transversal curves &  &  &  & uncoupled curves\tabularnewline[2mm]
\hline 
\hline 
\noalign{\vskip2mm}
$\textcolor{DarkRed}{\LARGE{\newmoon}}$$\;$Dark Red $\rightarrow$
$\mathbf{LO}_{\boldsymbol{1}}$ &  &  &  & ${\color{blue}\LARGE{}\newmoon}\;$Blue $\rightarrow$ $\mathbf{TO}_{\boldsymbol{1}}$ &  &  &  & $\LARGE{}\newmoon\;$Black $\rightarrow$ $\mathbf{TRO}$\tabularnewline[2mm]
\hline 
\noalign{\vskip2mm}
${\color{red}\LARGE{}\CIRCLE}\;$Red $\rightarrow$ $\mathbf{LO}_{\boldsymbol{2}}$ &  &  &  & $\textcolor{Green}{\LARGE{\newmoon}}$$\;$Green $\rightarrow$ $\mathbf{TO}_{\boldsymbol{2}}$ &  &  &  & ${\color{gray}\LARGE{}\newmoon}\;$Gray $\rightarrow$ $\mathbf{LSO}$\tabularnewline[2mm]
\hline 
\noalign{\vskip2mm}
${\color{orange}\LARGE{}\newmoon}\;$Orange $\rightarrow$ $\mathbf{LA}$ &  &  &  & $\textcolor{Cyan}{\LARGE{\newmoon}}$$\;$Cyan $\rightarrow$ $\mathbf{TA}$ &  &  &  & ${\color{gray}\LARGE{}\newmoon}\;$Gray $\rightarrow$ $\mathbf{TCVO}$\tabularnewline[2mm]
\end{tabular}
\par\end{centering}
\caption{Nomenclature and colors for dispersion curves\label{tab:Nomenclature-and-colors}}

\end{table}

\subsection{A panorama of dispersion curves for the weighted relaxed micromorphic
model.}

In this subsection we present a complete panorama of the dispersion
curves associated to the weighted relaxed micromorphic model highlighting
the effect of each of the weights $\eta_{1},\eta_{2},\eta_{3},\alpha_{1},\alpha_{2}$
and $\alpha_{3}$ on the dispersion curves themselves. If not differently
specified, the reference values of the weights are those given in
the following table:
\begin{table}[H]
\centering{}%
\begin{tabular}{cccccc}
\noalign{\vskip1mm}
$\alpha_{1}$ & $\alpha_{2}$ & $\alpha_{3}$ & $\eta_{1}$ & $\eta_{2}$ & $\eta_{3}$\tabularnewline[1mm]
\hline 
\hline 
\noalign{\vskip2mm}
$1$ & $1$ & $1$ & $10^{-2}$ & $10^{-2}$ & $10^{-2}$\tabularnewline[1mm]
\hline 
\noalign{\vskip1mm}
- & - & - & $\left[\textrm{Kg}/\textrm{m}\right]$ & $\left[\textrm{Kg}/\textrm{m}\right]$ & $\left[\textrm{Kg}/\textrm{m}\right]$\tabularnewline[1mm]
\hline 
\end{tabular}\caption{Values of the material parameters used in the numerical determination
of the dispersion curves.\label{tab:weights-parameters}}
\end{table}

\subsubsection{Case ${\displaystyle \mu_{c}>0,\lim_{\alpha_{1}\protect\fr0}}$}

Characteristic limit elastic energy $\left\Vert \nabla u-P\right\Vert ^{2}+\left\Vert \sym\,P\right\Vert ^{2}+\left\Vert \skew\,\curl\,P\right\Vert ^{2}+\frac{1}{3}\left(\textrm{tr}\,\curl\,P\right)^{2}$.
\\
Characteristic limit kinetic energy $\left\Vert u_{,t}\right\Vert ^{2}+\left\Vert P_{,t}\right\Vert ^{2}$.

\begin{figure}[H]
\centering{}%
\begin{tabular}{ccc}
\includegraphics[scale=0.5]{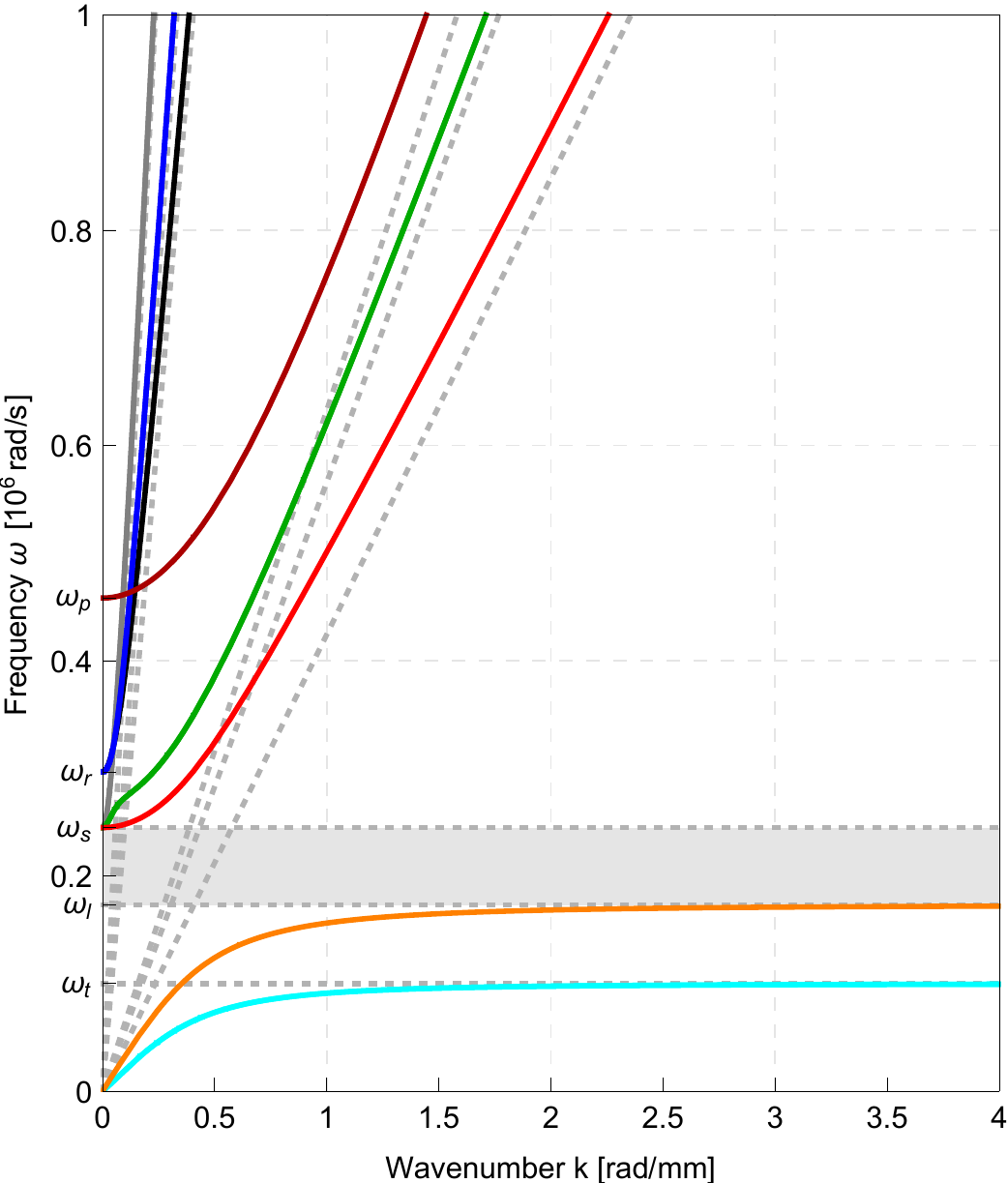} & \includegraphics[scale=0.5]{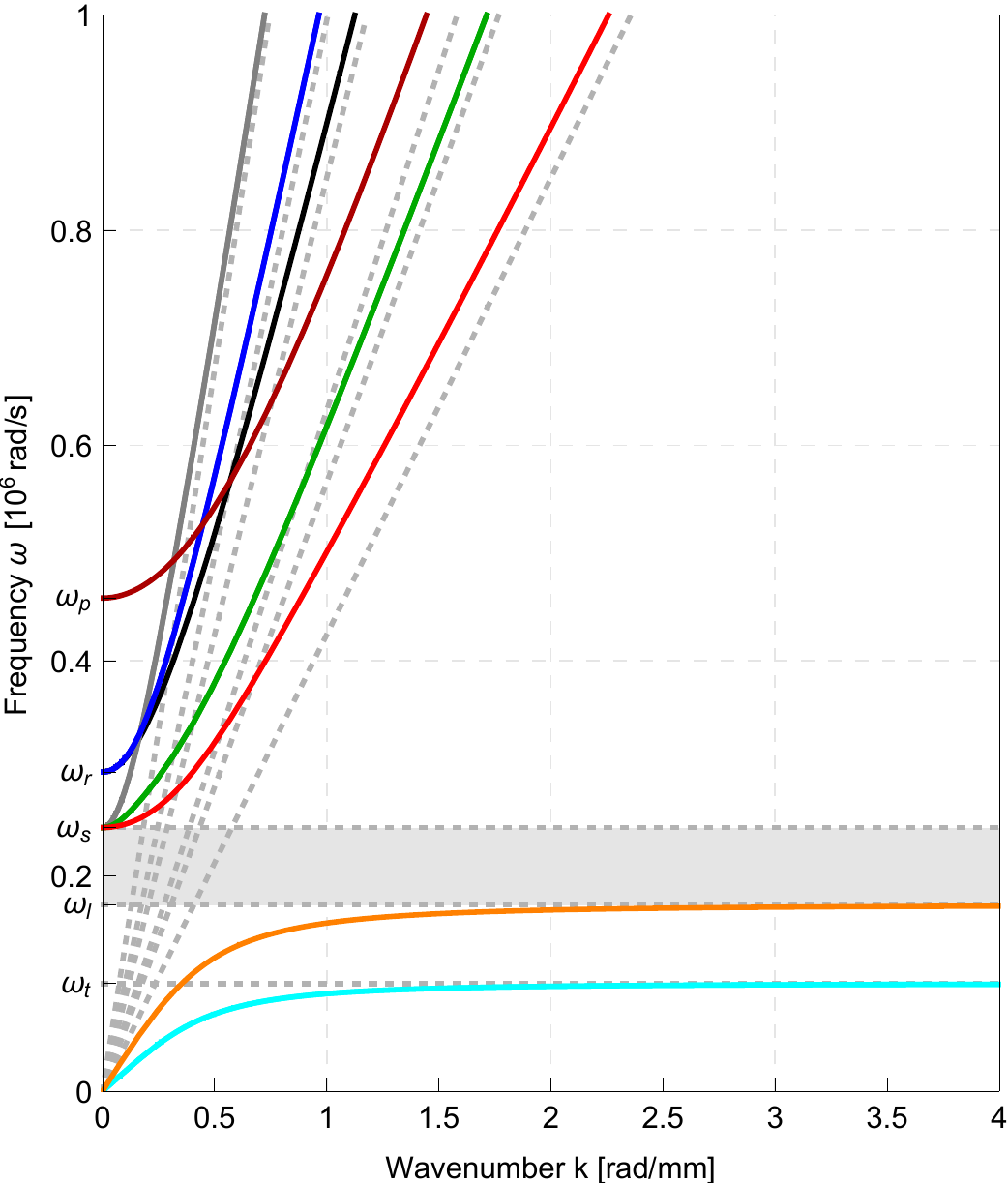} & \includegraphics[scale=0.5]{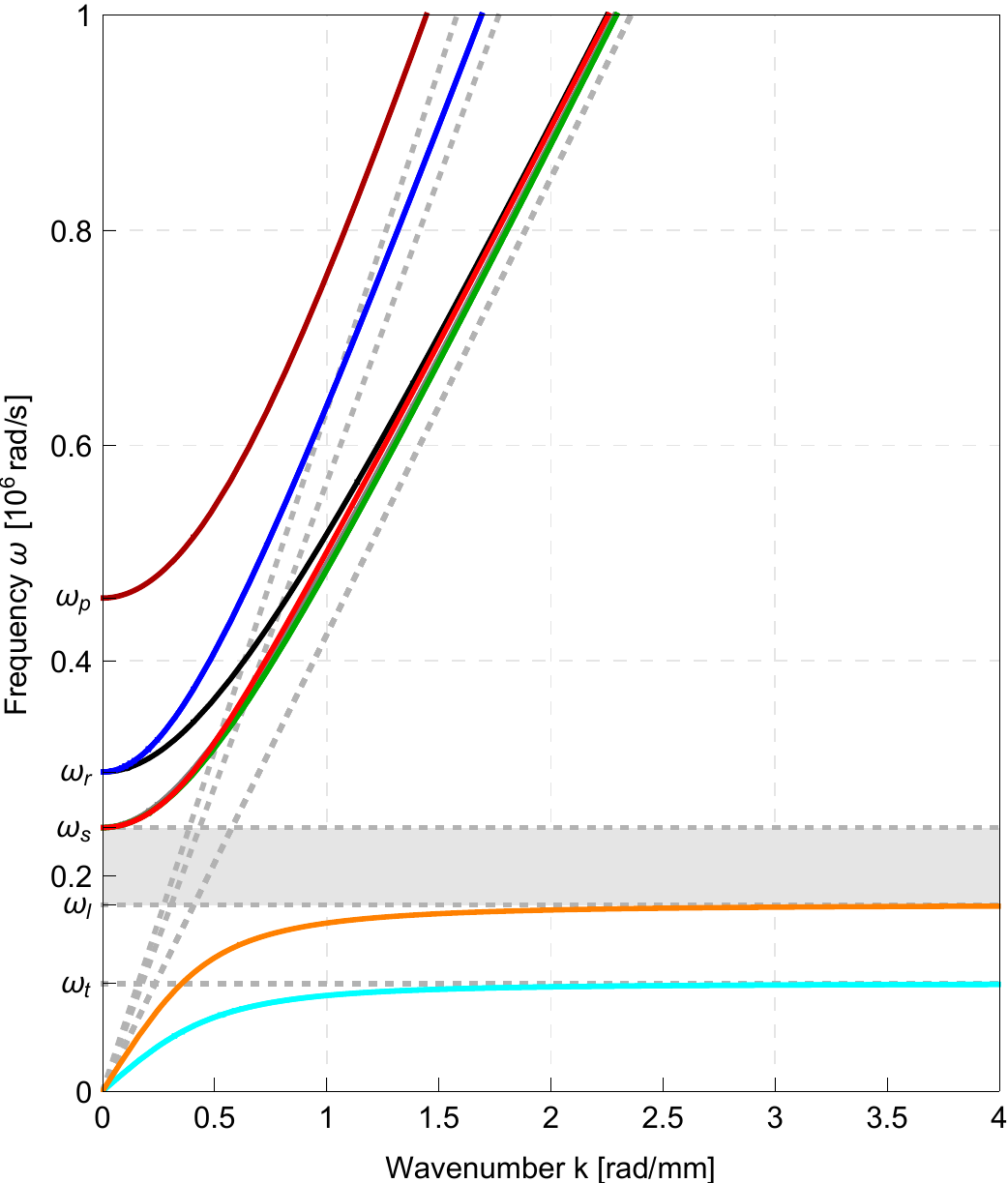}\tabularnewline
$\alpha_{1}=100$ & $\alpha_{1}=10$ & $\alpha_{1}=1$\tabularnewline
\end{tabular}
\end{figure}
\begin{figure}[H]
\centering{}%
\begin{tabular}{ccc}
\includegraphics[scale=0.5]{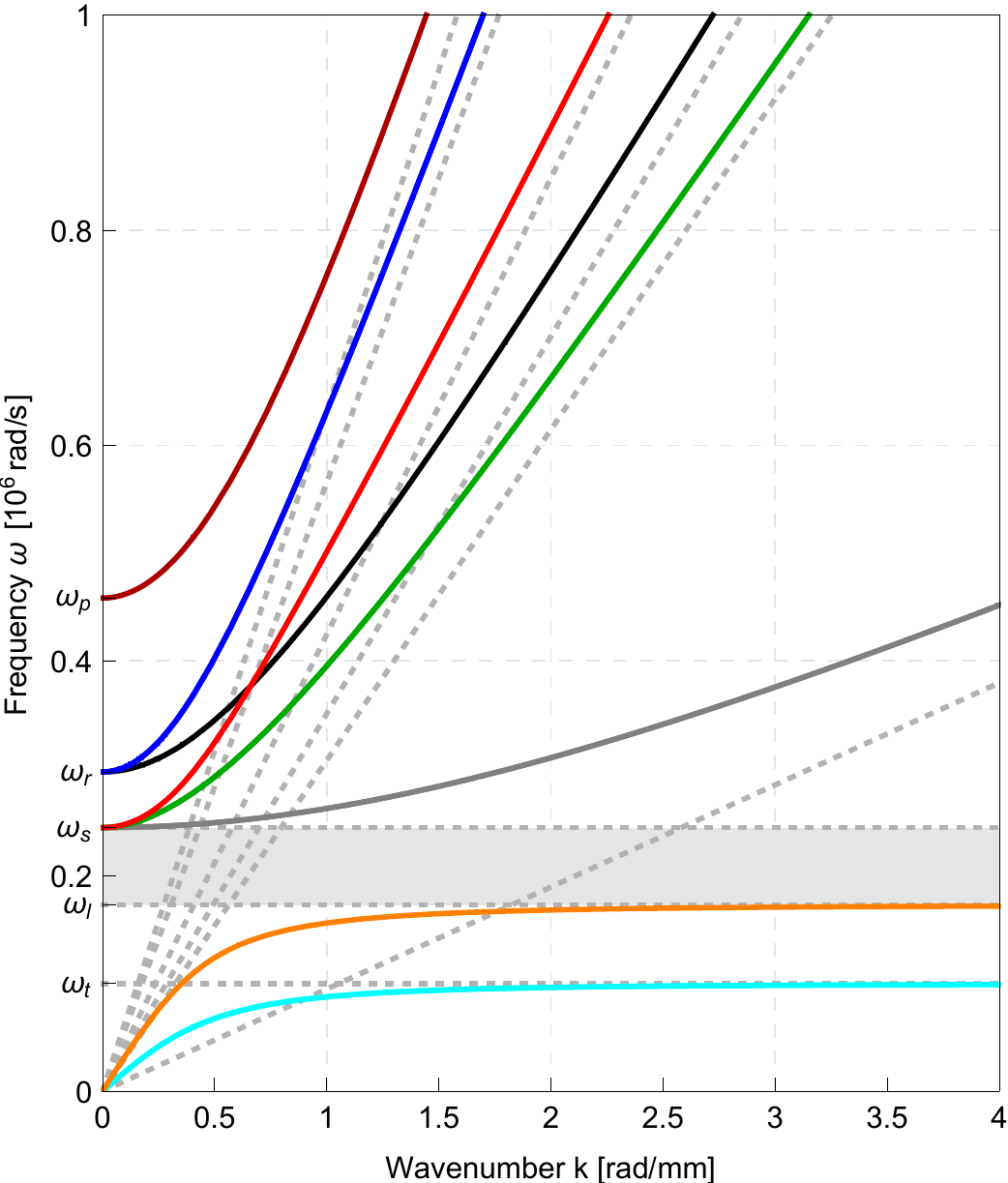} & \includegraphics[scale=0.5]{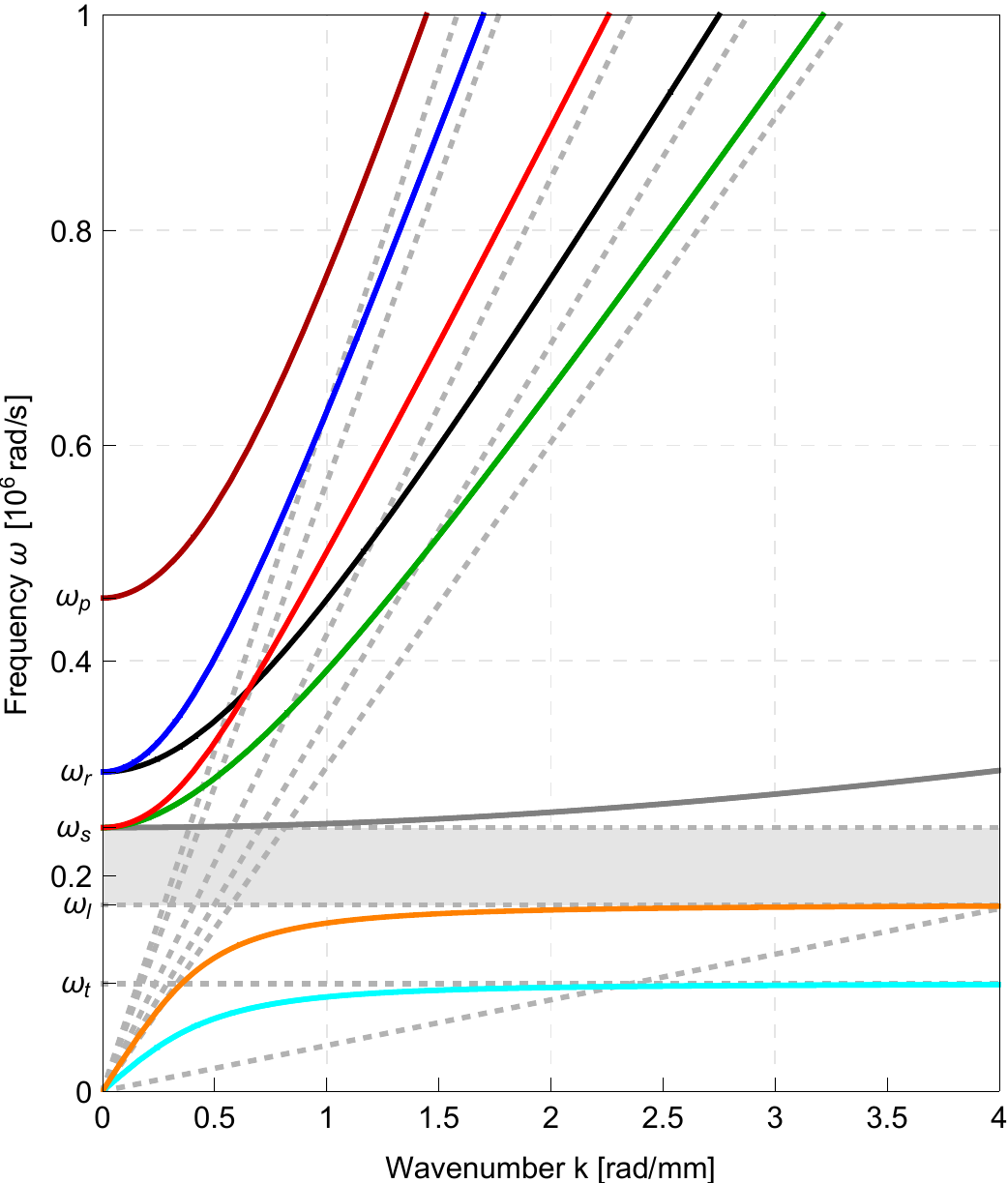} & \includegraphics[scale=0.5]{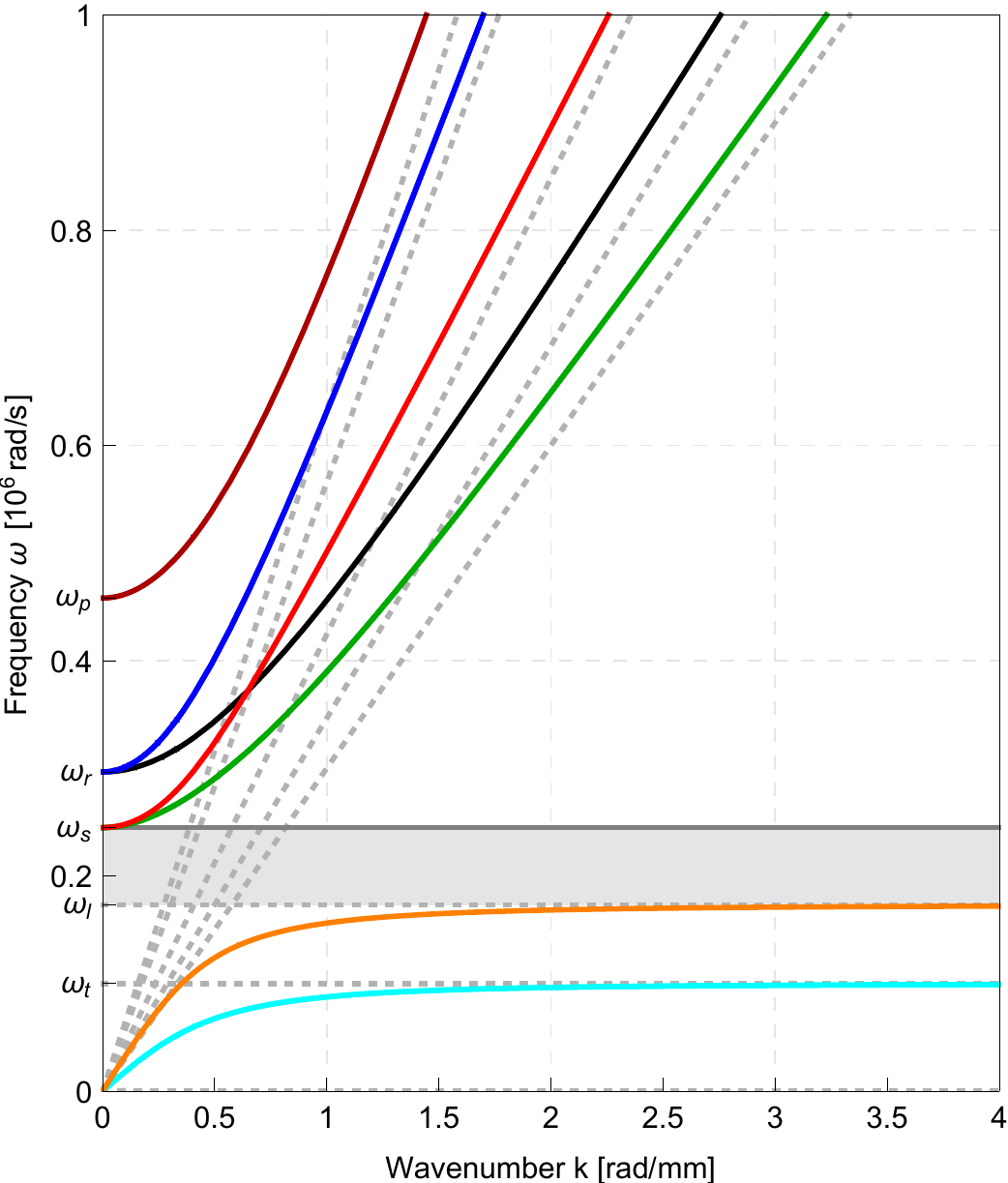}\tabularnewline
$\alpha_{1}=0.05$ & $\alpha_{1}=0.01$ & $\alpha_{1}=0$\tabularnewline
\end{tabular}\caption{Effect of the parameter $\alpha_{1}$ on the dispersion curves. We
set $\eta_{1}=\eta_{2}=\eta_{3}=10^{-2}\,Kg/m$\label{fig:alpha_1}}
\end{figure}
On the basis of the picture, it is clear that the action of the parameter
$\alpha_{1}$ preserves the presence of the band gap. This parameter
does not act on the curves with a horizontal asymptote (cyan, orange)
and it does not act on the optic curves in dark red and red. It is
also possible to remark that with the variation of $\alpha_{1}$ some
curves change their oblique asymptotes. Finally, one of the dispersion
curves becomes completely horizontal when setting $\alpha_{1}=0$.
This feature is peculiar to the parameter $\alpha_{1}$ because no
completely horizontal curves are produced by setting $\alpha_{2}=0$
or $\alpha_{3}=0$ (see subsequent pictures). This means that \textit{it
is mainly the parameter $\alpha_{1}$ which governs non-localities
in metamaterials}. Indeed, horizontal dispersion curves are peculiar
of metamaterials in which adjacent unit cells do not affect the behavior
of each other based on the hypothesis of separation of scales (see
\cite{sridhar2016homogenization,pham2013transient}). We explicitly
remark that the picture obtained with $\alpha_{1}=1$ is the one relative
to the classical relaxed micromorphic model (see also Fig. \ref{fig:classical}
(a)).

\newpage{}

\subsubsection{Case ${\displaystyle \mu_{c}>0,\lim_{\alpha_{2}\protect\fr0}}$}

Characteristic limit elastic energy $\left\Vert \nabla u-P\right\Vert ^{2}+\left\Vert \sym\,P\right\Vert ^{2}+\left\Vert \sym\,\curl\,P\right\Vert ^{2}$.\\
Characteristic limit kinetic energy $\left\Vert u_{,t}\right\Vert ^{2}+\left\Vert P_{,t}\right\Vert ^{2}$.

\begin{figure}[H]
\begin{centering}
\begin{tabular}{ccc}
\includegraphics[scale=0.5]{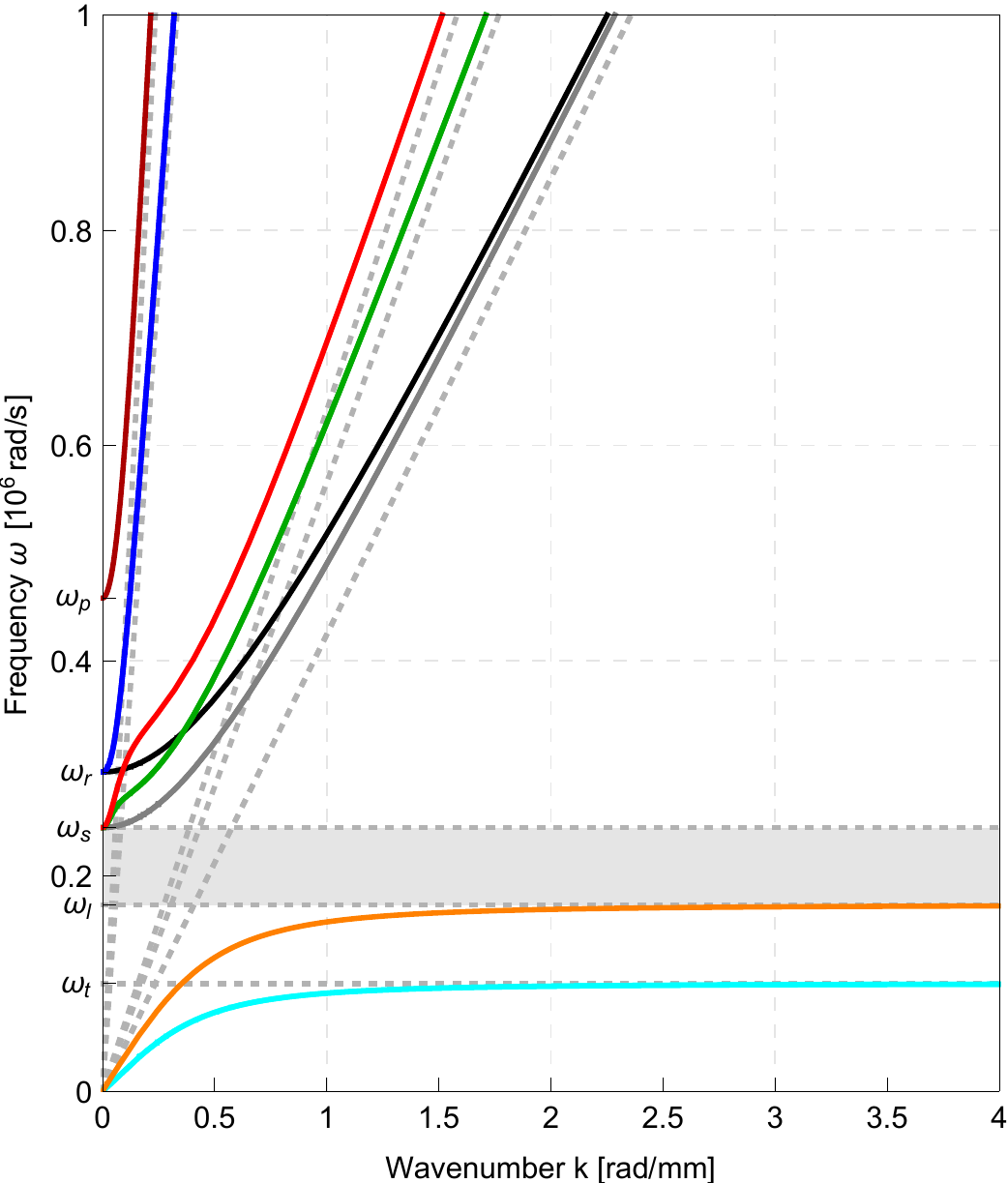} & \includegraphics[scale=0.5]{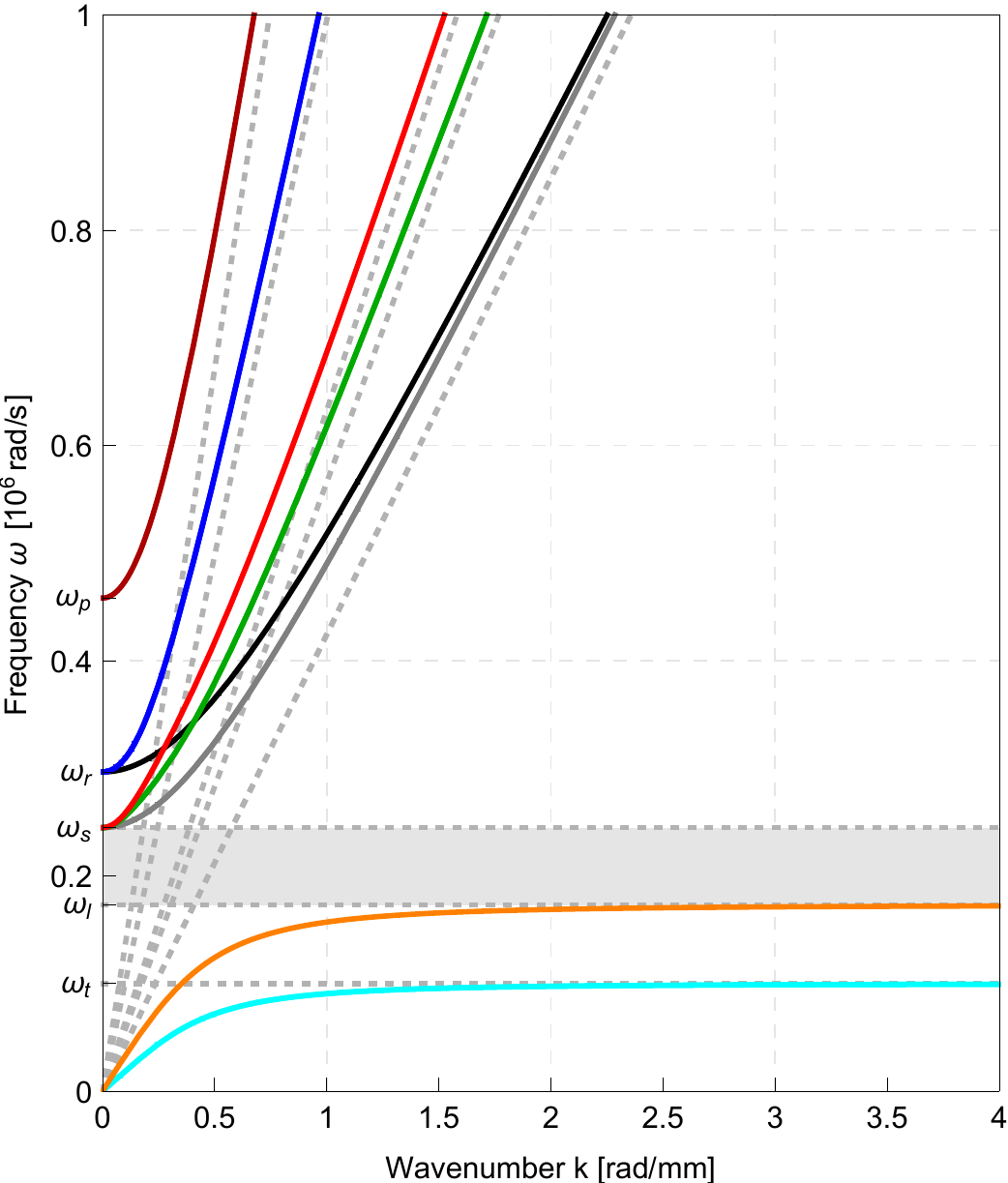} & \includegraphics[scale=0.5]{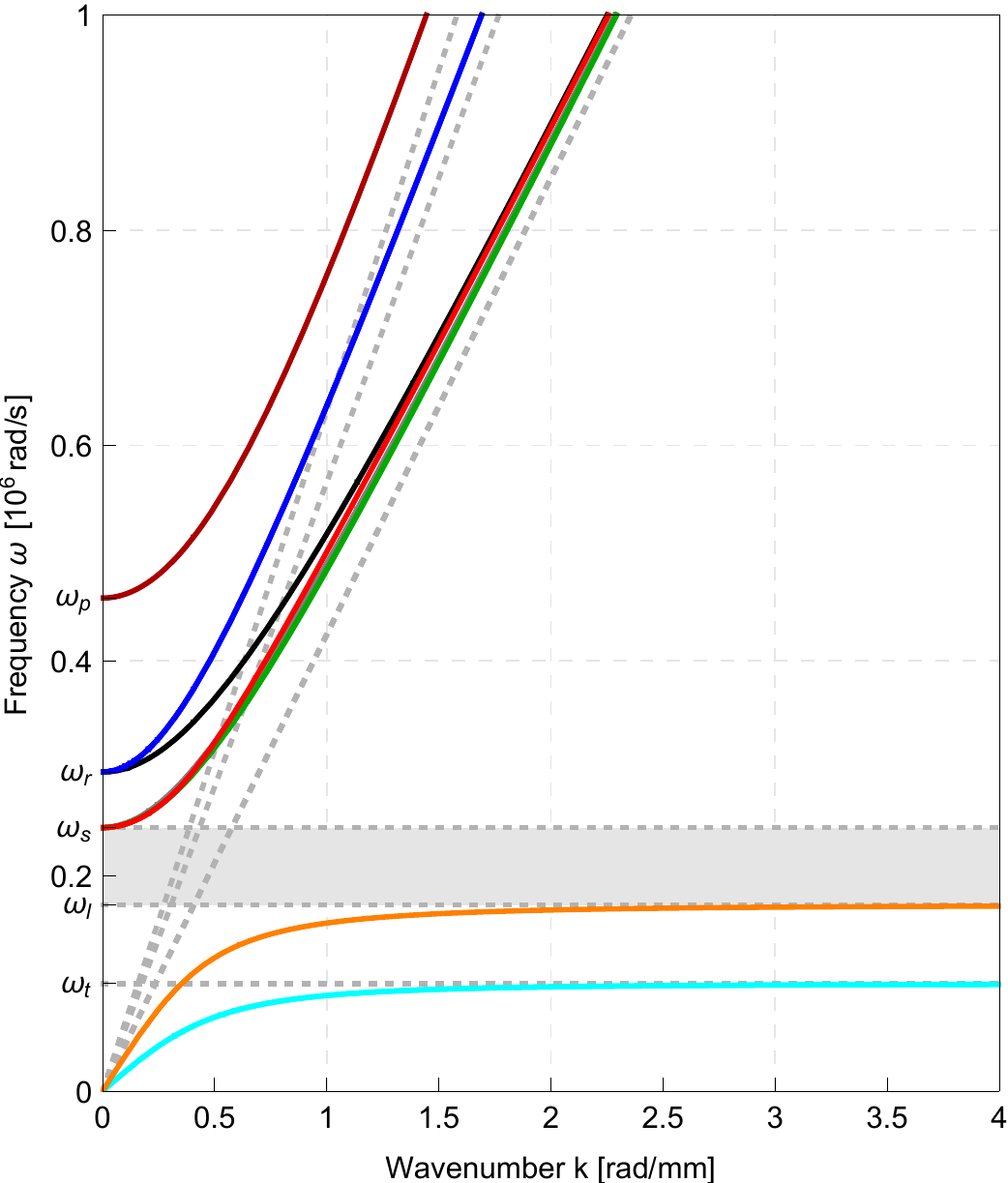}\tabularnewline
$\alpha_{2}=100$ & $\alpha_{2}=10$ & $\alpha_{2}=1$\tabularnewline
\includegraphics[scale=0.5]{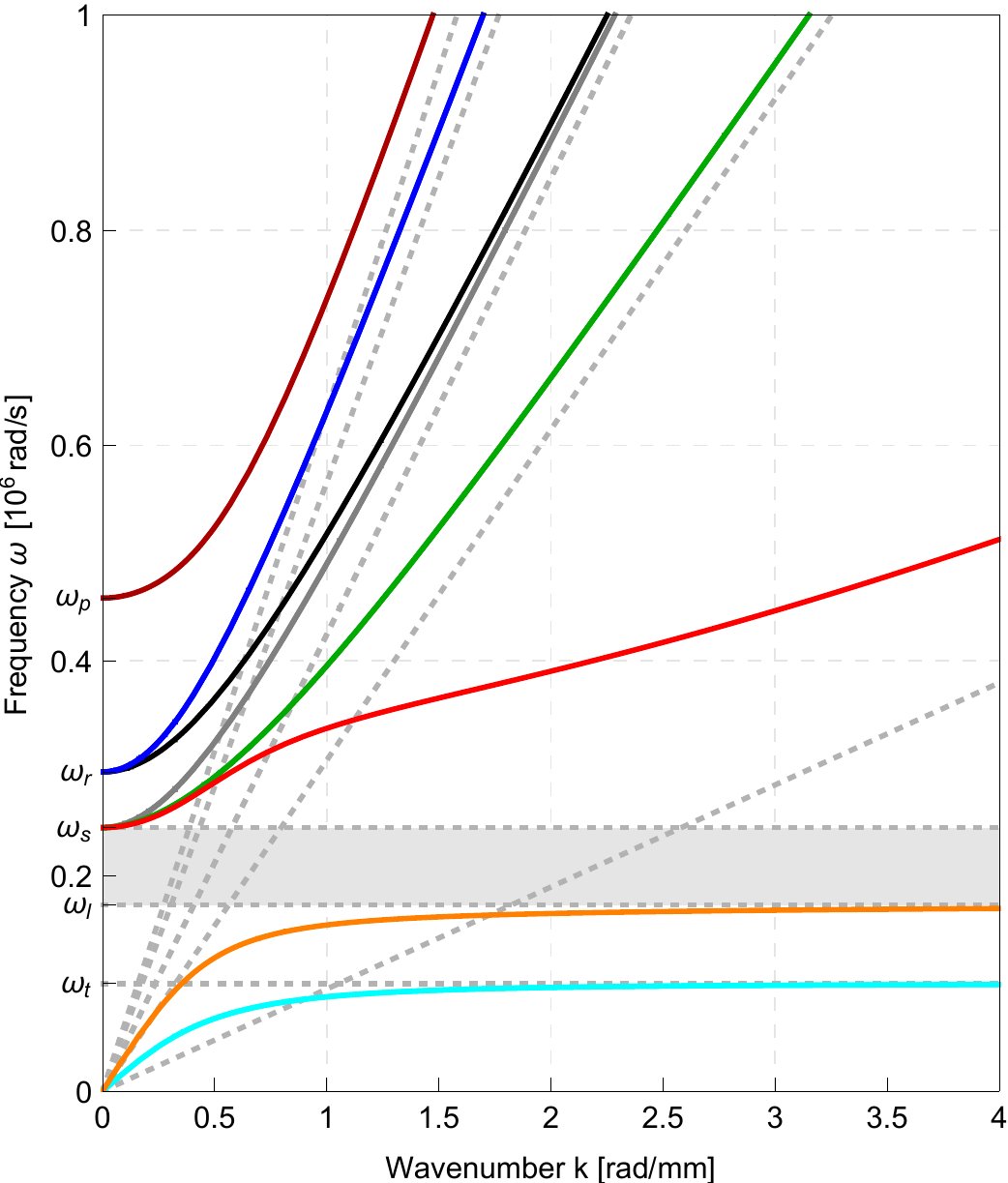} & \includegraphics[scale=0.5]{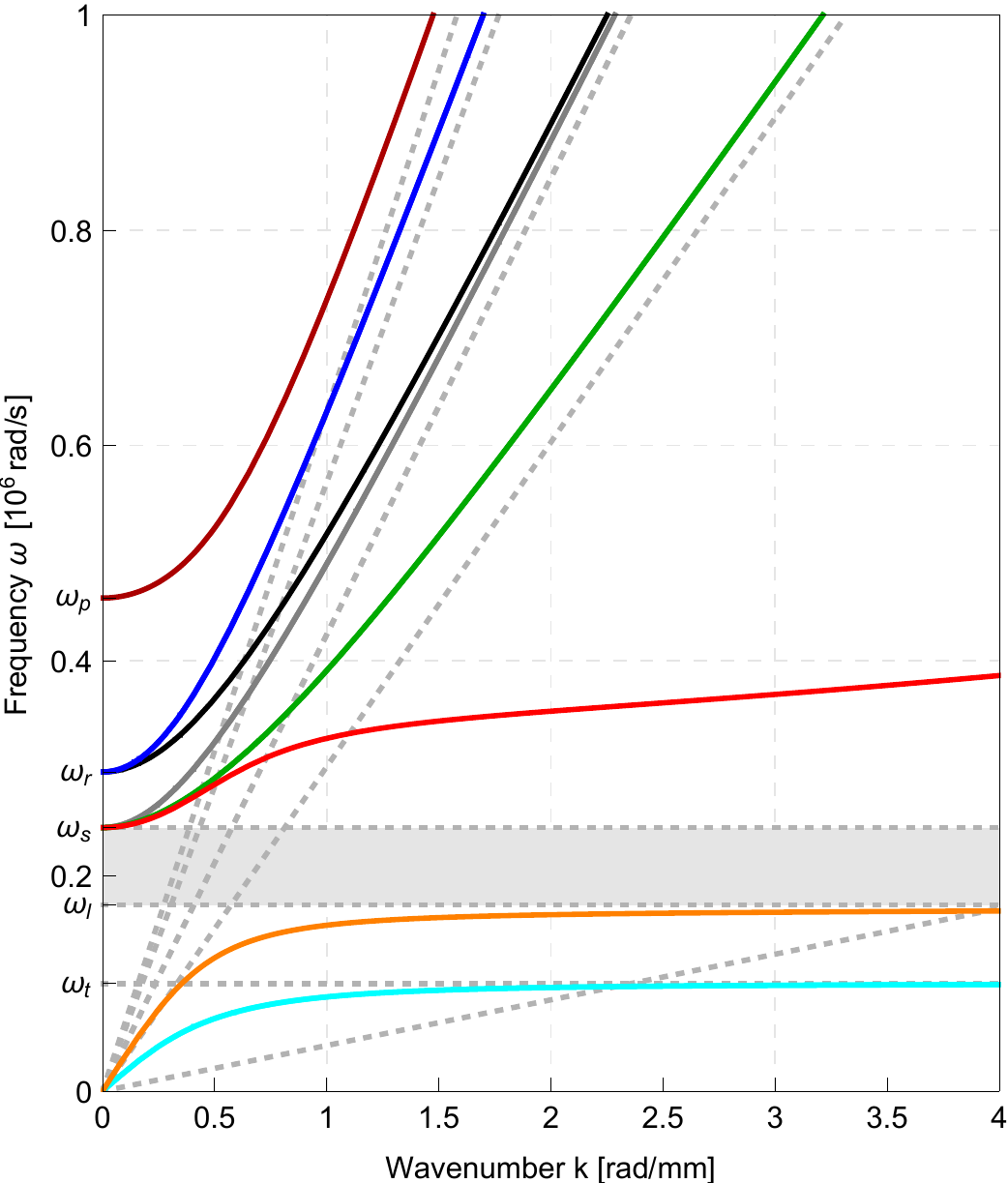} & \includegraphics[scale=0.5]{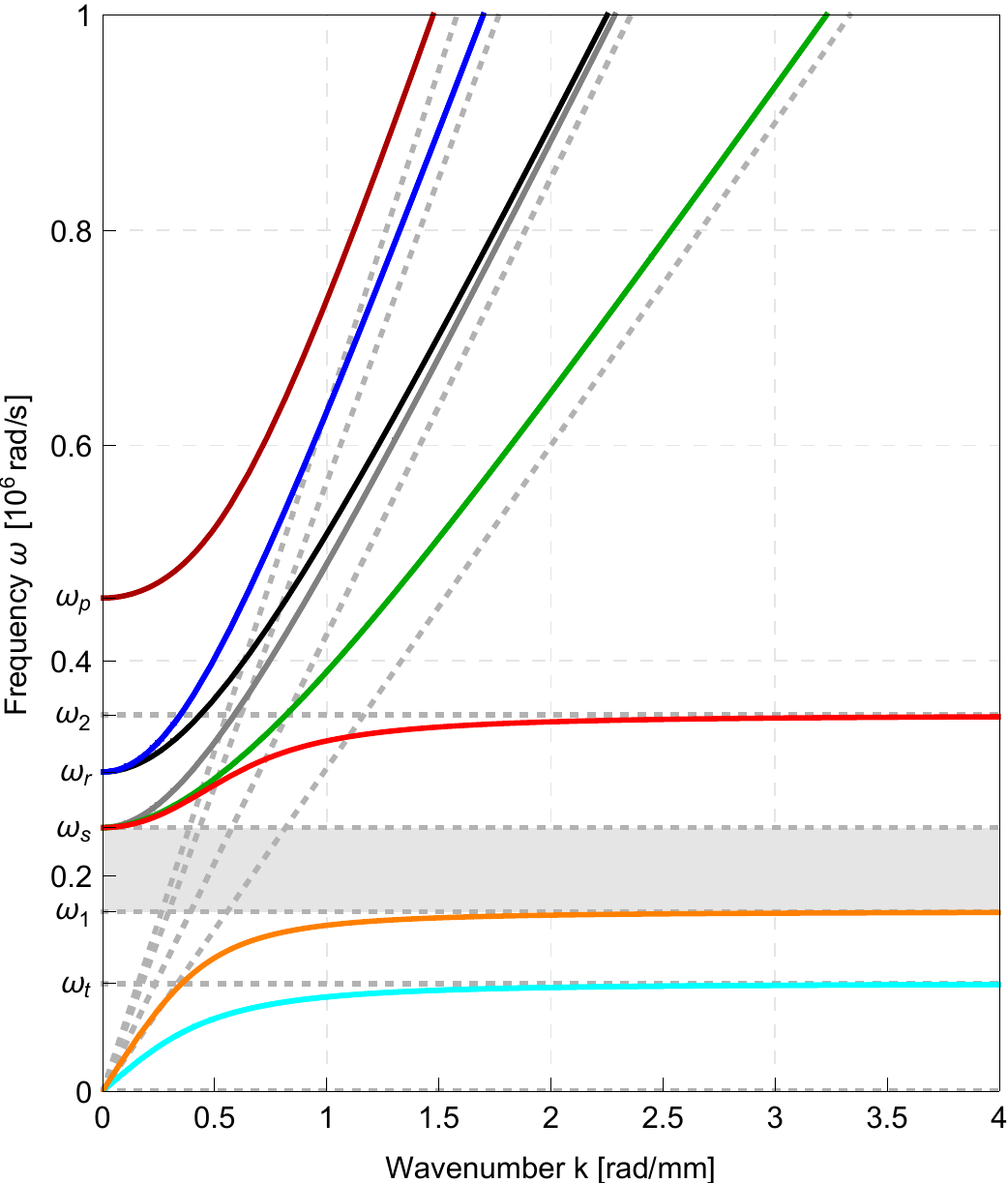}\tabularnewline
$\alpha_{2}=0.05$ & $\alpha_{2}=0.01$ & $\alpha_{2}=0$\tabularnewline
\end{tabular}
\par\end{centering}
\caption{Effect of the parameter $\alpha_{2}$ on the dispersion curves.\label{fig:alpha_2}}
\end{figure}

On the basis of the picture, it is clear that the action of the parameter
$\alpha_{2}$ preserves the presence of the band gap. Varying the
values of this parameter, we have an action only on the curves in
black and gray. The parameter $\alpha_{2}$ is also seen to have some
direct influence on the horizontal asymptotes for the orange optic
wave. In fact if $\alpha_{2}=0$ the value of this horizontal asymptote
changes and we can calculate it solving the equation 
\begin{equation}
c_{16}\left(\omega_{*}^{2}\right)=0\label{eq:c_16}
\end{equation}
with respect to $\omega_{*}$. This is exactly the direct application
of the proposition \ref{prop:CN} in the case in which the coefficient
$c_{18}\left(\omega^{2}\right)$ is not present. Between the solutions
of the eq.(\ref{eq:c_16}) we find 
\begin{align*}
\omega_{1} & =\sqrt{\frac{\widetilde{q}_{1}}{3\eta_{1}\eta_{3}\left(\lambda_{e}+2\mu_{e}\right)}-\frac{\sqrt{\left(\widetilde{q}_{2}\right)^{2}-4\left(3\eta_{1}\eta_{3}\lambda_{e}+6\eta_{1}\eta_{3}\mu_{e}\right)\widetilde{q}_{3}}}{6\eta_{1}\eta_{3}\left(\lambda_{e}+2\mu_{e}\right)}+\frac{9\eta_{1}\lambda_{e}\lambda_{h}}{6\eta_{1}\eta_{3}\left(\lambda_{e}+2\mu_{e}\right)}}\\
\omega_{2} & =\sqrt{\frac{\widetilde{q}_{1}}{3\eta_{1}\eta_{3}\left(\lambda_{e}+2\mu_{e}\right)}+\frac{\sqrt{\left(\widetilde{q}_{2}\right)^{2}-4\left(3\eta_{1}\eta_{3}\lambda_{e}+6\eta_{1}\eta_{3}\mu_{e}\right)\widetilde{q}_{3}}}{6\eta_{1}\eta_{3}\left(\lambda_{e}+2\mu_{e}\right)}+\frac{9\eta_{1}\lambda_{e}\lambda_{h}}{6\eta_{1}\eta_{3}\left(\lambda_{e}+2\mu_{e}\right)}}
\end{align*}
where
\begin{align*}
\widetilde{q}_{1} & =4\eta_{1}\mu_{e}^{2}+2\eta_{3}\mu_{e}^{2}+6\eta_{1}\lambda_{e}\mu_{e}+3\eta_{3}\lambda_{e}\mu_{e}+6\eta_{1}\mu_{e}\mh+6\eta_{3}\mu_{e}\mh+9\eta_{1}\mu_{e}\lambda_{h}+3\eta_{1}\lambda_{e}\mh+3\eta_{3}\lambda_{e}\mh\\
\widetilde{q}_{2} & =-12\eta_{1}\lambda_{e}\mu_{e}-6\eta_{3}\lambda_{e}\mu_{e}-8\eta_{1}\mu_{e}^{2}-4\eta_{3}\mu_{e}^{2}-18\eta_{1}\mu_{e}\lh-6\eta_{1}\lambda_{e}\mh-6\eta_{3}\lambda_{e}\mh\mh\\
 & \quad-9\eta_{1}\lambda_{e}\lh-12\eta_{1}\mu_{e}\mh-12\eta_{3}\mu_{e},\\
\widetilde{q}_{3} & =12\mu_{e}^{2}\lh+18\lambda_{e}\mu_{e}\lh+36\lambda_{e}\mu_{e}\mh+36\mu_{e}\lambda_{h}\mh+12\lambda_{e}\mh^{2}\\
 & \quad+18\lambda_{e}\lh\mh+24\mu_{e}^{2}\mh+24\mu_{e}\mh^{2}.
\end{align*}

\subsubsection{Case ${\displaystyle \mu_{c}>0,\lim_{\alpha_{3}\protect\fr0}}$}

Characteristic limit elastic energy $\left\Vert \nabla u-P\right\Vert ^{2}+\left\Vert \sym\,P\right\Vert ^{2}+\left\Vert \dev\,\curl\,P\right\Vert ^{2}$\\
Characteristic limit kinetic energy $\left\Vert u_{,t}\right\Vert ^{2}+\left\Vert P_{,t}\right\Vert ^{2}$.

\begin{table}[H]
\centering{}%
\begin{tabular}{ccc}
\includegraphics[scale=0.5]{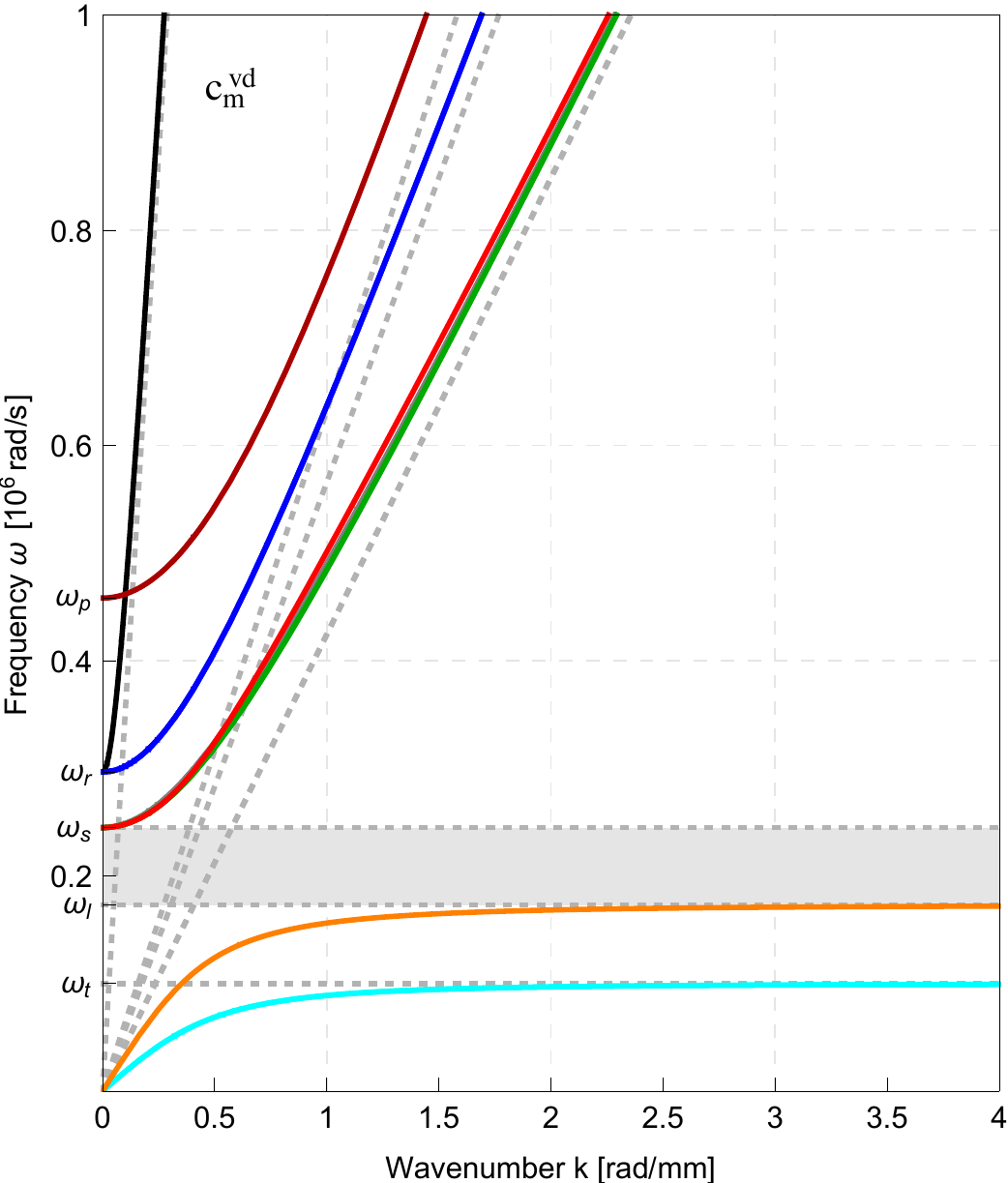} & \includegraphics[scale=0.5]{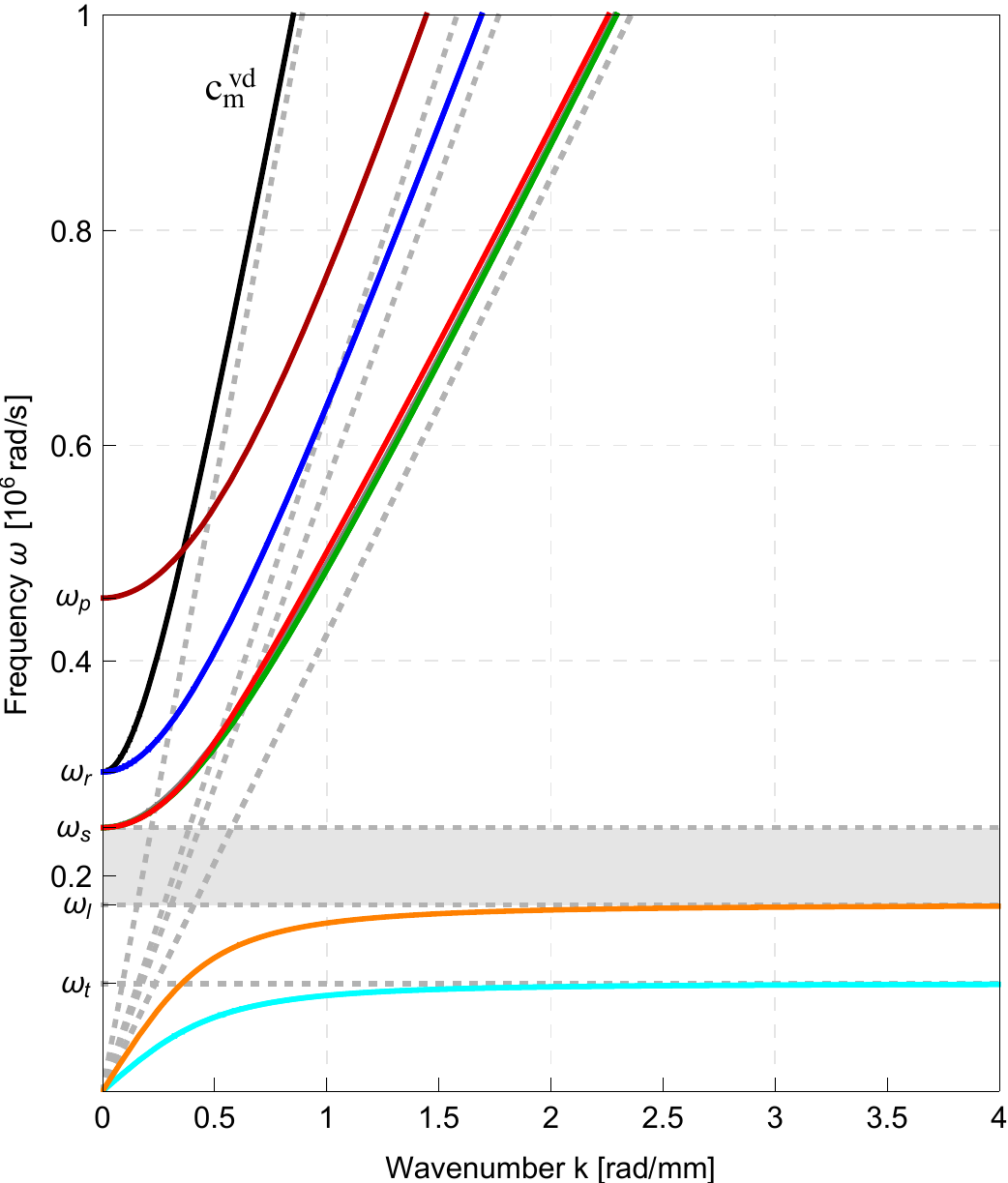} & \includegraphics[scale=0.5]{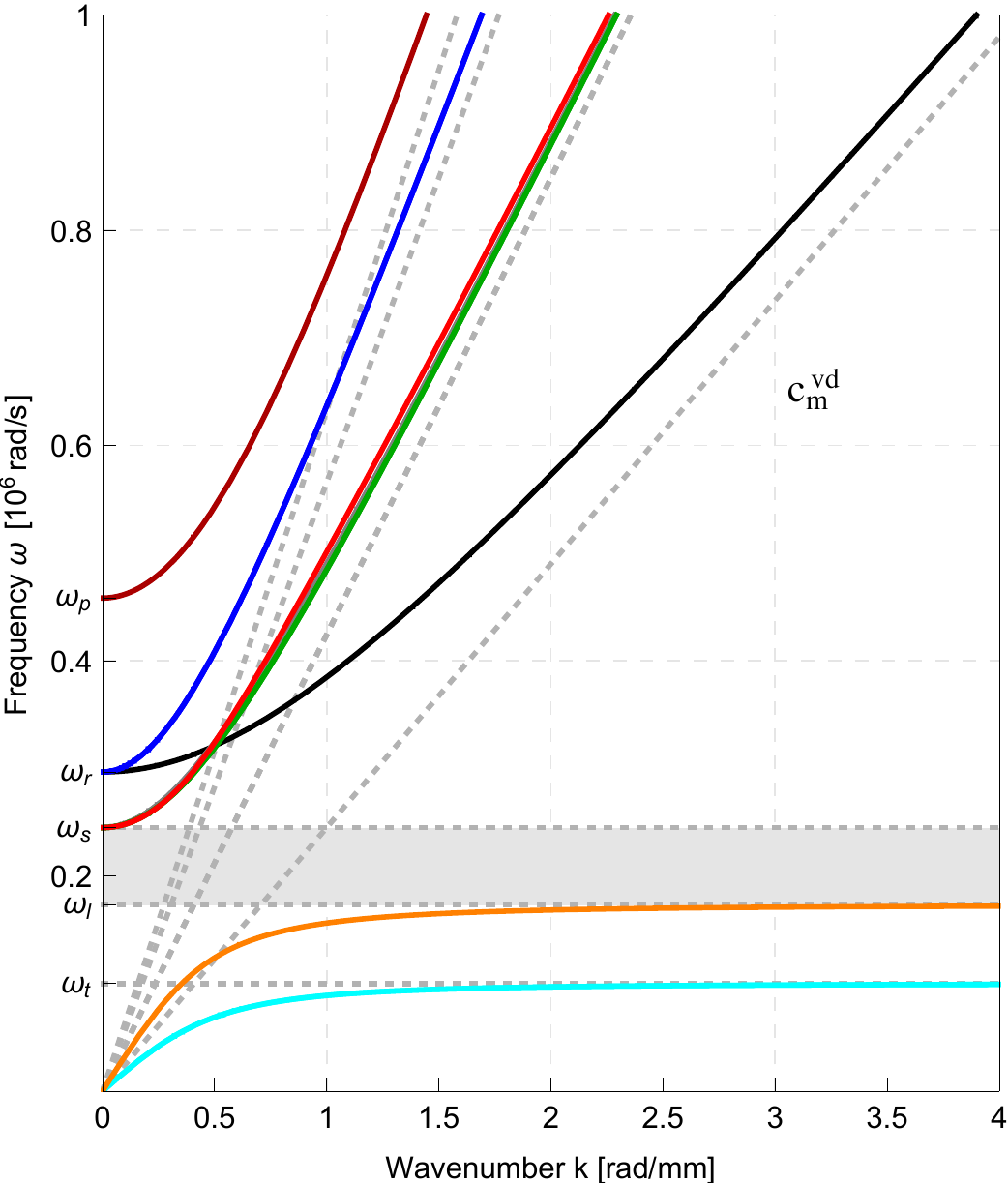}\tabularnewline
$\alpha_{3}=100$ & $\alpha_{3}=10$ & $\alpha_{3}=0$\tabularnewline
\end{tabular}\caption{Effect of the parameter $\alpha_{3}$ on the dispersion curves.}
\end{table}

The action of the parameter $\alpha_{3}$ preserves the presence
of the band gaps. It acts only on the uncoupled curve in black leaving
all the others fixed. The parameter $\alpha_{3}$ has no direct effect
neither on horizontal asymptotes, nor on the creation of purely horizontal
curves.The oblique asymptote of the black branch is $\mathrm{c}_{\textrm{m}}^{\textrm{vd}}$,
and we explicitly remark that
\[
\lim_{\alpha_{3}\fr0}\mathrm{c}_{\textrm{m}}^{\textrm{vd}}=\frac{\alpha_{1}}{\eta_{3}}\frac{\me\,L_{c}^{2}}{3}\qquad\textrm{and}\qquad\lim_{\alpha_{3}\fr0}\mathrm{c}_{\textrm{m}}^{\textrm{vd}}=+\infty.
\]

\newpage{}

\subsubsection{Case ${\displaystyle \mu_{c}>0,\lim_{\alpha_{2},\alpha_{3}\protect\fr0}}$}

Characteristic limit elastic energy $\left\Vert \nabla u-P\right\Vert ^{2}+\left\Vert \sym\,P\right\Vert ^{2}+\left\Vert \dev\,\sym\,\curl\,P\right\Vert ^{2}$.\\
Characteristic limit kinetic energy $\left\Vert u_{,t}\right\Vert ^{2}+\left\Vert P_{,t}\right\Vert ^{2}$.

\begin{figure}[H]
\centering{}%
\begin{tabular}{ccc}
\includegraphics[scale=0.5]{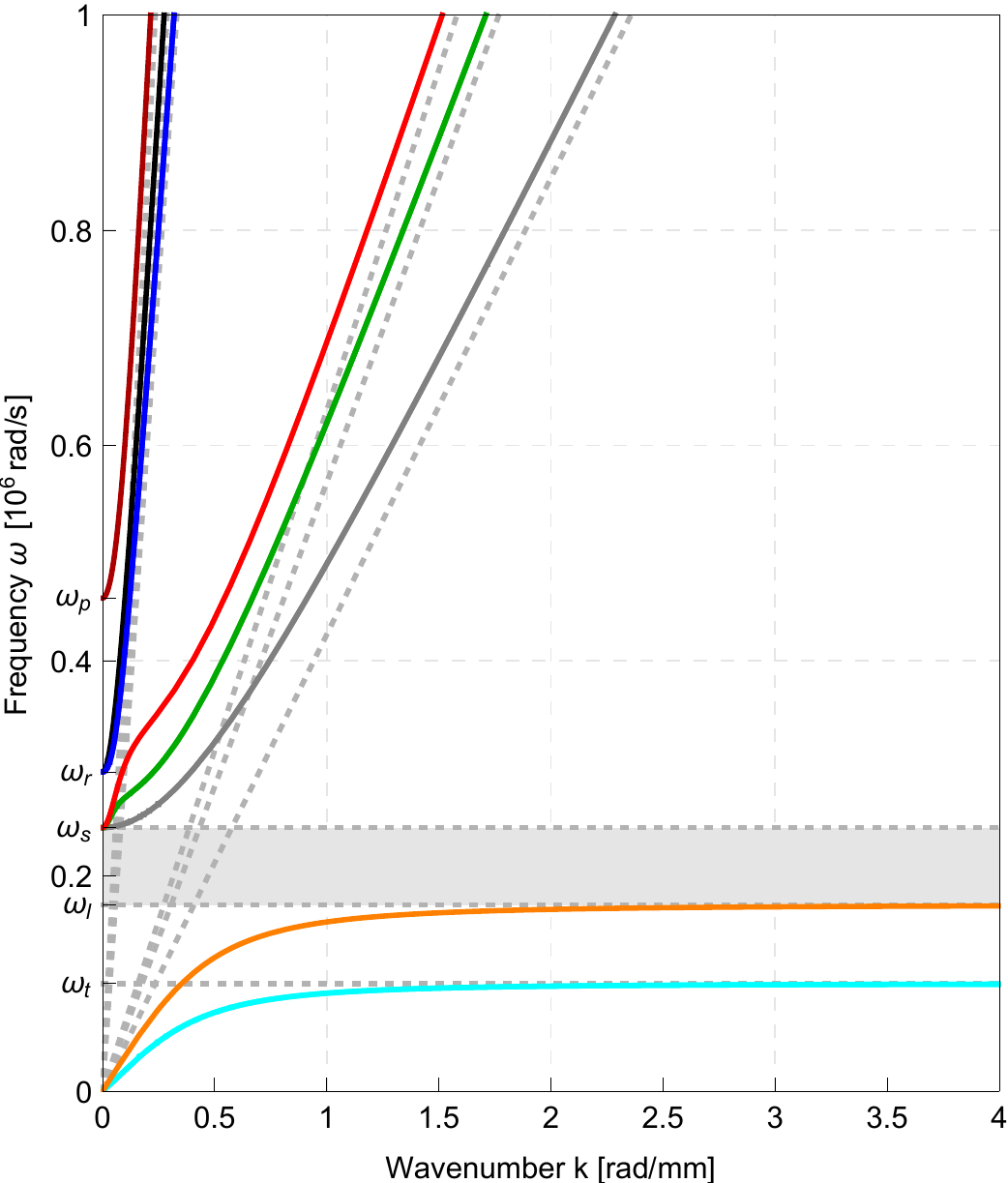} & \includegraphics[scale=0.5]{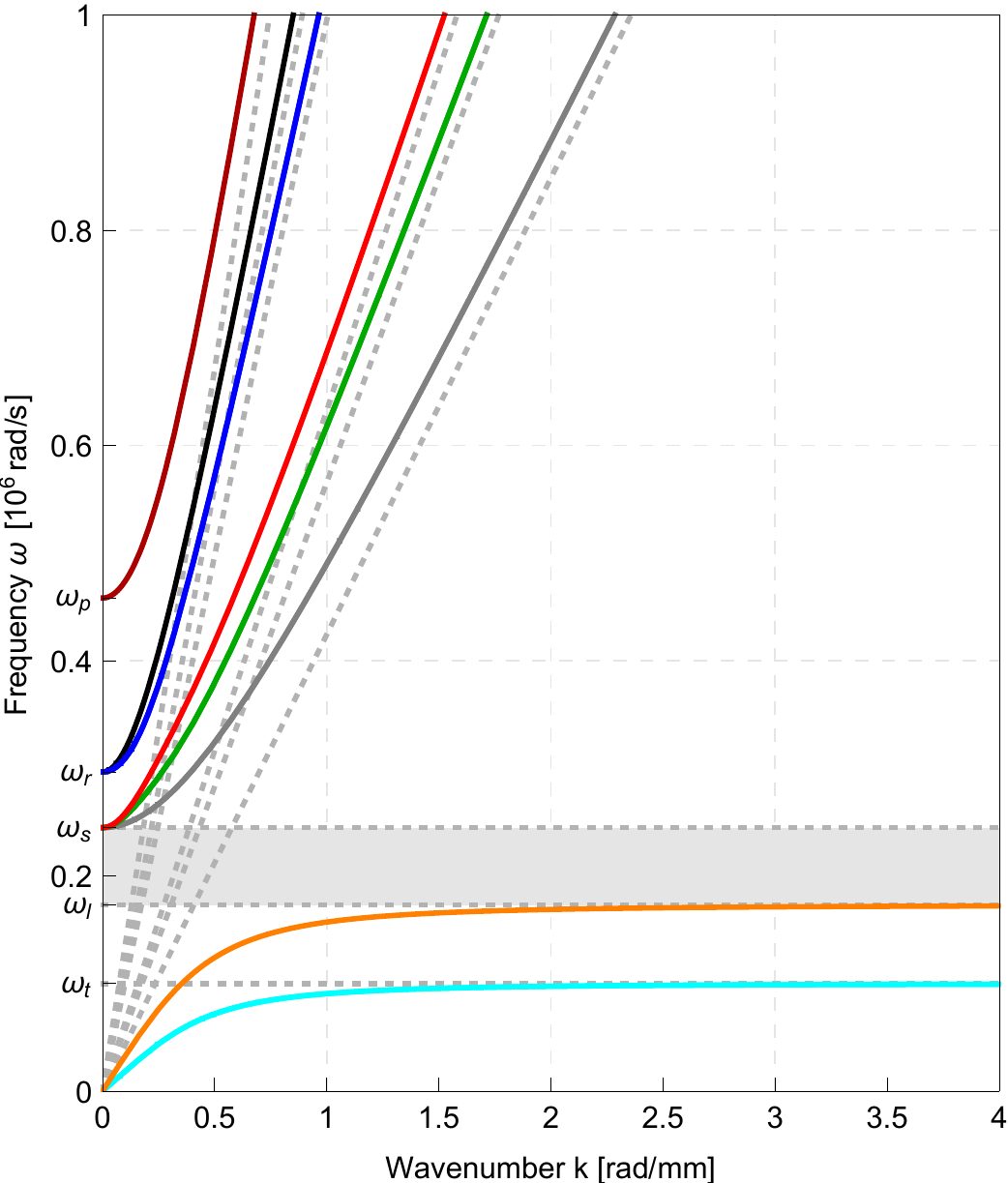} & \includegraphics[scale=0.5]{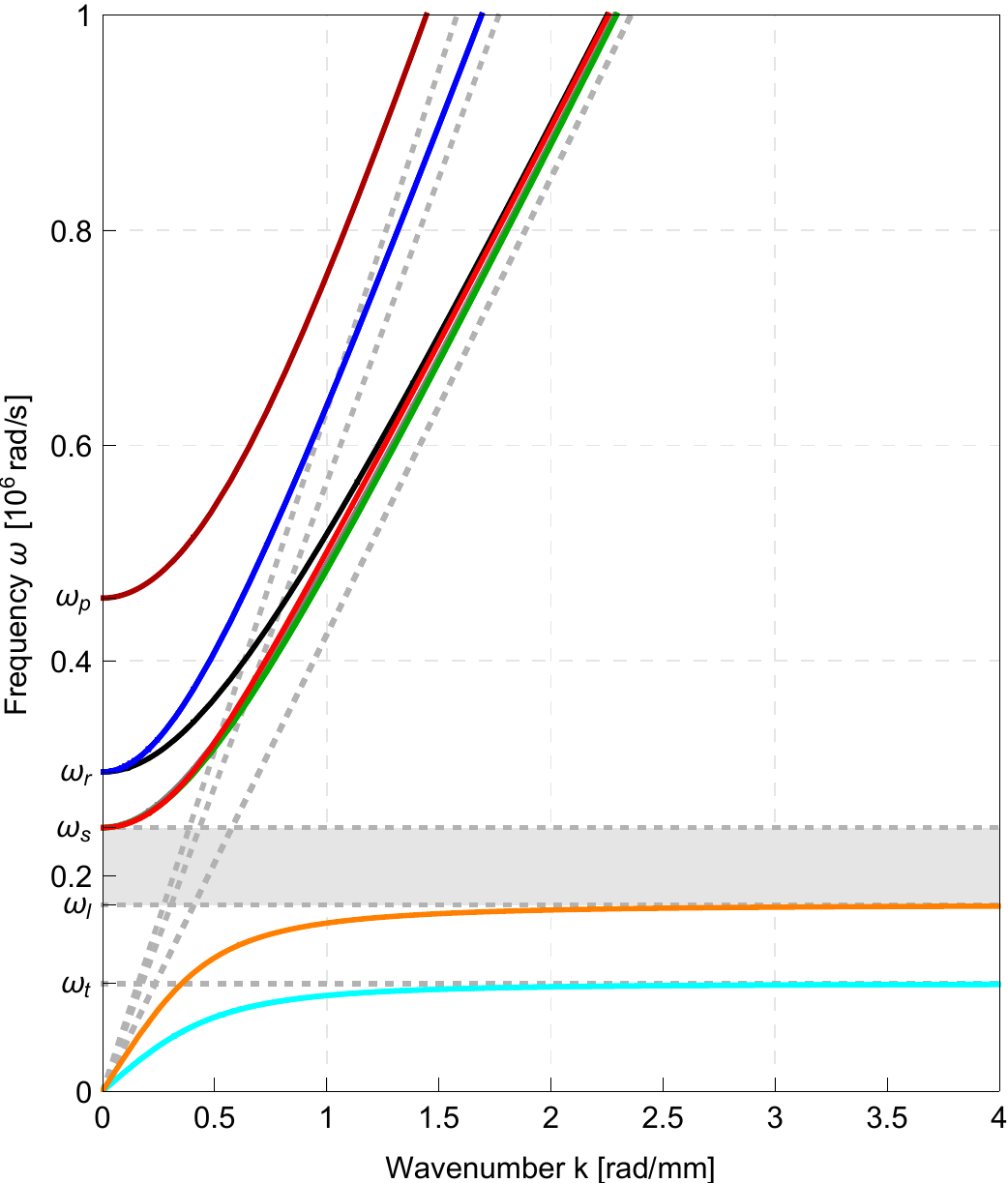}\tabularnewline
$\alpha_{2}=\alpha_{3}=100$ & $\alpha_{2}=\alpha_{3}=10$ & $\alpha_{2}=\alpha_{3}=1$\tabularnewline
\includegraphics[scale=0.5]{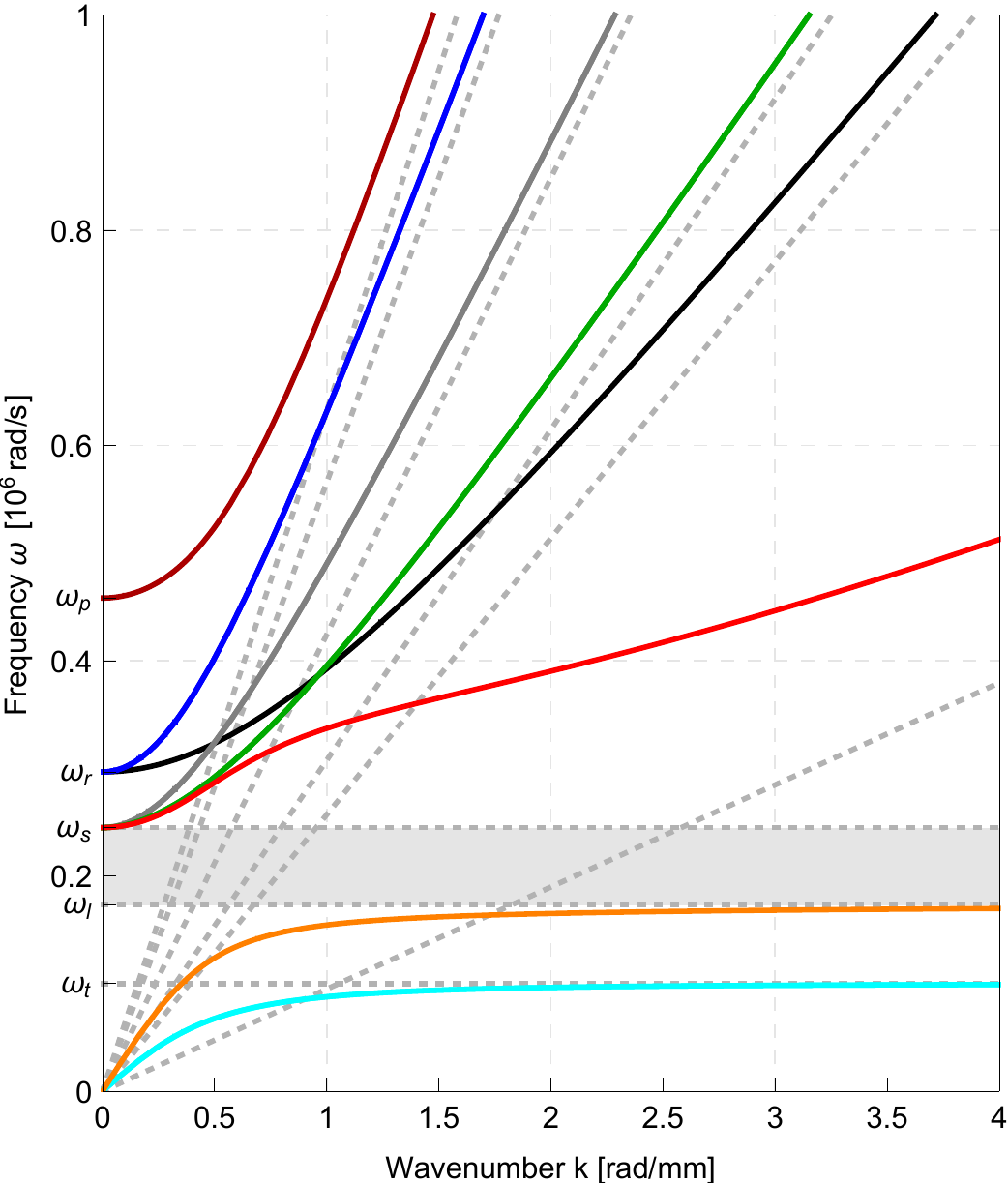} & \includegraphics[scale=0.5]{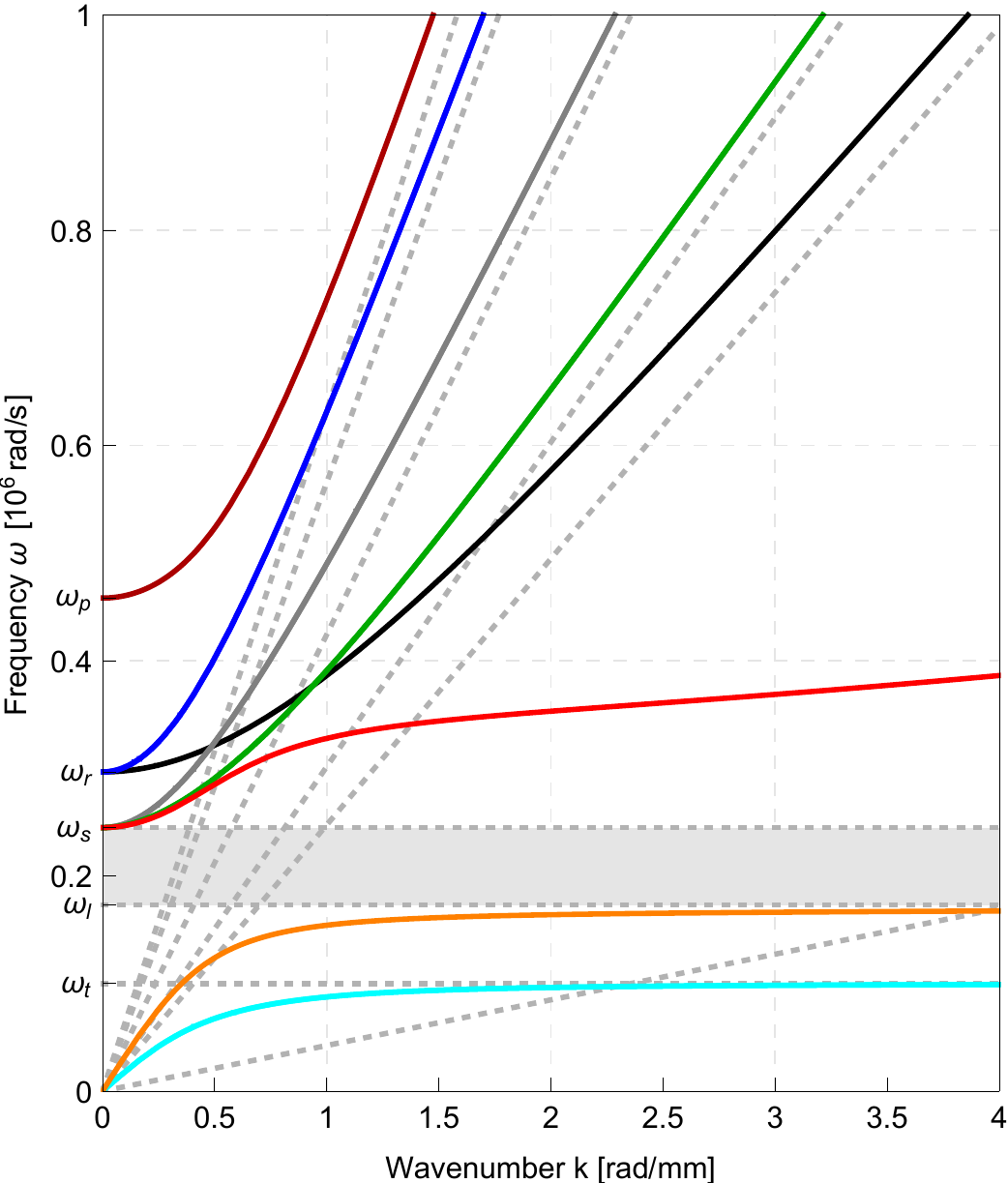} & \includegraphics[scale=0.5]{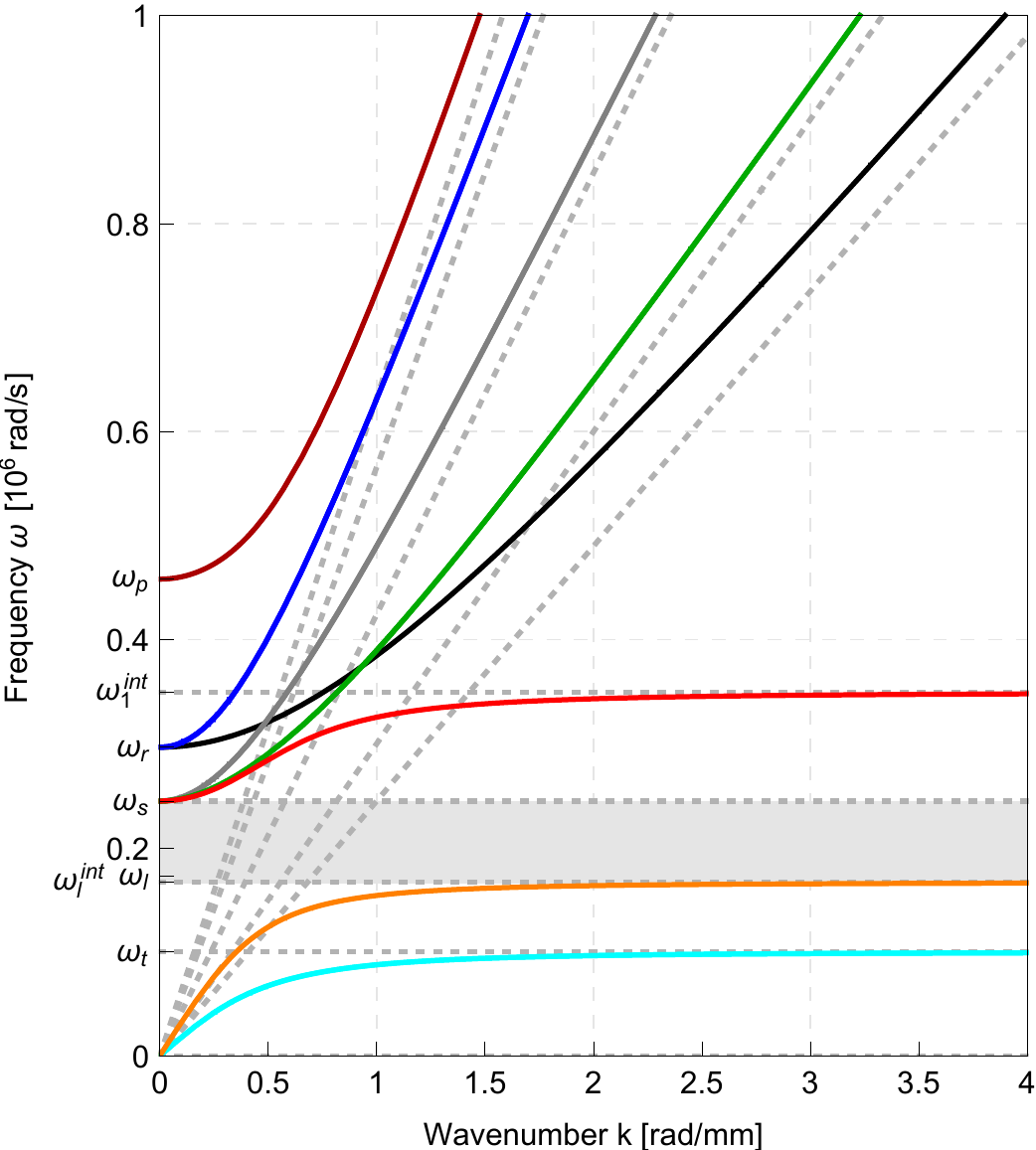}\tabularnewline
$\alpha_{2}=\alpha_{3}=0.05$ & $\alpha_{2}=\alpha_{3}=0.01$ & $\alpha_{2}=\alpha_{3}=0$\tabularnewline
\end{tabular}\caption{Combined effect of the parameter $\alpha_{2}$ and $\alpha_{3}$ on
the dispersion curves.}
\end{figure}
The combined action of the parameters $\alpha_{2}$ and $\alpha_{3}$
is given by the superposition of the effects observed in subsections
5.3.2 and 5.3.3. Only three curves (gray, orange, cyan) remain fixed.

\newpage{}

\subsubsection{Vanishing Cosserat couple modulus ${\displaystyle \mu_{c}=0}$ and
$\lim_{\alpha_{1}\protect\fr0}$}

Characteristic limit elastic energy $\left\Vert \sym\left(\nabla u-P\right)\right\Vert ^{2}+\left\Vert \sym\,P\right\Vert ^{2}+\left\Vert \skew\,\curl\,P\right\Vert ^{2}+\frac{1}{3}\left(\textrm{tr}\,\curl\,P\right)^{2}.$\\
Characteristic limit kinetic energy $\left\Vert u_{,t}\right\Vert ^{2}+\left\Vert P_{,t}\right\Vert ^{2}$.

\begin{figure}[H]
\centering{}%
\begin{tabular}{ccc}
\includegraphics[scale=0.5]{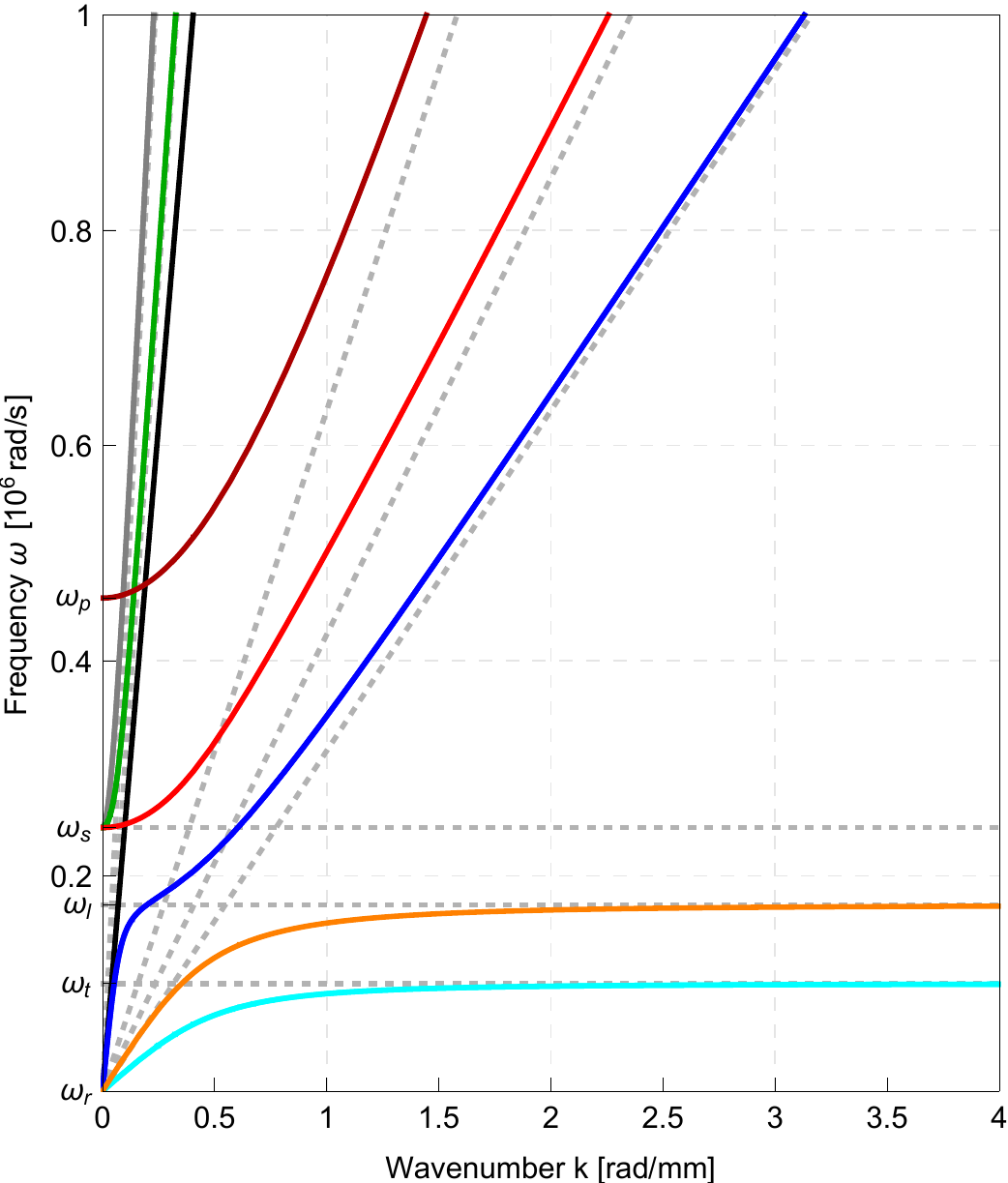} & \includegraphics[scale=0.5]{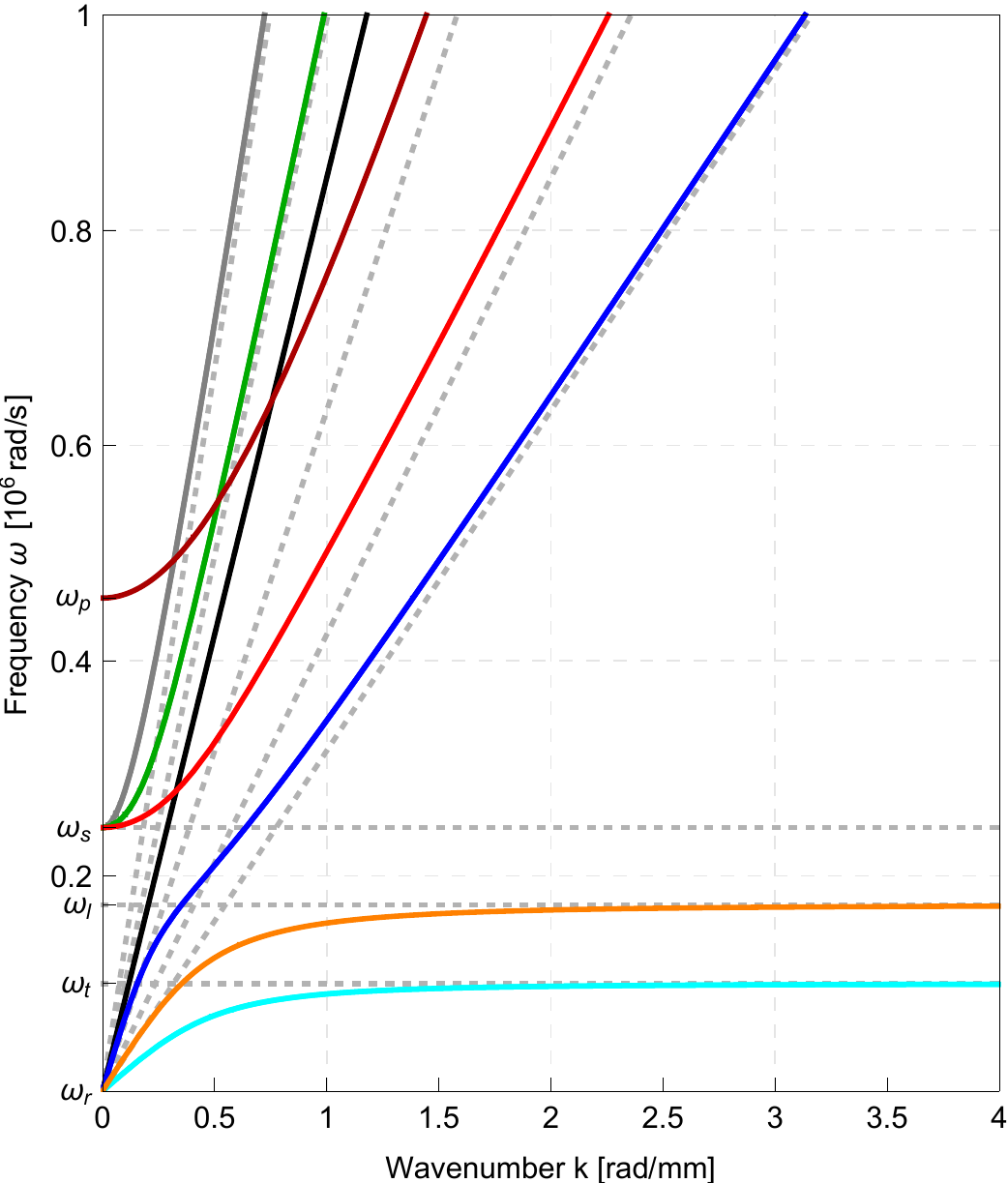} & \includegraphics[scale=0.5]{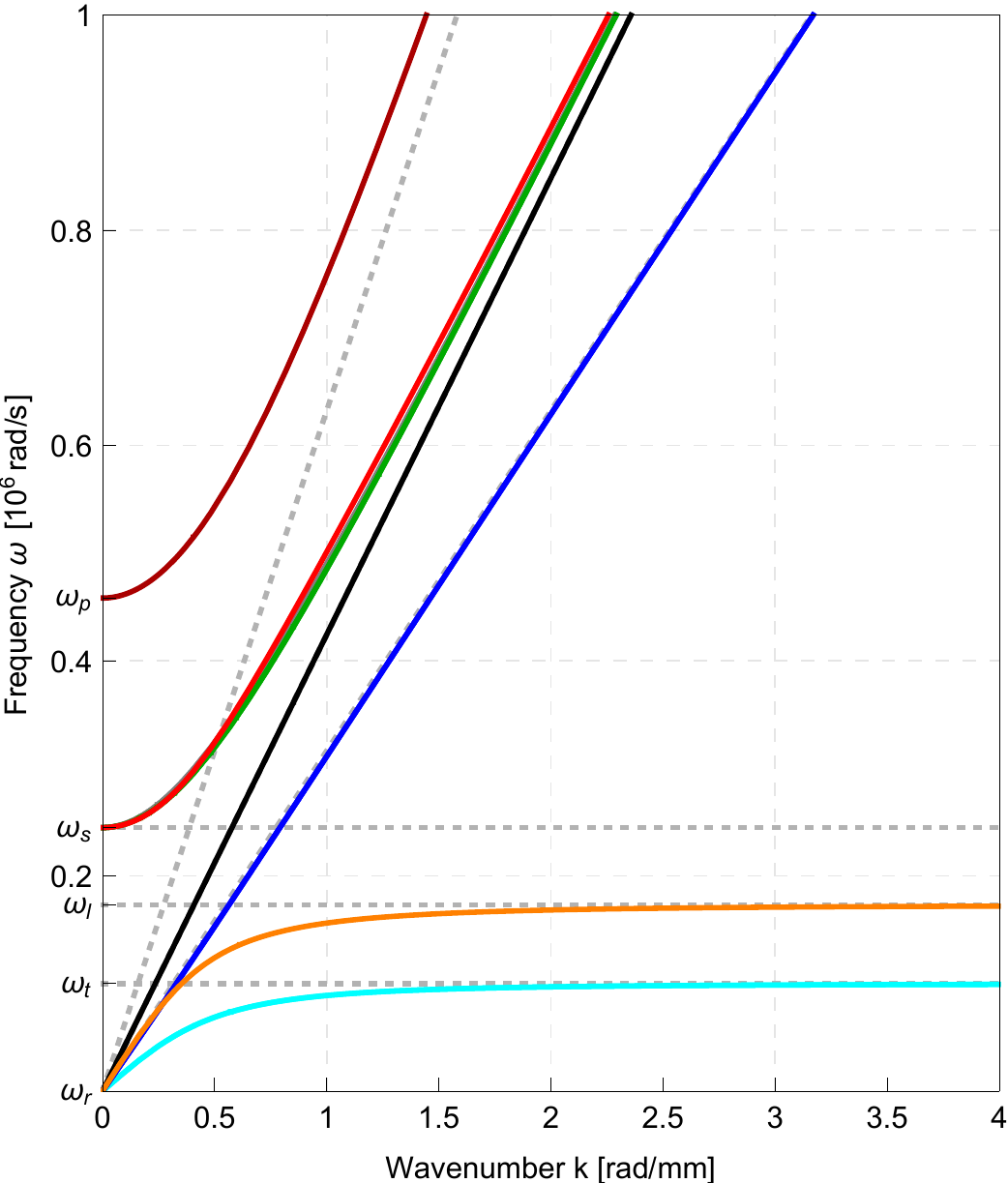}\tabularnewline
$\alpha_{1}=100$ & $\alpha_{1}=10$ & $\alpha_{1}=1$\tabularnewline
\includegraphics[scale=0.5]{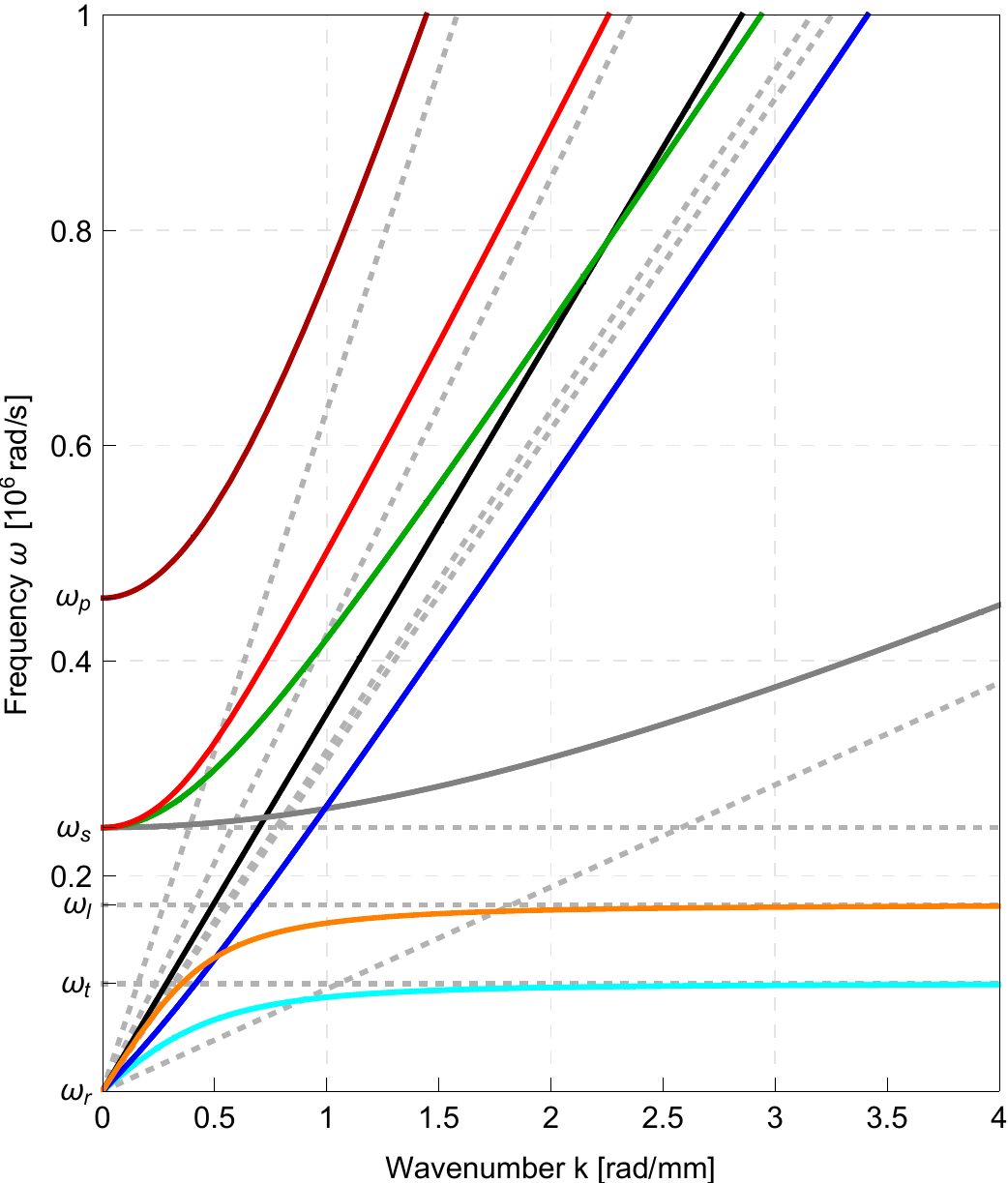} & \includegraphics[scale=0.5]{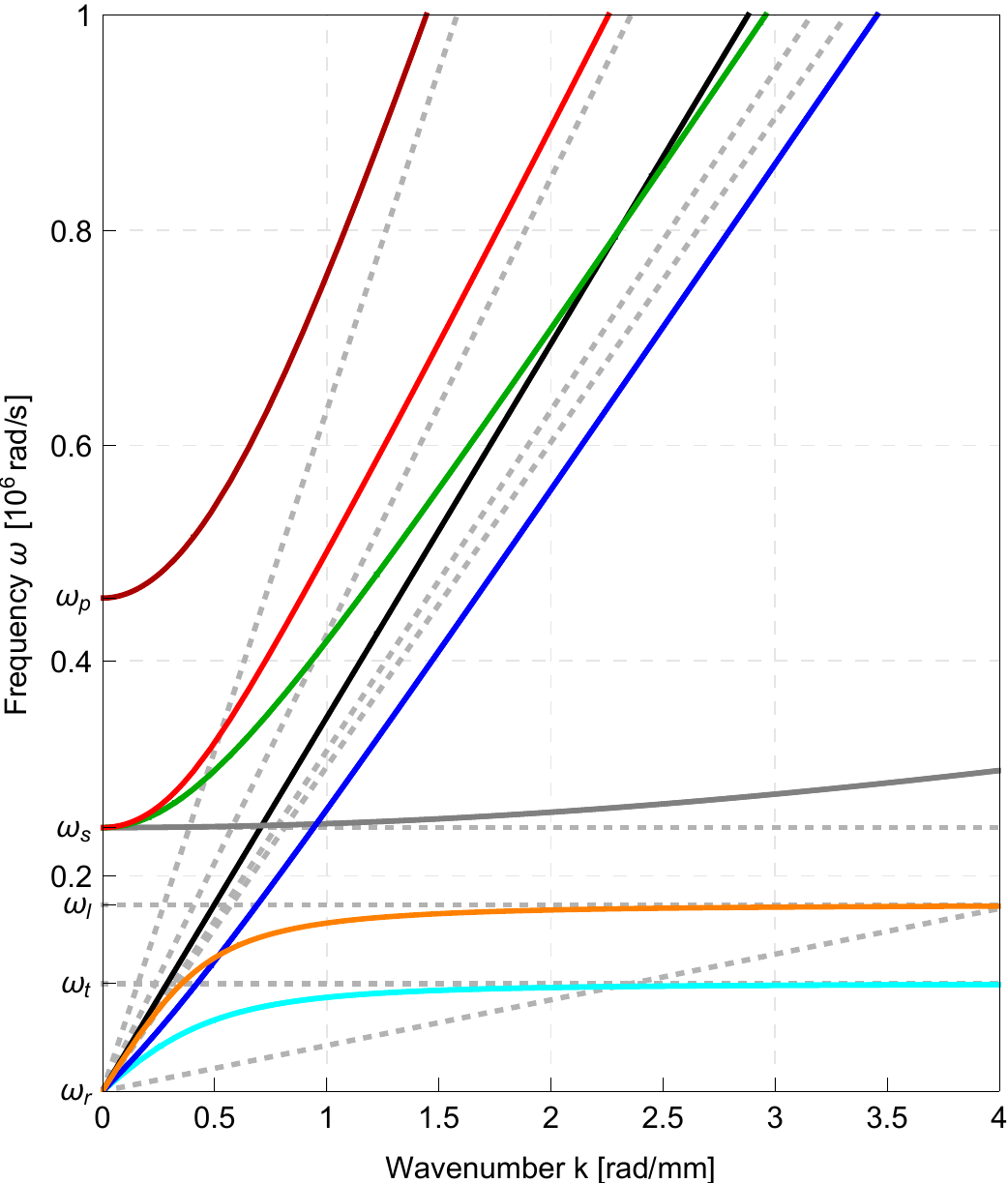} & \includegraphics[scale=0.5]{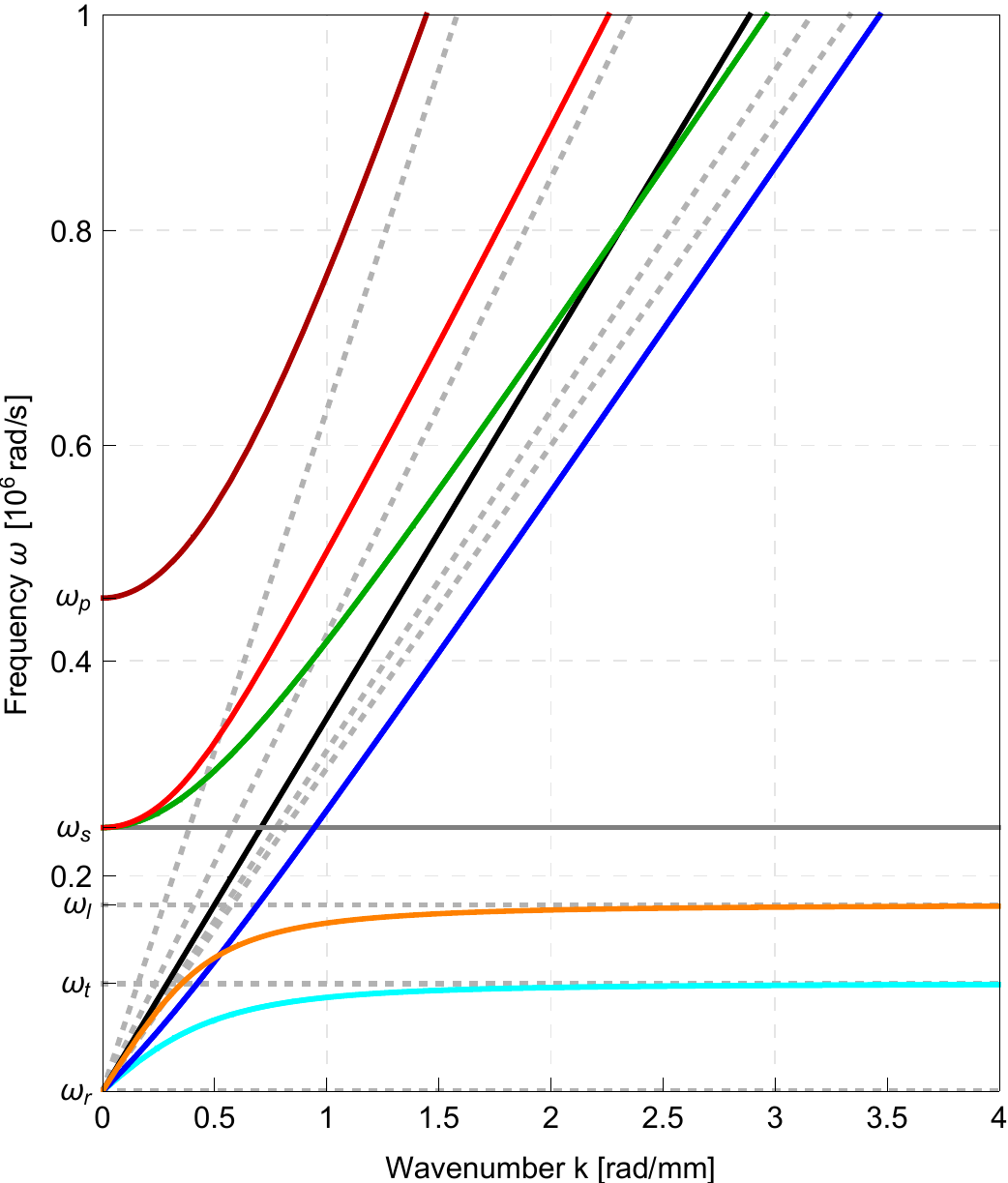}\tabularnewline
$\alpha_{1}=0.05$ & $\alpha_{1}=0.01$ & $\alpha_{1}=0$\tabularnewline
\end{tabular}\caption{Effect of the parameter $\alpha_{1}$ on the dispersion curves for
the case $\mu_{c}=0$. Higher values of $\alpha_{1}$ have some non-negligible
effects on the new extra acoustic curves.}
\end{figure}
In this case we see that two curves (black and green) become acoustic.
As a consequence, there is no complete band gap. This is coherent
with the results of \cite{madeo2015wave} in which the existence of
2 complete band-gaps is directly related to a non-vanishing Cosserat
couple modulus $\mu_{c}>0$. The particular effect of the parameter
$\alpha_{1}=0$ on the existence of a horizontal curve is preserved
(see also Fig.\ref{fig:alpha_1}).

We explicitly mention that the presence of 4 acoustic curves is not
observed in any known pattern of dispersion curves for real metamaterials.
This means that such metamaterials need to have a non-vanishing Cosserat
couple modulus $\mc>0$ which allows for the description of rotational
micro-motions at higher frequencies.

\newpage{}

\subsubsection{Vanishing Cosserat couple modulus ${\displaystyle \mu_{c}=0}$ and
$\lim_{\alpha_{2}\protect\fr0}$}

Characteristic limit elastic energy $\left\Vert \sym\left(\nabla u-P\right)\right\Vert ^{2}+\left\Vert \sym\,P\right\Vert ^{2}+\left\Vert \sym\,\curl\,P\right\Vert ^{2}$.\\
Characteristic limit kinetic energy $\left\Vert u_{,t}\right\Vert ^{2}+\left\Vert P_{,t}\right\Vert ^{2}$.

\begin{figure}[H]
\centering{}%
\begin{tabular}{ccc}
\includegraphics[scale=0.5]{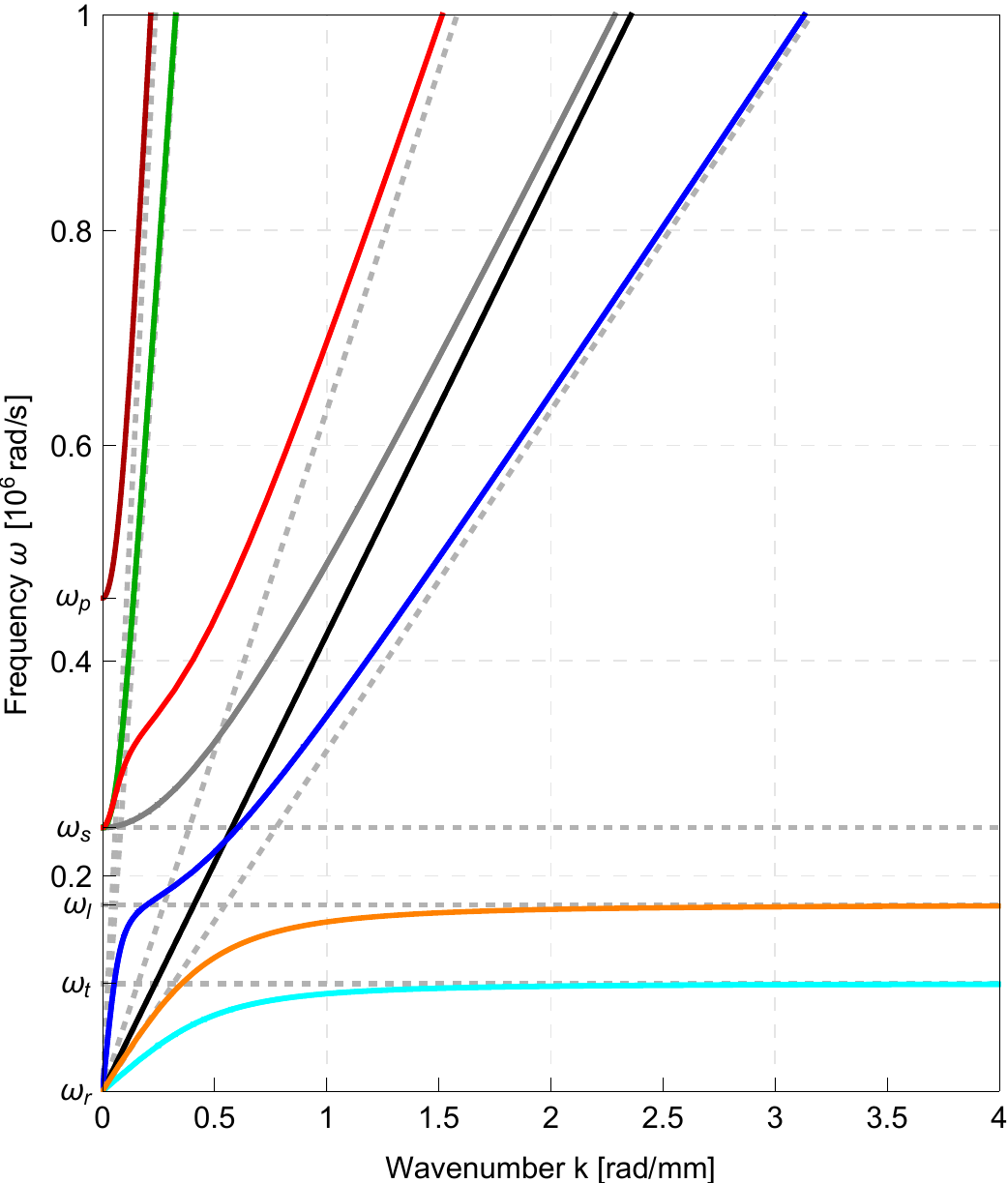} & \includegraphics[scale=0.5]{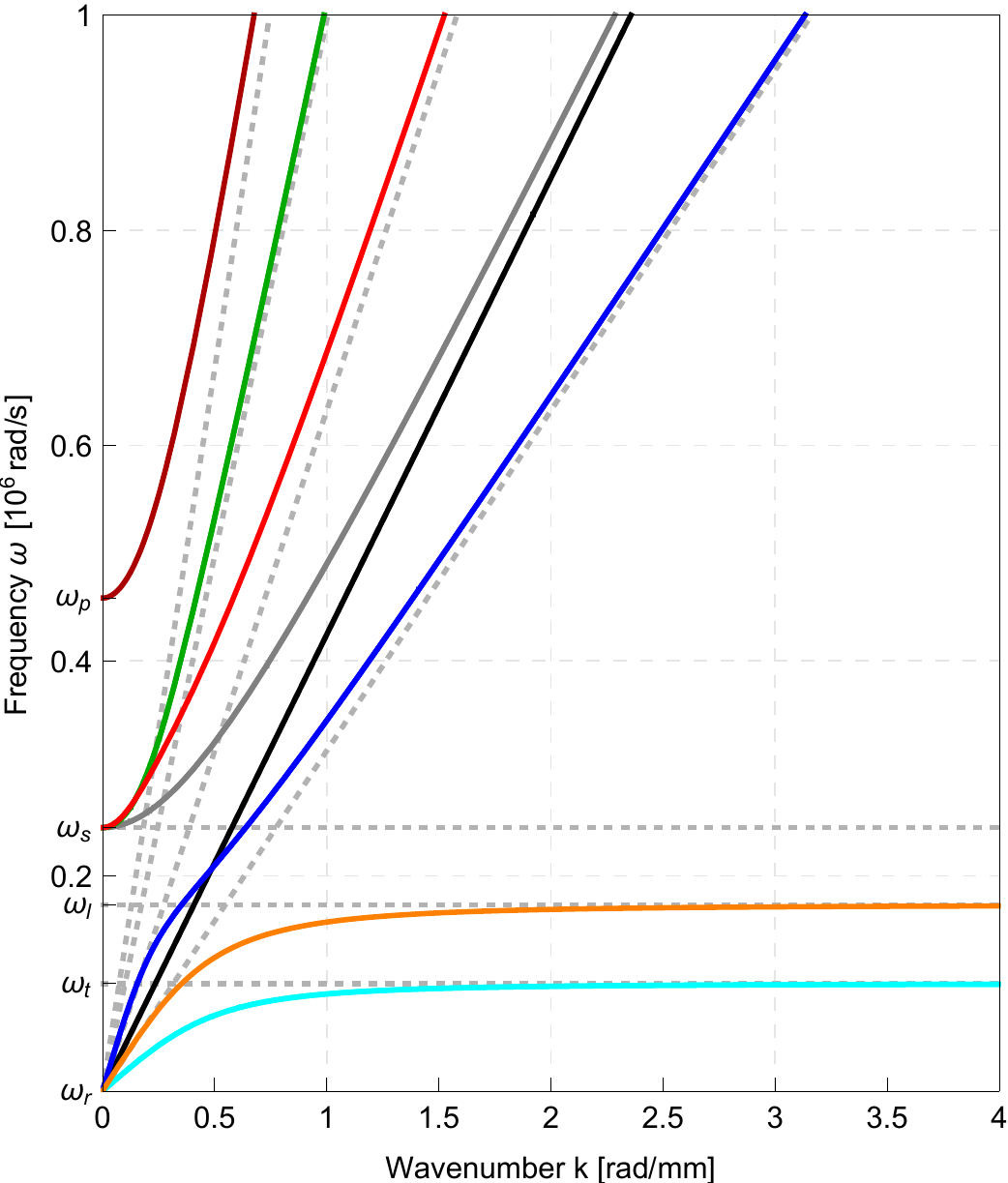} & \includegraphics[scale=0.5]{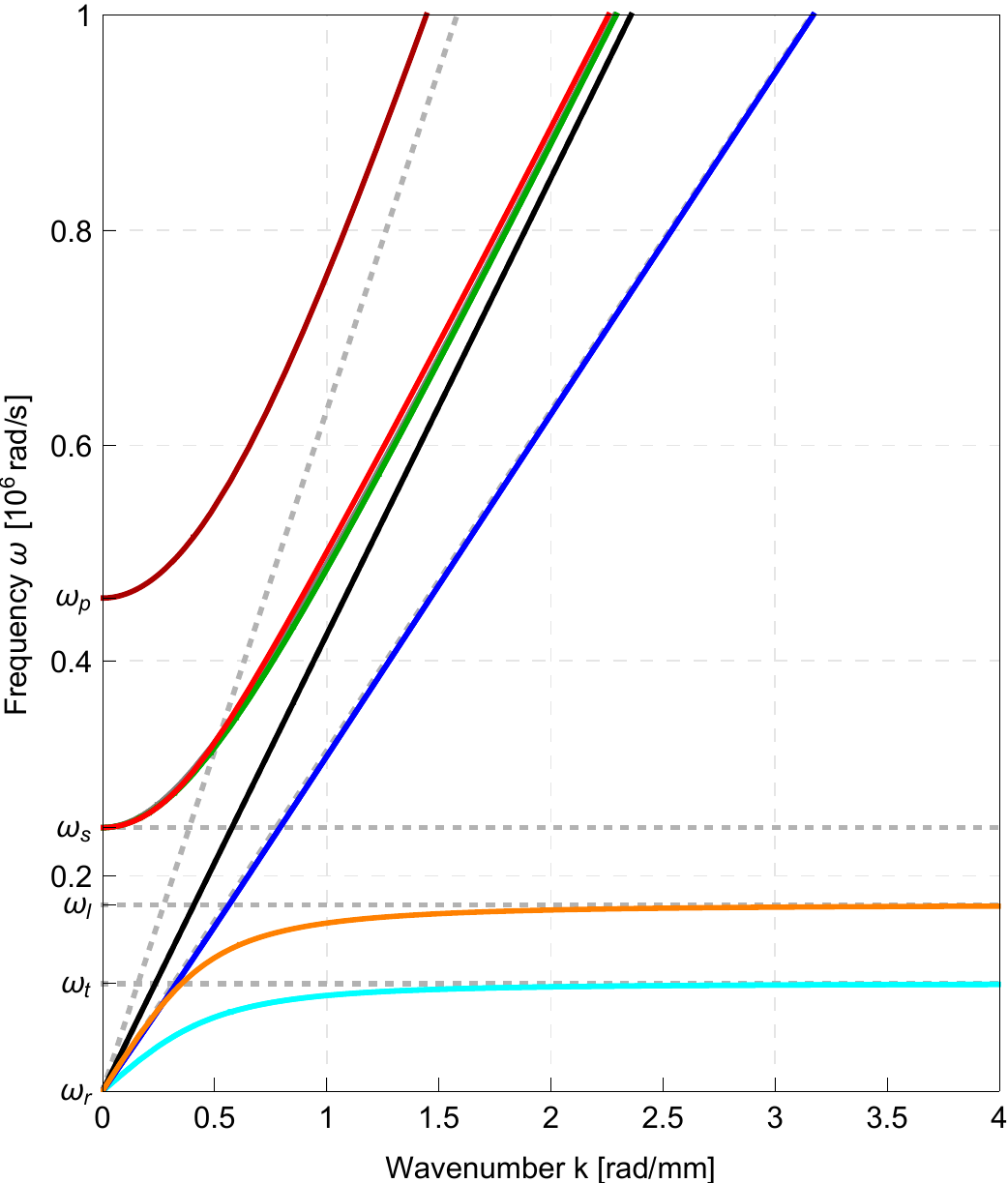}\tabularnewline
$\alpha_{2}=100$ & $\alpha_{2}=10$ & $\alpha_{2}=1$\tabularnewline
\includegraphics[scale=0.5]{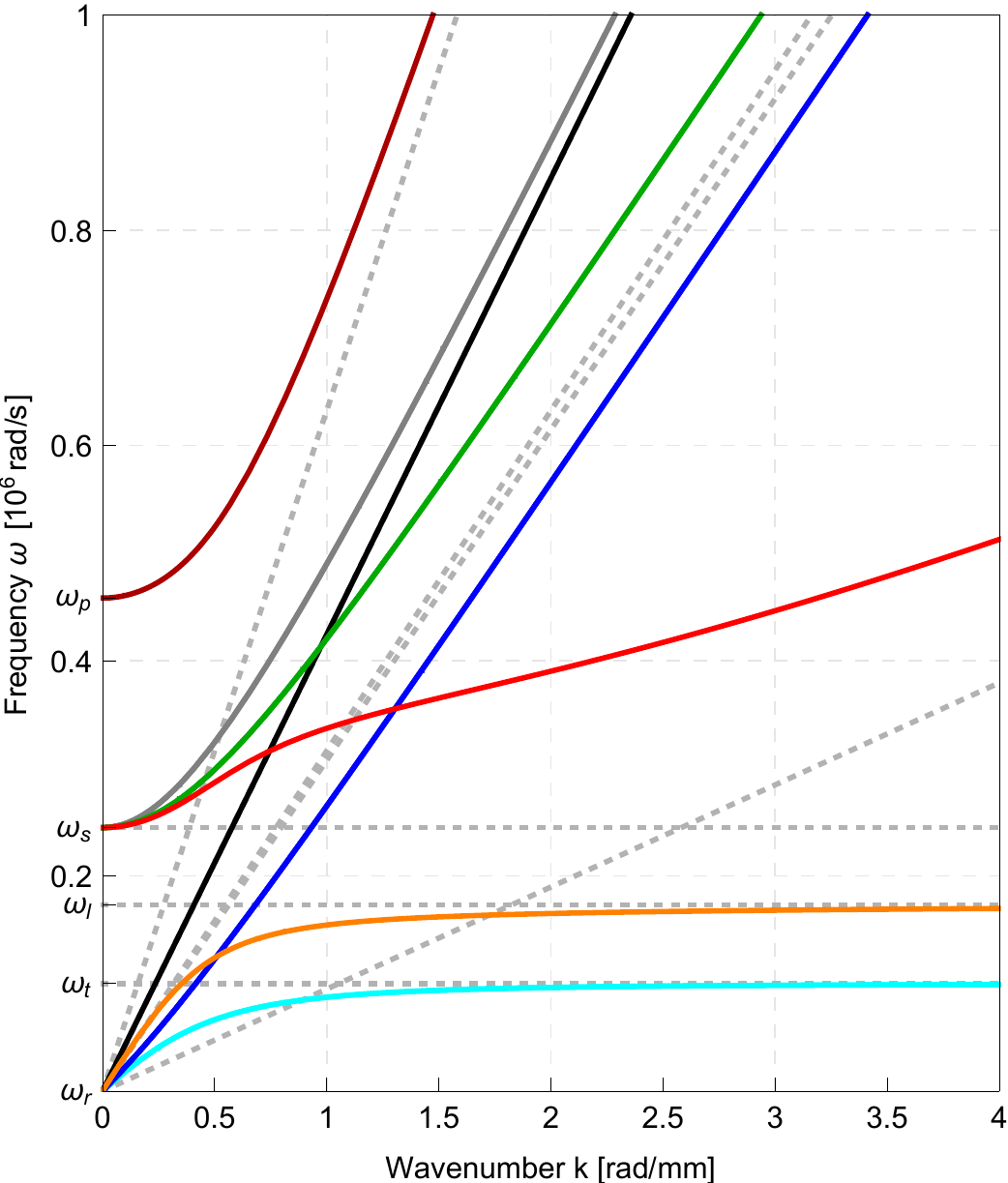} & \includegraphics[scale=0.5]{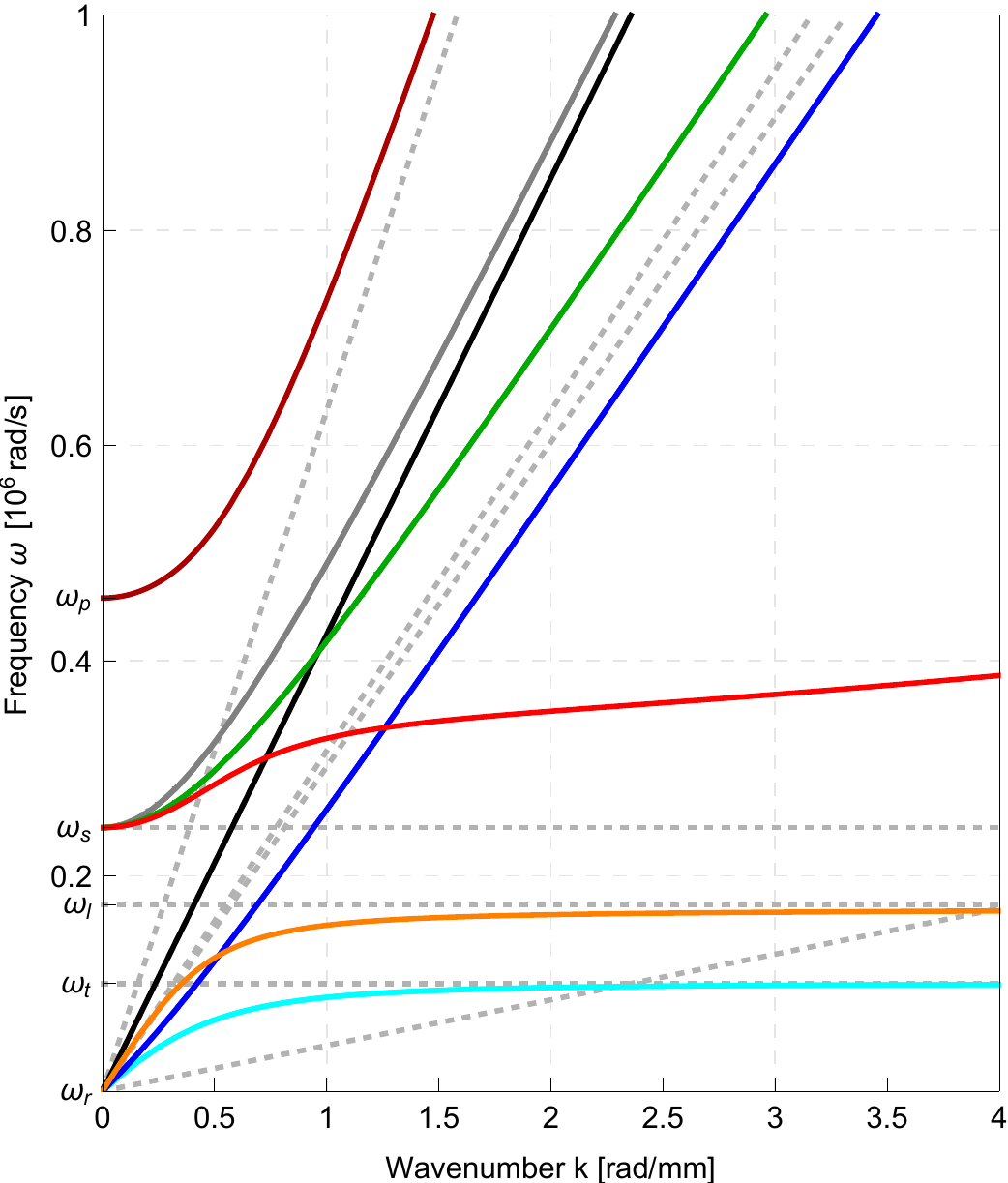} & \includegraphics[scale=0.5]{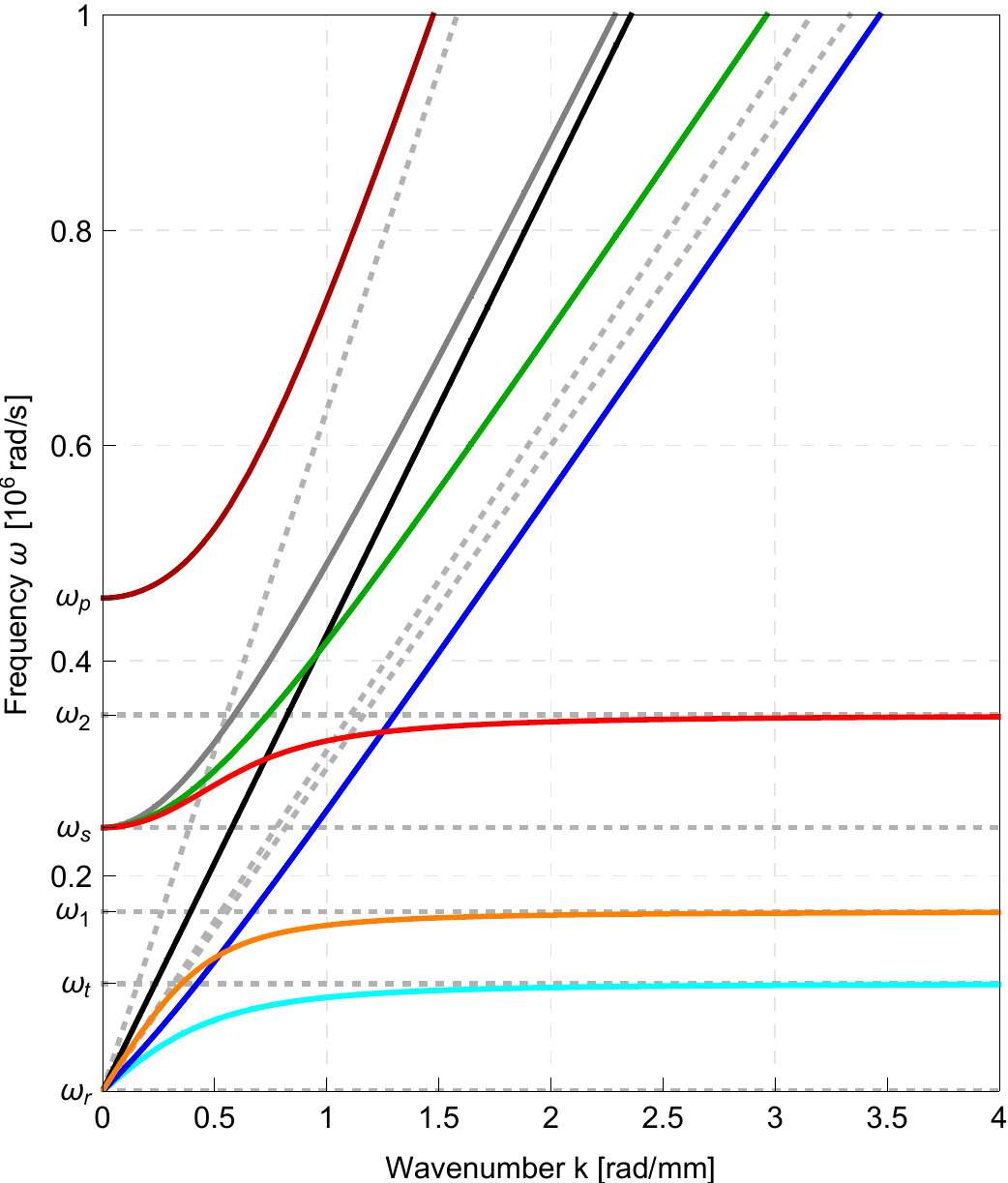}\tabularnewline
$\alpha_{2}=0.05$ & $\alpha_{2}=0.01$ & $\alpha_{2}=0$\tabularnewline
\end{tabular}\caption{Effect of the parameter $\alpha_{2}$ on the dispersion curves for
the case $\mu_{c}=0$.}
\end{figure}

Again, the two extra characteristic acoustic curves that arise when
setting $\mc=0$ are recovered again. An effect of the parameter $\alpha_{2}$
similar to the one shown in Fig. \ref{fig:alpha_2} is also found
for the optic wave which becomes horizontal. A high value of $\alpha_{2}$
has also a visible effect on one of the two extra acoustic curves.

\newpage{}

\subsubsection{Vanishing Cosserat couple modulus ${\displaystyle \mu_{c}=0}$ and
$\lim_{\alpha_{3}\protect\fr0}$}

Characteristic limit elastic energy $\left\Vert \sym\left(\nabla u-P\right)\right\Vert ^{2}+\left\Vert \sym\,P\right\Vert ^{2}+\left\Vert \dev\,\curl\,P\right\Vert ^{2}$.\\
Characteristic limit kinetic energy $\left\Vert u_{,t}\right\Vert ^{2}+\left\Vert P_{,t}\right\Vert ^{2}$.

\begin{table}[H]
\centering{}%
\begin{tabular}{ccc}
\includegraphics[scale=0.5]{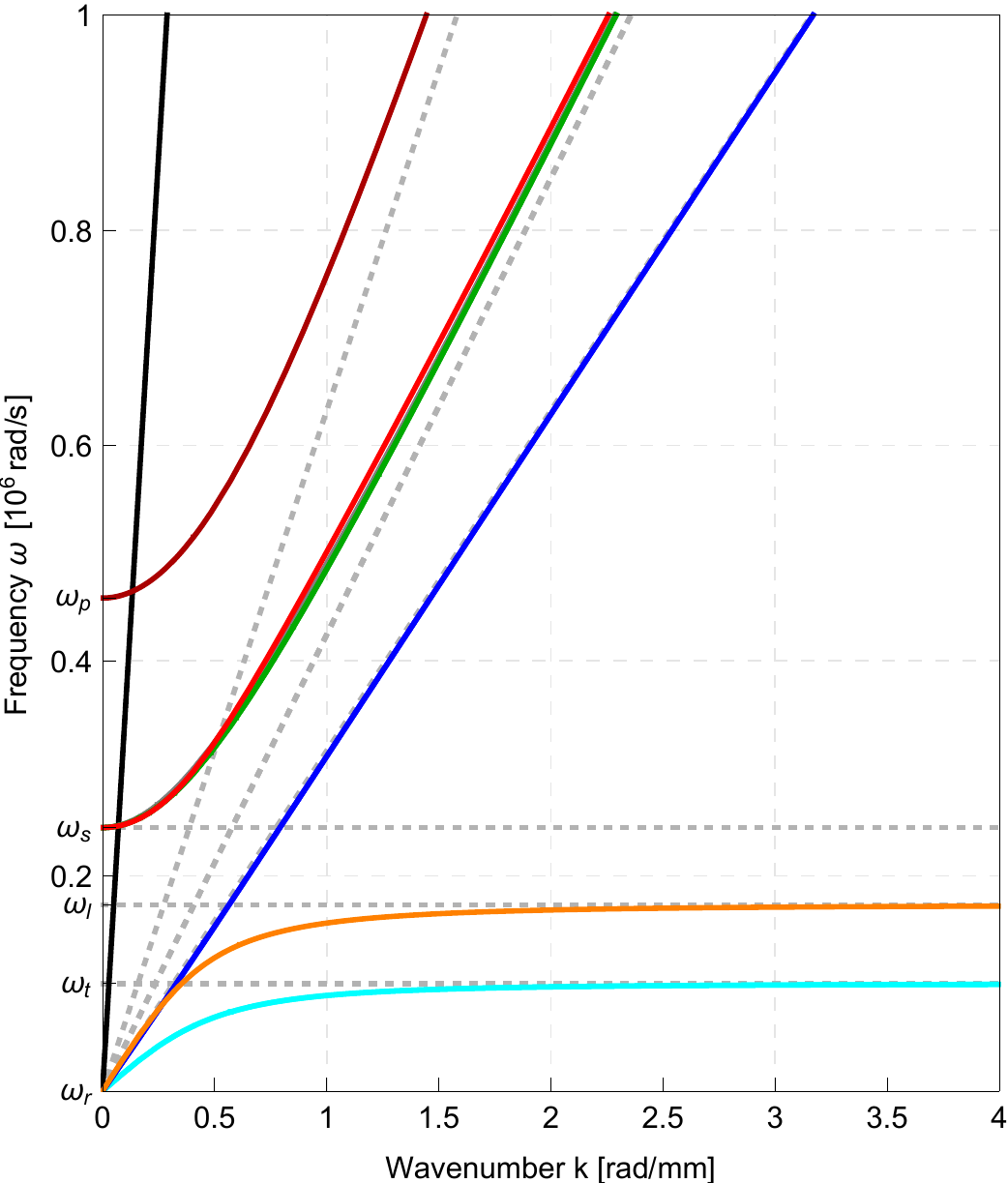} & \includegraphics[scale=0.5]{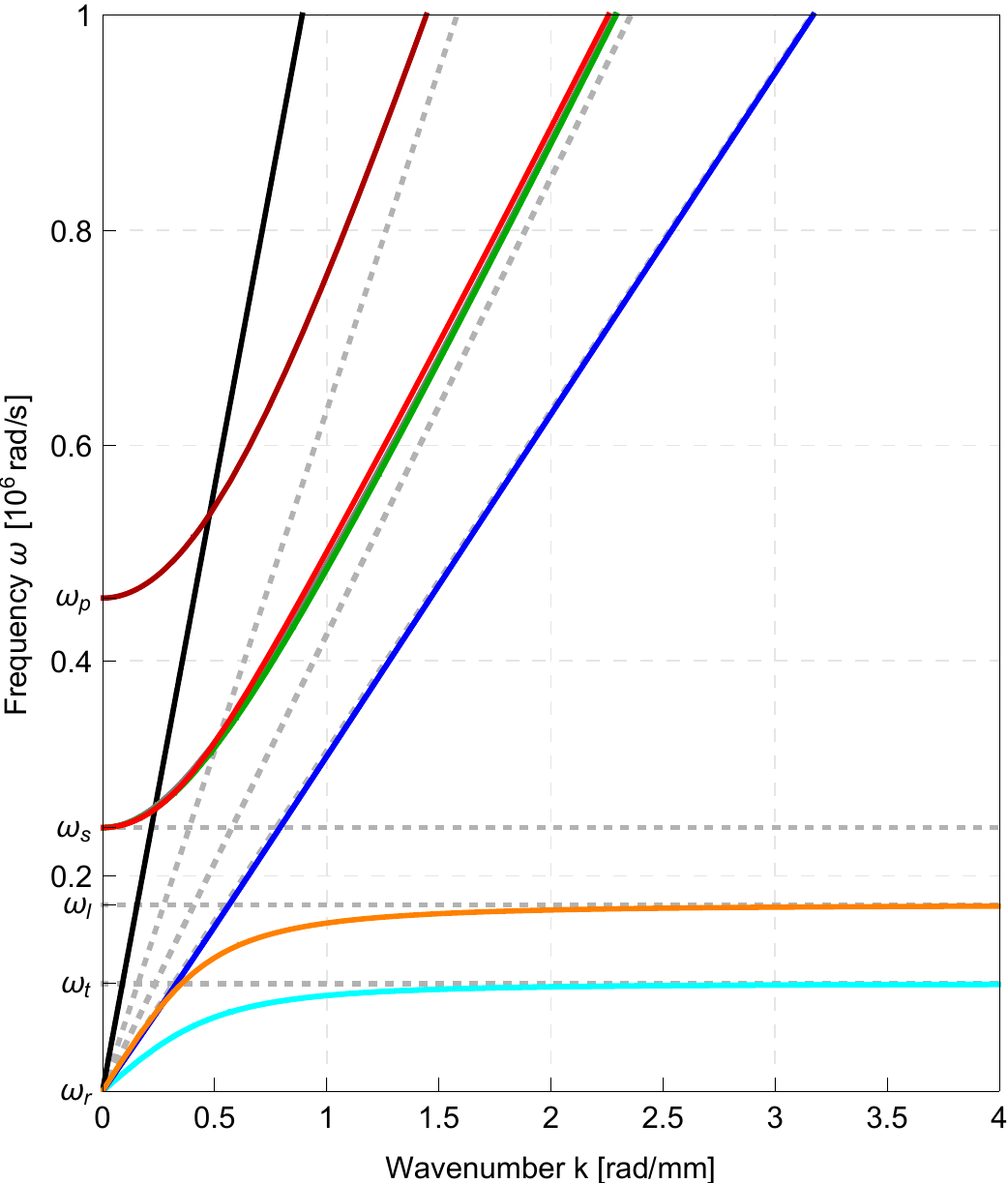} & \includegraphics[scale=0.5]{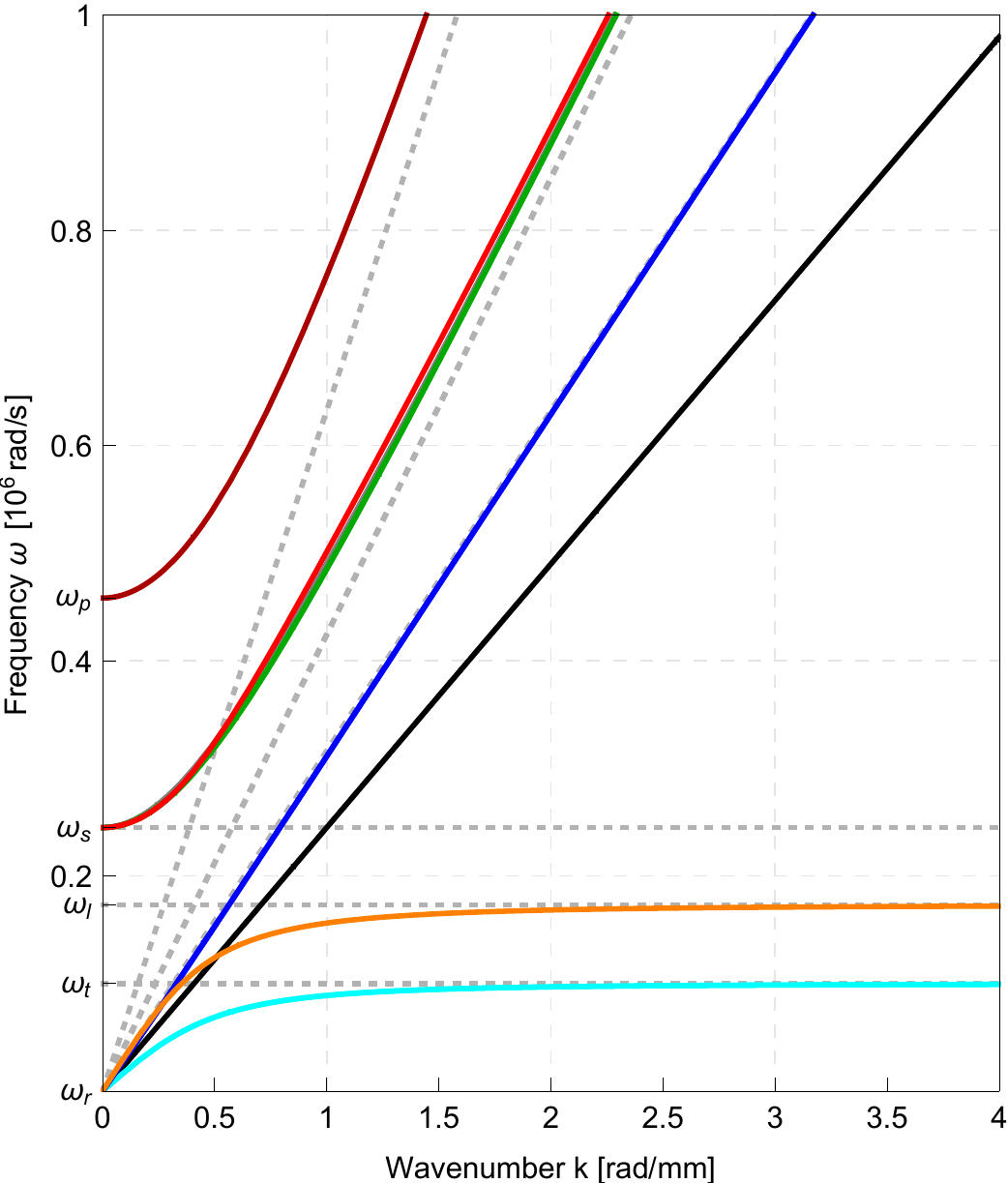}\tabularnewline
$\alpha_{3}=100$ & $\alpha_{3}=10$ & $\alpha_{3}=0$\tabularnewline
\end{tabular}\caption{Effect of the parameter $\alpha_{3}$ on the dispersion curves for
the case $\mu_{c}=0$.}
\end{table}

In this case we see again that two curves (blue and black) become
acoustic. There is no complete band gap. This confirms once again
the need of having $\mu_{c}>0$ as a necessary condition for the existence
of complete band-gaps. The effect of the parameter $\alpha_{3}$
is limited to the control of the slope of one of the two extra acoustic
curves whose onset is related to the fact of setting $\mc=0$.

\newpage{}

\subsection{Variation of the micro-inertia weighting}

As shown in section 4.3.1, the three weights of the micro-inertia
have a fundamental role on the definition of the cut-off frequencies
of the optic waves. Indeed, this is one of the main results of the
present paper: the split of the micro-inertia allows to control separately
the starting point of the optic curves which can be translated along
the y - axis by simply varying the value of each of the parameters
$\eta_{1},\eta_{2},\eta_{3}$. Such possibility of independent control
of the optic branches is a major characteristic for an effective calibration
of the material parameters of the relaxed micromorphic model on the
dispersion patterns of real metamaterials.

\subsubsection{Case ${\displaystyle \eta_{1}\protect\fr0}$}

Characteristic limit elastic energy $\left\Vert \nabla u-P\right\Vert ^{2}+\left\Vert \sym\,P\right\Vert ^{2}+\left\Vert \curl\,P\right\Vert ^{2}$.\\
Characteristic limit kinetic energy $\left\Vert u_{,t}\right\Vert ^{2}+\left\Vert \skew P_{,t}\right\Vert ^{2}+\frac{1}{3}\left(\textrm{tr}\,P_{,t}\right)^{2}$.

\begin{figure}[H]
\centering{}%
\begin{tabular}{ccc}
\includegraphics[scale=0.5]{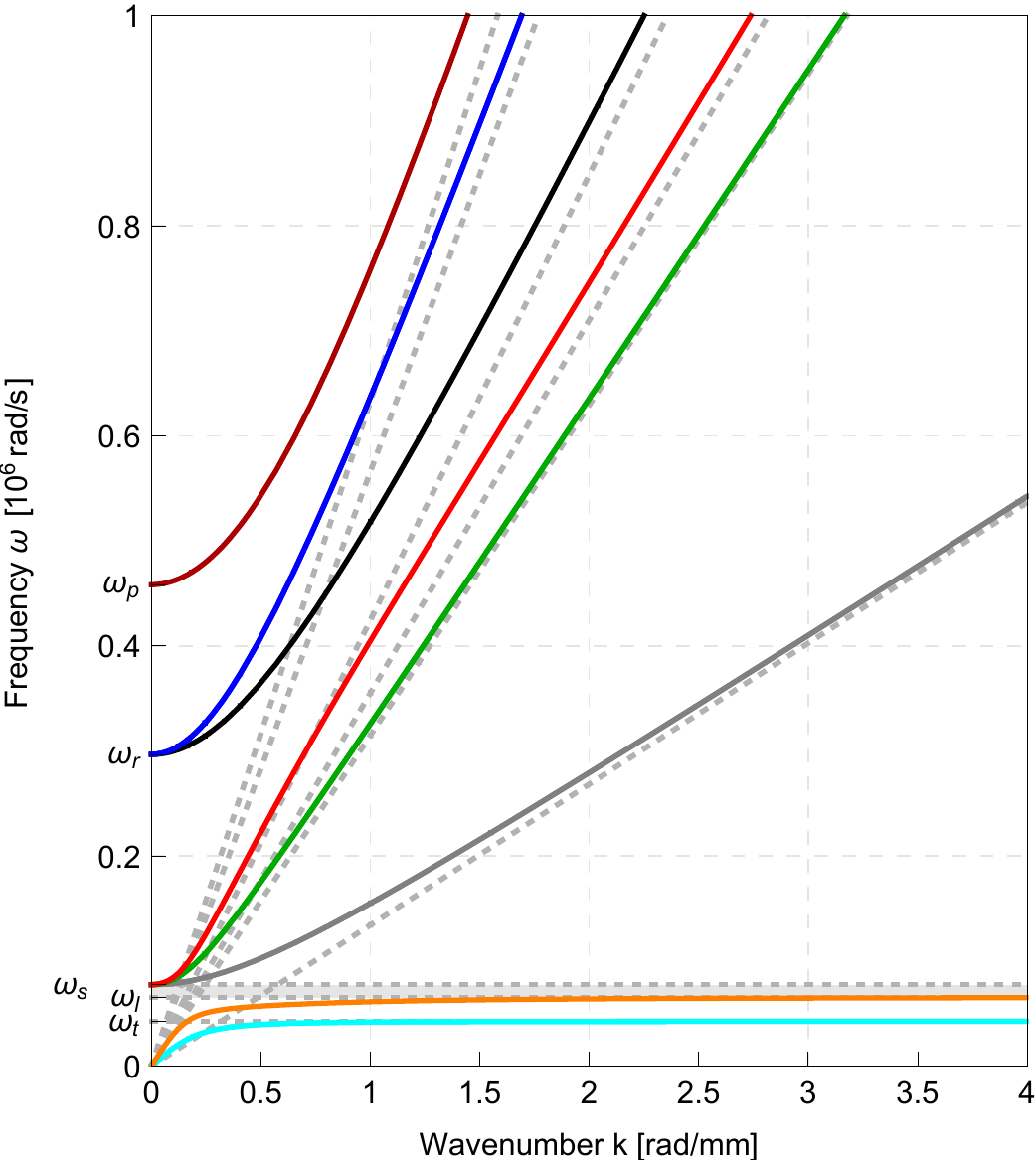} & \includegraphics[scale=0.5]{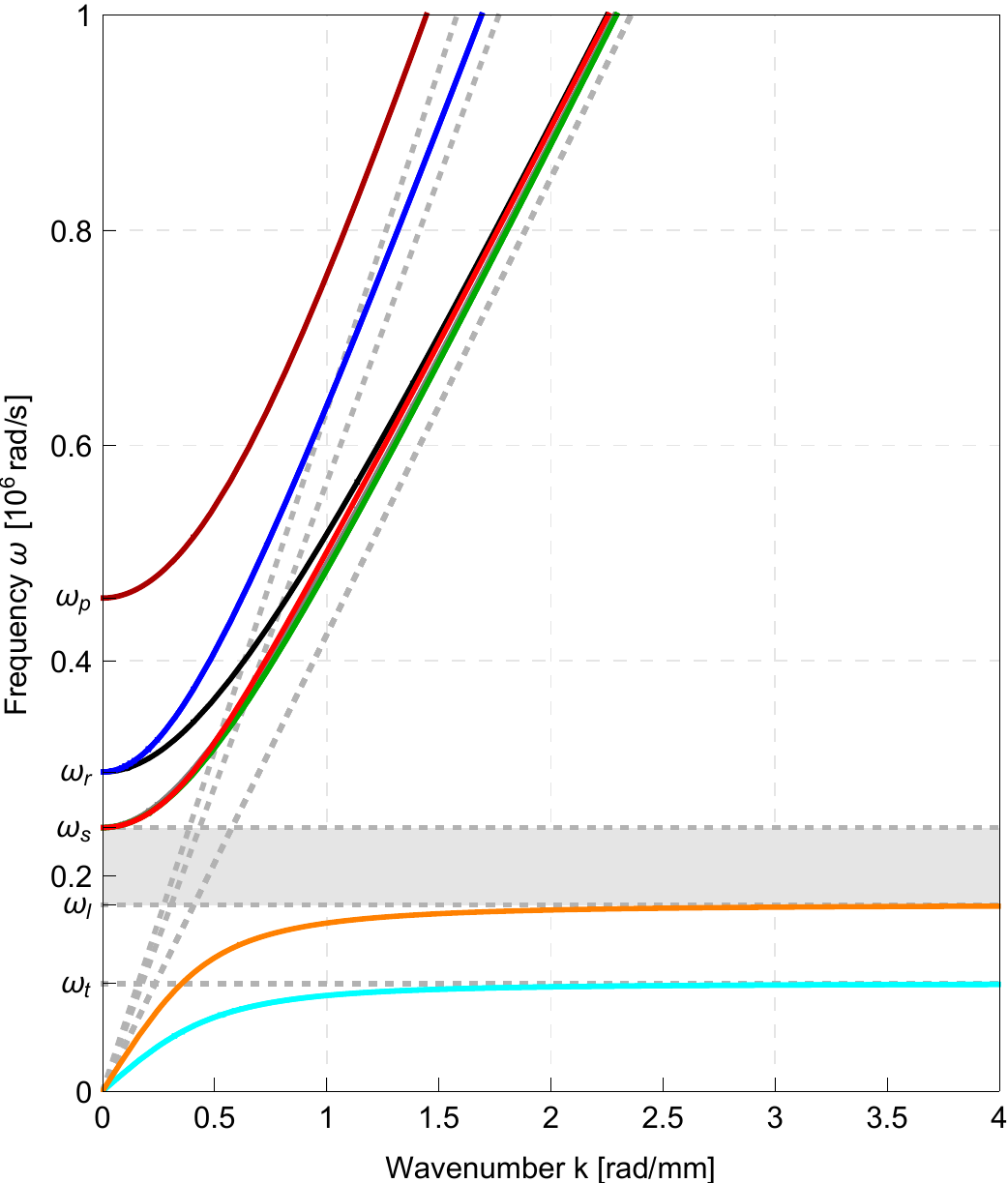} & \includegraphics[scale=0.5]{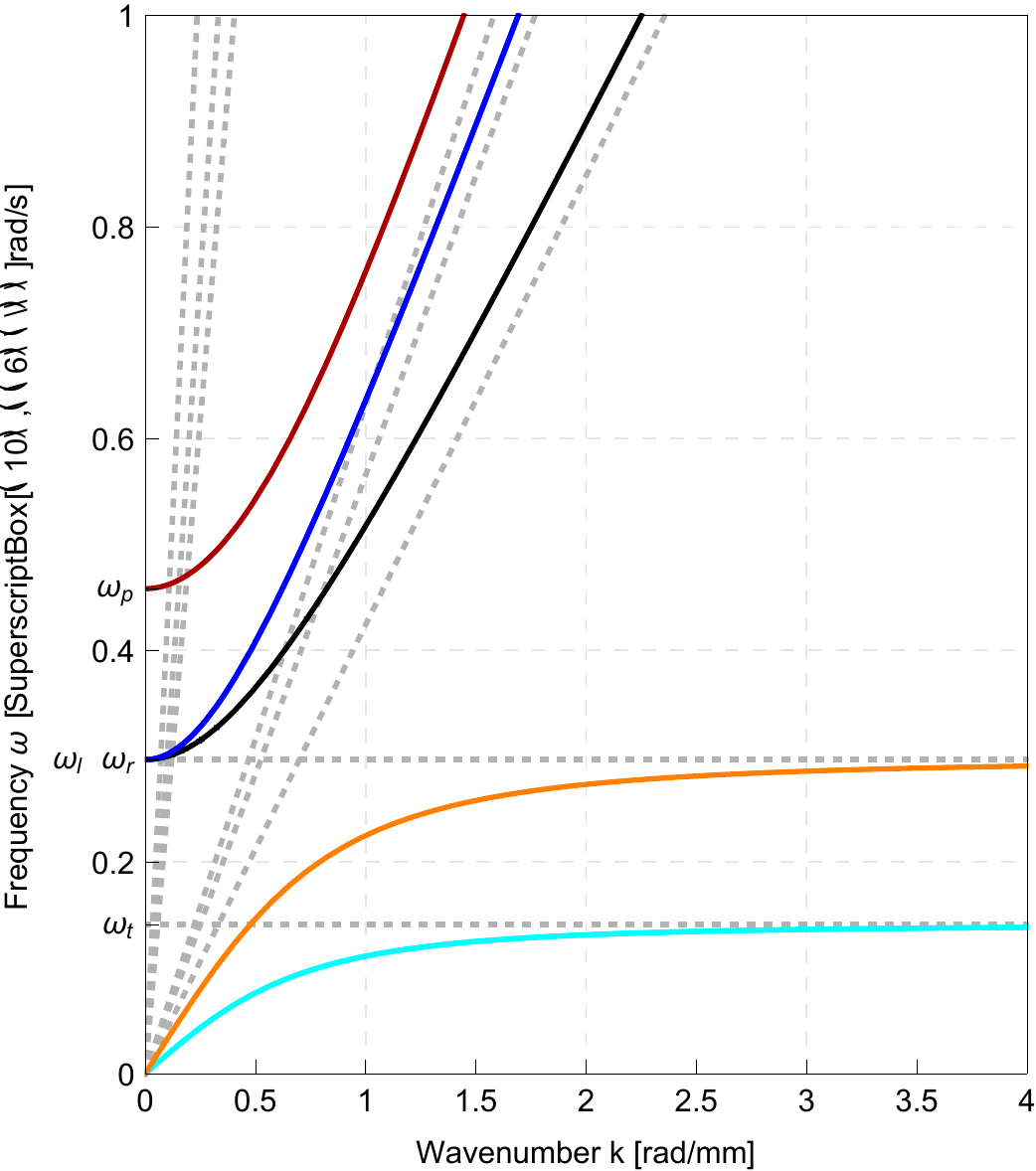}\tabularnewline
$\eta_{1}=10^{-1}$ & $\eta_{1}=10^{-2}$ & $\eta_{1}=10^{-4}$\tabularnewline
\end{tabular}\caption{Effect of the parameter $\eta_{1}$ on the dispersion curves.\label{fig:eta1}}
\end{figure}

In this case we can see that the band gap is preserved when $\eta_{1}\in\left(0,10^{-2}\right)$.
For values $\eta_{1}\in\left[10^{-2},10^{2}\right]$ the band gap
is always present but it becomes smaller. For values of $\eta_{1}$
smaller than $10^{-4}$ the behavior of the dispersion curves is unchanged
with respect to the case with $\eta_{1}=10^{-4}$. This characteristic
behavior is directly related to the definition of the cut-off frequency
$\omega_{s}=\sqrt{\frac{2\,\me+\mh}{\eta_{1}}}$. For $\eta_{1}\fr0$
some of the optic branches go to infinity and do not appear in the
dispersion diagram (Fig. \ref{fig:eta1} right). For smaller values
of $\eta_{1}$ the optic branches starting from the cut-off frequency
$\omega_{s}$ appear in the dispersion diagram (Fig. \ref{fig:eta1}
center). For higher values of $\eta_{1}$, the optic curves originating
from $\omega_{s}$ start from a lower value and the slope of one of
such curves becomes smaller (Fig. \ref{fig:eta1} right). In the limit
$\eta_{1}\fr\infty$ one would expect that two optic branches related
to $\omega_{s}$ become acoustic. This is indeed the case as it will
be shown in subsection 5.4.5.

\newpage{}

\subsubsection{Case $\eta_{2}\protect\fr0$}

Characteristic limit elastic energy $\left\Vert \nabla u-P\right\Vert ^{2}+\left\Vert \sym\,P\right\Vert ^{2}+\left\Vert \curl\,P\right\Vert ^{2}$.\\
Characteristic limit kinetic energy $\left\Vert u_{,t}\right\Vert ^{2}+\left\Vert \sym\,P_{,t}\right\Vert ^{2}$.

\begin{figure}[H]
\centering{}%
\begin{tabular}{ccc}
\includegraphics[scale=0.5]{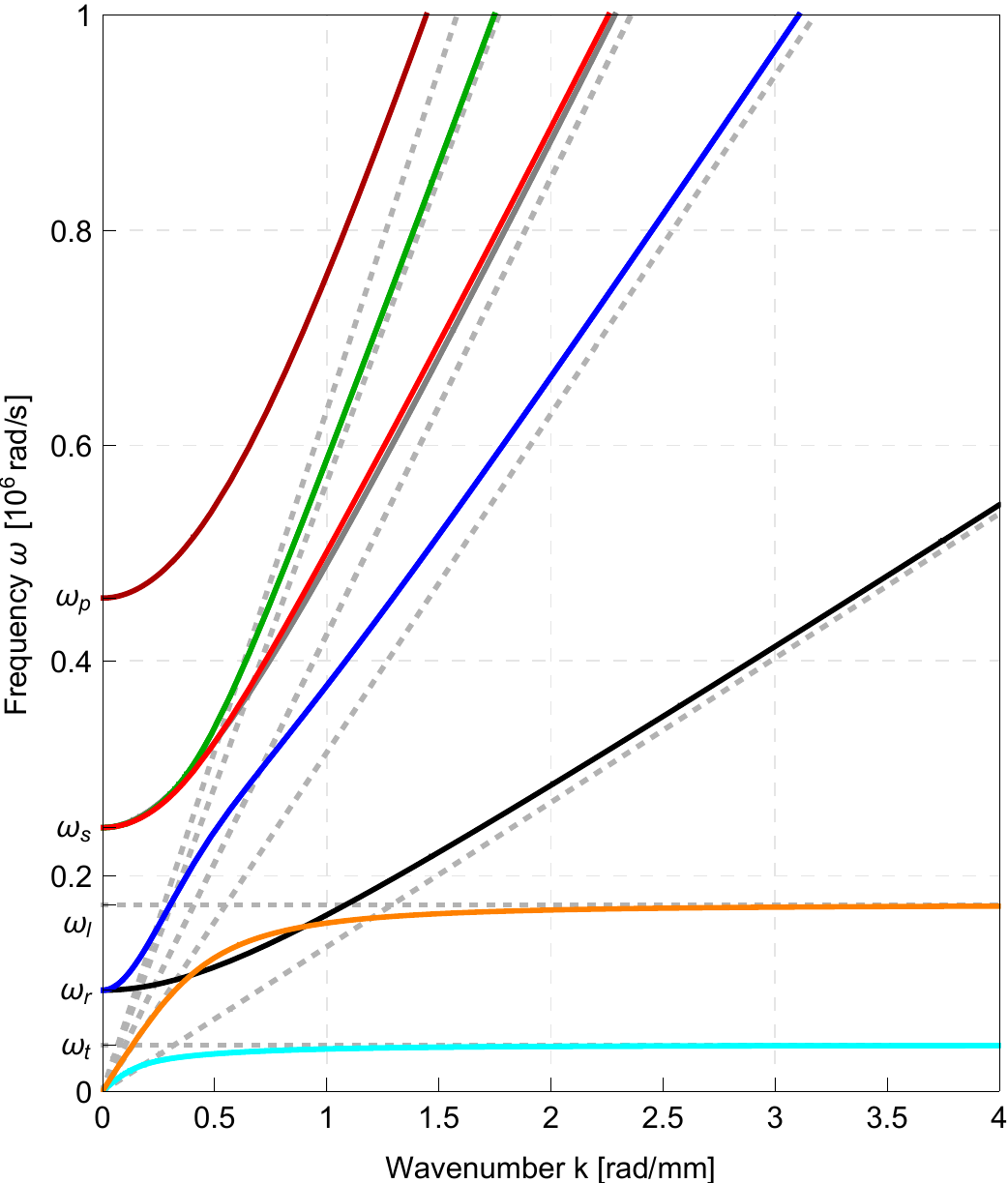} & \includegraphics[scale=0.5]{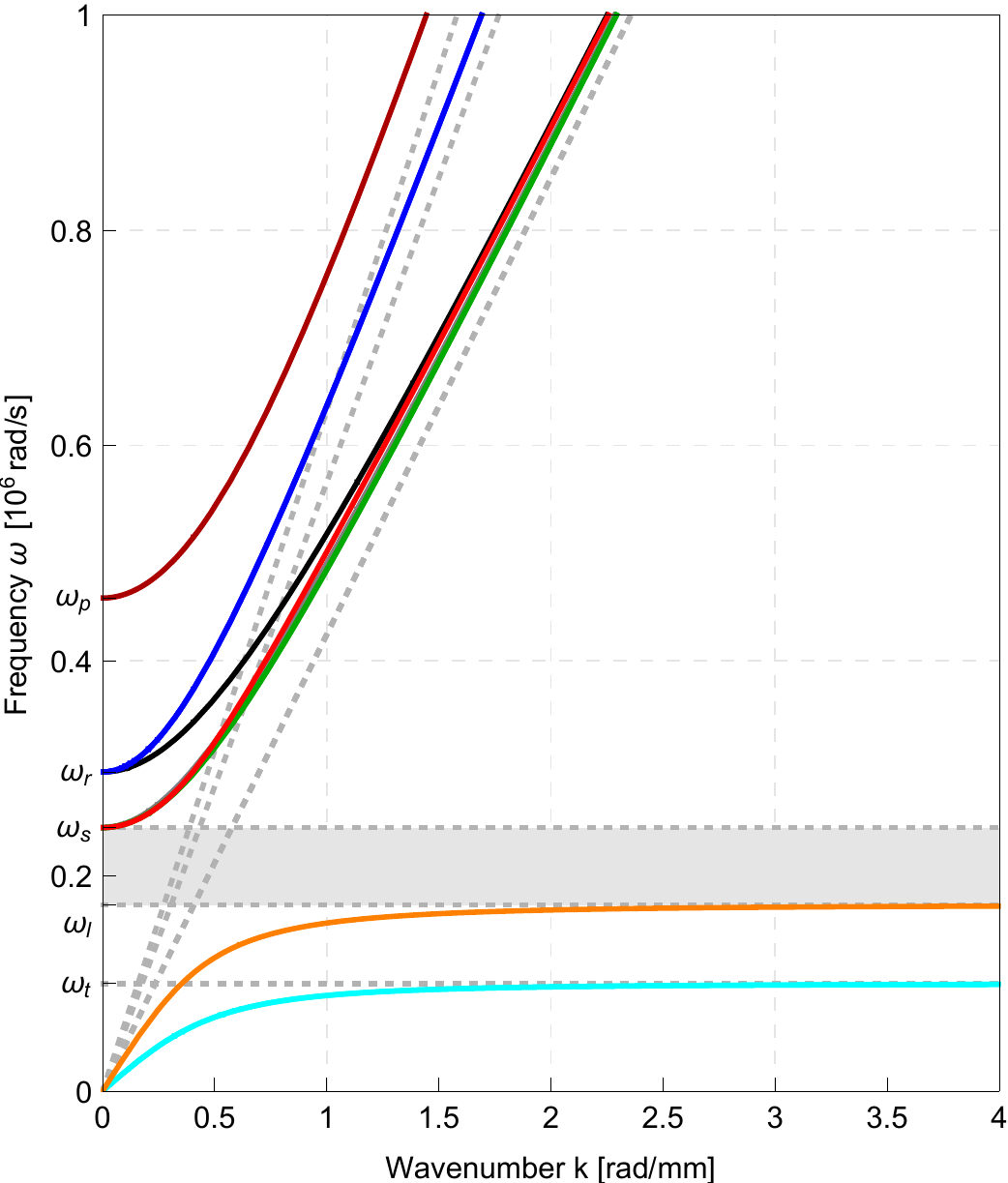} & \includegraphics[scale=0.5]{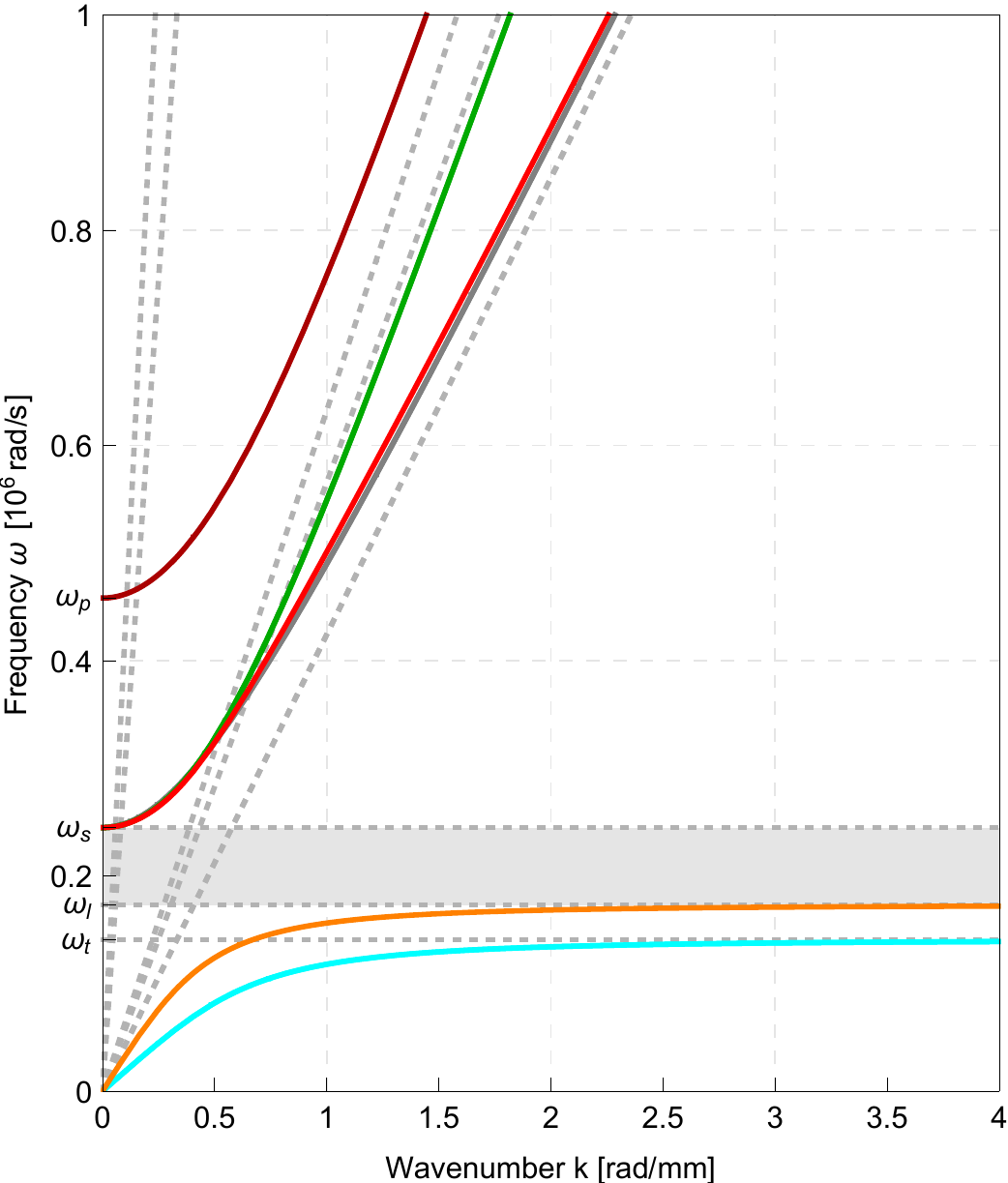}\tabularnewline
$\eta_{2}=10^{-1}$ & $\eta_{2}=10^{-2}$ & $\eta_{2}=10^{-4}$\tabularnewline
\end{tabular}\caption{Effect of the parameter $\eta_{2}$ on the dispersion curves.}
\end{figure}

In this case we can see that the band gap is preserved when $\eta_{2}\in\left(0,10^{-2}\right)$.
There is a value $\eta_{\textrm{crit}}\in\left(10^{-2},10^{2}\right)$
such that for every $\eta_{2}>\eta_{\textrm{crit}}$ the band gap
is absent. This critical value can be related to the definition of
the cut-off frequency $\omega_{r}=\sqrt{\frac{2\,\mu_{c}}{\eta_{2}}}$.

Analogous consideration can be made with respect to the preceding
case. In the present case the optic branches originating from $\omega_{r}$
are involved in the translations of the cut-offs associated to the
variation of $\eta_{2}$. The limit case $\eta_{2}\fr\infty$ will
be discussed in subsection 5.4.6.

\subsubsection{Case $\eta_{3}\protect\fr0$}

Characteristic limit elastic energy $\left\Vert \nabla u-P\right\Vert ^{2}+\left\Vert \sym\,P\right\Vert ^{2}+\left\Vert \curl\,P\right\Vert ^{2}$.\\
Characteristic limit kinetic energy $\left\Vert u_{,t}\right\Vert ^{2}+\left\Vert \dev\,P_{,t}\right\Vert ^{2}$.

\begin{figure}[H]
\centering{}%
\begin{tabular}{ccc}
\includegraphics[scale=0.5]{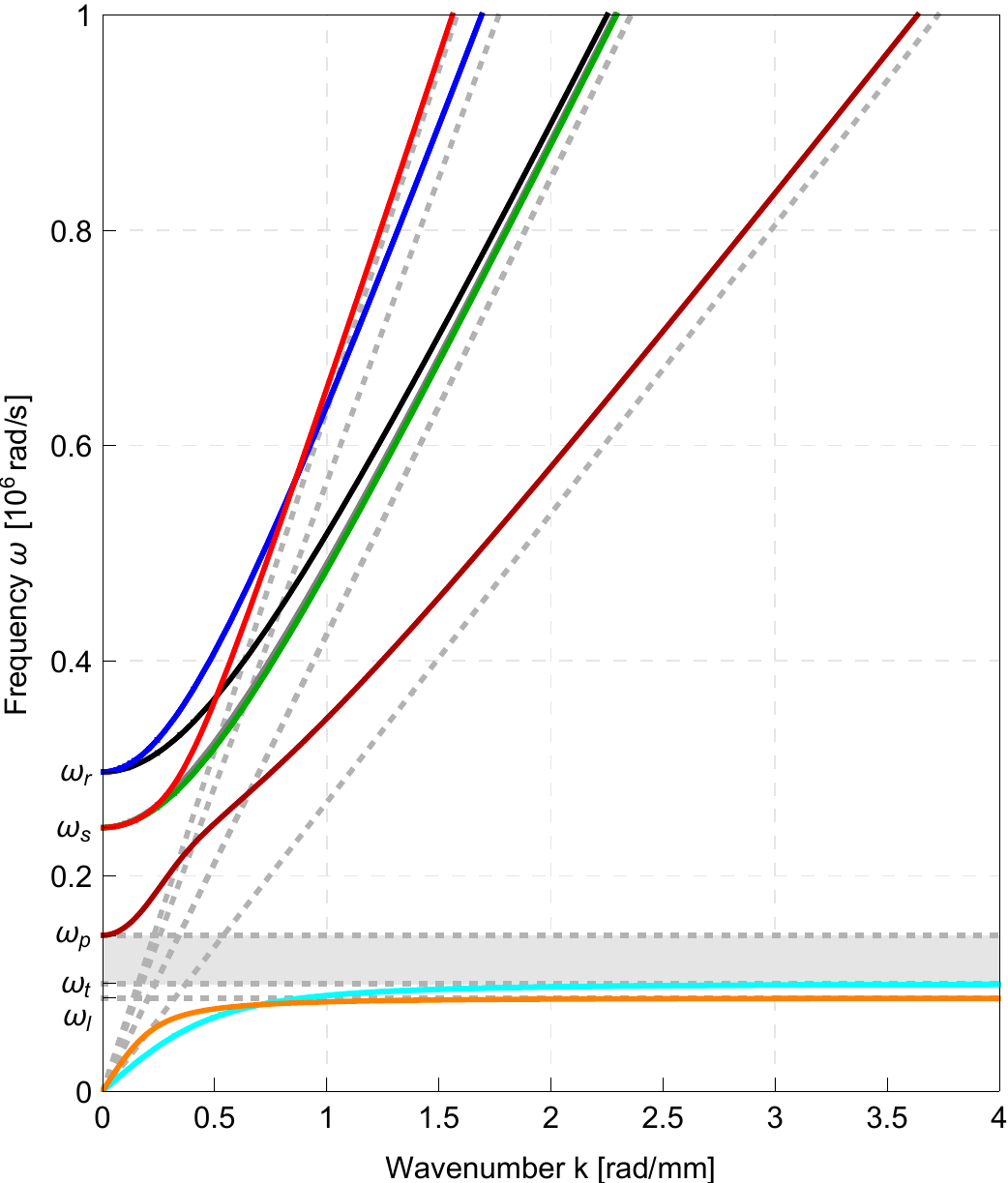} & \includegraphics[scale=0.5]{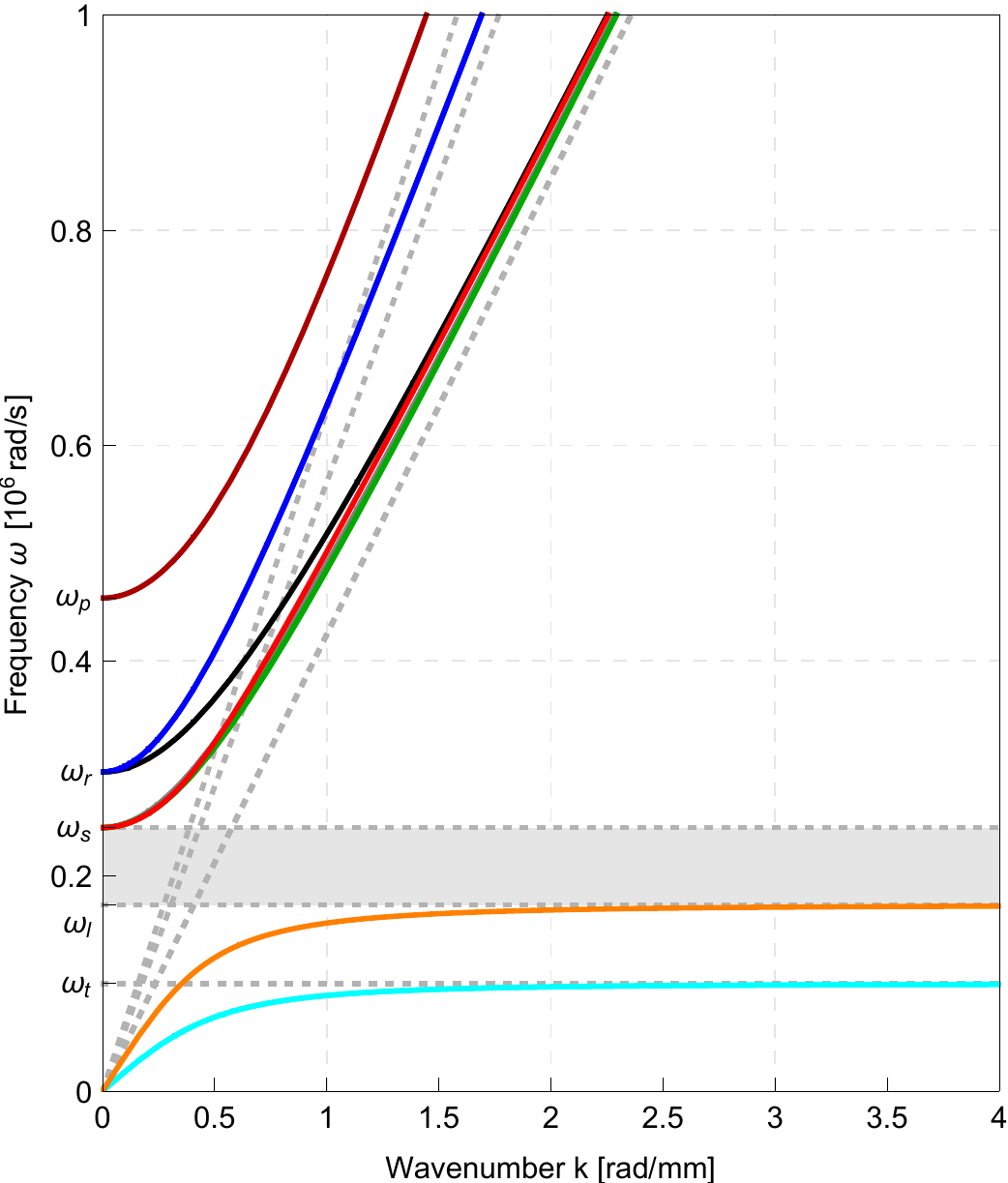} & \includegraphics[scale=0.5]{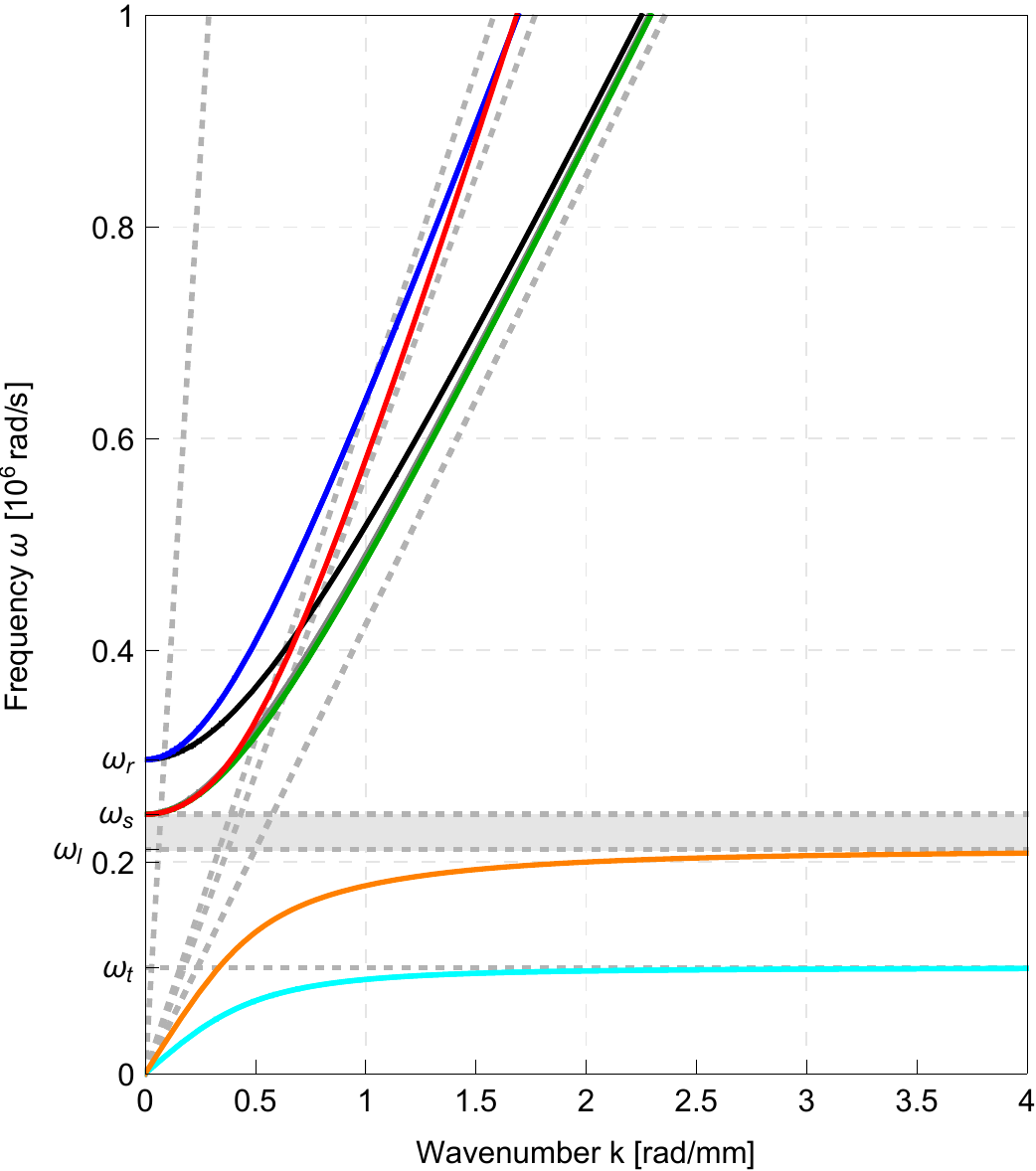}\tabularnewline
$\eta_{3}=10^{-1}$ & $\eta_{3}=10^{-2}$ & $\eta_{3}=10^{-4}$\tabularnewline
\end{tabular}\caption{Effect of the parameter $\eta_{3}$ on the dispersion curves.}
\end{figure}

In this case we can see that the band gap is preserved when $\eta_{3}\in\left(0,10^{-2}\right)$.
We will see in section 5.4.7 that there is a value $\eta_{\textrm{crit}}\in\left(10^{-2},10^{2}\right)$
such that for every $\eta_{3}>\eta_{\textrm{crit}}$ the band gap
is absent. A similar behavior with respect to the previous two cases
is observed for what concerns the optic wave originating from $\omega_{p}=\sqrt{\frac{3\left(\le+\lh\right)+2\left(\me+\mh\right)}{\eta_{3}}}$.
The limit case $\eta_{3}\fr\infty$ will be described in section 5.4.7. 

\subsubsection{Cases ${\displaystyle \eta_{1,}\eta_{2},\eta_{3}\protect\fr0}$:
the fundamental role of the micro-inertia for enriched continuum mechanics}

Characteristic limit elastic energy $\left\Vert \nabla u-P\right\Vert ^{2}+\left\Vert \sym\,P\right\Vert ^{2}+\left\Vert \curl\,P\right\Vert ^{2}$.\\
Characteristic limit kinetic energy $\left\Vert u_{,t}\right\Vert ^{2}$.

\begin{figure}[H]
\centering{}%
\begin{tabular}{ccc}
\includegraphics[scale=0.5]{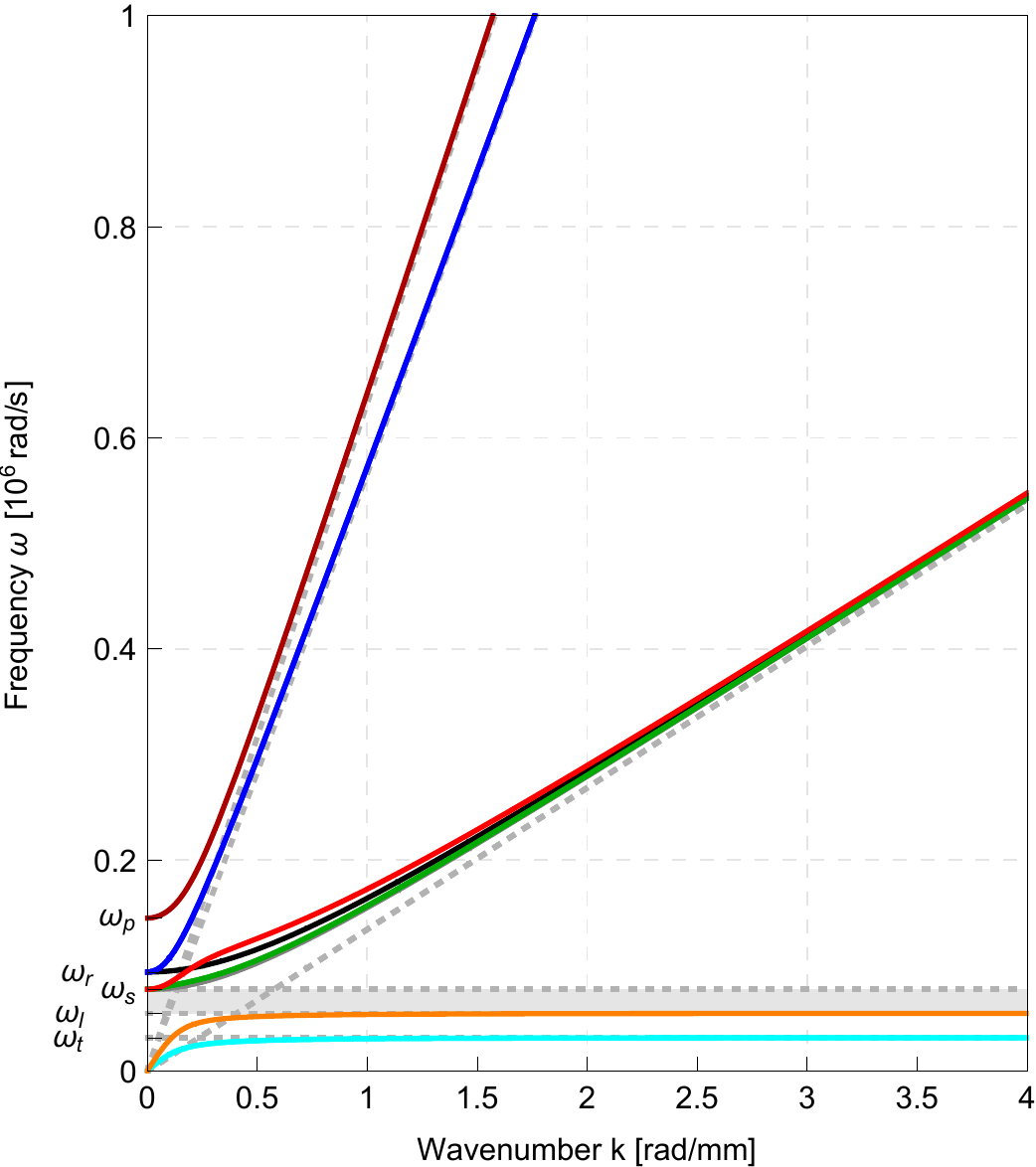} & \includegraphics[scale=0.5]{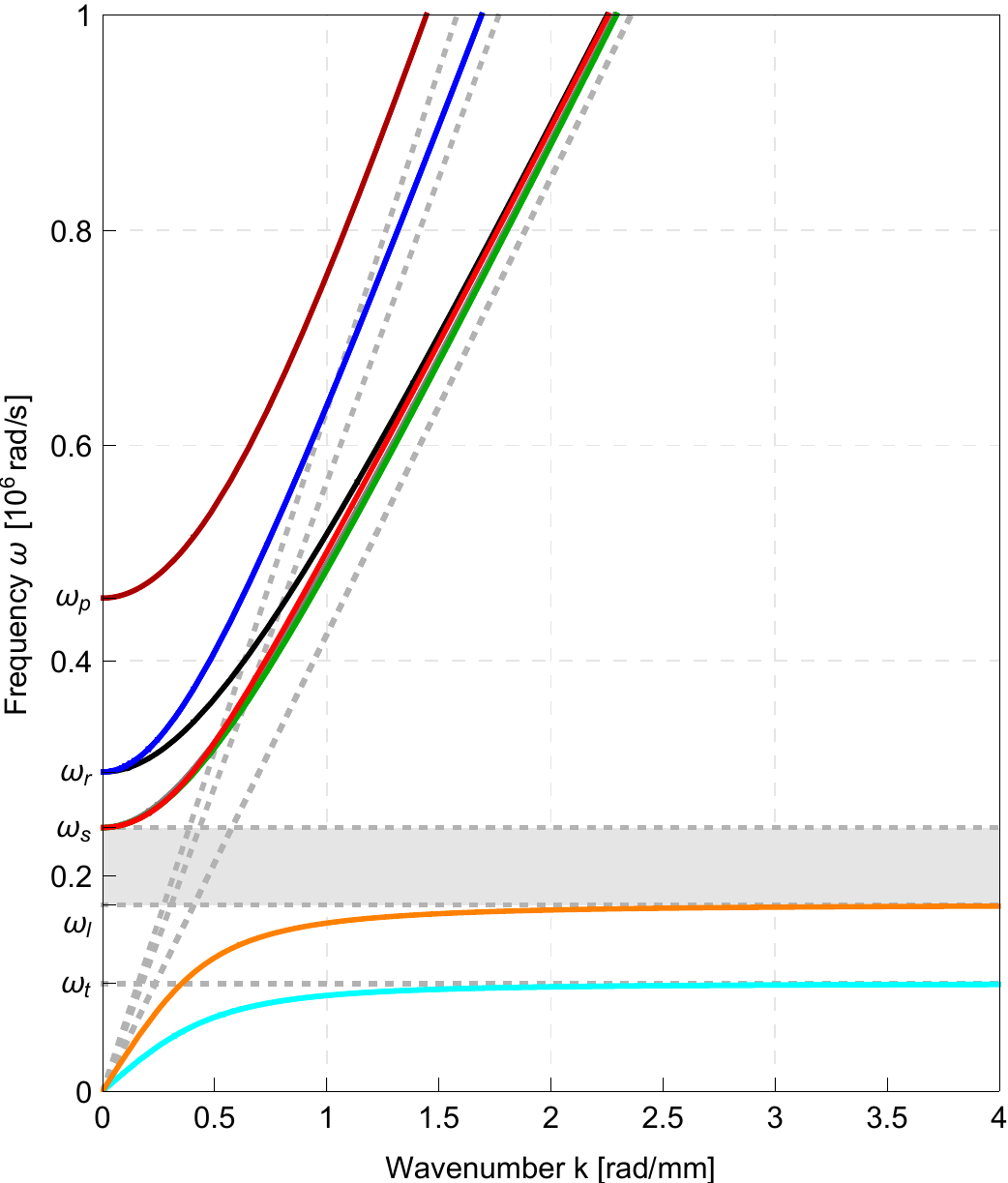} & \includegraphics[scale=0.5]{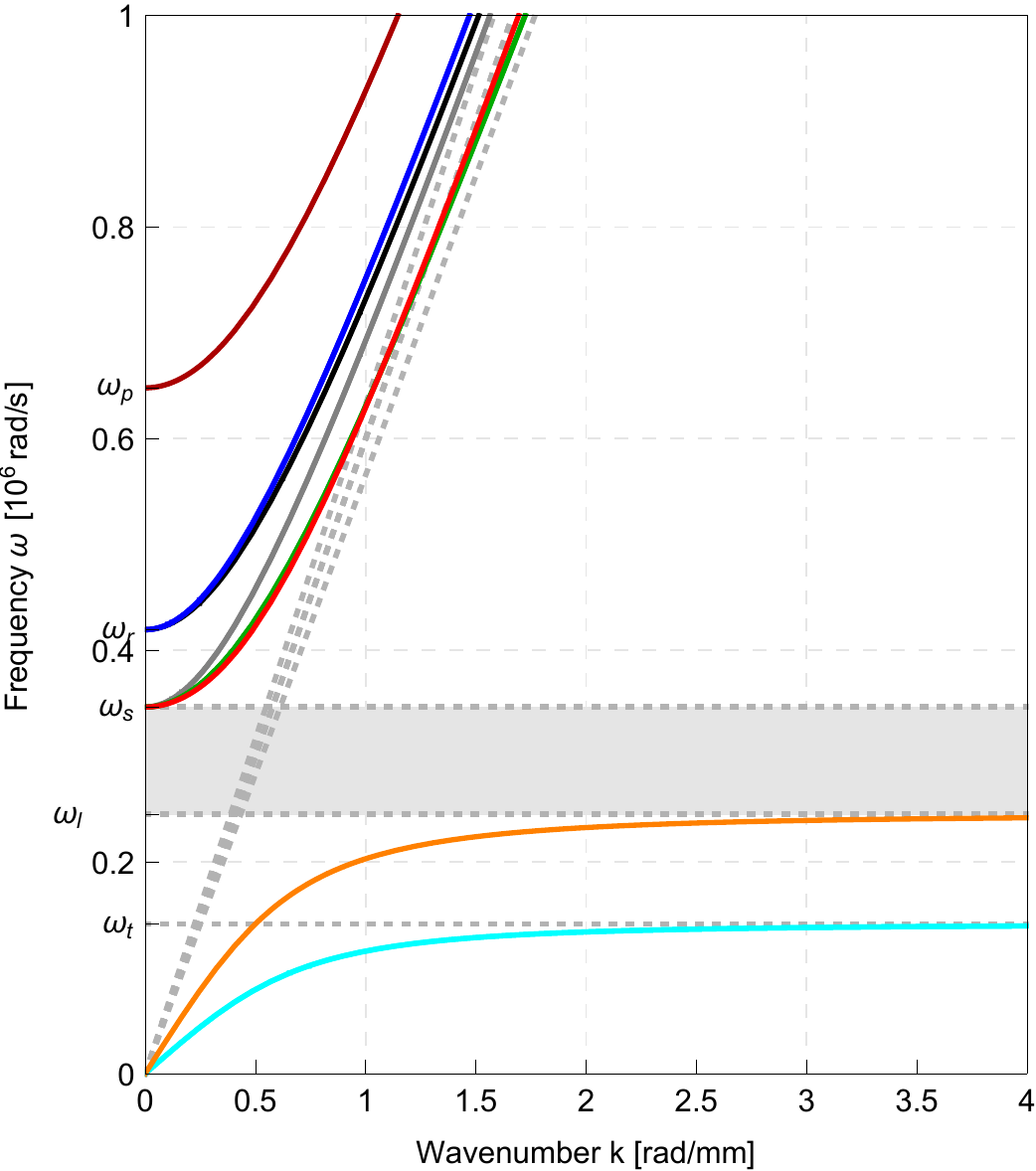}\tabularnewline
$\eta_{1}=\eta_{2}=\eta_{3}=10^{-1}$ & $\eta_{1}=\eta_{2}=\eta_{3}=10^{-2}$ & $\eta_{1}=\eta_{2}=\eta_{3}=5\cdot10^{-3}$\tabularnewline
\includegraphics[scale=0.5]{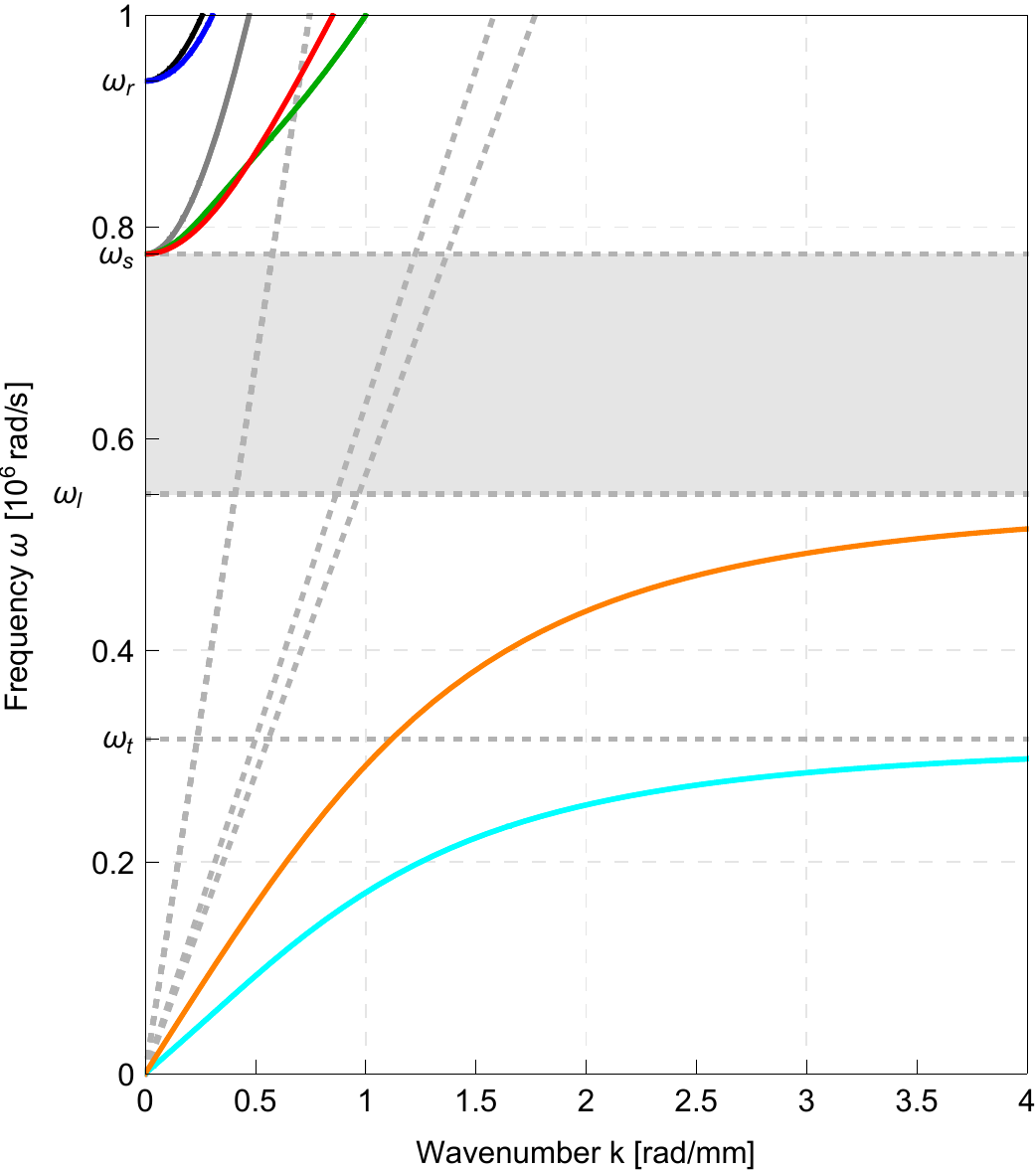} & \includegraphics[scale=0.5]{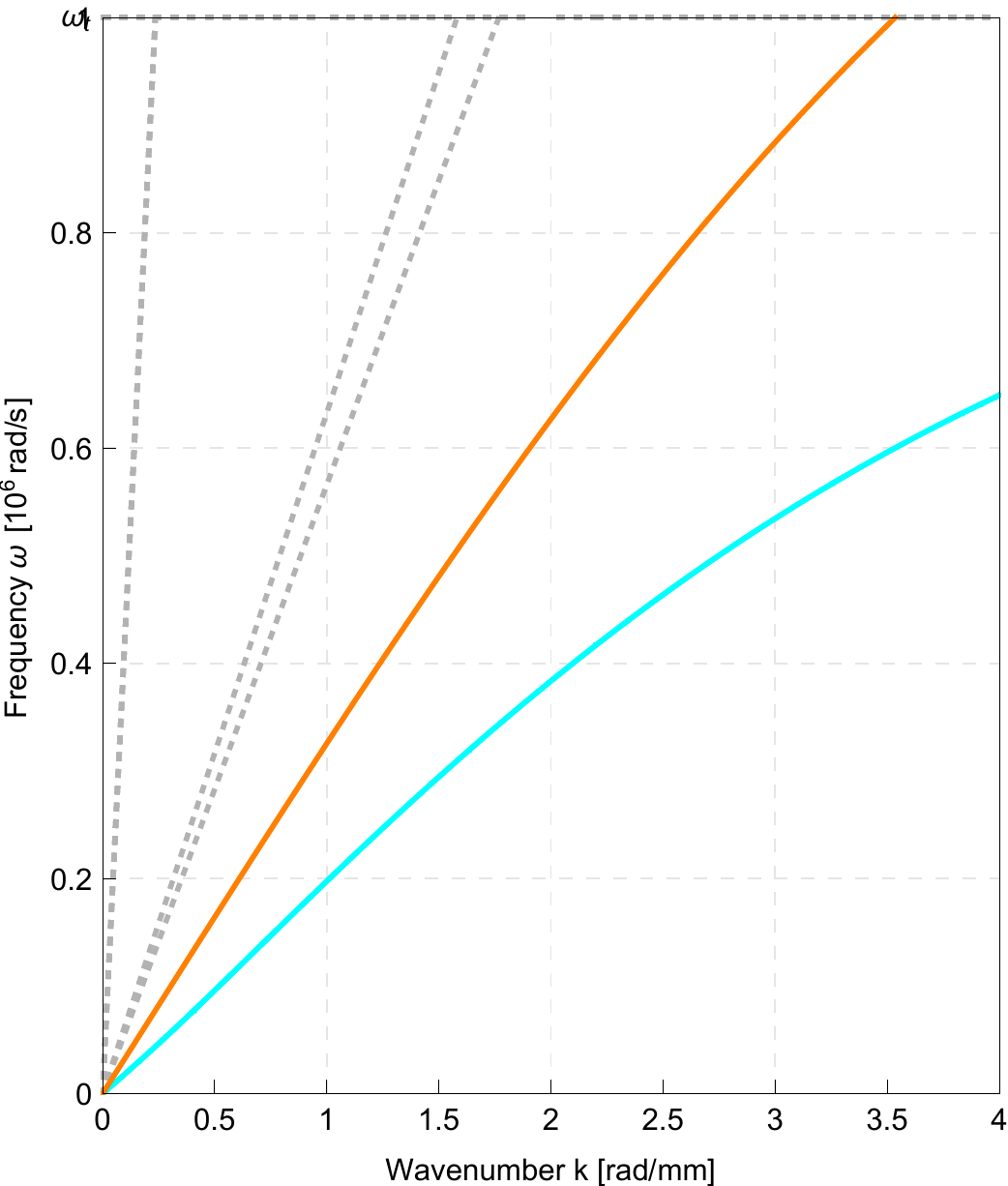} & \includegraphics[scale=0.5]{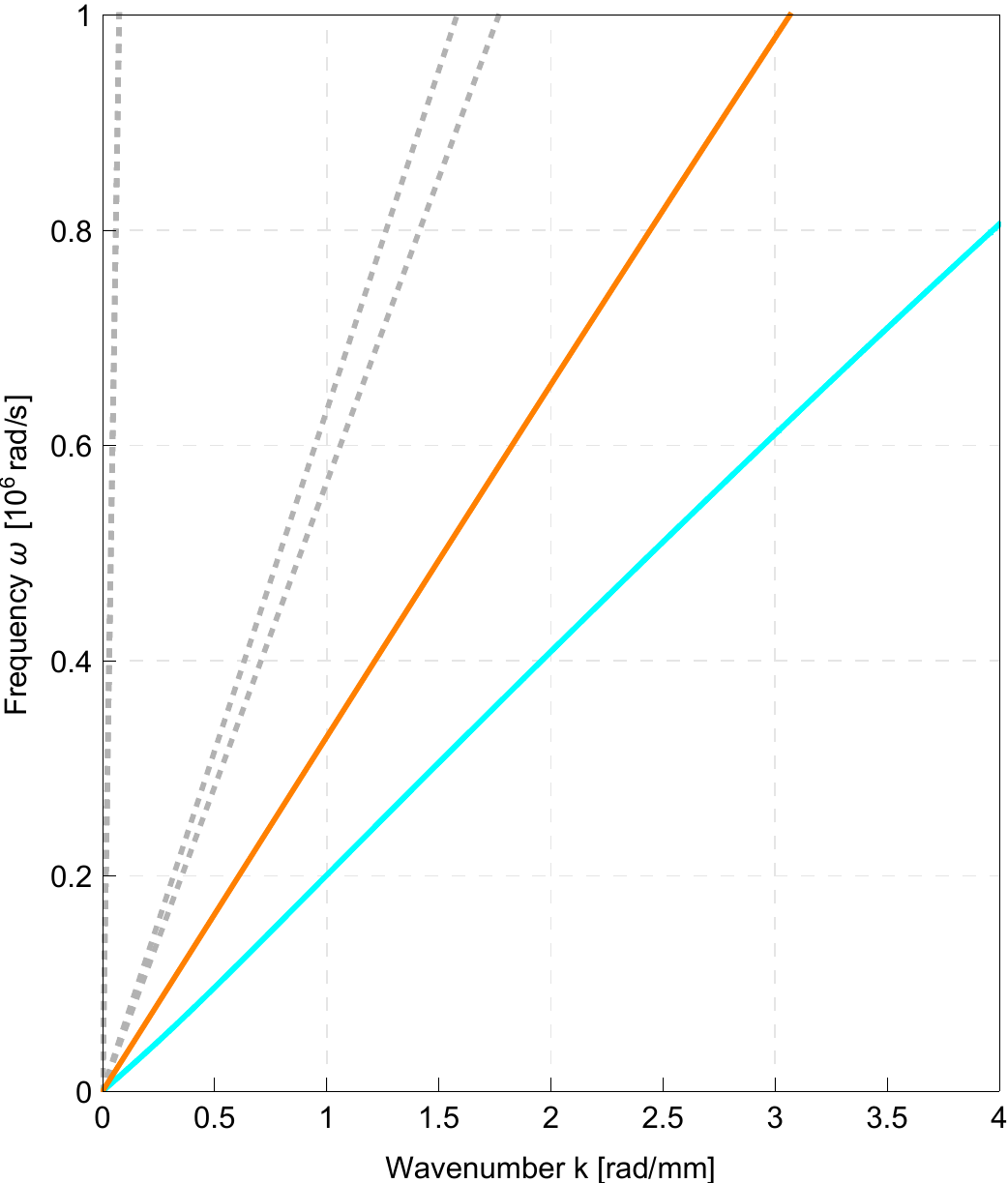}\tabularnewline
$\eta_{1}=\eta_{2}=\eta_{3}=10^{-3}$ & $\eta_{1}=\eta_{2}=\eta_{3}=10^{-4}$ & $\eta_{1}=\eta_{2}=\eta_{3}=10^{-5}$\tabularnewline
\end{tabular}\caption{\label{fig:Combined-effect}Combined effect of the parameter $\eta_{1},\eta_{2},\eta_{3}$
on the dispersion curves.}
\end{figure}

In Figure \ref{fig:Combined-effect} we show the combined effect of
the micro-inertia parameters on the behavior of the dispersion curves.
When letting the three parameters tend to zero with the same speed,
one ends up with a dispersion diagram which is peculiar of the classical
linear elastic Cauchy media (Fig.\ref{fig:Combined-effect} bottom
right). 

This is a fundamental result of the present study which is not exhaustively
treated in the literature: if one considers a continuum with enriched
kinematic ($u,\P$), if no micro-inertia is considered to complement
the macro-inertia $\rho\left\Vert u_{,t}\right\Vert ^{2}$, then the
dispersion curves will not be different from those of the classical
Cauchy continuum (Fig. \ref{fig:Cauchy - Cosserat} (a)). In order
to activate the micro-motions associated to the micro-distortion tensor
$\P$ a micro-inertia $\eta\left\Vert \P_{,t}\right\Vert ^{2}$ is
needed.

\newpage{}

\subsubsection{Case ${\displaystyle \eta_{1}\protect\fr+\infty}$}

Characteristic limit elastic energy $\left\Vert \sym\left(\nabla u-P\right)\right\Vert ^{2}+\left\Vert \sym\,P\right\Vert ^{2}+\left\Vert \curl\,P\right\Vert ^{2}$.\\
Characteristic limit kinetic energy $\left\Vert u_{,t}\right\Vert ^{2}+\left\Vert \skew\,P_{,t}\right\Vert ^{2}+\frac{1}{3}\left(\textrm{tr}\,P_{,t}\right)^{2},\quad\dev\,\sym\,\P_{,t}=0$.

\begin{figure}[H]
\centering{}%
\begin{tabular}{ccc}
\includegraphics[scale=0.5]{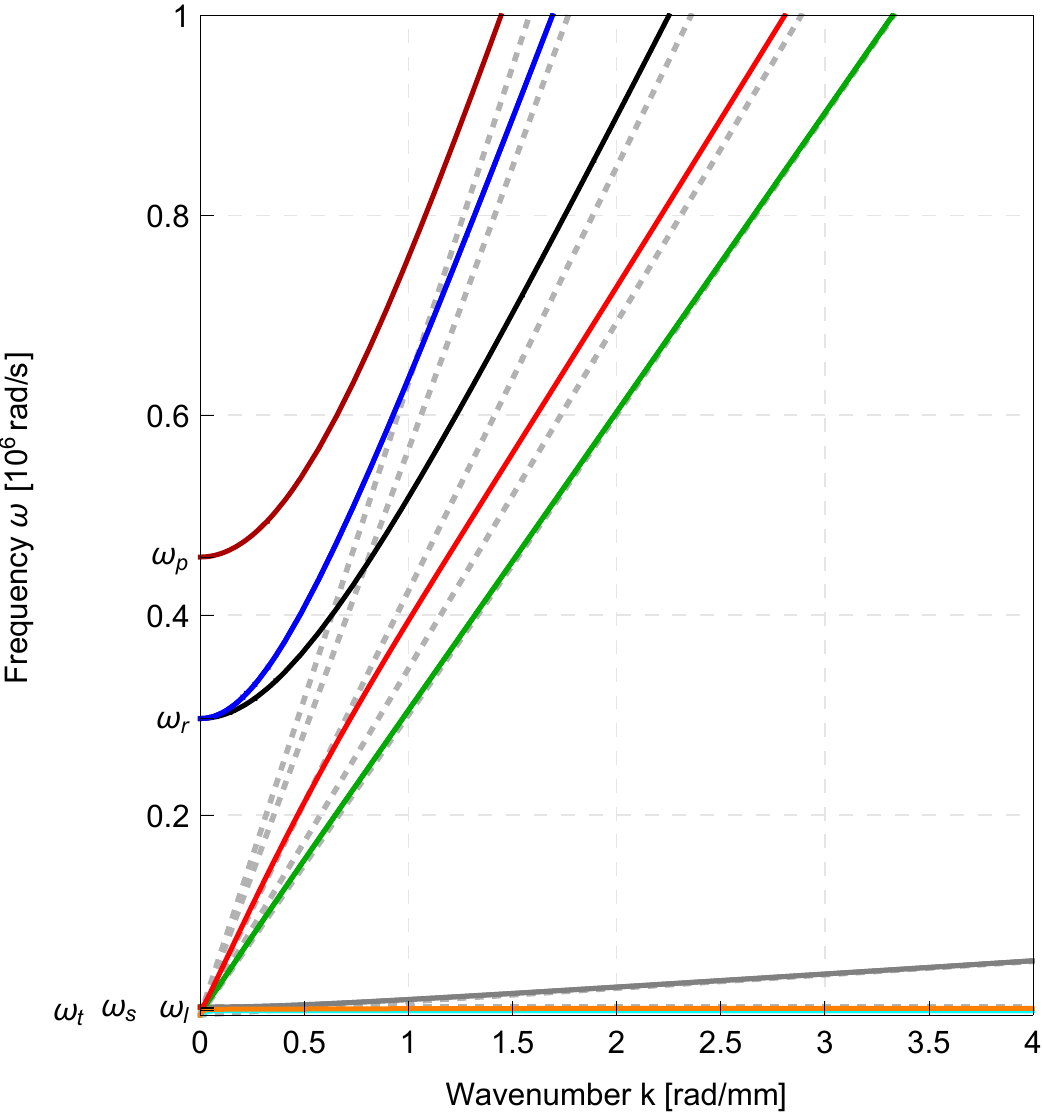} & \includegraphics[scale=0.5]{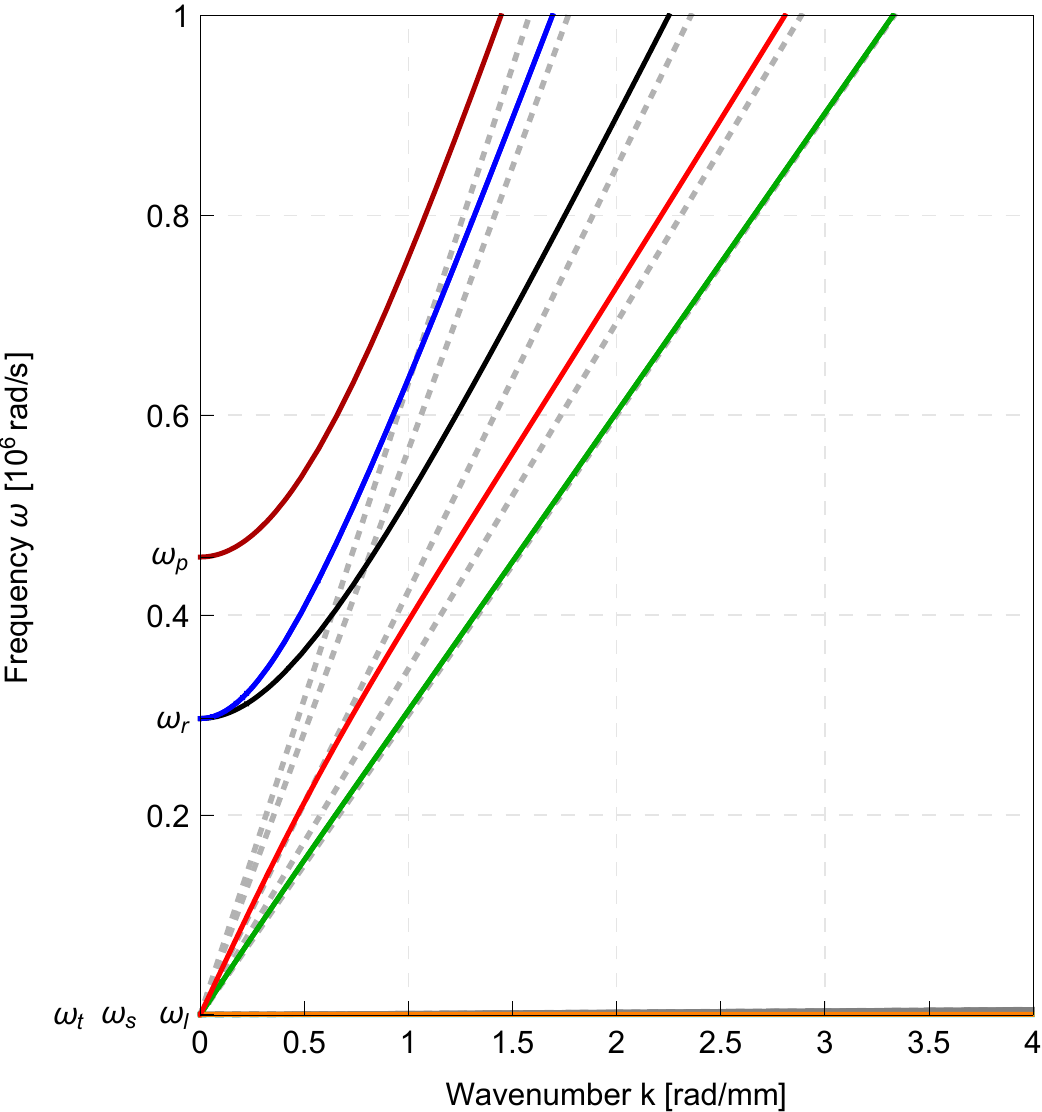} & \includegraphics[scale=0.5]{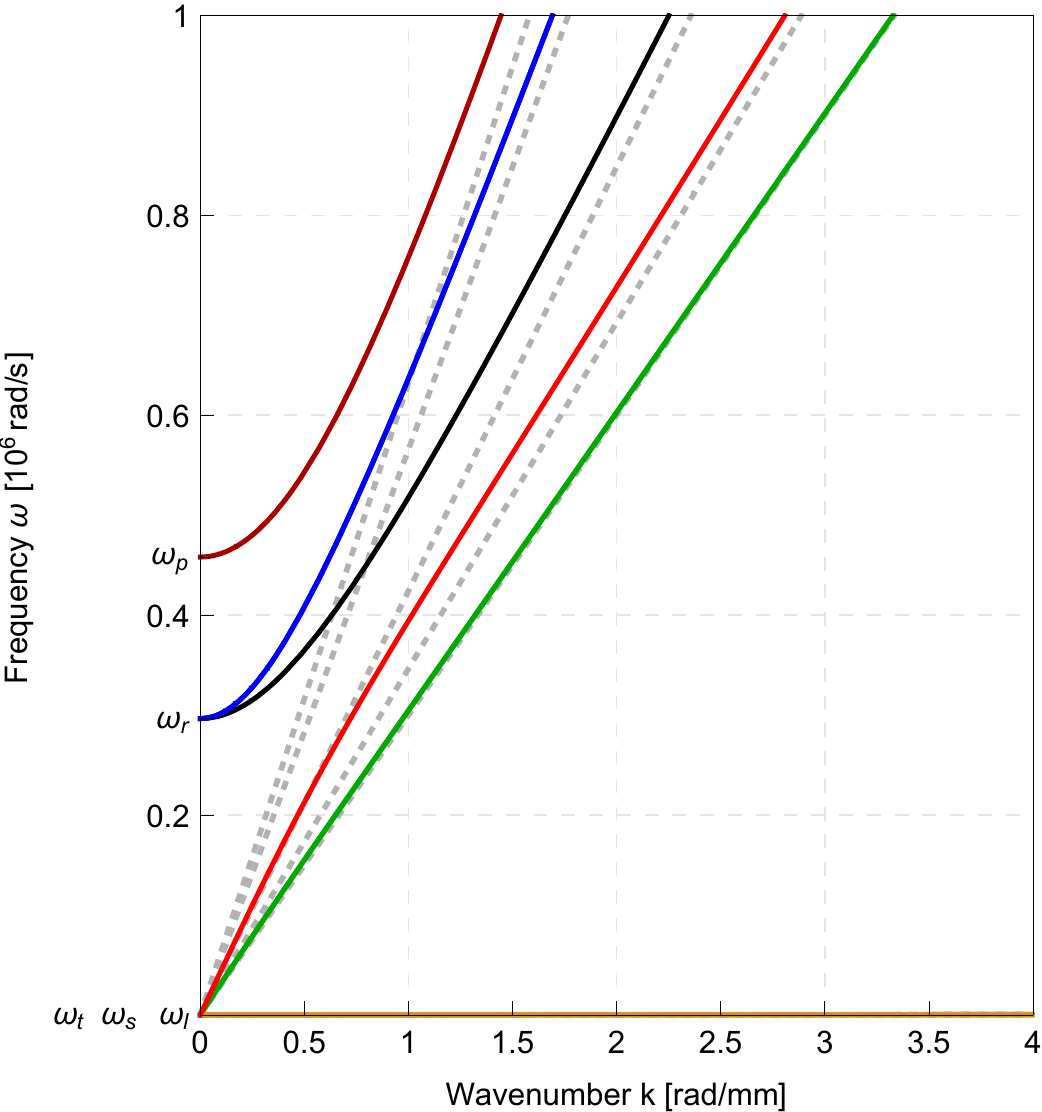}\tabularnewline
$\eta_{1}=10$ & $\eta_{1}=10^{3}$ & $\eta_{1}=10^{5}$\tabularnewline
\end{tabular}\caption{Effect of the parameter $\eta_{1}$ on the dispersion curves.}
\end{figure}
We complete here the case treated in subsection 3.4.1, by describing
the behavior of the dispersion curves when letting $\eta_{1}\fr\infty$. 

As expected, the optic branches originating from the cut-off $\omega_{s}$
become acoustic and, moreover, they are non-dispersive. What is more
surprising is that the original acoustic branches flatten to zero
and hence disappear from the dispersion diagram.

This means that, in the limit, we are constraining the system to have
less degrees of freedom by artificially imposing an infinite inertia
that does not allow some specific micro-vibrations.

\begin{figure}[H]
\centering{}%
\begin{tabular}{ccc}
\includegraphics[scale=0.5]{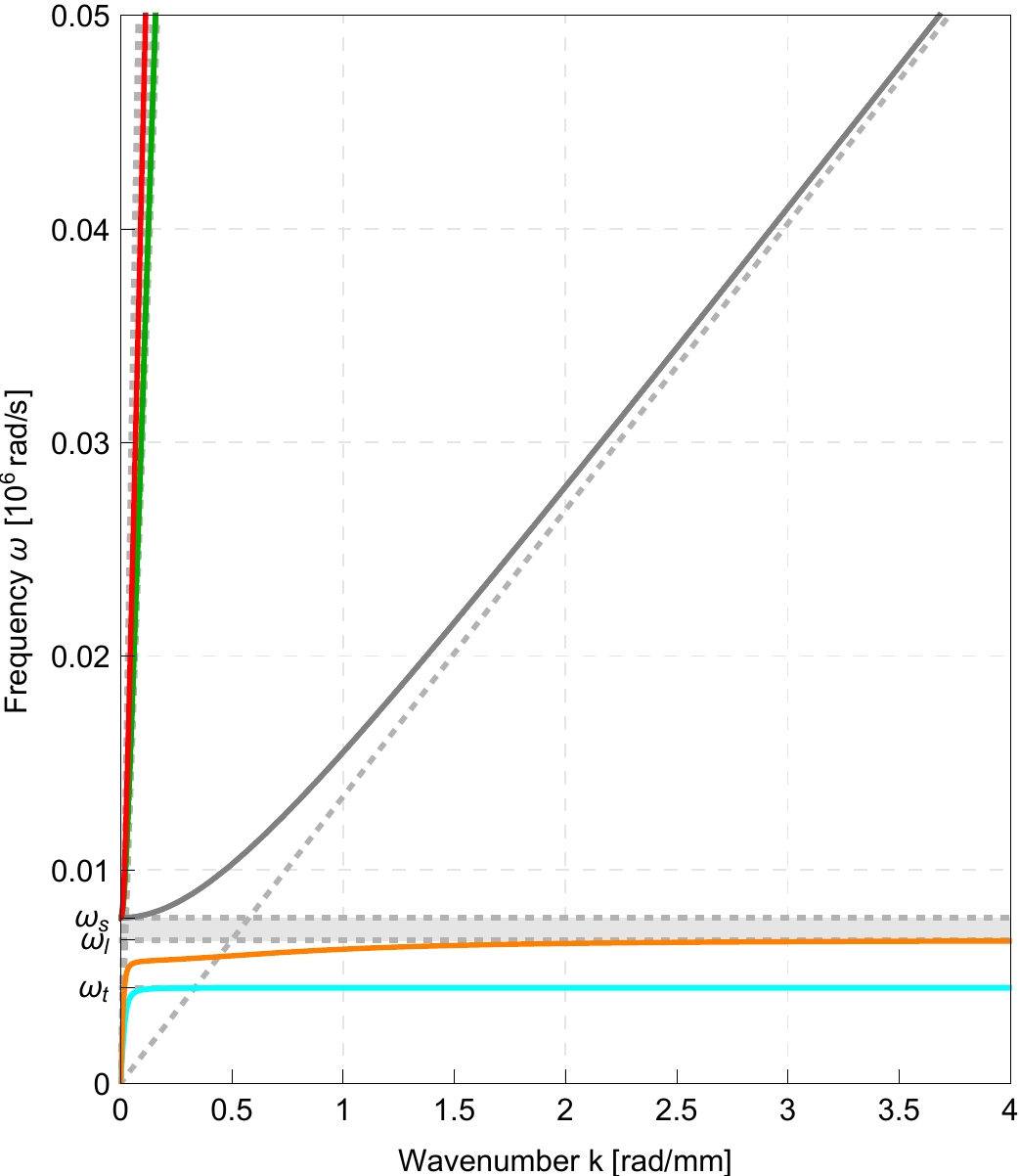} & \includegraphics[scale=0.5]{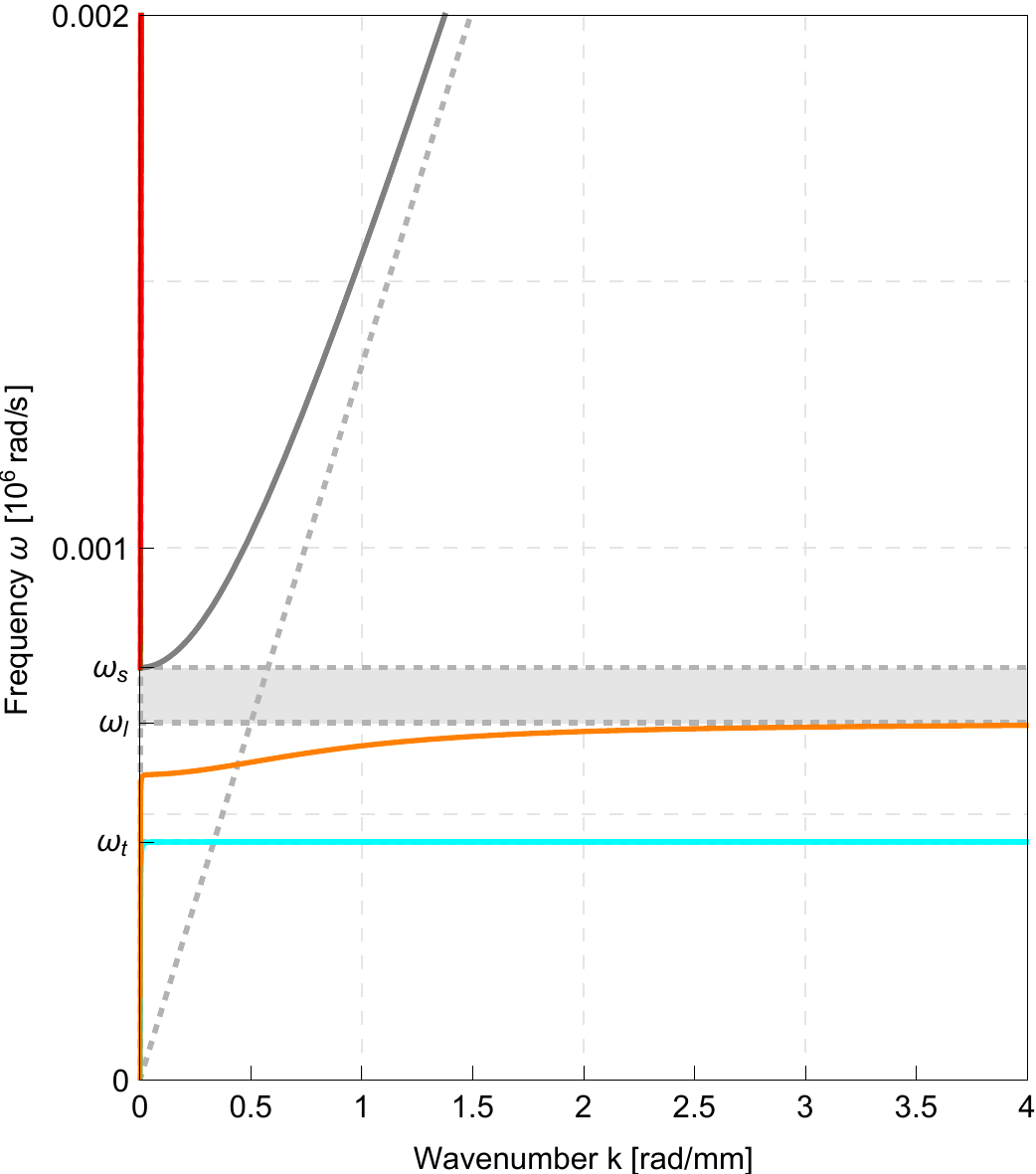} & \includegraphics[scale=0.5]{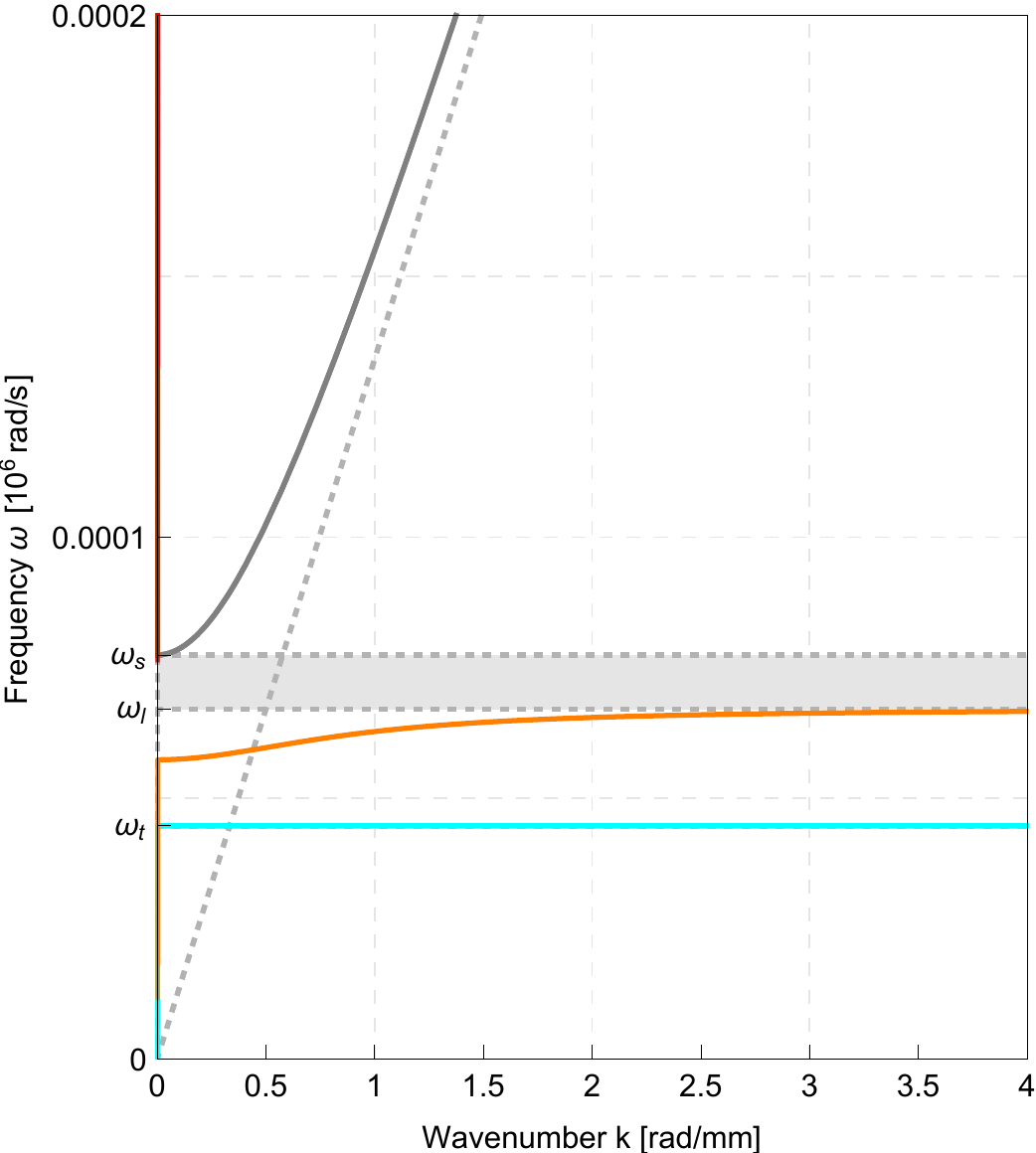}\tabularnewline
$\eta_{1}=10^{1}$ & $\eta_{1}=10^{3}$ & $\eta_{1}=10^{5}$\tabularnewline
\end{tabular}\caption{Zoom on the acoustic branches. \label{fig:Zoom1}}
\end{figure}

In Fig. \ref{fig:Zoom1} we make a zoom on the acoustic curves that
are flattening to zero.

\newpage{}

\subsubsection{Case $\eta_{2}\protect\fr+\infty$}

Characteristic limit elastic energy $\left\Vert \sym\left(\nabla u-P\right)\right\Vert ^{2}+\left\Vert \sym\,P\right\Vert ^{2}+\left\Vert \curl\,P\right\Vert ^{2}$.\\
Characteristic limit kinetic energy $\left\Vert u_{,t}\right\Vert ^{2}+\left\Vert \sym\,P_{,t}\right\Vert ^{2},\quad\skew\P_{,t}=0$.

\begin{figure}[H]
\centering{}%
\begin{tabular}{ccc}
\includegraphics[scale=0.5]{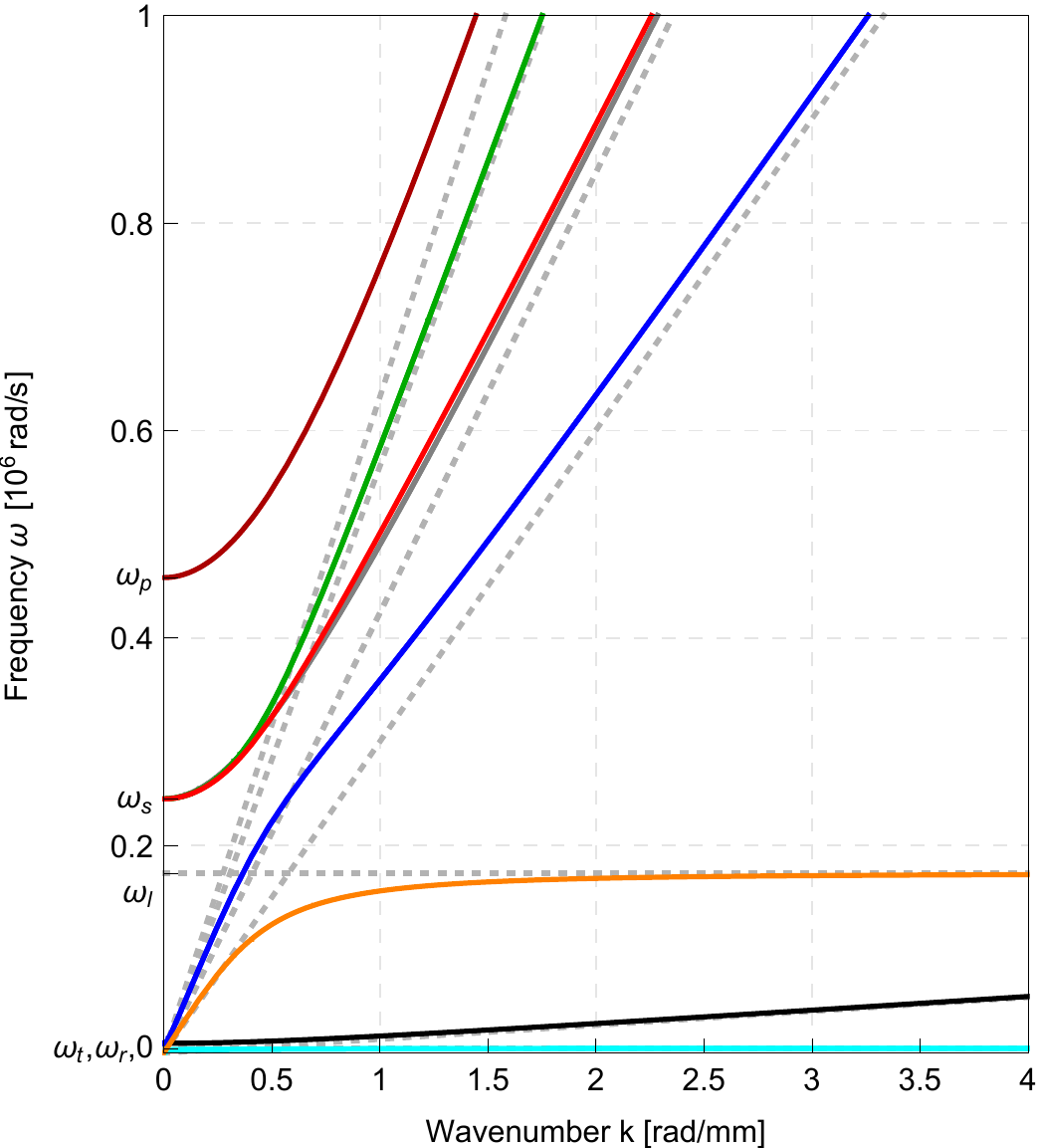} & \includegraphics[scale=0.5]{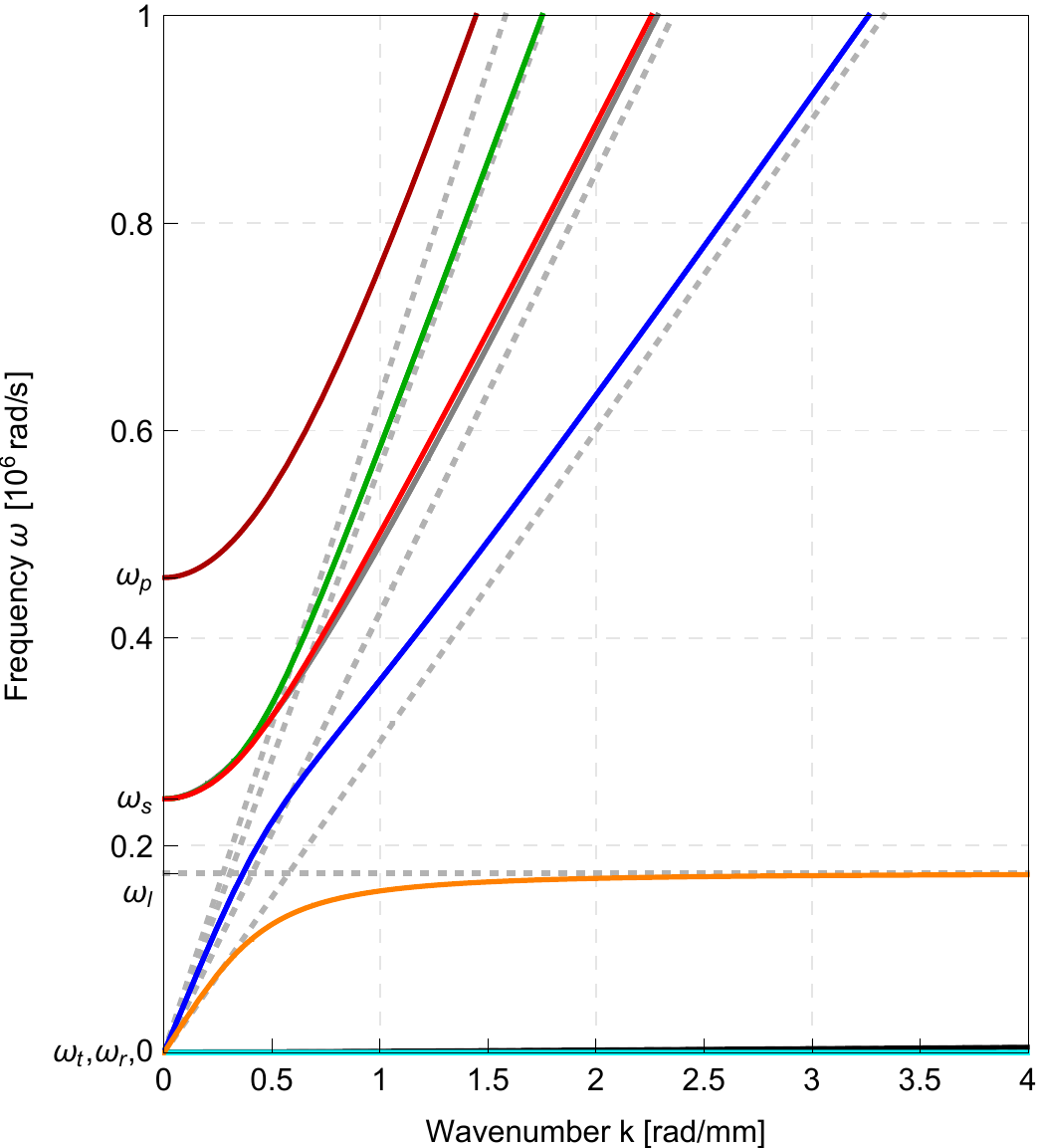} & \includegraphics[scale=0.5]{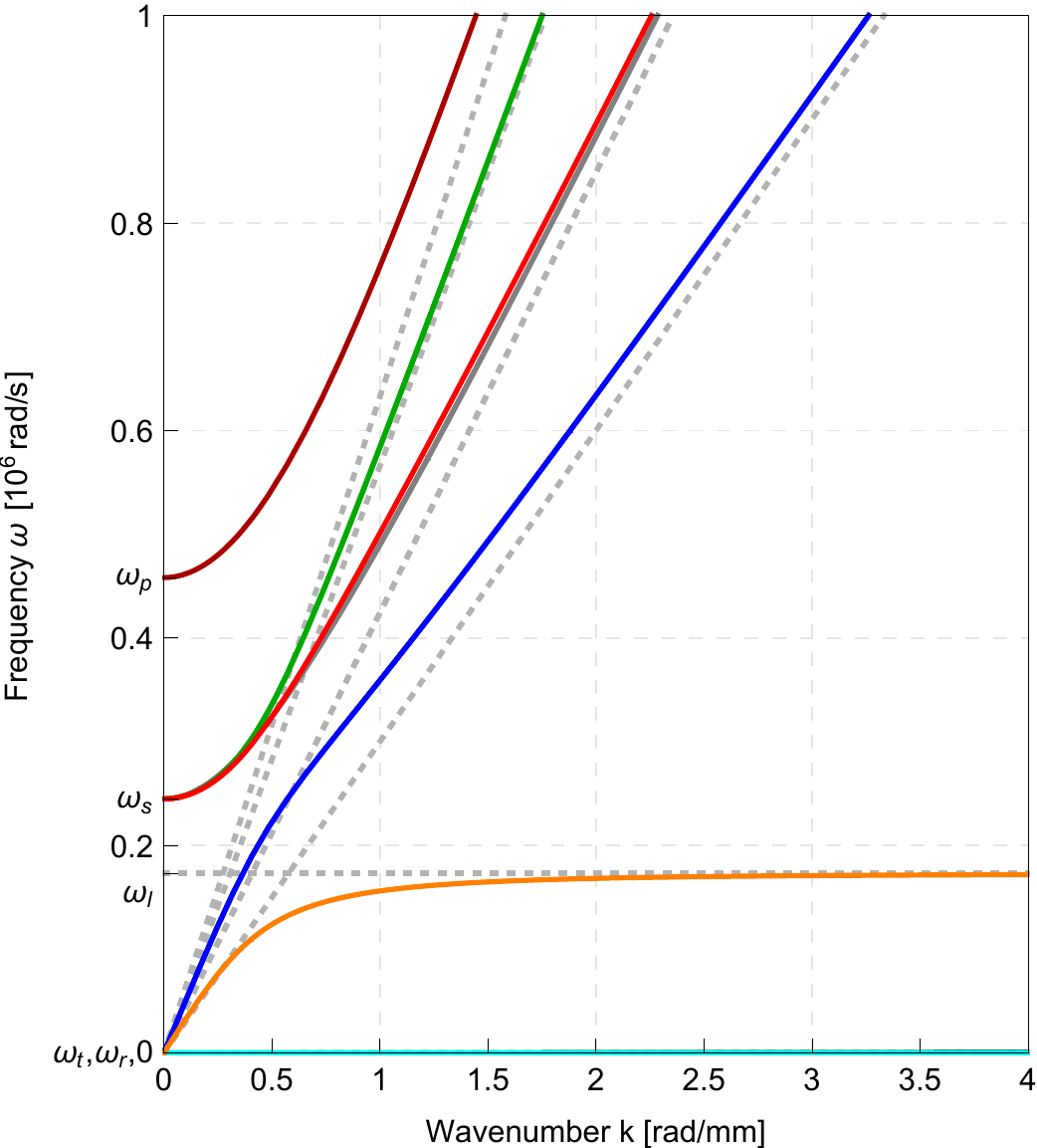}\tabularnewline
$\eta_{2}=10$ & $\eta_{2}=10^{3}$ & $\eta_{2}=10^{5}$\tabularnewline
\end{tabular}\caption{Effect of the parameter $\eta_{2}$ on the dispersion curves.}
\end{figure}
The same reasoning of subsection 5.4.5 can be repeated here for the
two optic curves originating from the cut-off $\omega_{r}$.

\begin{figure}[H]
\centering{}%
\begin{tabular}{ccc}
\includegraphics[scale=0.5]{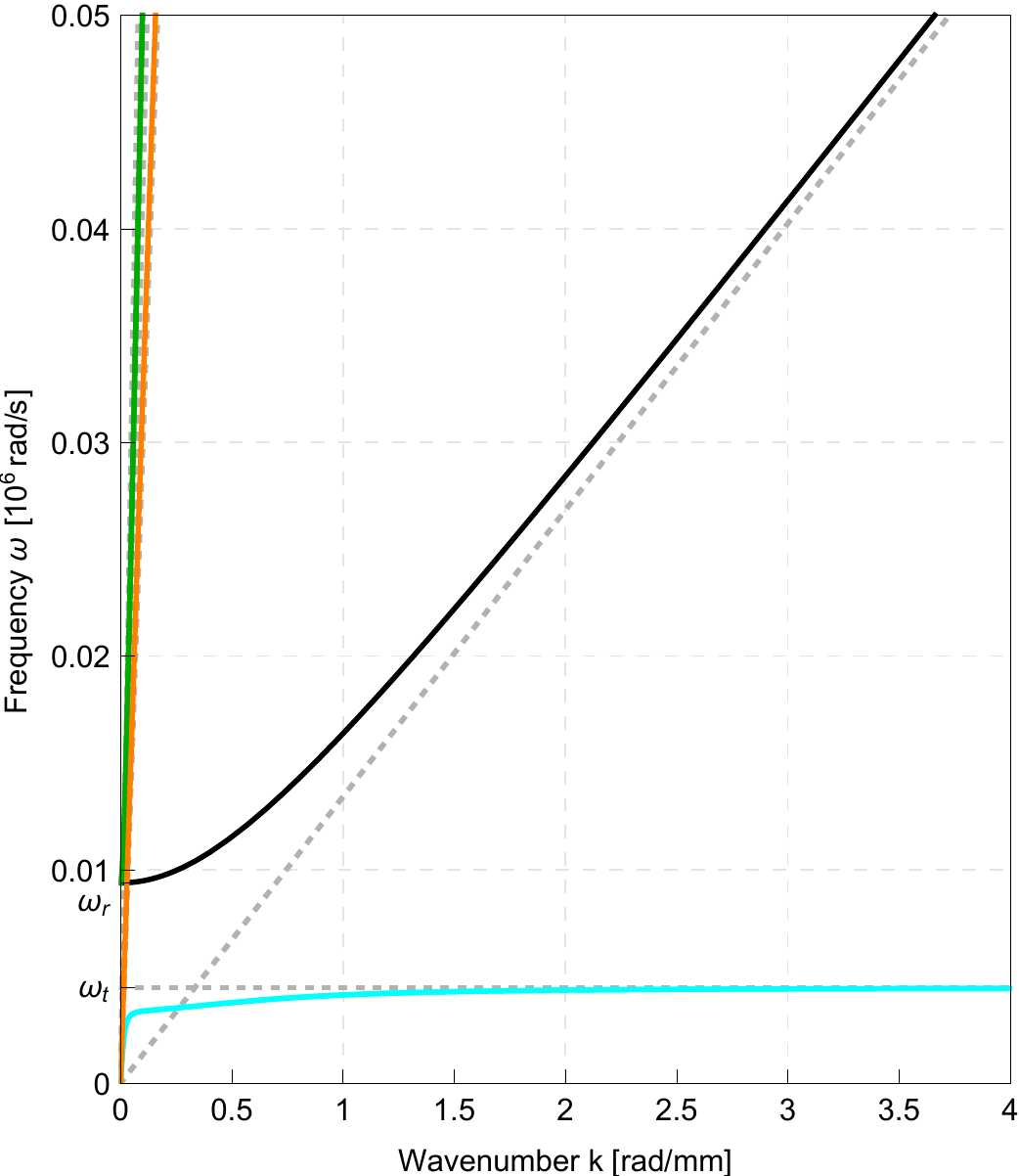} & \includegraphics[scale=0.5]{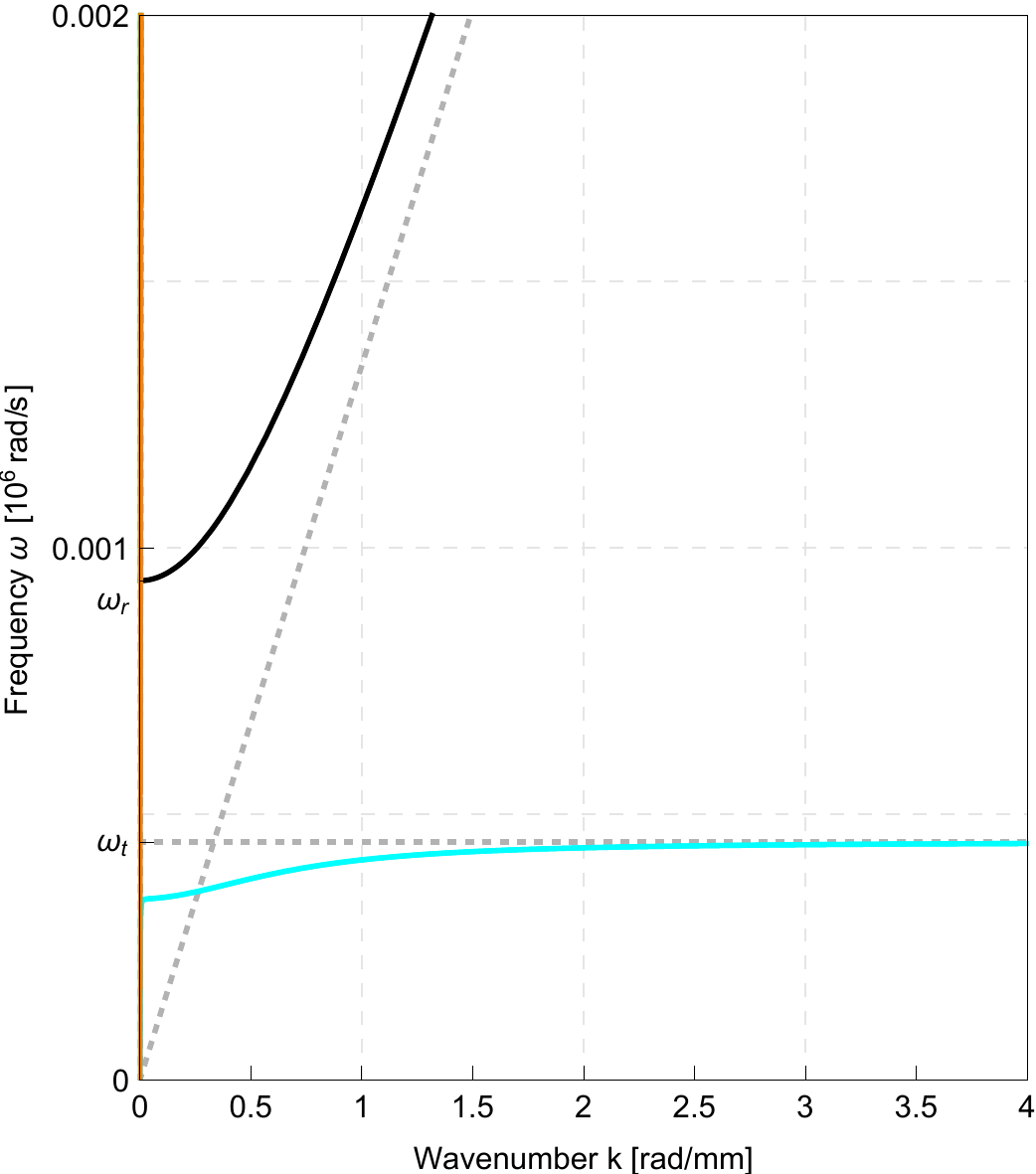} & \includegraphics[scale=0.5]{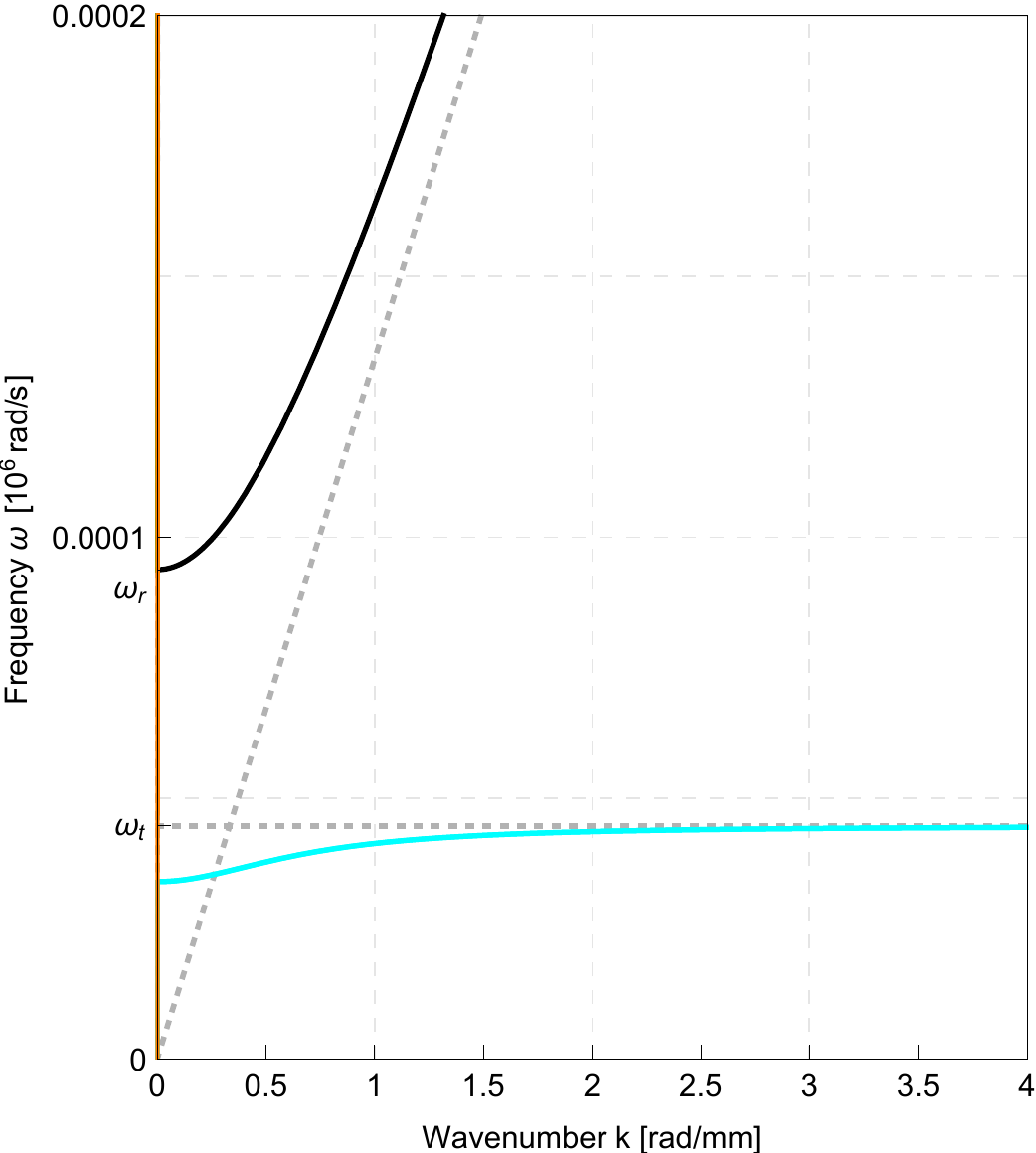}\tabularnewline
$\eta_{2}=10$ & $\eta_{2}=10^{3}$ & $\eta_{2}=10^{5}$\tabularnewline
\end{tabular}\caption{Zoom on the acoustic branches. \label{fig:Zoom2}}
\end{figure}
In Fig. \ref{fig:Zoom2} we show again the zoom on the dispersion
curves that are flattening to zero.

\newpage{}

\subsubsection{Case $\eta_{3}\protect\fr+\infty$}

Characteristic limit elastic energy $\left\Vert \sym\left(\nabla u-P\right)\right\Vert ^{2}+\left\Vert \sym\,P\right\Vert ^{2}+\left\Vert \curl\,P\right\Vert ^{2}$.\\
Characteristic limit kinetic energy $\left\Vert u_{,t}\right\Vert ^{2}+\left\Vert \dev\,\sym\,P_{,t}\right\Vert ^{2}+\left\Vert \skew\,P_{,t}\right\Vert ^{2},\quad\textrm{tr}\,P_{,t}=0$.

\begin{figure}[H]
\centering{}%
\begin{tabular}{ccc}
\includegraphics[scale=0.5]{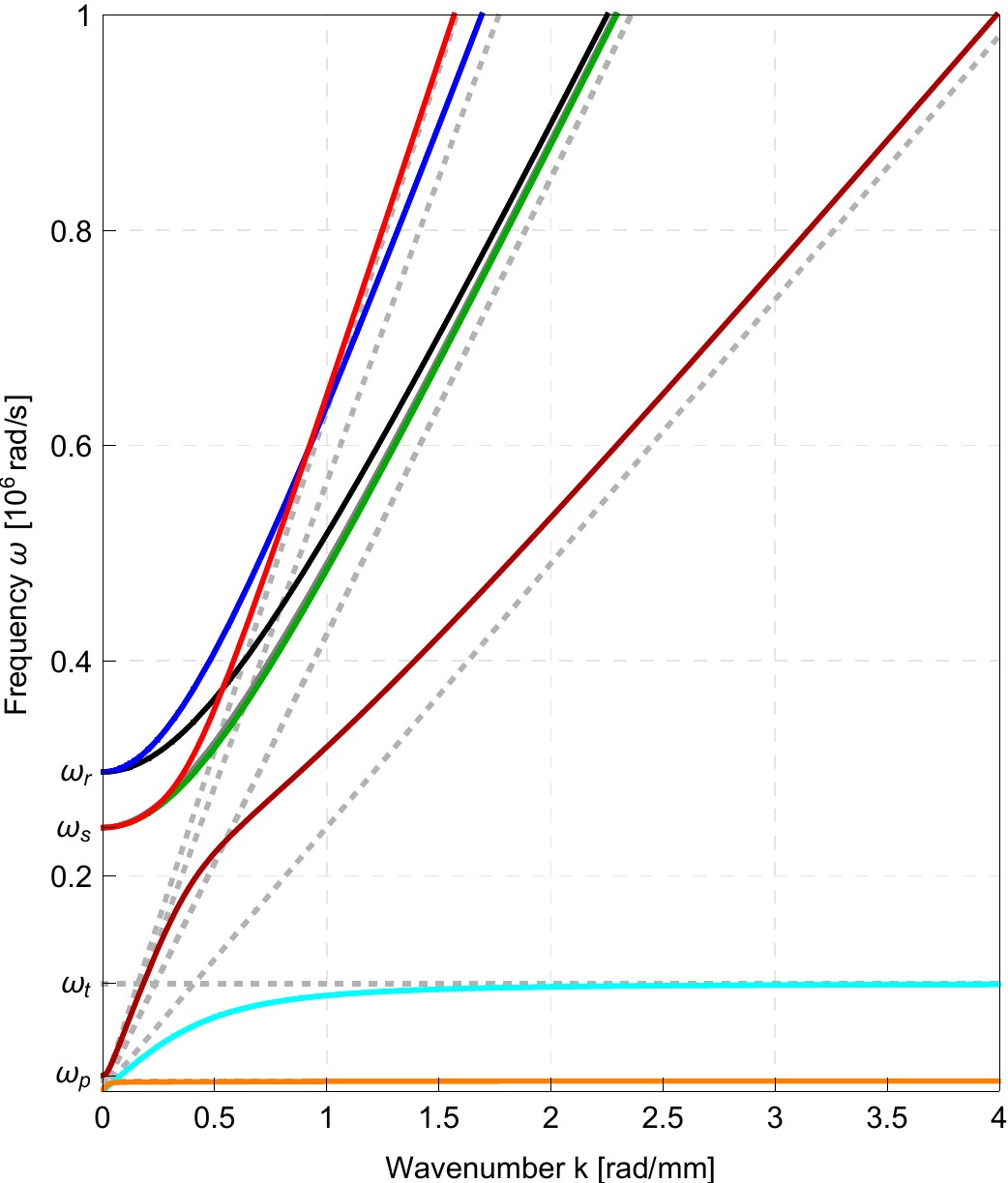} & \includegraphics[scale=0.5]{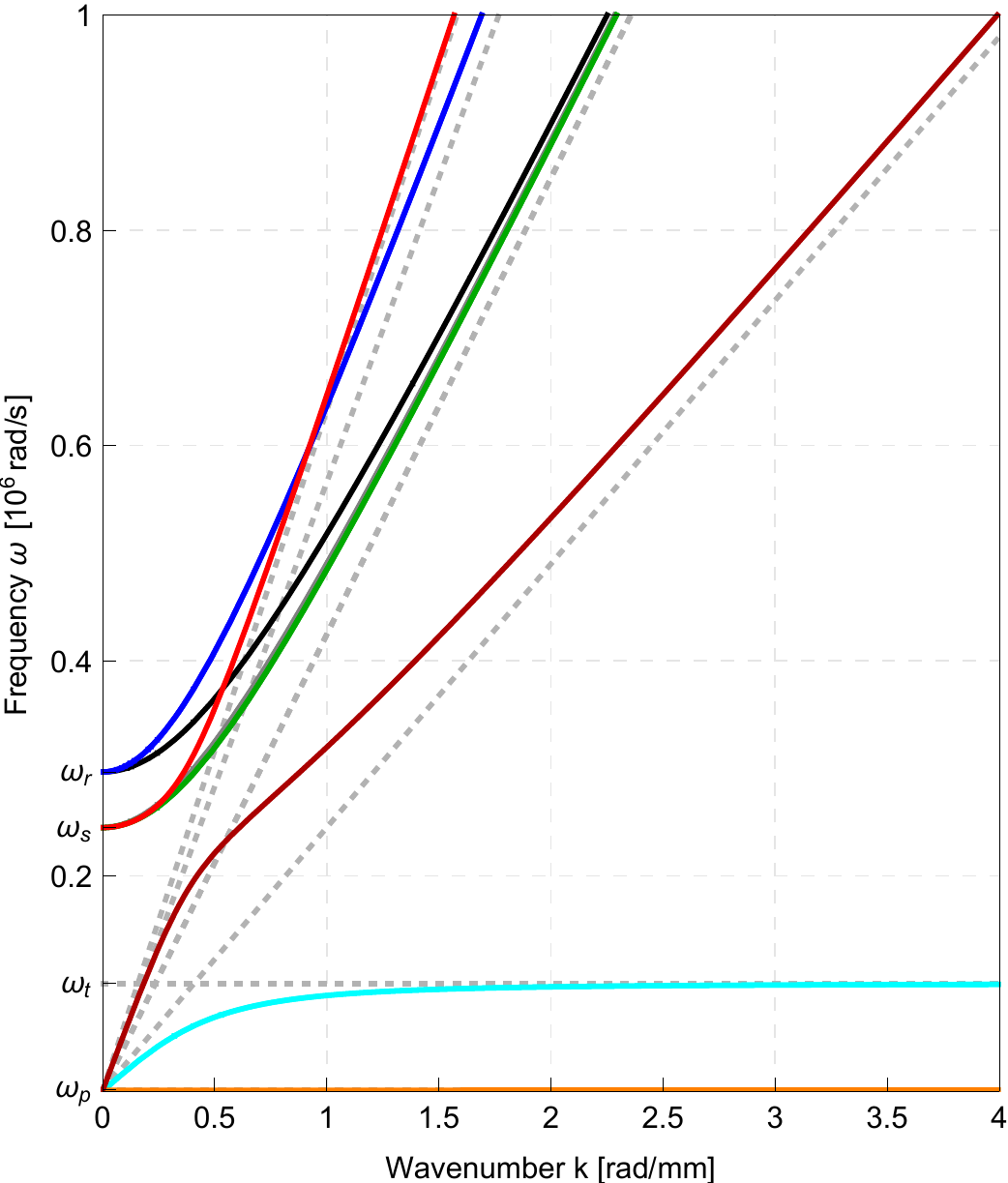} & \includegraphics[scale=0.5]{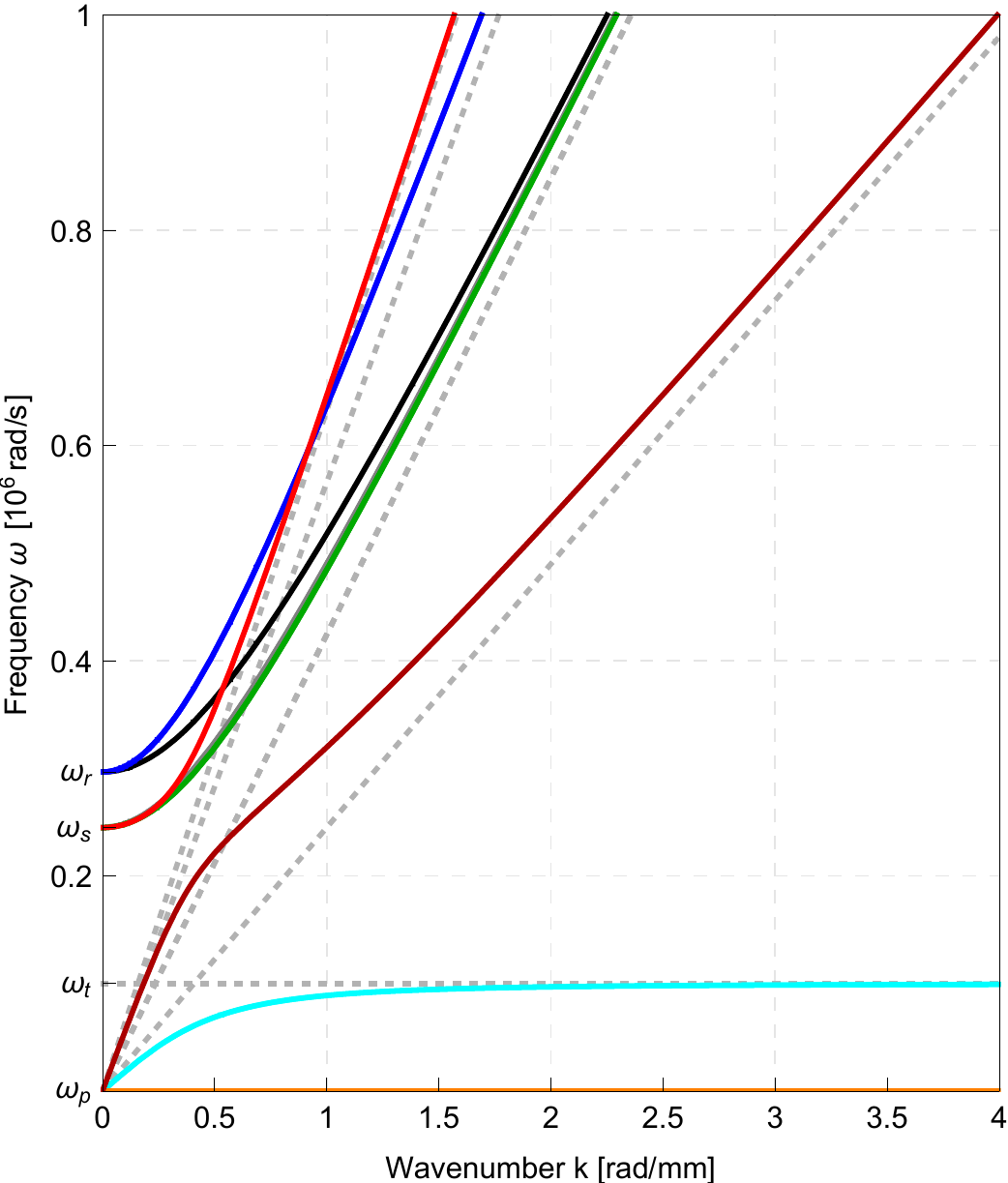}\tabularnewline
$\eta_{3}=10$ & $\eta_{3}=10^{3}$ & $\eta_{3}=10^{5}$\tabularnewline
\end{tabular}\caption{Effect of the parameter $\eta_{3}$ on the dispersion curves.}
\end{figure}
The same reasoning of subsection 5.4.5 can be repeated here for the
two optic curves originating from the cut-off $\omega_{p}$.
\begin{figure}[H]
\centering{}%
\begin{tabular}{ccc}
\includegraphics[scale=0.5]{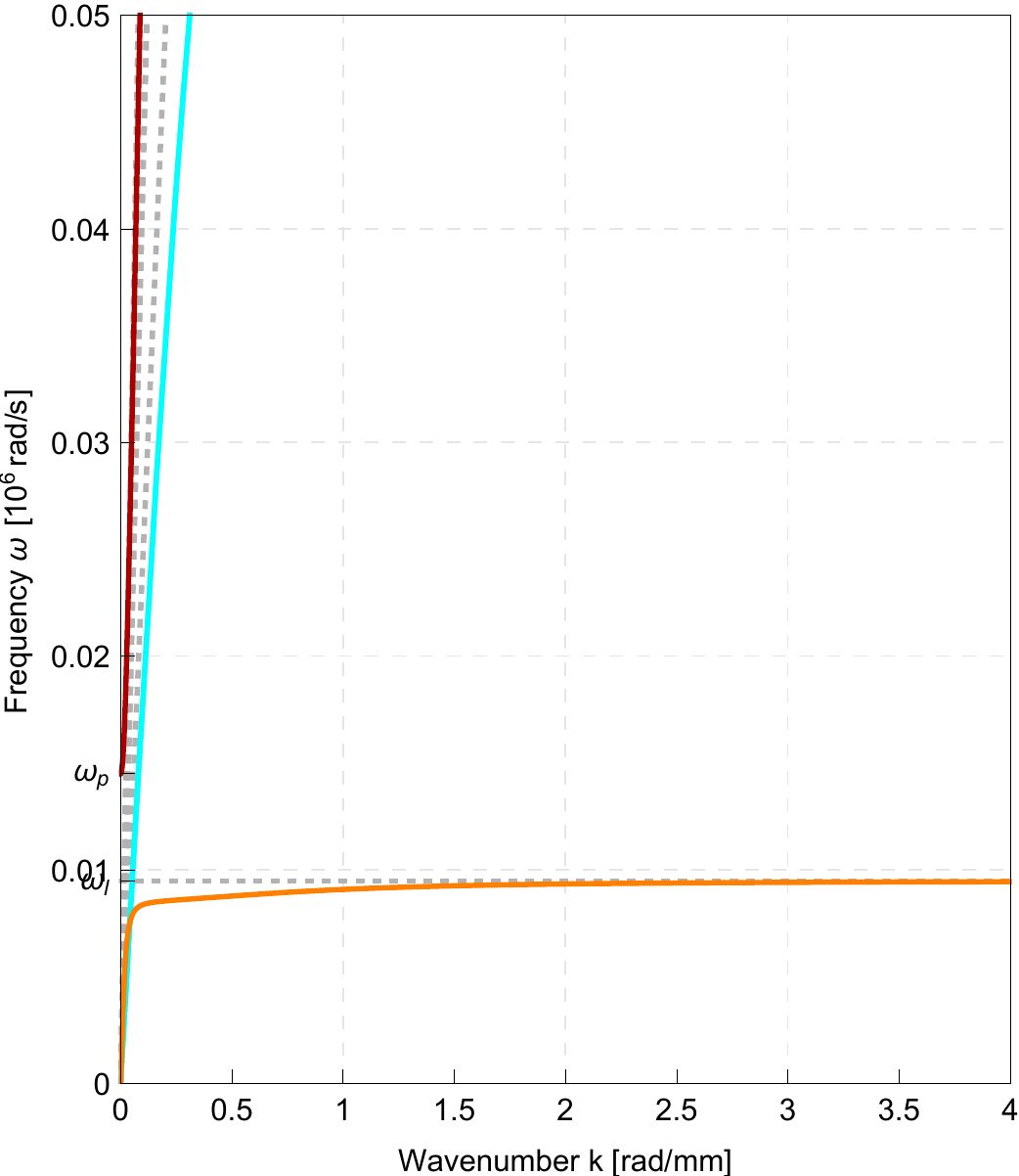} & \includegraphics[scale=0.5]{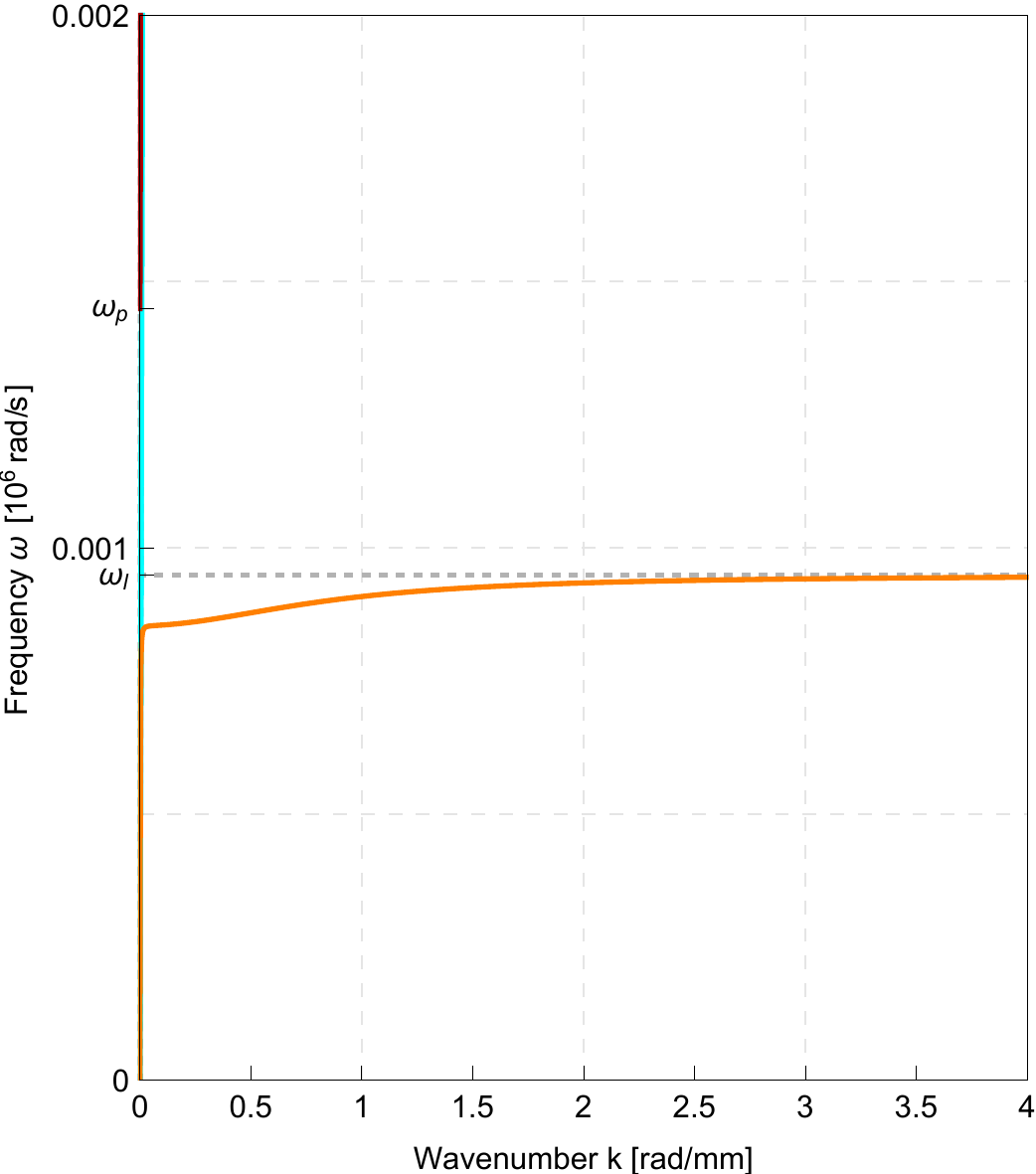} & \includegraphics[scale=0.5]{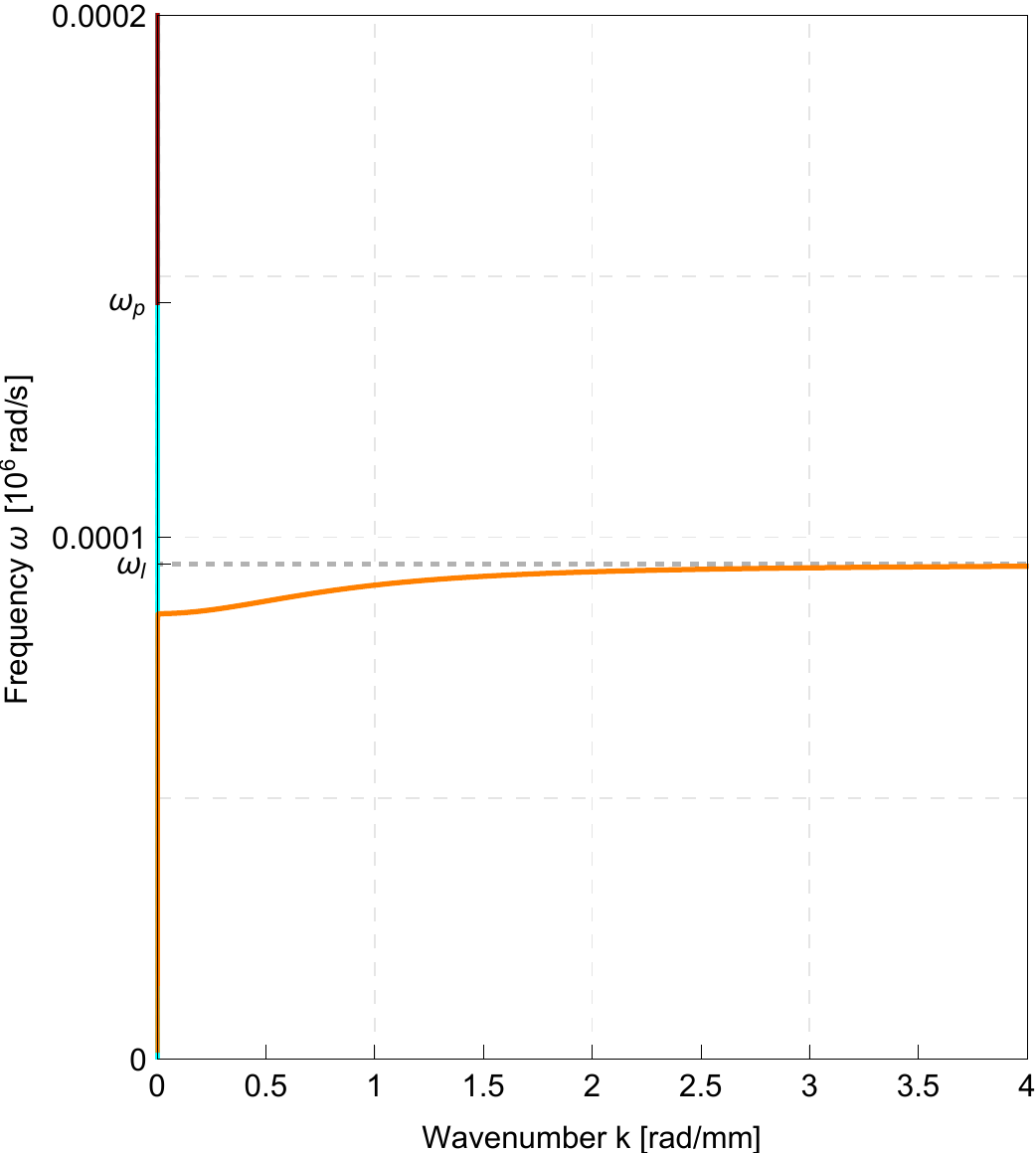}\tabularnewline
$\eta_{3}=10$ & $\eta_{3}=10^{3}$ & $\eta_{3}=10^{5}$\tabularnewline
\end{tabular}\caption{Zoom on the acoustic branches. \label{fig:Zoom3}}
\end{figure}
 Fig. \ref{fig:Zoom3} shows the behavior of the dispersion curves
flattening to zero.

\newpage{}

\subsubsection{Cases ${\displaystyle \eta_{1,}\eta_{2},\eta_{3}\protect\fr+\infty}$:
a rigidified Cauchy material}

Characteristic limit elastic energy $\left\Vert \sym\left(\nabla u-P\right)\right\Vert ^{2}+\left\Vert \sym\,P\right\Vert ^{2}+\left\Vert \dev\,\curl\,P\right\Vert ^{2}$.\\
Characteristic limit kinetic energy $\left\Vert u_{,t}\right\Vert ^{2},\quad P_{,t}=0$.

\begin{figure}[H]
\centering{}%
\begin{tabular}{ccc}
\includegraphics[scale=0.5]{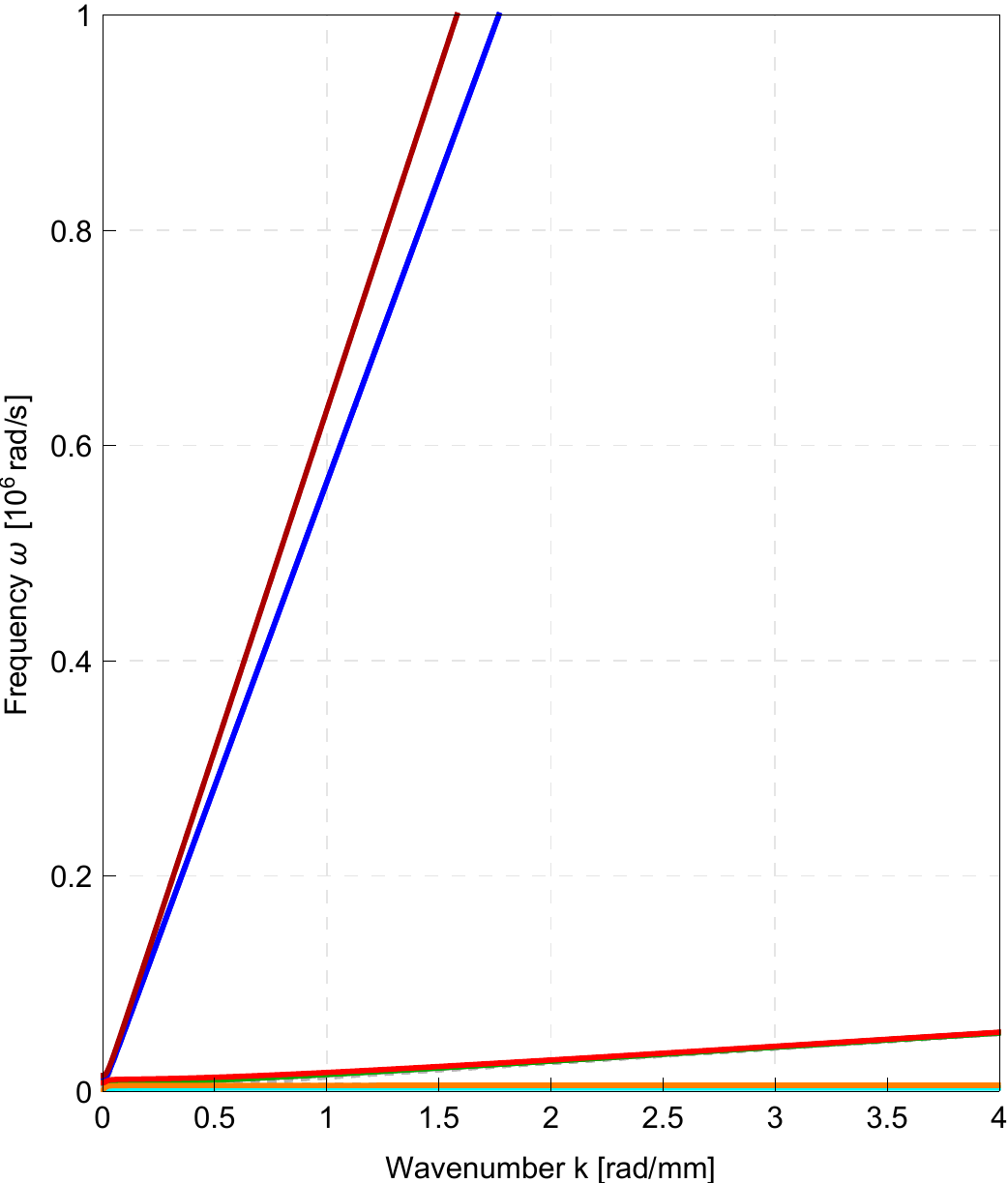} & \includegraphics[scale=0.5]{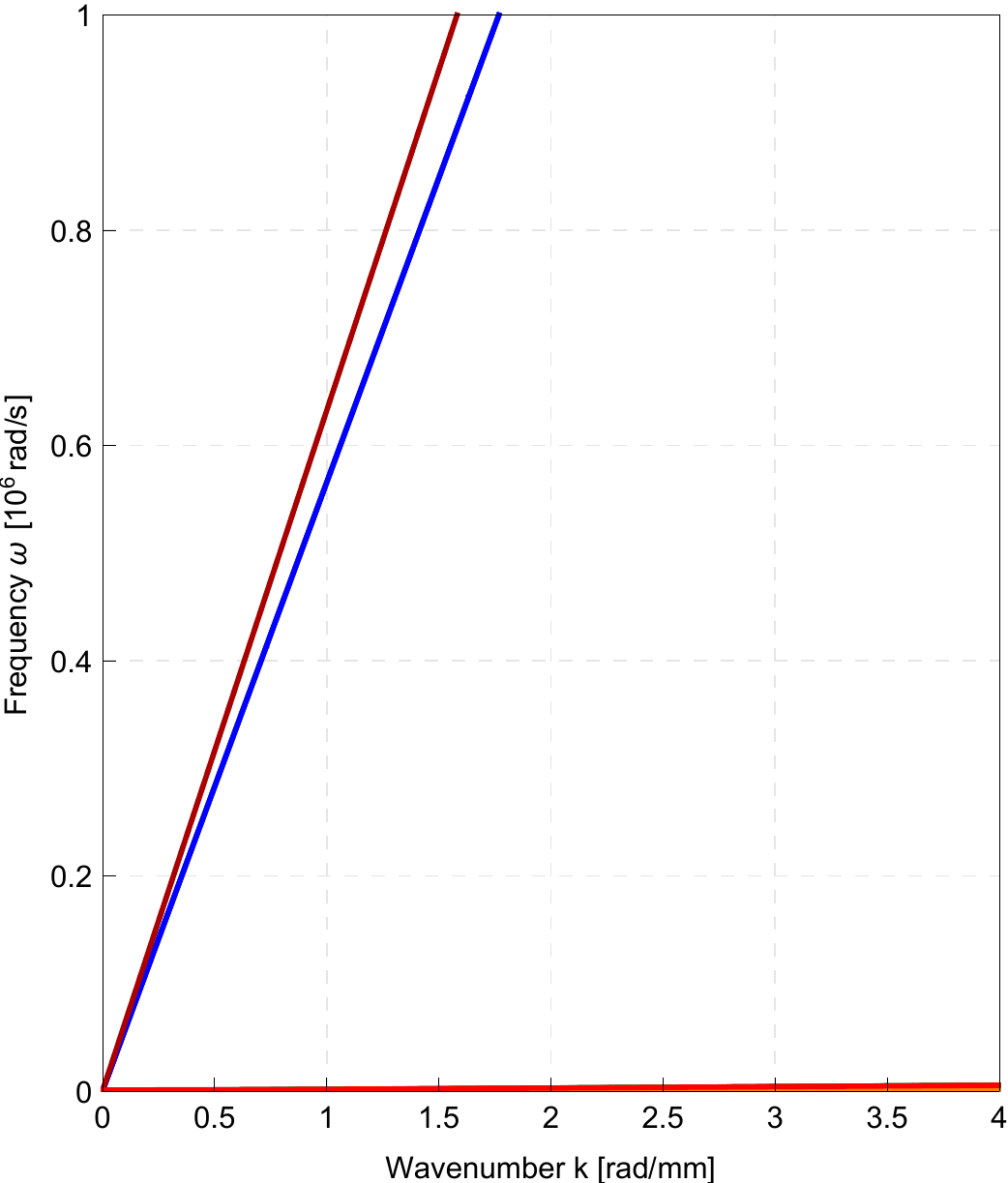} & \includegraphics[scale=0.5]{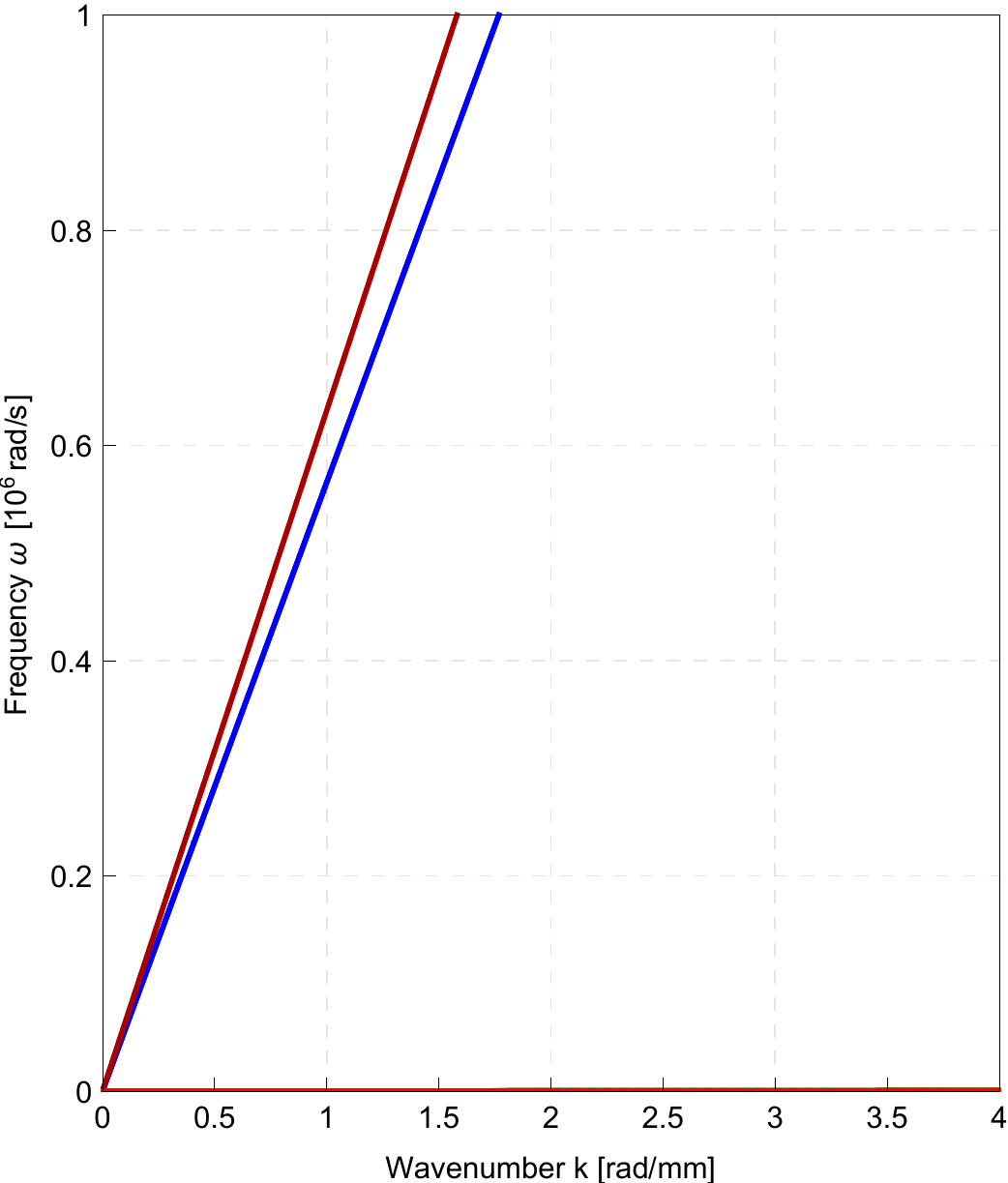}\tabularnewline
$\eta_{1}=\eta_{2}=\eta_{3}=10$ & $\eta_{1}=\eta_{2}=\eta_{3}=10^{3}$ & $\eta_{1}=\eta_{2}=\eta_{3}=10^{5}$\tabularnewline
\end{tabular}\caption{Combined effect of the parameters $\eta_{1},\eta_{2}\eta_{3}\protect\fr0$
on the dispersion curves.}
\end{figure}

This particular case, obtained letting simultaneously $\eta_{1},\eta_{2}$
and $\eta_{3}$ tend to infinity, gives rise to a Cauchy-like material
behavior. Nevertheless, the physical meaning attached to this phenomenon
is drastically different from the result obtained in section 5.4.4
when setting $\eta_{1}=\eta_{2}=\eta_{3}=0.$

Indeed, in that case, considering an enriched kinematics $\left(u,\P\right)$
without the micro-inertia $\left\Vert \P_{,t}\right\Vert ^{2}$ did
not allow to such microstructure to manifest itself. It is as if one
introduces a complex constitutive behavior for a metamaterial, but
does not allow to investigate its dynamical behavior. Indeed, the
result was the same obtained for the classical Cauchy medium as if
it did not have any underlying microstructure. 

On the other hand, the case considered here is quite different: we
are indeed introducing the inertia of the microstructure in the model,
but such inertia is so high that the microstructure is ``frozen''
and cannot vibrate locally. We thus end-up with a Cauchy material
which is more rigid than the original one (slope of the acoustic curves
is bigger than that in Fig. 1(a)). 

\newpage{}

\subsection*{Case ${\displaystyle L_{c}\protect\fr0\quad\simeq\quad\alpha_{1},\alpha_{2},\alpha_{3}\protect\fr0}$
(internal variable model)}

Characteristic limit elastic energy $\left\Vert \nabla u-P\right\Vert ^{2}+\left\Vert \sym\,P\right\Vert ^{2}$.\\
Characteristic limit kinetic energy $\left\Vert u_{,t}\right\Vert ^{2}+\left\Vert P_{,t}\right\Vert ^{2}$.

\begin{figure}[H]
\centering{}%
\begin{tabular}{ccc}
\includegraphics[scale=0.5]{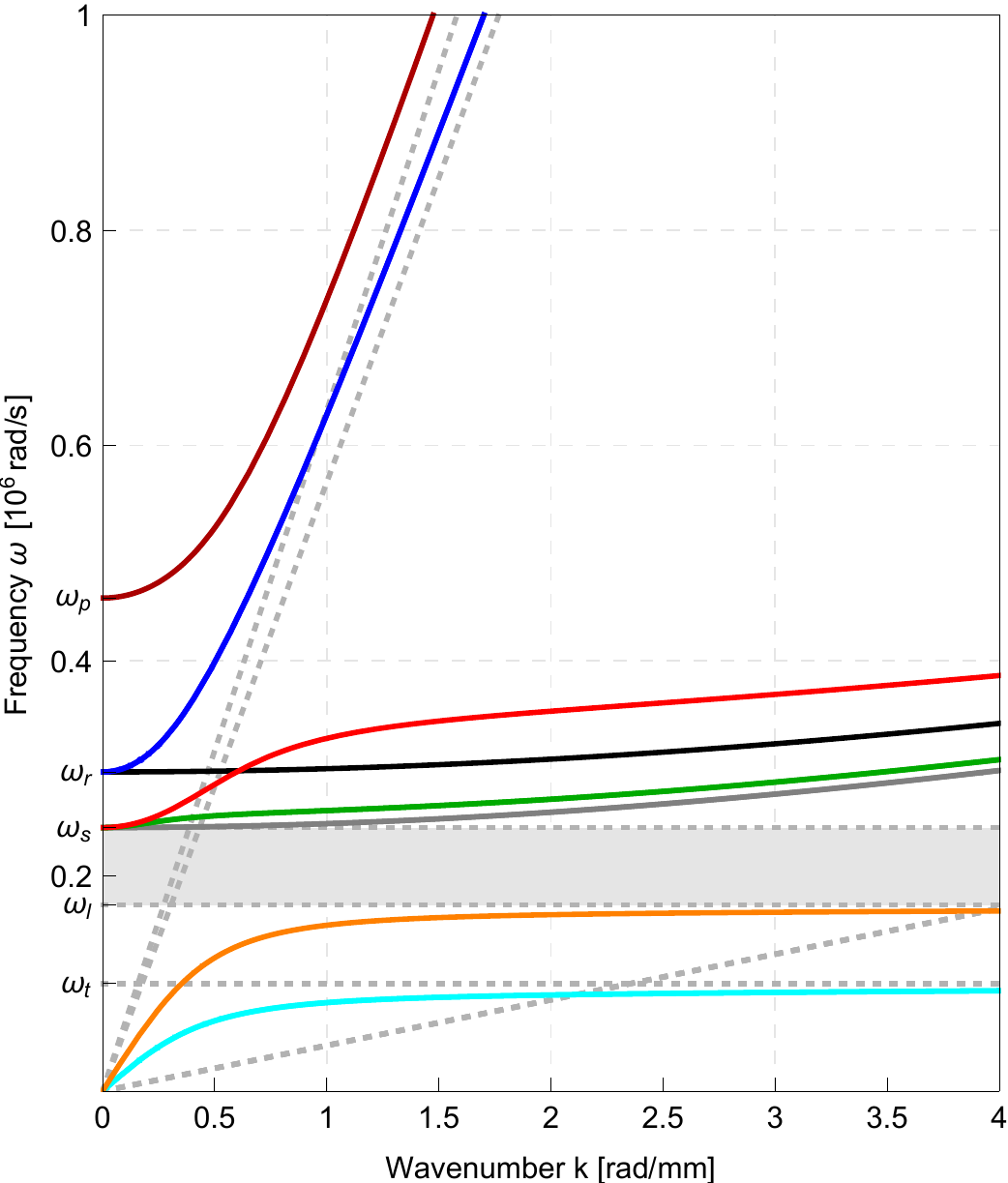} & \includegraphics[scale=0.5]{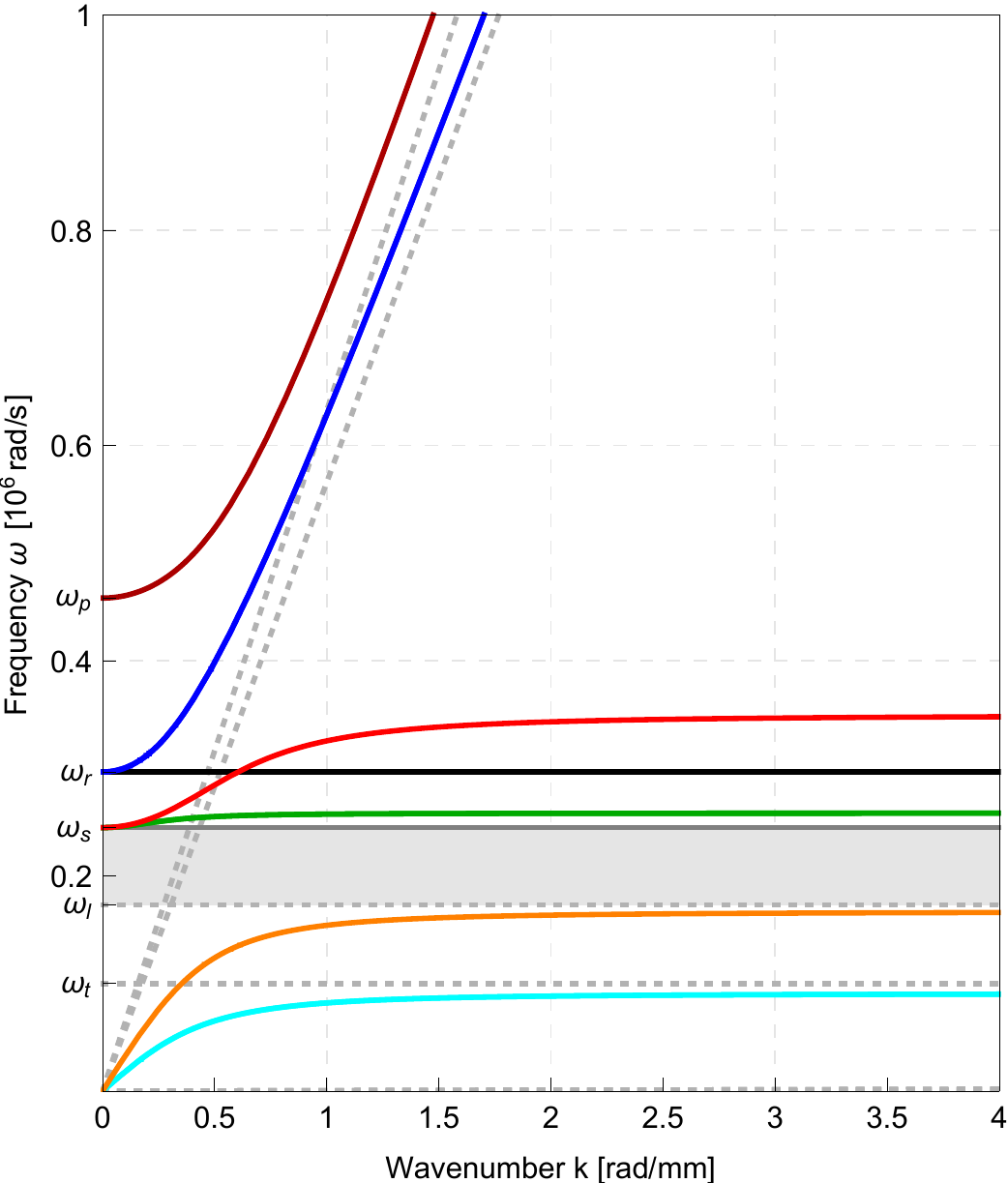} & \includegraphics[scale=0.5]{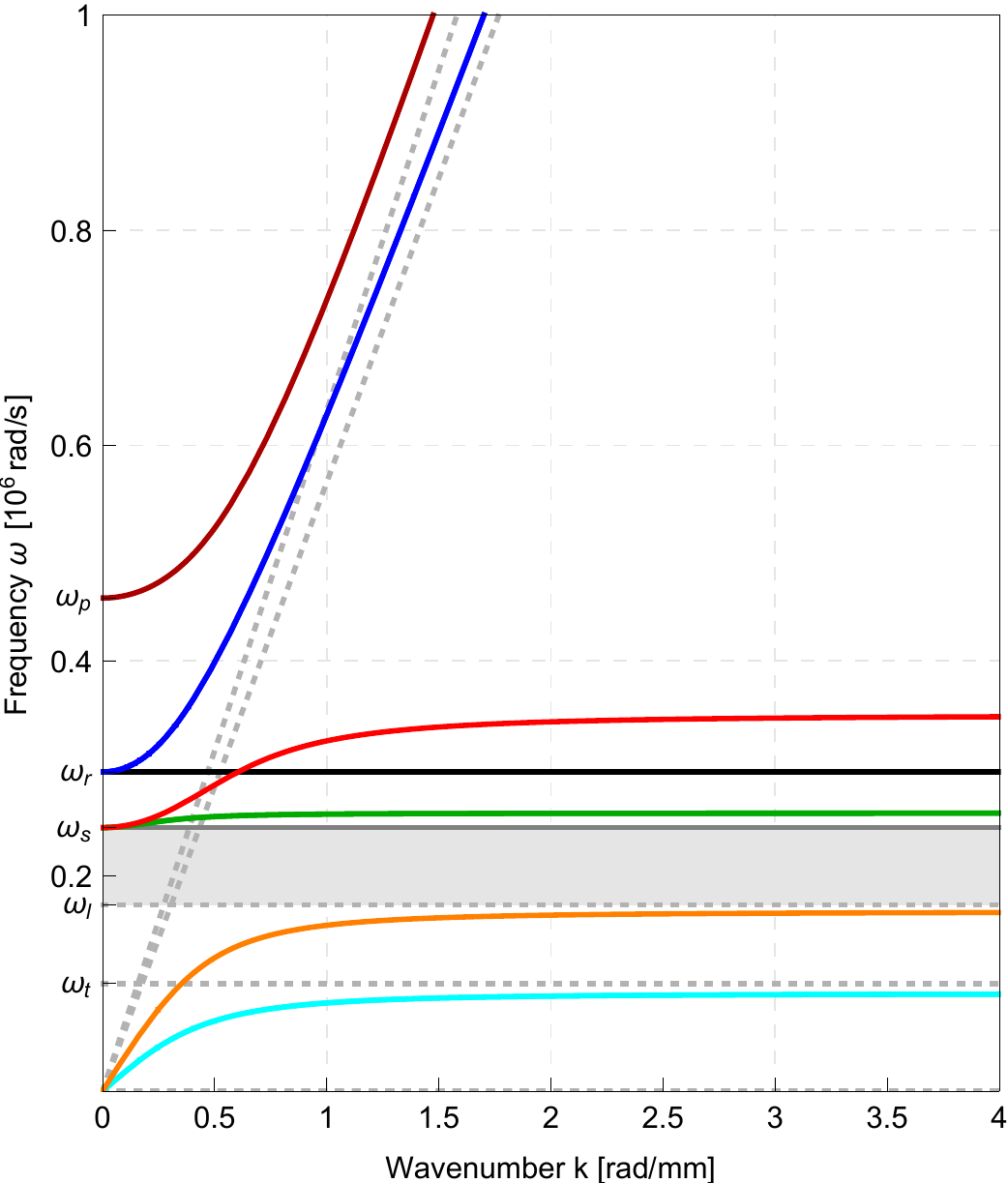}\tabularnewline
$L_{c}=3\cdot10^{-4}$ & $L_{c}=3\cdot10^{-6}$ & $L_{c}=3\cdot10^{-9}$\tabularnewline
\end{tabular}\caption{Effect of the parameter $L_{c}$ on the dispersion curves.}
\end{figure}

The band gap is always present. Nevertheless two curves become horizontal
and 4 horizontal asymptotes instead of 2 are found letting $L_{c}\fr0$.
When $L_{c}=0$ two band-gaps can be created increasing the value
of $\omega_{r}$.

\subsection{Other interesting cases}

\subsubsection{Case ${\displaystyle \protect\me\protect\fr+\infty}$ and $L_{c}$
decreasing}

\begin{figure}[H]
\centering{}%
\begin{tabular}{ccc}
\includegraphics[scale=0.5]{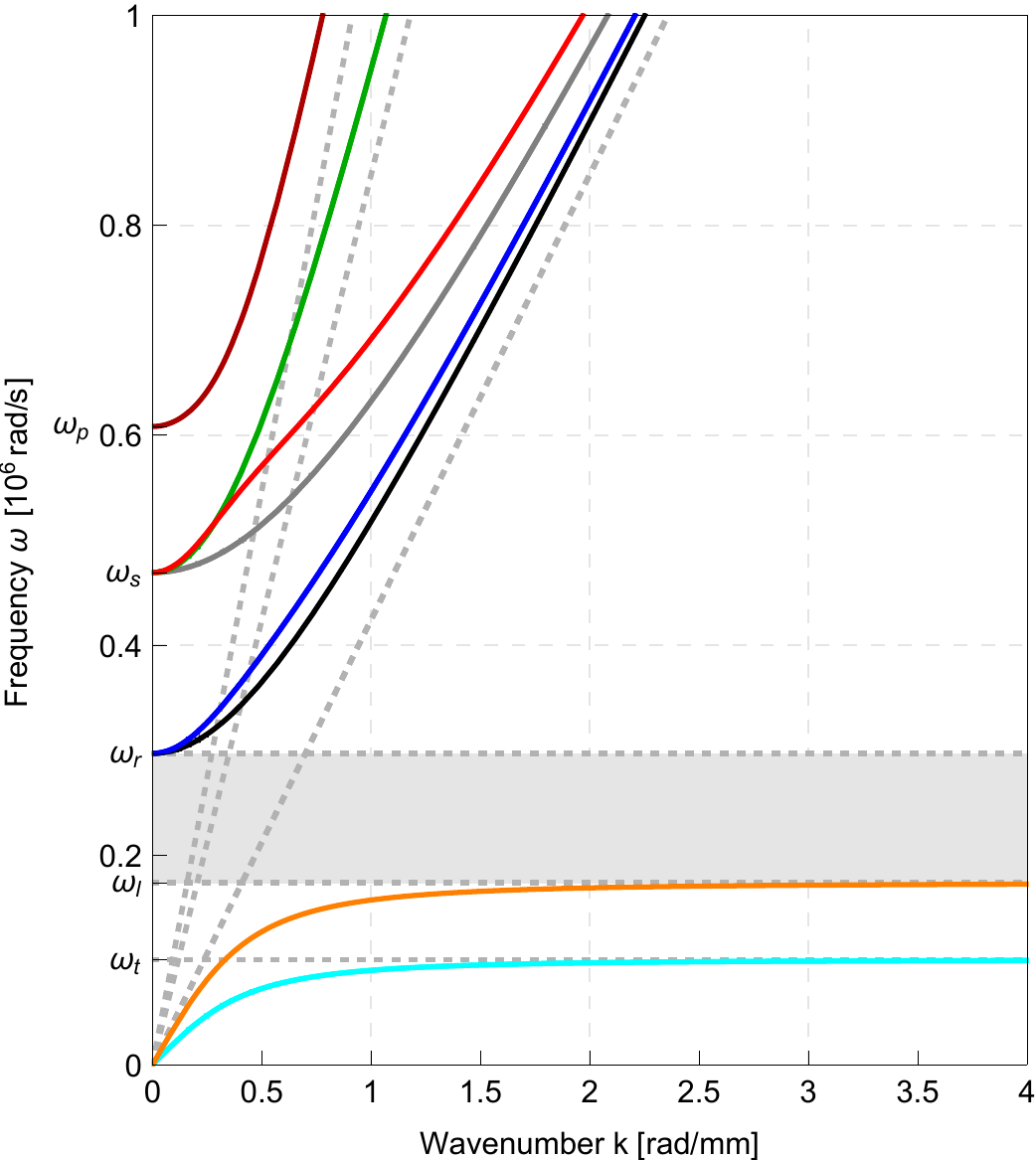} & \includegraphics[scale=0.5]{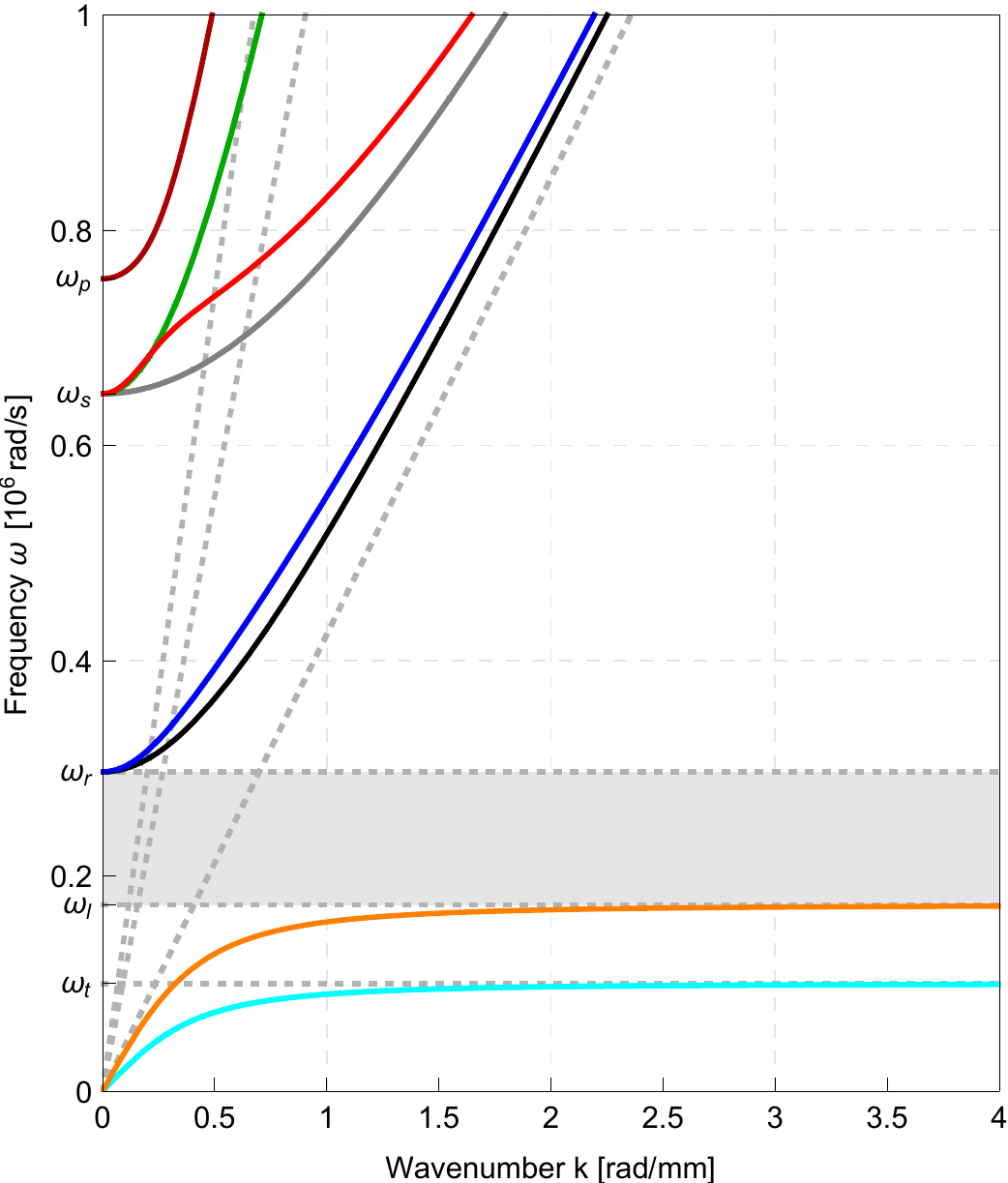} & \includegraphics[scale=0.5]{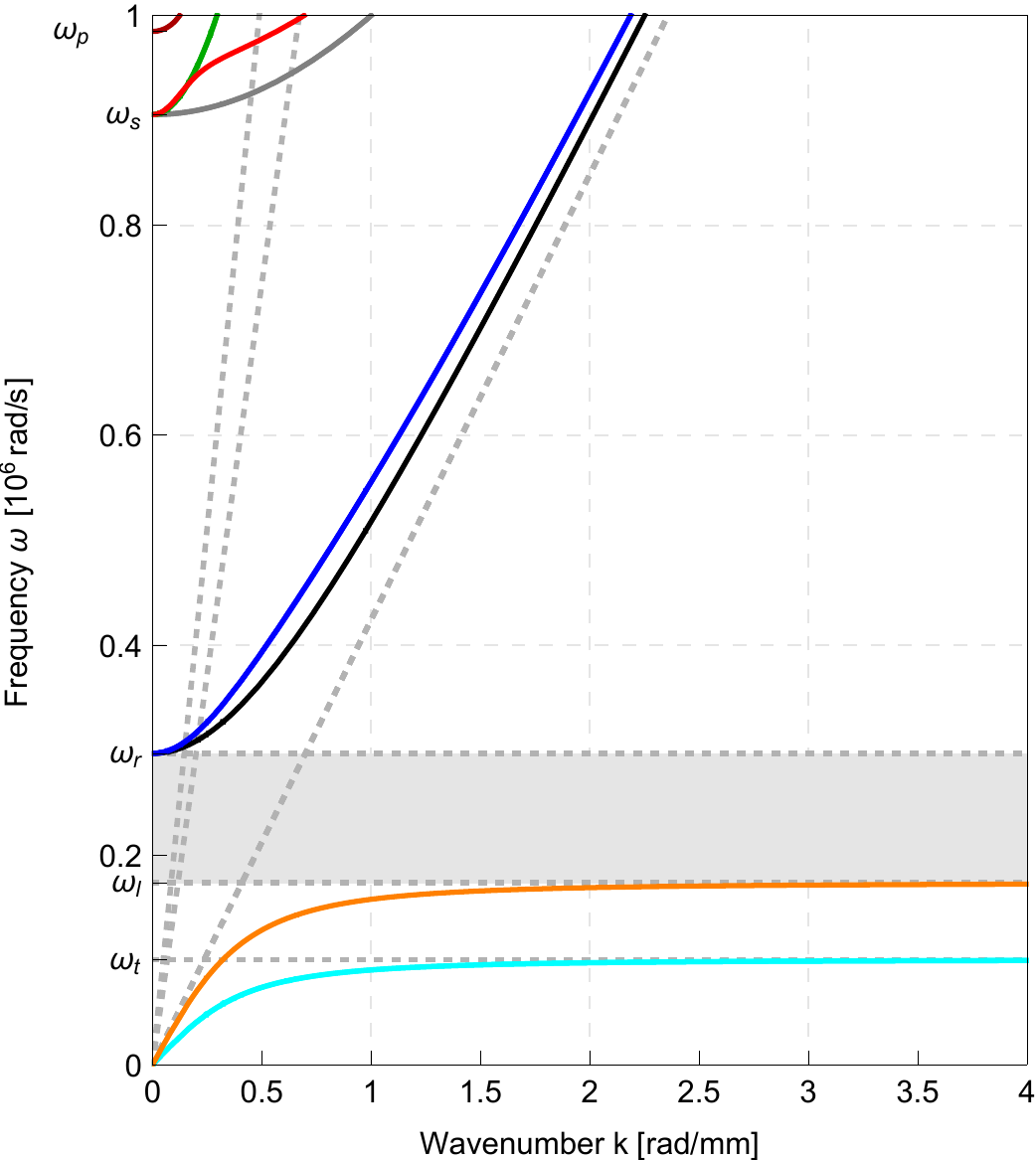}\tabularnewline
$\me=5\cdot10^{8},\;L_{c}=\frac{3\sqrt{5}}{5}\cdot10^{-3}$ & $\me=10^{9},\;L_{c}=\frac{3\sqrt{10}}{10}\cdot10^{-3}$ & $\me=2\cdot10^{9},\;L_{c}=\frac{3\sqrt{20}}{20}\cdot10^{-3}$\tabularnewline
\end{tabular}\caption{Effect of the parameter $\protect\me$ on the dispersion curves.}
\end{figure}
The effect of letting $\me\fr+\infty$ preserves the presence of the
band-gap because it does not influence the acoustic branches and the
cut-off $\omega_{r}$ sending instead the other two cut-offs to infinity.

\subsection*{}

\subsubsection{Case ${\displaystyle \protect\me,\protect\mh,\protect\mc\protect\fr+\infty}$}

\begin{figure}[H]
\centering{}%
\begin{tabular}{ccc}
\includegraphics[scale=0.5]{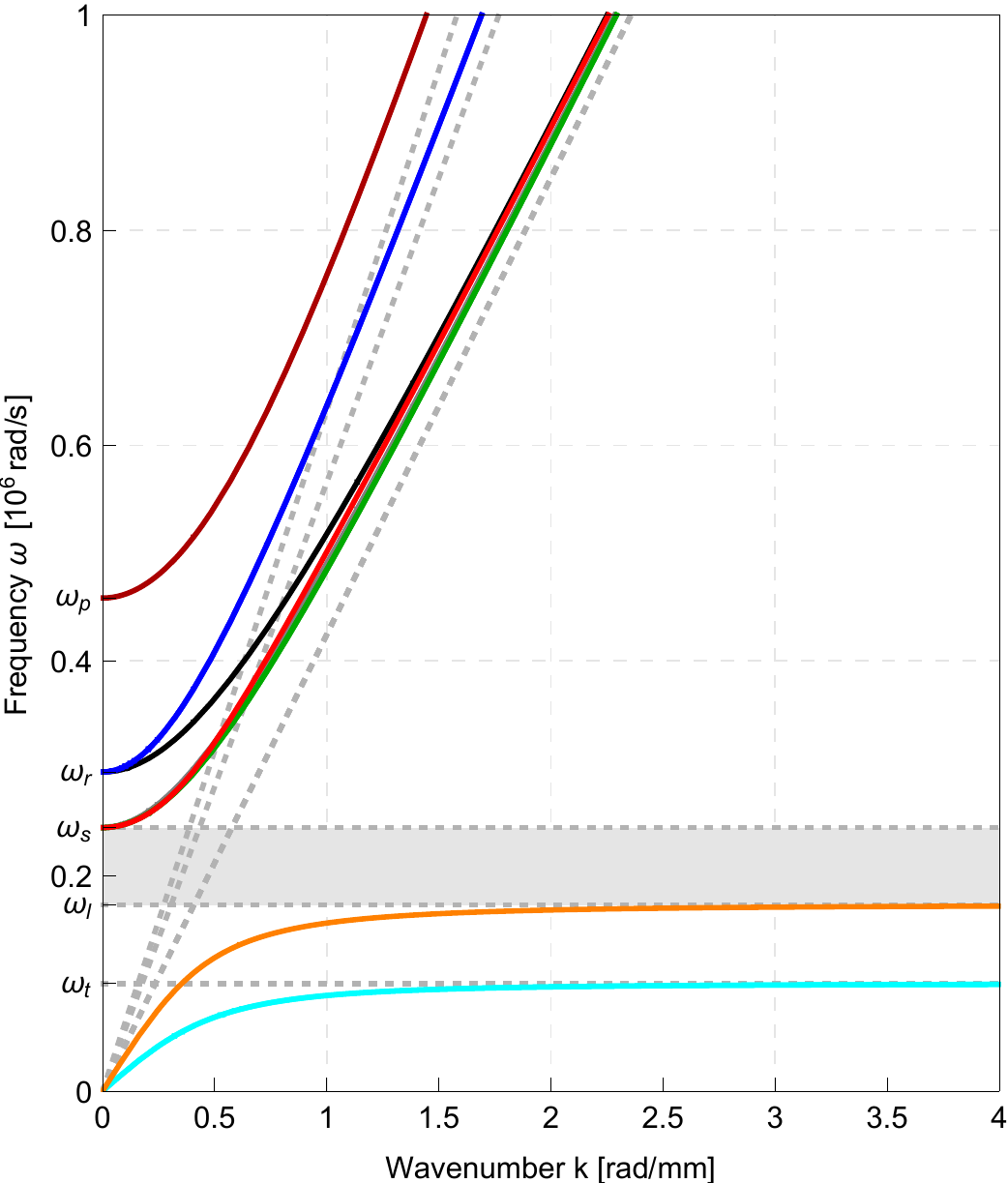} & \includegraphics[scale=0.5]{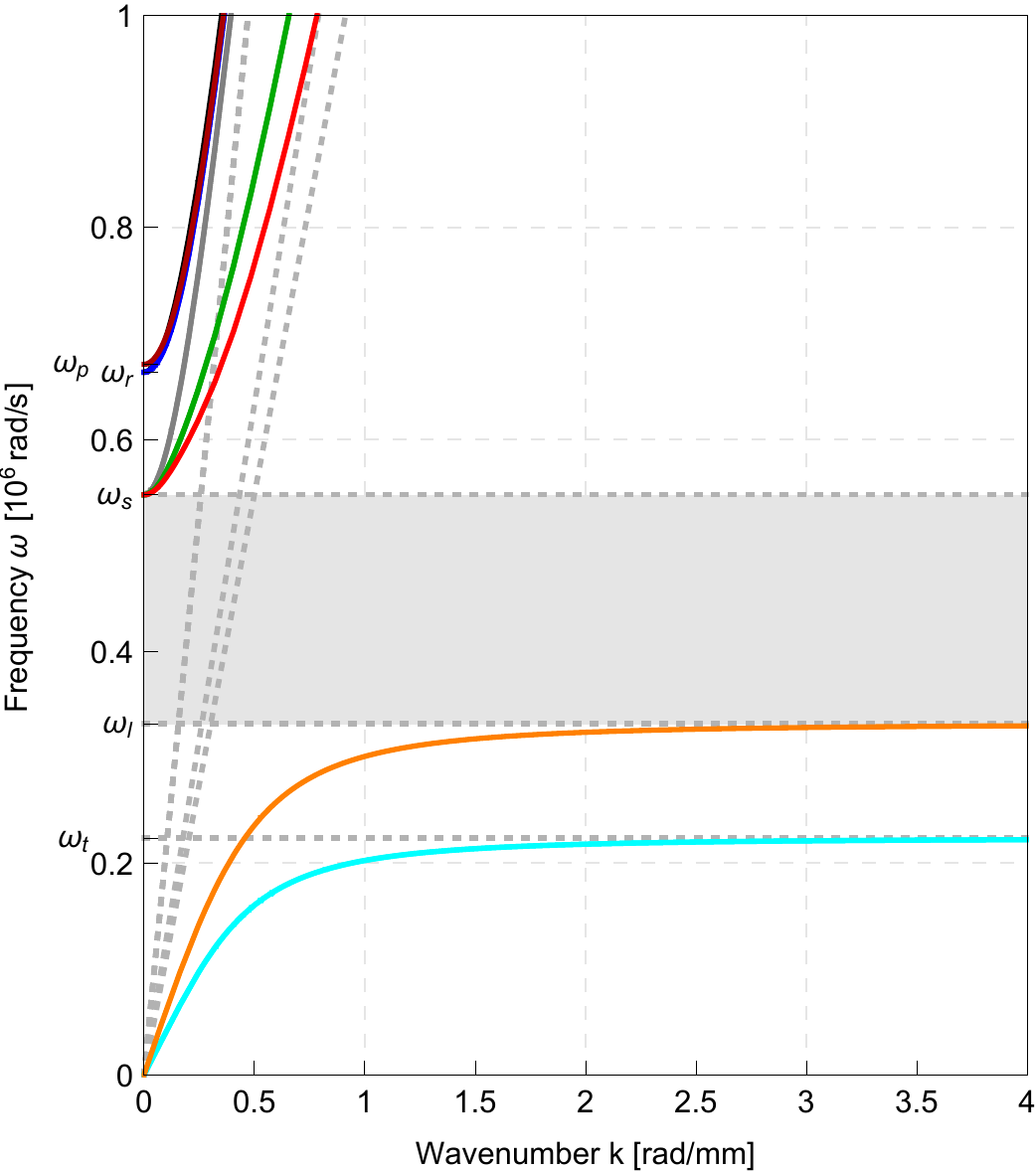} & \includegraphics[scale=0.5]{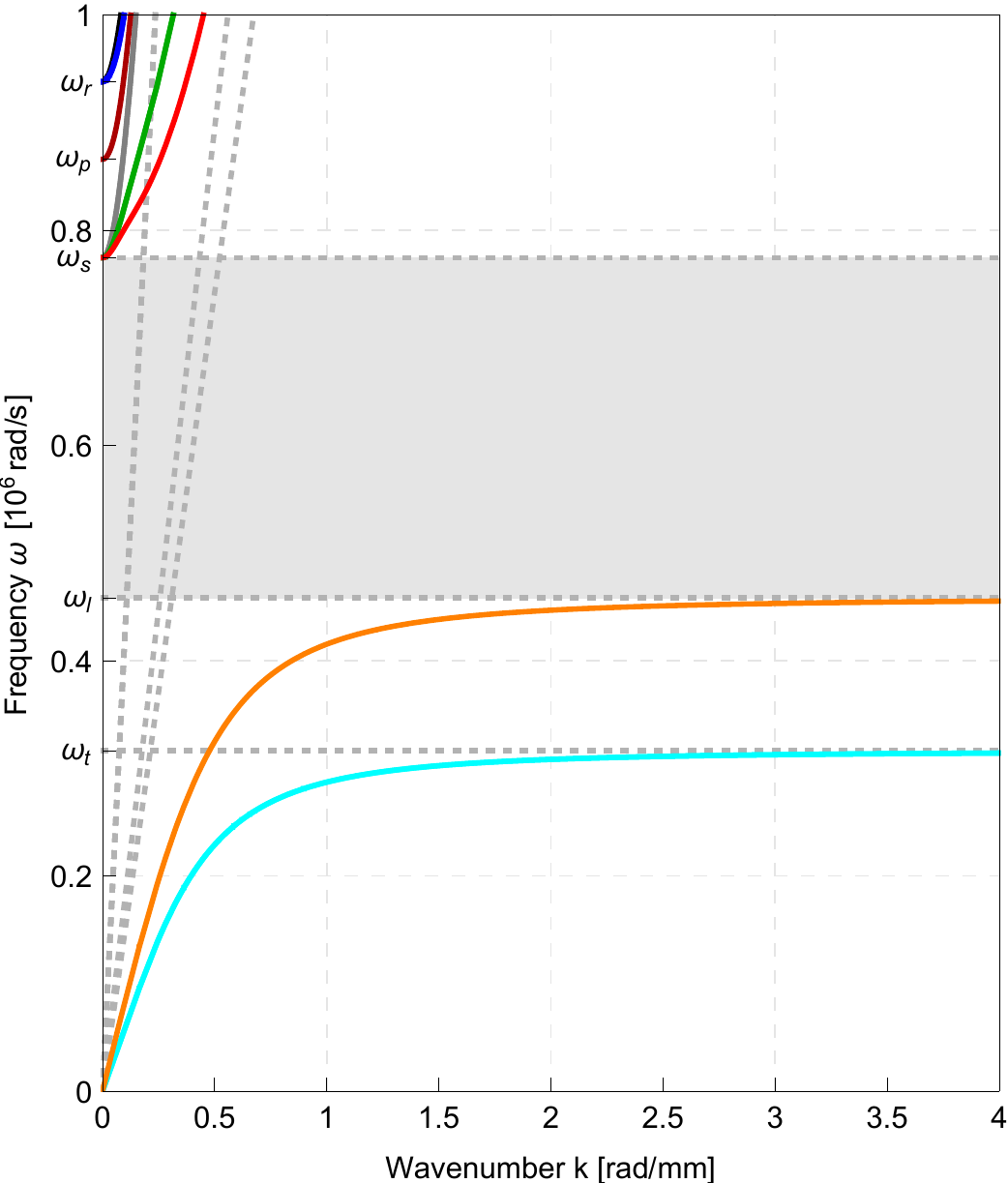}\tabularnewline
$\me=2\cdot10^{8},\mh=10^{8},\mc=4.4\cdot10^{8}$ & $\me=10^{9},\mh=5\cdot10^{8},\mc=22\cdot10^{8}$ & $\me=2\cdot10^{9},\mh=10^{9},\mc=4.4\cdot10^{9}$\tabularnewline
\end{tabular}\caption{Effect of the parameters $\protect\me$, $\protect\mh$ and $\protect\mc$
on the dispersion curves.}
\end{figure}

\subsubsection{Case ${\displaystyle \protect\mh\protect\fr+\infty}$ ``Cosserat
limit''}

\begin{figure}[H]
\centering{}%
\begin{tabular}{ccc}
\includegraphics[scale=0.5]{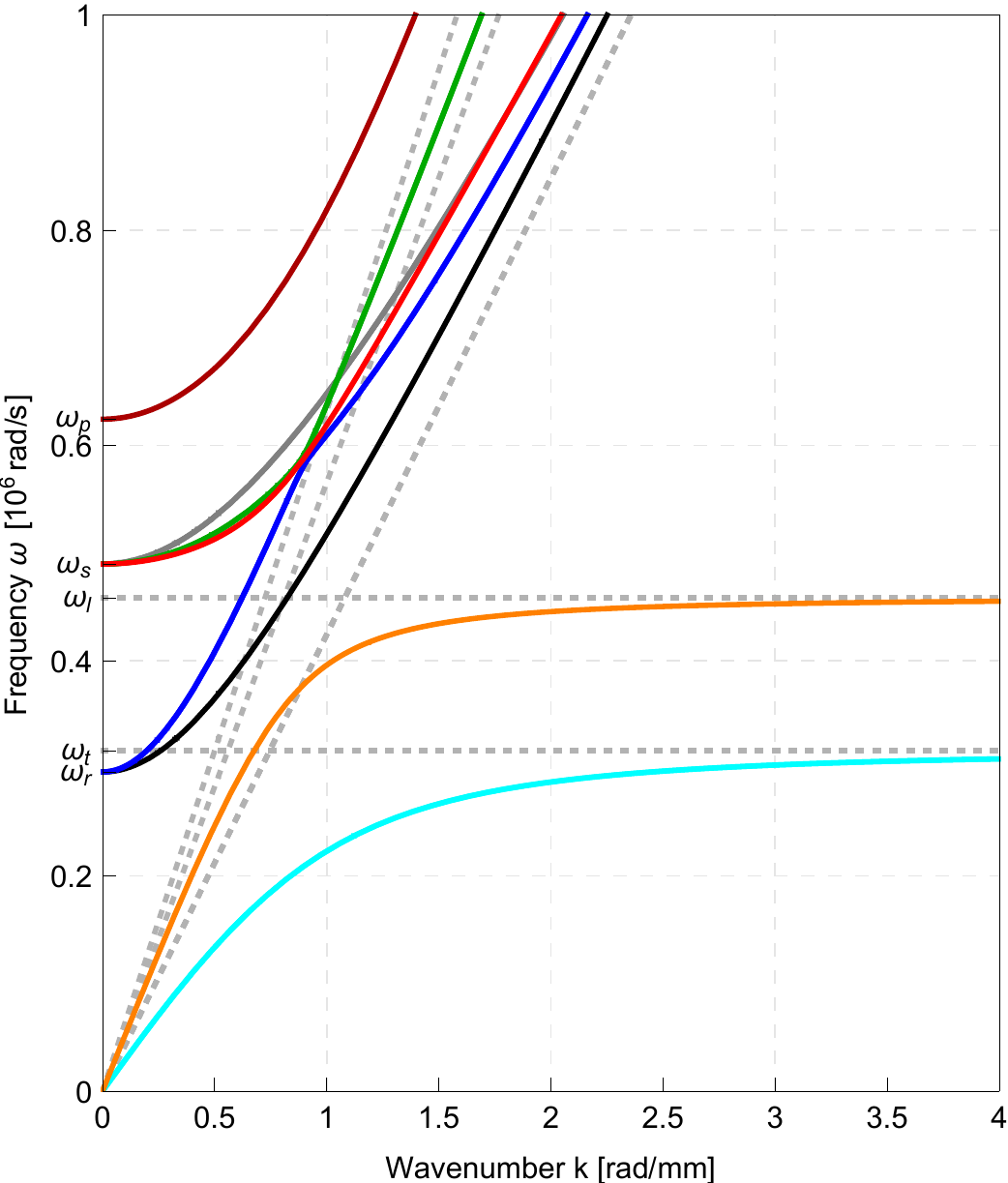} & \includegraphics[scale=0.5]{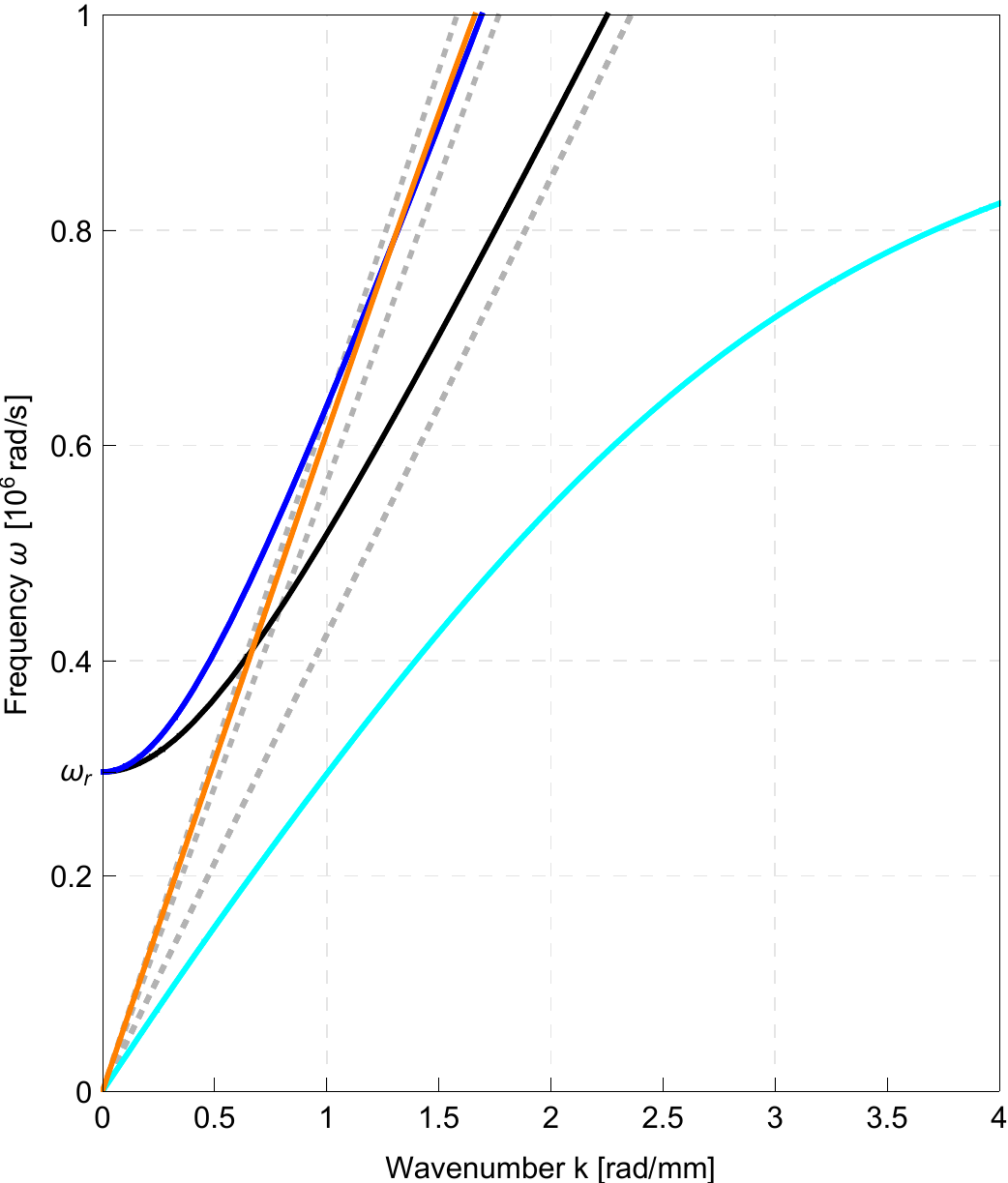} & \includegraphics[scale=0.5]{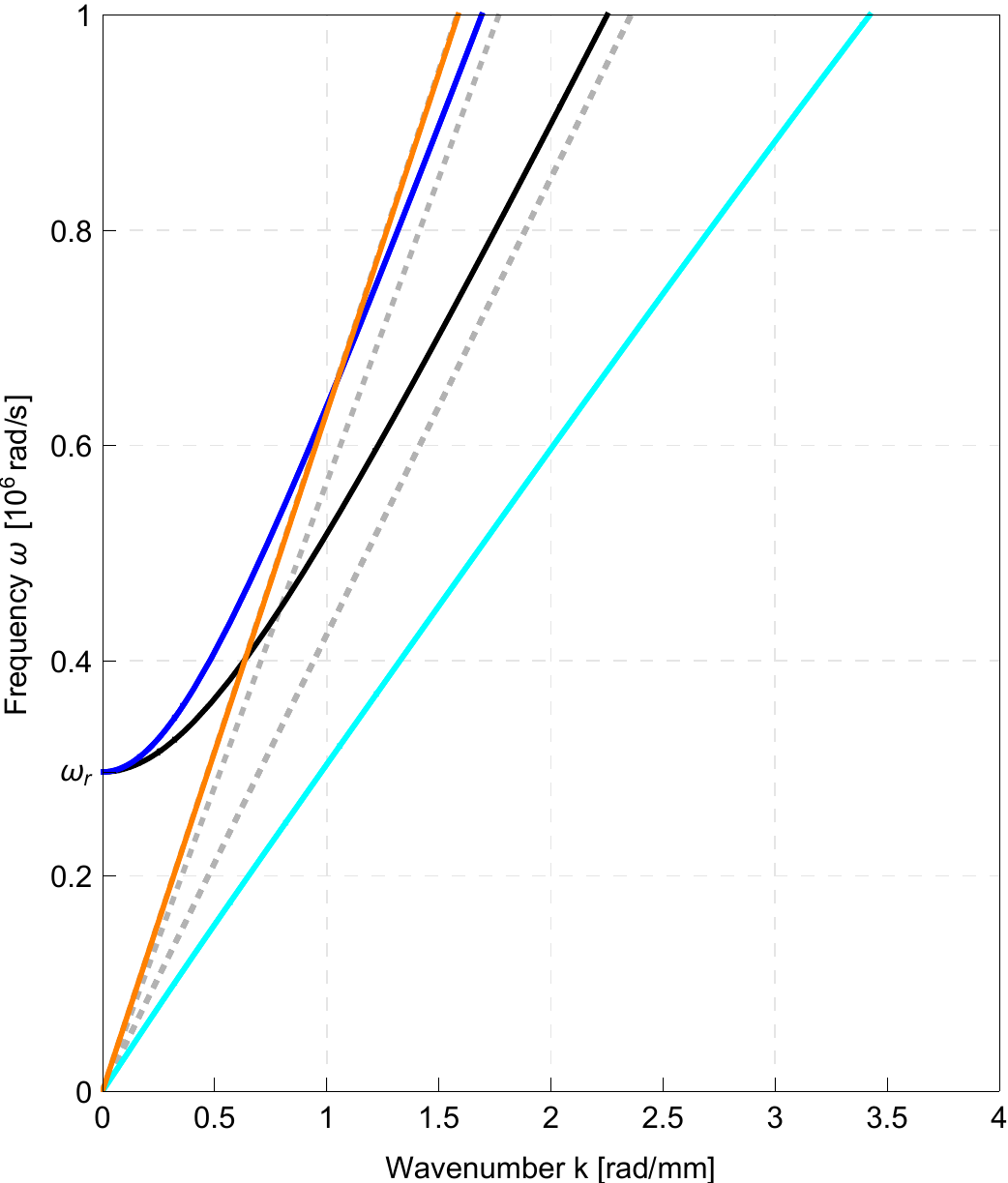}\tabularnewline
$\mh=10^{9}$ & $\mh=10^{10}$ & $\mh=2\cdot10^{11}$\tabularnewline
\end{tabular}\caption{Effect of the parameter $\protect\mh$ on the dispersion curves.\label{fig:coss}}
\end{figure}
The effect of letting $\mh\fr+\infty$ is equivalent to let $\sym\,\P\fr0$.
This means that only the skew-symmetric part $\skew\,\P$ of the micro-distortion
tensor remains both in the elastic and kinetic energies that thus
tend to the Cosserat energies given in eqs. (\ref{eq:coss1}) and
(\ref{eq:coss2}).

It is clearly seen from Fig. \ref{fig:coss} that an increasing value
of $\mh$ directly acts on the acoustic dispersion curves which become
straight lines. Moreover, the optic curves originating from $\omega_{s}$
and $\omega_{p}$ disappear from the dispersion diagrams since the
two cut-offs tend to infinity. If we compare this limit case with
the Cosserat model that we have given in (\ref{fig:Cauchy - Cosserat})
we can remark a perfect concordance.

\subsubsection{Case ${\displaystyle \protect\mh,\mu_{c}\protect\fr+\infty\textrm{ and }}{\displaystyle \eta\protect\fr0}$
``indeterminate couple stress theory''}

Letting $\mh,\mc\fr+\infty$ is equivalent to set $\sym\,\P\fr0,$
and $\P\fr\skew\P$ and $\skew\,\P\fr\skew\,\nabla u$. The corresponding
strain energy density gives rise to the so called indeterminate couple
stress model \cite{neff2014unifying,munch2010nonlinear}: since there
are no degrees of freedom related to $\P$ that remain active, the
kinematics reduces to the simple displacement field. For this reason
the term related to the micro-distortion $\P$ in the kinetic energy
must be neglected by setting $\eta\fr0$.

\begin{figure}[H]
\centering{}%
\begin{tabular}{ccc}
\includegraphics[scale=0.5]{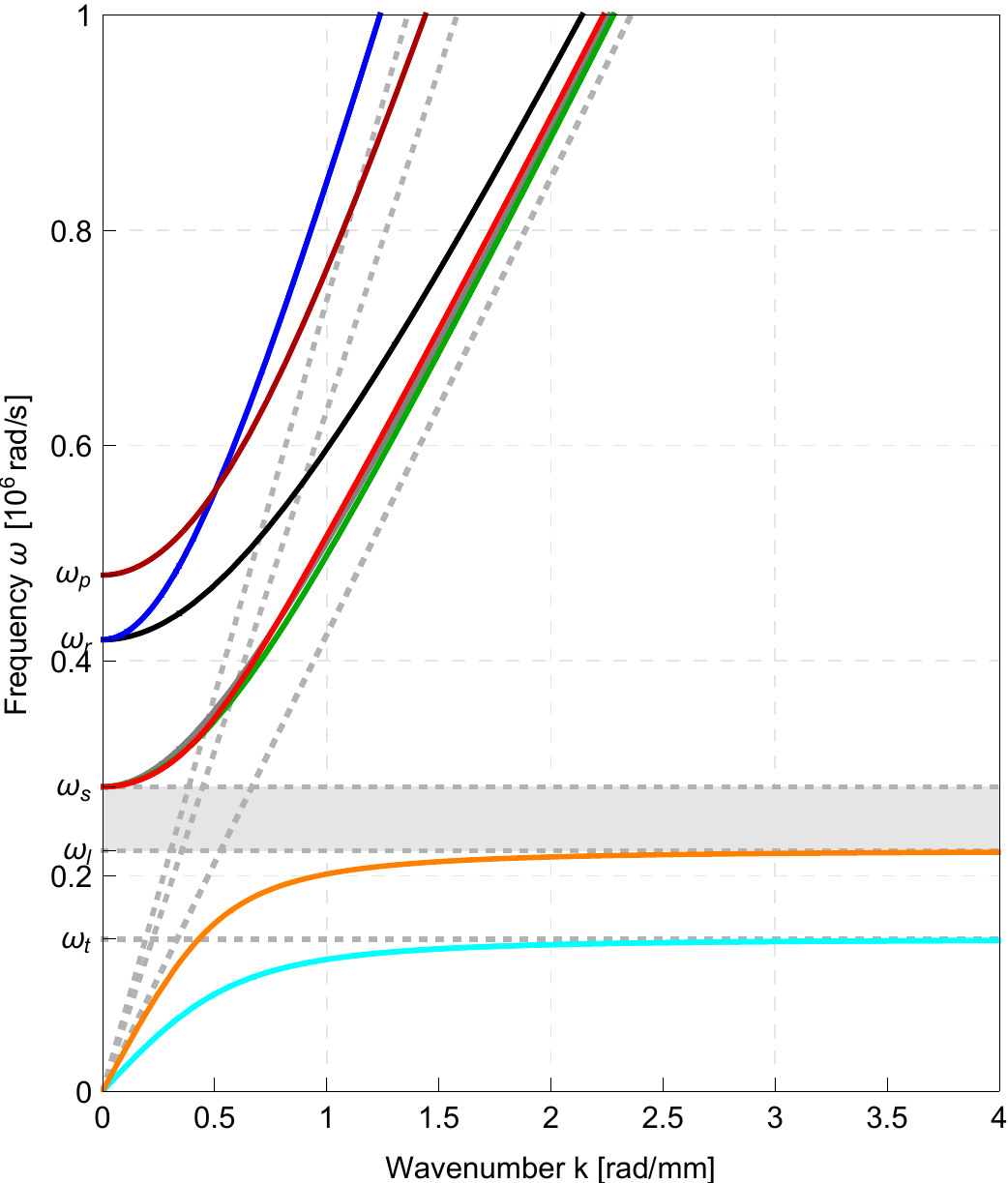} & \includegraphics[scale=0.5]{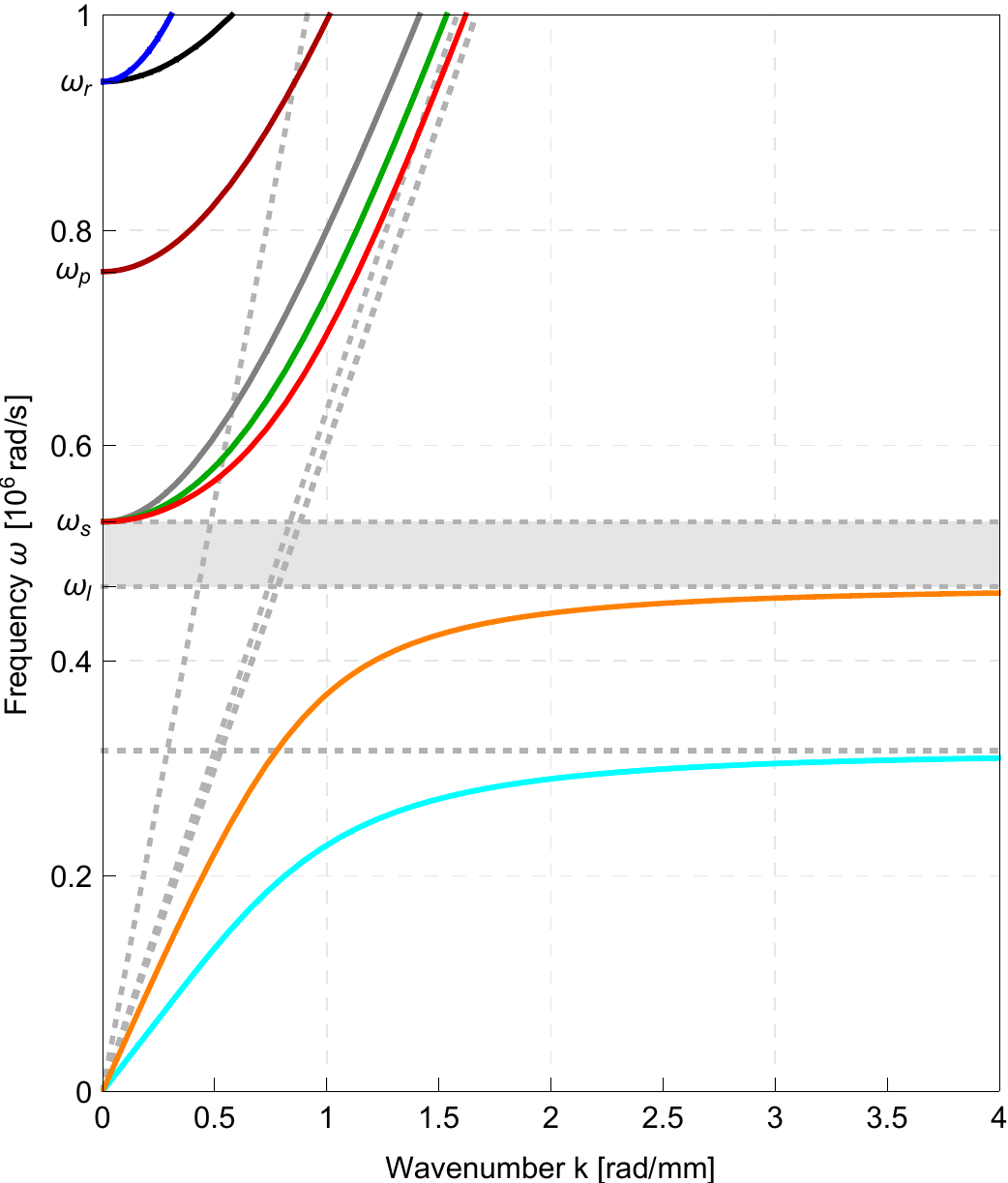} & \includegraphics[scale=0.5]{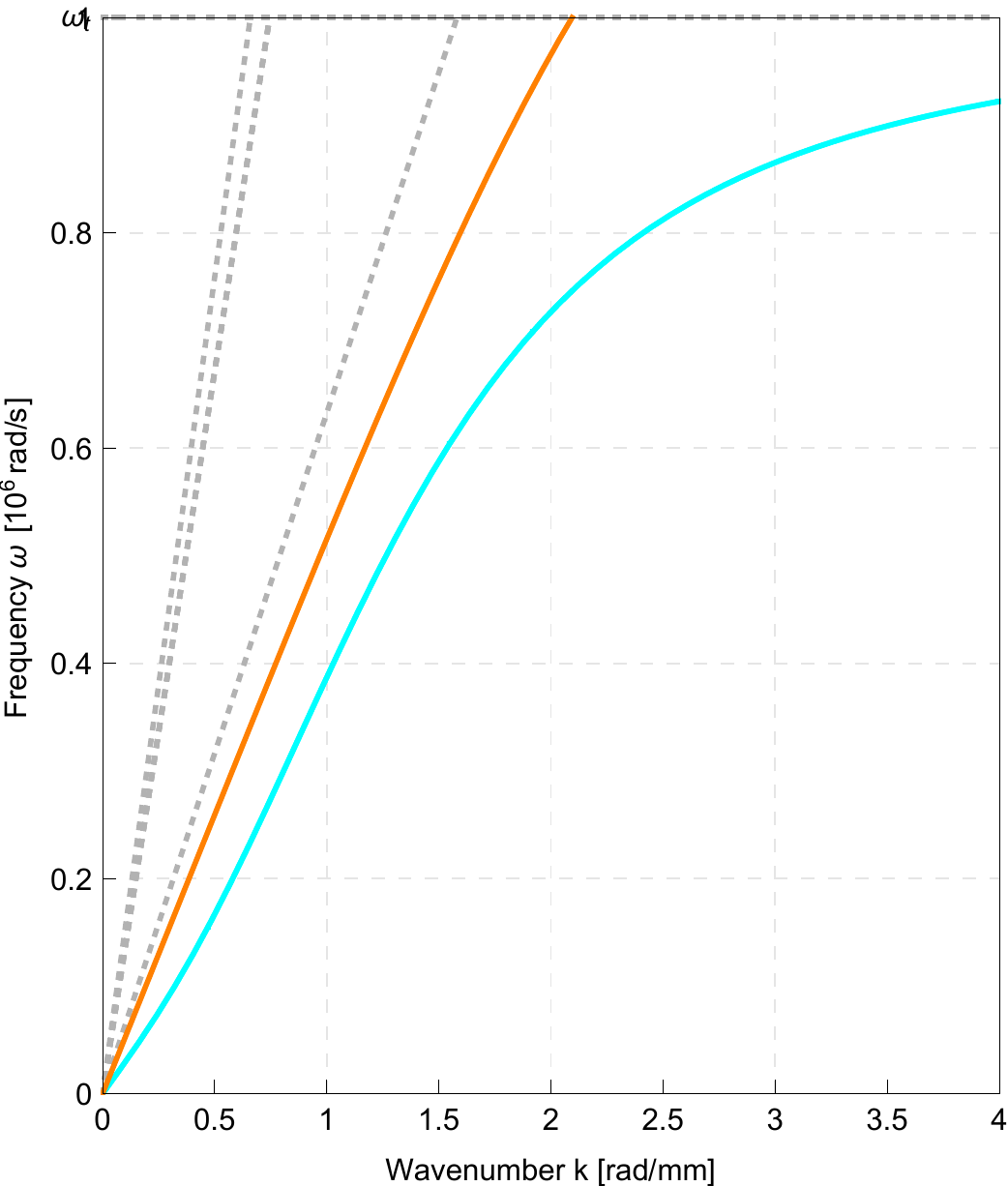}\tabularnewline
$\mc=8.8\cdot10^{8},\mh=2\cdot10^{8}$ & $\mc=2.2\cdot10^{9},\mh=5\cdot10^{8}$ & $\mc=4.4\cdot10^{9},\mh=10^{9}$\tabularnewline
$\eta_{1}=\eta_{2}=\eta_{3}=10^{-2}$ & $\eta_{1}=\eta_{2}=\eta_{3}=0.5\cdot10^{-2}$ & $\eta_{1}=\eta_{2}=\eta_{3}=10^{-3}$\tabularnewline
\includegraphics[scale=0.5]{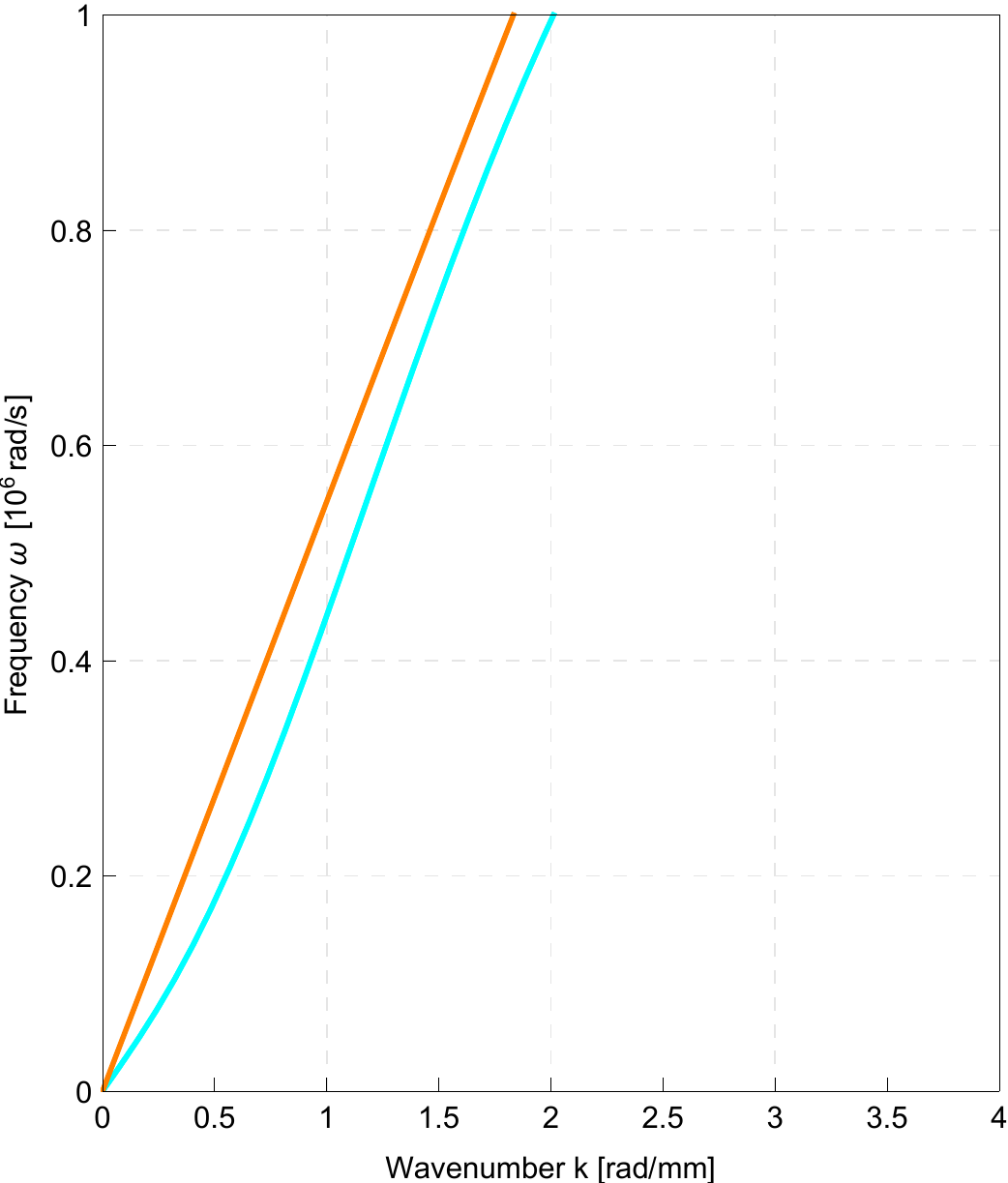} & \includegraphics[scale=0.5]{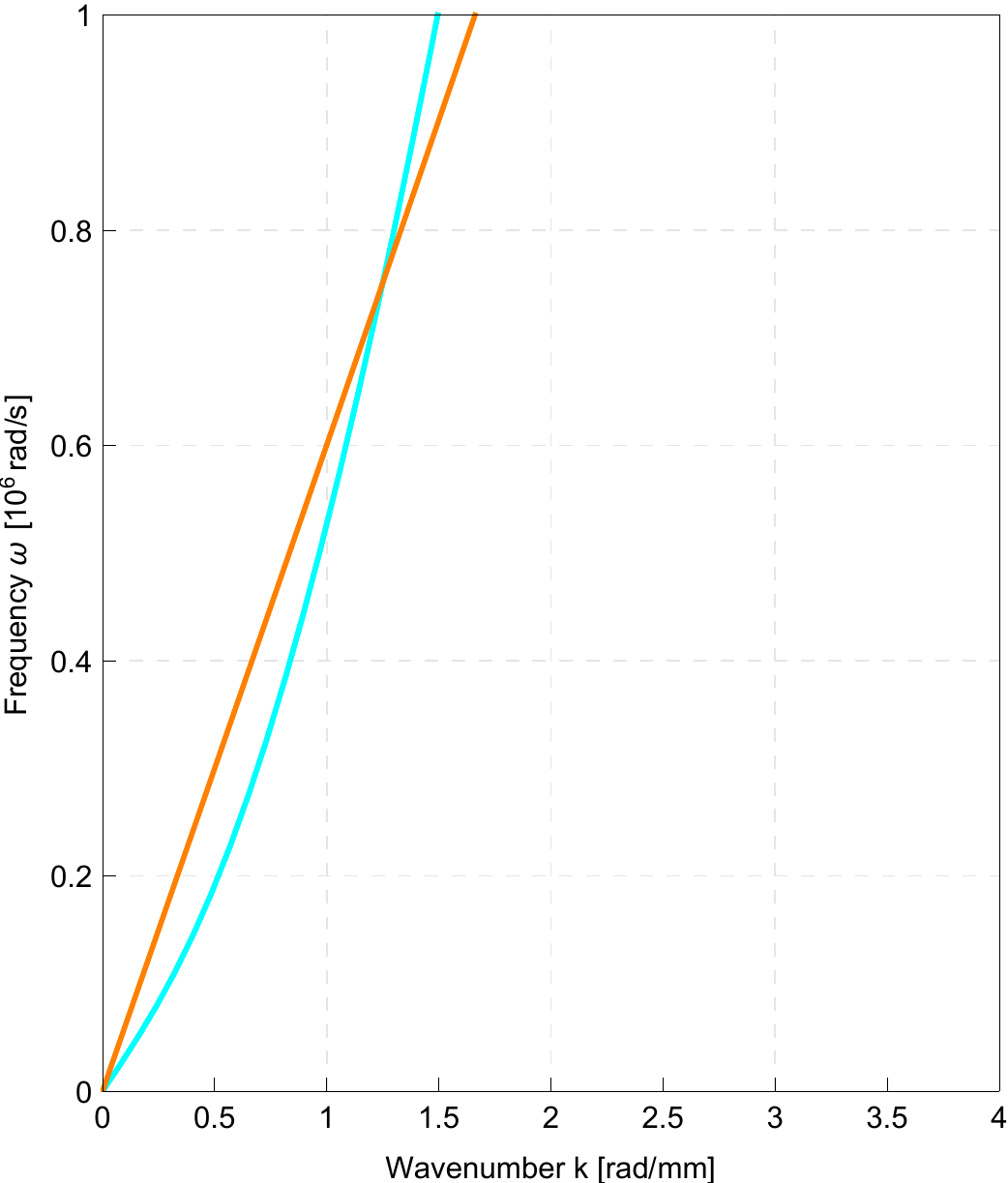} & \includegraphics[scale=0.5]{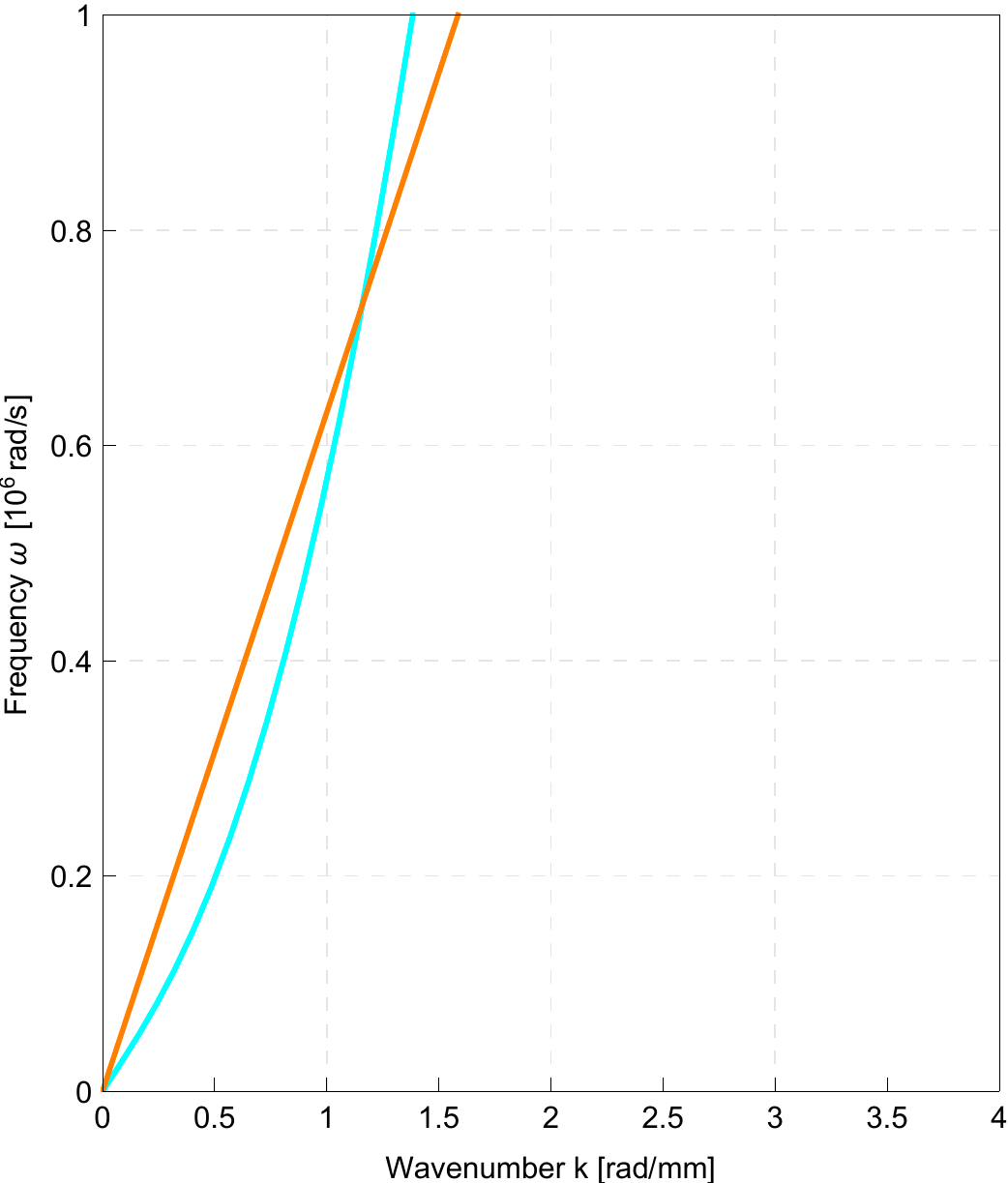}\tabularnewline
$\mc=6.6\cdot10^{9},\mh=1.5\cdot10^{9}$ & $\mc=2.2\cdot10^{10},\mh=5\cdot10^{9}$ & $\mc=4.4\cdot10^{11},\mh=10^{11}$\tabularnewline
$\eta_{1}=\eta_{2}=\eta_{3}=0.5\cdot10^{-3}$ & $\eta_{1}=\eta_{2}=\eta_{3}=10^{-4}$ & $\eta_{1}=\eta_{2}=\eta_{3}=10^{-5}$\tabularnewline
\end{tabular}\caption{Effect of the parameters $\protect\mc,\protect\mh,\eta_{1},\eta_{2},\eta_{3}$
on the dispersion curves.}
\end{figure}
As expected, and as it is known for second gradient theories, only
two acoustic curves are found as it was the case for classical elasticity.
The only extra feature with respect to the classical elasticity is
that higher gradient models may account for some dispersive effects.
In order to have a direct comparison with the indeterminate couple
stress model, we have directly implement the indeterminate couple
stress model considering the deformation energy (the relative strong
equations can be found in the appendix)
\begin{align*}
W_{\textrm{ind}}\left(\sym\,\nabla u,\nabla\skew\,\nabla u\right) & =\me\left\Vert \sym\,\nabla u\right\Vert ^{2}+\frac{\le}{2}\left(\textrm{tr}\,\nabla u\right)^{2}+\me\frac{L_{c}^{2}}{2}\left\Vert \nabla\left(\skew\,\nabla u\right)\right\Vert ^{2}.
\end{align*}
With the same choice of the material parameters, we obtain the following
dispersion curves:
\begin{figure}[H]
\centering{}%
\begin{tabular}{c}
\includegraphics[scale=0.5]{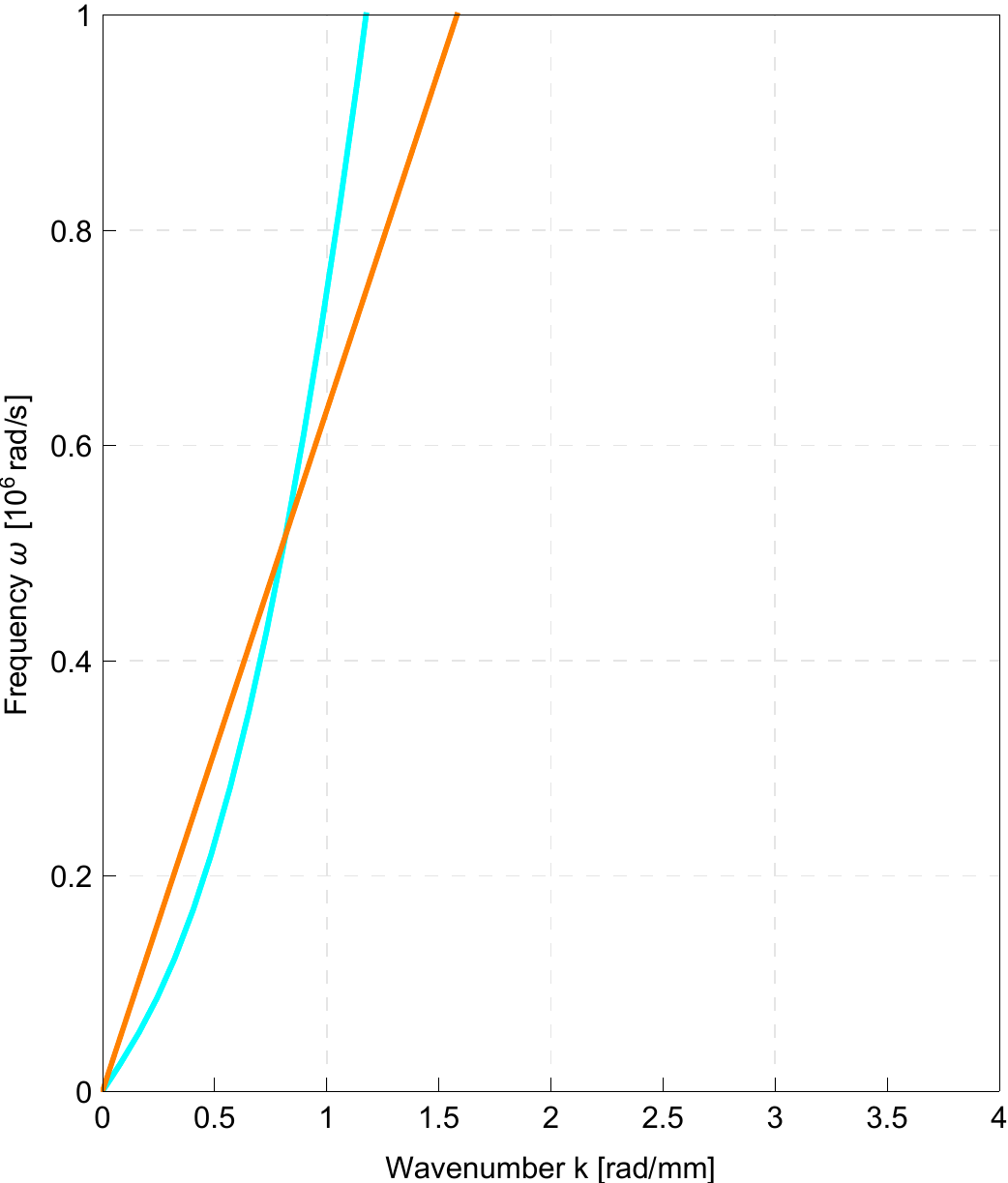}\tabularnewline
\end{tabular}\caption{Dispersion curves indeterminate couple stress model.}
\end{figure}

\section*{Conclusion}

In this paper we present for the first time the ``weighted'' relaxed
micromorphic model, by introducing the Cartan-Lie decomposition of
the second order tensors $\P_{,t}$ and $\curl\,\P$ in the kinetic
and elastic energies, respectively. It is found that the split of
the tensor $\P_{,t}$ in the micro-inertia provides a unique feature
to the model, namely the possibility of separately controlling the
cut-offs of the optic curves in the dispersion diagram. This is an
essential feature in view of the calibration of the relaxed micromorphic
parameters on real band-gap metamaterials. The split of the second
order tensor $\curl\,\P$ presents some effects on the dispersion
curves which are less clearly related to possible physical situations.
In general, we can say that the term $\dev\,\sym\,\curl\,\P$ governs,
to a large extent, the non-locality in the considered metamaterials
since it is able to give rise to a horizontal optic curve. On the
other hand, the term $\skew\,\curl\,\P$ is able to grant the onset
of (extra) horizontal asymptotes for some optic curves. No specific
effect can be attributed to the term $\textrm{tr}\left(\curl\,\P\right)$.

Another important result of the present paper is that of showing the
fundamental role of micro-inertia terms when dealing with enriched
continua. It is shown that both the cases $\eta\fr0$ and $\eta\fr\infty$
give rise to a Cauchy-type situation in which only 2 straight acoustic
branches can be recognized. Nevertheless, the physical meaning associated
to such two cases is completely different. Indeed, setting $\eta=0$
in a model with enriched kinematics can be considered to be a mistake
since one gives a complex and rich behavior to the elastic energy
(through a particular dependence on $\P$), but one does not allow
the exploitation of such constitutive behavior due to the absence
of the associated micro-inertia. It is hence not astonishing that,
no matter how complex is the constitutive choice for the elastic energy
(Mindlin, relaxed micromorphic, Cauchy, second gradient, etc.), the
resulting dispersion curves are basically those of classical elasticity:
two straight lines starting from the origin. The problem is simply
that we do not give to the model the possibility to express its dynamical
behavior since there is no micro-inertia that is able to trigger micro-vibrations.

On the other hand, the case $\eta\fr\infty$ is phenomenologically
different: we introduce the micro-inertia in the model, but it is
so ``high'' that the microstructure is ``frozen'' and this results
in Cauchy-like materials. We leave to a forthcoming paper the task
of studying in greater detail the importance of the role of micro-inertia
in enriched continuum mechanics. Finally, some parametric studies
on all of the introduced constitutive parameters are performed, thus
giving a complete panorama of all possible dispersion patterns which
are attainable in the relaxed micromorphic framework.

\section{Acknowledgments}

Angela Madeo thanks INSA-Lyon for the funding of the BQR 2016 \textquotedbl{}Caractérisation
mécanique inverse des métamatériaux: modélisation, identification
expérimentale des paramètres et évolutions possibles\textquotedbl{},
as well as the CNRS-INSIS for the funding of the PEPS project.

\section{Appendix }

\subsection{Variation of the kinetic energy}

\bigskip{}
In order to derive the Euler-Lagrange equations, we need to assume
a stronger regularity for the kinematical fields: 

\[
\left(u,P\right)\in\mathscr{C}^{2}\left(\overline{\Omega}\times I,\R^{3}\right)\times\mathscr{C}^{2}\left(\overline{\Omega}\times I,\R^{3\times3}\right).
\]
Having that 
\[
\L\in\mathscr{C}^{2}\left(\R^{3}\times\R^{3\times3}\times\R^{3\times3}\times\R^{3\times3}\times\R^{3\times3}\right),
\]
the action functional $\mathscr{A}_{\L}$ is Fréchet differentiable
(and so Gâteaux differentiable) on the affine subspace 
\[
\mathcal{Q}^{2}:=\left\{ \left(\u,\P\right)\in\mathscr{C}^{2}\left(\overline{\Omega}\times I,\R^{3}\right)\times\mathscr{C}^{2}\left(\overline{\Omega}\times I,\R^{3\times3}\right):\left(\u,\P\right)\textrm{ verifies conditions }\left(\mathsf{B}_{1}\right)\textrm{ and }\left(\mathsf{B}_{2}\right)\right\} .
\]
 Its differential\footnote{$\mathcal{Q}_{0}:=\left\{ \left(\delta\u,\delta\P\right)\in\mathscr{C}_{c}^{\infty}\left(\Omega\times I,\R^{3}\right)\times\mathscr{C}_{c}^{\infty}\left(\Omega\times I,\R^{3\times3}\right)\right\} $
is the vector space of admissible variations (test functions).} at a point $\left(u,\P\right)$ 
\[
\delta\mathscr{A}_{\L\left(u,\P\right)}:\mathcal{Q}_{0}\fr\R,
\]
evaluated at the admissible variation $\left(\delta\u,\delta\P\right)$,
is given by the variation of the part associated to the kinetic energy
\[
\delta\int_{I}\int_{\Omega}J\left(\u_{,t},\P_{,t}\right)\,dm\,dt
\]
and of that associated to the potential energy

\[
\delta\int_{I}\int_{\Omega}W\left(\nabla\u,\P,\curl\,\p\right)\,dm\,dt.
\]
We compute here the part of the action functional associated to the
kinetic energy:

\begin{gather*}
\delta\int_{I}\int_{\Omega}J\left(\u_{,t},\P_{,t}\right)\,dm\,dt=\int_{I}\int_{\Omega}\left[\left\langle D_{\u_{,t}}J\left(\u_{,t},\P_{,t}\right),\delta\u_{,t}\right\rangle +\left\langle D_{\P_{,t}}J\left(\u_{,t},\P_{,t}\right),\delta\P_{,t}\right\rangle \right]dm\,dt=\\
=\int_{I}\int_{\Omega}\frac{1}{2}\left[\left\langle D_{\u_{,t}}\left(\rho\left\langle \u_{,t},\u_{,t}\right\rangle \right),\delta\u_{,t}\right\rangle +\left\langle D_{\P_{,t}}\left(\eta_{\,1}\left\Vert \textrm{dev sym}\,\P_{,t}\right\Vert ^{2}+\eta_{\,2}\left\Vert \textrm{skew}\,\P_{,t}\right\Vert ^{2}+\frac{1}{3}\,\eta_{\,3}\left(\textrm{tr}\,\P_{,t}\right)^{2}\right),\delta\P_{,t}\right\rangle \right]dm\,dt=\\
=\int_{I}\int_{\Omega}\left[\rho\left\langle \u_{,t},\delta\u_{,t}\right\rangle +\frac{1}{2}\left\langle D_{\P_{,t}}\left(\eta_{\,1}\left\Vert \textrm{dev sym}\,\P_{,t}\right\Vert ^{2}+\eta_{\,2}\left\Vert \textrm{skew}\,\P_{,t}\right\Vert ^{2}+\frac{1}{3}\,\eta_{\,3}\left(\textrm{tr}\,\P_{,t}\right)^{2}\right),\delta\P_{,t}\right\rangle \right]dm\,dt=
\end{gather*}
\begin{gather*}
=\underbrace{\int_{I}\int_{\Omega}\rho\left\langle \u_{,t},\delta\u_{,t}\right\rangle dm\,dt}_{I}+\underbrace{\int_{I}\int_{\Omega}\eta_{\,1}\left\langle \textrm{dev sym}\,\P_{,t},\textrm{dev sym}\,\delta\P_{,t}\right\rangle dm\,dt}_{II}\\
+\underbrace{\int_{I}\int_{\Omega}\eta_{\,2}\left\langle \textrm{skew}\,\P_{,t},\textrm{skew}\,\delta\P_{,t}\right\rangle dm\,dt}_{III}+\underbrace{\int_{I}\int_{\Omega}\frac{1}{3}\,\eta_{\,3}\textrm{tr}\,\P_{,t}\,\textrm{tr}\,\delta\P_{,t}dm\,dt}_{IV}.
\end{gather*}
In order to find the Euler-Lagrange equations, we have to integrate
by parts, with respect to the time derivative, the four parts $I,II,III,IV$
:
\begin{align*}
{\displaystyle I} & =\rho\int_{\Omega}\left(\left.\left\langle \u_{,t},\delta\u\right\rangle \right|_{a}^{b}-\int_{I}\left\langle \u_{,tt},\delta\u\right\rangle dt\right)dm,\\
{\displaystyle II} & =\eta_{1}\int_{\Omega}\Bigg(\left.\left\langle \textrm{dev sym}\,\P_{,t},\textrm{dev sym}\,\delta\P\right\rangle \right|_{a}^{b}-\int_{I}\left\langle \textrm{dev sym}\,\P_{,tt},\textrm{dev sym}\,\delta\P\right\rangle dt\Bigg)dm,\\
III & {\displaystyle \,=\eta_{2}\int_{\Omega}\left(\left.\left\langle \textrm{skew}\,\P_{,t},\textrm{skew}\,\delta\P\right\rangle \right|_{a}^{b}-\int_{I}\left\langle \textrm{skew}\,\P_{,tt},\textrm{skew}\,\delta\P\right\rangle dt\right)dm,}\\
IV & {\displaystyle \,=\frac{\eta_{3}}{3}\int_{\Omega}\left(\left.\textrm{tr}\,\P_{,t}\,\textrm{tr}\,\delta\P\right|_{a}^{b}-\int_{I}\textrm{tr}\,\P_{,tt}\,\textrm{tr}\,\delta\P\,dt\right)dm.}
\end{align*}
Considering that
\begin{align}
\left\langle \textrm{dev sym}\,\P_{,tt},\textrm{dev sym}\,\delta\P\right\rangle  & =\left\langle \textrm{dev sym}\,\P_{,tt},\delta\P\right\rangle ,\nonumber \\
\left\langle \textrm{skew}\,\P_{,tt},\textrm{skew}\,\delta\P\right\rangle  & =\left\langle \textrm{skew}\,\P_{,tt},\delta\P\right\rangle ,\nonumber \\
\textrm{tr}\,\P_{,tt}\,\textrm{tr}\,\delta\P & =\left\langle \textrm{tr}\left(\P_{,tt}\right)\id,\frac{1}{3}\,\textrm{tr}\left(\delta\P\right)\id\right\rangle =\left\langle \textrm{tr}\left(\P_{,tt}\right)\id,\delta\P\right\rangle ,\label{eq:trace}
\end{align}
we find that

\begin{align*}
I & {\displaystyle \,=\rho\int_{\Omega}\left(\left.\left\langle \u_{,t},\delta\u\right\rangle \right|_{a}^{b}-\int_{I}\left\langle \u_{,tt},\delta\u\right\rangle dt\right)dm,}\\
II & {\displaystyle \,=\eta_{1}\int_{\Omega}\left(\left.\left\langle \textrm{dev sym}\,\P_{,t},\delta\P\right\rangle \right|_{a}^{b}-\int_{I}\left\langle \textrm{dev sym}\,\P_{,tt},\delta\P\right\rangle dt\right)dm,}\\
III & \,=\eta_{2}\int_{\Omega}\left(\left.\left\langle \textrm{skew}\,\P_{,t},\delta\P\right\rangle \right|_{a}^{b}-\int_{I}\left\langle \textrm{skew}\,\P_{,tt},\delta\P\right\rangle dt\right)dm,\\
IV & {\displaystyle \,=\frac{\eta_{3}}{3}\int_{\Omega}\left(\left.\left\langle \textrm{tr}\left(\P_{,t}\right)\id,\delta\P\right\rangle \right|_{a}^{b}-\int_{I}\left\langle \textrm{tr}\left(\P_{,tt}\right)\id,\delta\P\right\rangle dt\right)dm.}
\end{align*}
So considering only the bulk part , we have
\[
\delta\int_{I}\int_{\Omega}J\left(\u_{,t},\P_{,t}\right)\,dm\,dt=-\int_{\Omega}\int_{I}\left(\left\langle \u_{,tt},\delta\u\right\rangle +\left\langle \eta_{\,1}\,\textrm{dev sym}\,\P_{,tt}+\eta_{\,2}\,\textrm{skew}\P_{,tt}+\frac{1}{3}\,\eta_{\,3}\,\textrm{tr}\left(\P_{,tt}\right)\id,\delta\P\right\rangle \right)dt\,dm.
\]

\subsection{Variation of the part $\mathsf{B}$ of the potential energy}

\bigskip{}
Remembering (\ref{eq:id div}), the first variation of the action
functional is computed thanks to the following identities:
\begin{align}
\left\langle \ds\,\curl\,\p,\delta\,\ds\,\curl\,\p\right\rangle  & =\left\langle \ds\,\curl\,\p,\curl\,\delta\P\right\rangle _{\R^{3\times3}}=\sum_{i=1}^{3}\left\langle \left(\ds\,\curl\,\p\right)_{i},\left(\curl\,\delta\P\right)_{i}\right\rangle _{\R^{3}}\nonumber \\
 & =\sum_{i=1}^{3}\left\langle \left(\ds\,\curl\,\p\right)_{i},\textrm{curl}\left(\delta\P\right)_{i}\right\rangle _{\R^{3}}\nonumber \\
 & =-\sum_{i=1}^{3}\left(\textrm{div}\left(\left(\ds\,\curl\,\p\right)_{i}\times\left(\delta\P\right)_{i}\right)+\left\langle \textrm{curl}\left(\ds\,\curl\,\p\right)_{i},\left(\delta\P\right)_{i}\right\rangle _{\R^{3}}\right)\nonumber \\
 & =-\sum_{i=1}^{3}\textrm{div}\left(\left(\ds\,\curl\,\p\right)_{i}\times\left(\delta\P\right)_{i}\right)+\left\langle \textrm{Curl}\,\ds\,\curl\,\p,\delta\P\right\rangle _{\R^{3\times3}},\nonumber \\
\left\langle \skew\,\curl\,\p,\delta\,\skew\,\curl\,\p\right\rangle  & =\left\langle \skew\,\curl\,\p,\curl\,\delta\P\right\rangle =\sum_{i=1}^{3}\left\langle \left(\skew\,\curl\,\p\right)_{i},\left(\curl\,\delta\P\right)_{i}\right\rangle _{\R^{3}}\nonumber \\
 & =-\sum_{i=1}^{3}\textrm{div}\left(\left(\skew\,\curl\,\p\right)_{i}\times\left(\delta\P\right)_{i}\right)+\left\langle \textrm{Curl}\,\skew\,\curl\,\p,\delta\P\right\rangle _{\R^{3\times3}}\label{eq:curl-1-1}\\
\frac{1}{3}\tr\left(\curl\,\p\right)\delta\tr\left(\curl\,\p\right) & =\left\langle \frac{1}{3}\,\tr\left(\curl\,\p\right)\id,\curl\,\delta\P\right\rangle =\sum_{i=1}^{3}\left\langle \left(\frac{1}{3}\,\tr\left(\curl\,\p\right)\id\right)_{i},\left(\curl\,\delta\P\right)_{i}\right\rangle _{\R^{3}}\nonumber \\
 & =-\sum_{i=1}^{3}\textrm{div}\left(\left(\frac{1}{3}\,\tr\left(\curl\,\p\right)\id\right)_{i}\times\left(\delta\P\right)_{i}\right)+\left\langle \textrm{Curl}\,\frac{1}{3}\,\tr\left(\curl\,\p\right)\id,\delta\P\right\rangle _{\R^{3\times3}}.\nonumber 
\end{align}

\subsection{Derivation of PDEs in the new variables}

The following identities will be useful in the following.

\begin{align}
\left(\curl\,\ds\,\curl\,\P\right)_{ij} & =\varepsilon_{jmn}\left(\ds\,\curl\,\P\right)_{in,m}=\varepsilon_{jmn}\left(\frac{1}{2}\,\varepsilon_{npq}\P_{iq,p}+\frac{1}{2}\,\varepsilon_{ipq}\P_{nq,p}-\frac{1}{3}\delta_{in}\left(\curl\,\P\right)_{kk}\right)_{,m}\nonumber \\
 & \quad=\frac{1}{2}\left(\varepsilon_{jmn}\varepsilon_{npq}\P_{iq,pm}+\varepsilon_{jmn}\varepsilon_{ipq}\P_{nq,pm}\right)-\frac{1}{3}\,\varepsilon_{jmi}\varepsilon_{kpq}\P_{kq,pm},\nonumber \\
\left(\curl\,\skew\,\curl\,\P\right)_{ij} & =\varepsilon_{jmn}\left(\skew\,\curl\,\P\right)_{in,m}=\varepsilon_{jmn}\frac{1}{2}\left(\varepsilon_{npq}\P_{iq,p}-\varepsilon_{ipq}\P_{nq,p}\right)_{,m}\\
 & \quad=\frac{1}{2}\left(\varepsilon_{jmn}\varepsilon_{npq}\P_{iq,pm}-\varepsilon_{jmn}\varepsilon_{ipq}\P_{nq,pm}\right),\nonumber \\
\left(\curl\left(\frac{1}{3}\tr\left(\curl\,P\right)\id\right)\right)_{ij} & =\varepsilon_{jmn}\left(\frac{1}{3}\tr\left(\curl\,P\right)\id\right)_{in,m}=\frac{1}{3}\varepsilon_{jmn}\left(\left(\curl\,P\right)_{kk}\delta_{in}\right)_{,m}\nonumber \\
 & \quad=\frac{1}{3}\varepsilon_{jmn}\delta_{in}\left(\varepsilon_{kpq}\P_{kq,pm}\right)=\frac{1}{3}\varepsilon_{jmi}\varepsilon_{kpq}\P_{kq,pm}.\nonumber 
\end{align}
We set 
\begin{equation}
\Delta:=\left(\alpha_{1}\,\curl\,\ds\,\curl\,\P+\alpha_{2}\,\curl\,\skew\,\curl\,\P+\alpha_{3}\,\curl\left(\frac{1}{3}\tr\left(\curl\,P\right)\id\right)\right).\label{eq:delta}
\end{equation}

\subsection*{Equations in $u_{,tt}$}

\[
\rho\,\u_{,tt}=\textrm{Div}\left[2\,\me\,\sym\left(\grad\u-\P\right)+\le\,\tr\left(\grad\u-\P\right)\id+2\,\mc\,\skew\left(\grad\u-\P\right)\right],
\]
This set of equations can be rewrite as follows:
\begin{align}
\rho\,\u_{,tt} & =\textrm{Div}\left[2\,\me\,\sym\left(\grad\u-\P\right)+\le\,\tr\left(\grad\u-\P\right)\id+2\,\mc\,\skew\left(\grad\u-\P\right)\right]\nonumber \\
 & =\textrm{Div}\left[2\,\me\,\dev\,\sym\left(\grad\u-\P\right)+\left(\frac{2}{3}\me+\le\right)\,\tr\left(\grad\u-\P\right)\id+2\,\mc\,\skew\left(\grad\u-\P\right)\right]\nonumber \\
 & =\textrm{Div}\left[2\,\me\,\dev\,\sym\,\grad\u+\left(\frac{2}{3}\me+\le\right)\,\tr\left(\grad\u\right)\id+2\,\mc\,\skew\left(\grad\u\right)\right]\\
 & \quad-\textrm{Div}\left[2\,\me\,\dev\,\sym\,P+\left(\frac{2}{3}\me+\le\right)\,\tr\left(P\right)\id+2\,\mc\,\skew\left(P\right)\right].\nonumber 
\end{align}
Remembering the definitions of $u_{\left(1,k\right)},P_{\left(1k\right)},P^{D},P^{S},u_{\left[1,k\right]}-P_{\left[1k\right]}$
given in (\ref{eq:symmetric components}),(\ref{eq:skew symmetric components}),(\ref{eq:ps}),
the scalar components of this vectorial system are therefore
\begin{align}
\rho\,\u_{1,tt} & =\textrm{div}\left(2\,\me\begin{pmatrix}\frac{1}{3}\left(2\,u_{1,1}-u_{2,2}-u_{3,3}\right)-P^{D}\\
u_{\left(1,2\right)}-P_{\left(12\right)}\\
u_{\left(1,3\right)}-P_{\left(13\right)}
\end{pmatrix}+\left(\frac{2}{3}\,\me+\le\right)\begin{pmatrix}\sum_{\alpha}u_{\alpha,\alpha}-3\,P^{S}\\
0\\
0
\end{pmatrix}+2\,\mc\begin{pmatrix}0\\
u_{\left[1,2\right]}-P_{\left[12\right]}\\
u_{\left[1,3\right]}-P_{\left[13\right]}
\end{pmatrix}\right),\nonumber \\
\nonumber \\
\rho\,\u_{2,tt} & =\textrm{div}\left(2\,\me\begin{pmatrix}u_{\left(1,2\right)}-P_{\left(12\right)}\\
\frac{1}{3}\left(2\,u_{2,2}-u_{1,1}-u_{3,3}\right)-P_{2}^{D}\\
u_{\left(2,3\right)}-P_{\left(23\right)}
\end{pmatrix}+\left(\frac{2}{3}\,\me+\le\right)\begin{pmatrix}0\\
\sum_{\alpha}u_{\alpha,\alpha}-3\,P^{S}\\
0
\end{pmatrix}+2\,\mc\begin{pmatrix}-u_{\left[1,2\right]}+P_{\left[12\right]}\\
0\\
u_{\left[1,3\right]}-P_{\left[13\right]}
\end{pmatrix}\right),\label{eq:divergences}\\
\rho\,\u_{3,tt} & =\textrm{div}\left(2\,\me\begin{pmatrix}u_{\left(1,3\right)}-P_{\left(13\right)}\\
u_{\left(2,3\right)}-P_{\left(23\right)}\\
\frac{1}{3}\left(2\,u_{1,1}-u_{2,2}-u_{3,3}\right)-P_{3}^{D}
\end{pmatrix}+\left(\frac{2}{3}\,\me+\le\right)\begin{pmatrix}0\\
0\\
\sum_{\alpha}u_{\alpha,\alpha}-3\,P^{S}
\end{pmatrix}+2\,\mc\begin{pmatrix}-u_{\left[1,3\right]}+P_{\left[13\right]}\\
-u_{\left[2,3\right]}+P_{\left[23\right]}\\
0
\end{pmatrix}\right)\,.\nonumber 
\end{align}
Thanks to the hypothesis of dependence only on $x_{1}$, we have that
$u_{2,2},u_{3,3},u_{3,2},u_{2,3}$ are zero and $u_{\left(1,2\right)}=\frac{1}{2}\,u_{2,1},u_{\left(1,3\right)}=\frac{1}{2}\,u_{3,1},$$\,u_{\left[1,2\right]}=-\frac{1}{2}\,u_{2,1},u_{\left[1,3\right]}=-\frac{1}{2}\,u_{3,1}$.
Thus the first equation of (\ref{eq:divergences}) becomes
\begin{align*}
\rho\,\u_{1,tt} & =\left(2\,\me\left(\frac{2}{3}\,u_{1,1}-P^{D}\right)+\left(\frac{2}{3}\,\me+\le\right)\left(u_{1,1}-3\,P^{S}\right)\right)_{,1}\\
 & \quad+\underbrace{\left(2\,\me\left(\frac{1}{2}\,u_{2,1}-P_{\left(12\right)}\right)+2\,\mc\left(-\frac{1}{2}\,u_{2,1}-P_{\left[12\right]}\right)\right)_{,2}}_{=0\quad\textrm{(plane wave)}}+\underbrace{\left(2\,\me\left(\frac{1}{2}\,u_{3,1}-P_{\left(13\right)}\right)+2\,\mc\left(-\frac{1}{2}\,u_{3,1}-P_{\left[13\right]}\right)\right)_{,3}}_{=0\quad\textrm{(plane wave)}}\\
 & =\left(2\,\me+\le\right)u_{1,11}-2\,\me\,P_{,1}^{D}-3\left(\frac{2}{3}\me+\le\right)P_{,1}^{S}.
\end{align*}
Dividing by $\rho$ and remembering the definition of $c_{p}$ given
in (\ref{eq:asintoti obliqui}) we can rewrite this equation as:
\begin{equation}
\u_{1,tt}=c_{p}^{2}\,u_{1,11}-\frac{2\,\me}{\rho}\,P_{,1}^{D}-\frac{2\,\me+3\,\le}{\rho}\,P_{,1}^{S}.
\end{equation}
Repeating the same calculation for the other two equations, and remembering
the definition of $c_{s}$ and $\omega_{r}$ given in (\ref{eq:asintoti obliqui})
and (\ref{eq:omega}), we find
\begin{align}
\u_{\xi,tt} & =c_{s}^{2}\,u_{\xi,11}-\frac{2\,\me}{\rho}\,P_{\left(1\xi\right),1}+\omega_{r}^{2}\,\frac{\eta_{2}}{\rho}\,P_{\left[1\xi\right],1},\qquad\xi\in\left\{ 2,3\right\} .
\end{align}

\subsection*{Equations in $\protect\dev\,\protect\sym\,\protect\P_{,tt}$ }

The PDEs system

\begin{align}
\eta_{\,1}\,\textrm{dev sym}\,\P_{,tt} & =2\,\me\,\ds\left(\grad\u-\P\right)-2\,\mh\,\ds\,\P\label{eq:dev sym}\\
 & \quad-\me\,L_{c}^{2}\,\ds\left(\alpha_{1}\,\curl\,\ds\,\curl\,\P+\alpha_{2}\,\curl\,\skew\,\curl\,\P+\alpha_{3}\,\curl\left(\frac{1}{3}\tr\left(\curl\,P\right)\id\right)\right),\nonumber 
\end{align}
has only five independent equations. Setting $\mathbf{Eq}_{1}$ for
the system (\ref{eq:dev sym}) of PDEs, the five independent equations
that we will take are $\left(\mathbf{Eq}_{1}\right)_{11},$$\left(\mathbf{Eq}_{1}\right)_{12},$$\left(\mathbf{Eq}_{1}\right)_{13},$$\left(\mathbf{Eq}_{1}\right)_{23}$
and $\left(\mathbf{Eq}_{1}\right)_{22}-\left(\mathbf{Eq}_{1}\right)_{33}.$
In order to find the desired PDEs, we need the following calculations:
we have
\begin{align}
\left(\sym\,\Delta\right)_{11}=\Delta_{11} & =\alpha_{1}\left[\frac{1}{2}\left(\varepsilon_{1mn}\varepsilon_{npq}\P_{1q,pm}+\varepsilon_{1mn}\varepsilon_{1pq}\P_{nq,pm}\right)-\frac{1}{3}\underbrace{\varepsilon_{1m1}}_{0}\varepsilon_{kpq}\P_{kq,pm}\right]\\
 & \quad+\alpha_{2}\frac{1}{2}\left(\varepsilon_{1mn}\varepsilon_{npq}\P_{1q,pm}-\varepsilon_{1mn}\varepsilon_{1pq}\P_{nq,pm}\right)+\alpha_{3}\frac{1}{3}\underbrace{\varepsilon_{1m1}}_{0}\varepsilon_{kpq}\P_{kq,pm}\,,\nonumber 
\end{align}
where $\Delta$ is defined in (\ref{eq:delta}), and remembering that
$\P_{hk,pq}=0$ for every $p,q\neq1$ we find 
\begin{align*}
\left(\sym\,\Delta\right)_{11} & =\alpha_{1}\,\frac{1}{2}\left(\varepsilon_{11n}\varepsilon_{n1q}\P_{1q,11}+\varepsilon_{11n}\varepsilon_{11q}\P_{nq,11}\right)+\alpha_{2}\,\frac{1}{2}\left(\varepsilon_{11n}\varepsilon_{n1q}\P_{1q,11}-\varepsilon_{11n}\varepsilon_{11q}\P_{nq,11}\right)\equiv0.
\end{align*}
For the trace we find
\begin{align}
\textrm{tr}\,\Delta & =\frac{1}{2}\left[\alpha_{1}\left(\varepsilon_{kmn}\varepsilon_{npq}\P_{kq,pm}+\varepsilon_{kmn}\varepsilon_{kpq}\P_{nq,pm}\right)+\alpha_{2}\left(\varepsilon_{kmn}\varepsilon_{npq}\P_{kq,pm}-\varepsilon_{kmn}\varepsilon_{kpq}\P_{nq,pm}\right)\right]\nonumber \\
 & =\frac{\alpha_{1}+\alpha_{2}}{2}\,\varepsilon_{kmn}\varepsilon_{npq}\P_{kq,pm}+\frac{\alpha_{1}-\alpha_{2}}{2}\,\varepsilon_{kmn}\varepsilon_{kpq}\P_{nq,pm}\nonumber \\
 & =\frac{\alpha_{1}+\alpha_{2}}{2}\,\varepsilon_{k1n}\varepsilon_{n1q}\P_{kq,11}+\frac{\alpha_{1}-\alpha_{2}}{2}\,\varepsilon_{k1n}\varepsilon_{k1q}\P_{nq,11}\nonumber \\
 & =\frac{\alpha_{1}+\alpha_{2}}{2}\,\left(\varepsilon_{213}\varepsilon_{312}\P_{22,11}+\varepsilon_{312}\varepsilon_{213}\P_{33,11}\right)+\frac{\alpha_{1}-\alpha_{2}}{2}\,\left(\varepsilon_{213}\varepsilon_{213}\P_{33,11}+\varepsilon_{312}\varepsilon_{312}\P_{22,11}\right)\nonumber \\
 & =\frac{\alpha_{1}+\alpha_{2}}{2}\,\left(-\P_{22,11}-\P_{33,11}\right)+\frac{\alpha_{1}-\alpha_{2}}{2}\,\left(\P_{33,11}+\P_{22,11}\right)\nonumber \\
 & =-\alpha_{2}\left(P_{22,11}+P_{33,11}\right)=-\alpha_{2}\left(2\,\P_{,11}^{S}-P_{,11}^{D}\right).\label{eq: trace delta}
\end{align}

\subsubsection*{Equation 1}

We have

\begin{align*}
\left(\dev\,\sym\,\Delta\right)_{11} & =\left(\sym\,\Delta\right)_{11}-\frac{1}{3}\,\textrm{tr}\,\Delta=\frac{\alpha_{2}}{3}\left(2\,\P_{,11}^{S}-P_{,11}^{D}\right).
\end{align*}
Thus the first equation is
\begin{align*}
\eta_{\,1}\P_{,tt}^{D} & =2\,\me\left(\frac{2}{3}\,u_{1,1}-\P^{D}\right)-2\,\mh P^{D}-\me L_{c}^{2}\,\frac{\alpha_{2}}{3}\,\left(2\,\P_{,11}^{S}-P_{,11}^{D}\right),
\end{align*}
and remembering the definitions of $\omega_{s}$ and $c_{m}$ given
in (\ref{eq:omega}) and (\ref{eq:asintoti obliqui}) we find
\begin{align}
\P_{,tt}^{D} & =\frac{4}{3}\,\frac{\me}{\eta_{\,1}}\,u_{1,1}+\frac{1}{3}\,\frac{\alpha_{2}}{\eta_{\,1}}\,\me L_{c}^{2}\,P_{,11}^{D}-\frac{2}{3}\,\frac{\alpha_{2}}{\eta_{\,1}}\,\me L_{c}^{2}\,\P_{,11}^{S}-\omega_{s}^{2}\,\P^{D}.
\end{align}

\subsubsection*{Equations 2,3}

\begin{align}
\eta_{\,1}\P_{\left(1\xi\right),tt} & =2\,\me\left(u_{\left(1,\xi\right)}-\P_{\left(1\xi\right)}\right)-2\,\mh\P_{\left(1\xi\right)}-\me L_{c}^{2}\,\Delta_{\left(1\xi\right)}\\
 & =\me\,u_{\xi,1}-2\left(\me+\mh\right)\P_{\left(1\xi\right)}-\me L_{c}^{2}\,\Delta_{\left(1\xi\right)}.\nonumber 
\end{align}
We have to calculate the term $\Delta_{\left(1\xi\right)}$:
\begin{align*}
\left(\curl\,\ds\,\curl\,\P\right)_{\left(1\xi\right)} & =\frac{1}{4}\left(\varepsilon_{\xi mn}\varepsilon_{npq}\P_{1q,pm}+\varepsilon_{\xi mn}\varepsilon_{1pq}\P_{nq,pm}+\varepsilon_{1mn}\varepsilon_{npq}\P_{\xi q,pm}+\varepsilon_{1mn}\varepsilon_{\xi pq}\P_{nq,pm}\right)\\
 & =\frac{1}{4}\left(\varepsilon_{\xi1n}\varepsilon_{n1q}\P_{1q,11}+\varepsilon_{\xi1n}\varepsilon_{11q}\P_{nq,11}+\varepsilon_{11n}\varepsilon_{n1q}\P_{\xi q,11}+\varepsilon_{11n}\varepsilon_{\xi1q}\P_{nq,11}\right)\\
 & =\frac{1}{4}\left(\varepsilon_{\xi1n}\varepsilon_{n1q}\P_{1q,11}\right)=-\frac{1}{4}\left(\P_{\left(1\xi\right),11}+\P_{\left[1\xi\right],11}\right),\\
\left(\curl\,\skew\,\curl\,\P\right)_{\left(1\xi\right)} & =-\frac{1}{4}\left(\P_{\left(1\xi\right),11}+\P_{\left[1\xi\right],11}\right),\\
\left(\curl\left(\frac{1}{3}\tr\left(\curl\,P\right)\id\right)\right)_{\left(1\xi\right)} & =\frac{1}{2}\left(\frac{1}{3}\varepsilon_{\xi m1}\varepsilon_{kpq}\P_{kq,pm}+\frac{1}{3}\varepsilon_{1m\xi}\varepsilon_{kpq}\P_{kq,pm}\right)=\frac{1}{2}\left(\frac{1}{3}\varepsilon_{\xi11}\varepsilon_{k1q}\P_{kq,11}+\frac{1}{3}\varepsilon_{11\xi}\varepsilon_{k1q}\P_{kq,11}\right)=0.
\end{align*}
So we have 
\begin{align}
\eta_{\,1}\P_{\left(1\xi\right),tt} & =2\,\me\left(u_{\left(1,\xi\right)}-\P_{\left(1\xi\right)}\right)-2\,\mh\P_{\left(1\xi\right)}+\me L_{c}^{2}\,\frac{\alpha_{1}+\alpha_{2}}{4}\left(\P_{\left(1\xi\right),11}+\P_{\left[1\xi\right],11}\right)\\
\P_{\left(1\xi\right),tt} & =\frac{\me}{\eta_{\,1}}\,u_{\left(1,\xi\right)}+\me L_{c}^{2}\,\frac{\alpha_{1}+\alpha_{2}}{4}\,\P_{\left(1\xi\right),11}+\me L_{c}^{2}\,\frac{\alpha_{1}+\alpha_{2}}{4}\,\P_{\left[1\xi\right],11}-\omega_{s}^{2}\P_{\left(1\xi\right)}.\nonumber 
\end{align}

\subsubsection*{Equation 4}

\begin{align}
\eta_{\,1}\P_{\left(23\right),tt} & =2\,\me\left(u_{\left(2,3\right)}-\P_{\left(23\right)}\right)-2\,\mh\P_{\left(23\right)}-\me L_{c}^{2}\,\Delta_{\left(23\right)}\\
 & =-2\left(\me+\mh\right)\P_{\left(23\right)}-\me L_{c}^{2}\,\Delta_{\left(23\right)}.\nonumber 
\end{align}
We have to calculate the term $\Delta_{\left(23\right)}$:
\begin{align*}
\left(\curl\,\ds\,\curl\,\P\right)_{\left(23\right)} & =\frac{1}{4}\left(\varepsilon_{3mn}\varepsilon_{npq}\P_{2q,pm}+\varepsilon_{3mn}\varepsilon_{2pq}\P_{nq,pm}+\varepsilon_{2mn}\varepsilon_{npq}\P_{3q,pm}+\varepsilon_{2mn}\varepsilon_{3pq}\P_{nq,pm}\right)\\
 & =\frac{1}{4}\left(\varepsilon_{31n}\varepsilon_{n1q}\P_{2q,11}+\varepsilon_{31n}\varepsilon_{21q}\P_{nq,11}+\varepsilon_{21n}\varepsilon_{n1q}\P_{3q,11}+\varepsilon_{21n}\varepsilon_{31q}\P_{nq,11}\right)\\
 & =\frac{1}{4}\left(\varepsilon_{312}\varepsilon_{213}\P_{23,11}+\varepsilon_{312}\varepsilon_{213}\P_{23,11}+\varepsilon_{213}\varepsilon_{312}\P_{32,11}+\varepsilon_{213}\varepsilon_{312}\P_{32,11}\right)\\
 & =-\frac{1}{2}\left(\P_{23,11}+\P_{32,11}\right)=-\P_{\left(23\right),11},\\
\left(\curl\,\skew\,\curl\,\P\right)_{\left(23\right)} & =\frac{1}{4}\left(\varepsilon_{312}\varepsilon_{213}\P_{23,11}-\varepsilon_{312}\varepsilon_{213}\P_{23,11}+\varepsilon_{213}\varepsilon_{312}\P_{32,11}-\varepsilon_{213}\varepsilon_{312}\P_{32,11}\right)=0,\\
\left(\curl\left(\frac{1}{3}\tr\left(\curl\,P\right)\id\right)\right)_{\left(23\right)} & =\frac{1}{2}\left(\frac{1}{3}\varepsilon_{3m2}\varepsilon_{kpq}\P_{kq,pm}+\frac{1}{3}\varepsilon_{2m3}\varepsilon_{kpq}\P_{kq,pm}\right)=\frac{1}{2}\left(\frac{1}{3}\varepsilon_{312}\varepsilon_{k1q}\P_{kq,11}+\frac{1}{3}\varepsilon_{213}\varepsilon_{k1q}\P_{kq,11}\right)=0.
\end{align*}
Thus we have 
\begin{align}
\eta_{\,1}\P_{\left(23\right),tt} & =-2\left(\me+\mh\right)\P_{\left(23\right)}+\me L_{c}^{2}\,\alpha_{1}\P_{\left(23\right),11}\nonumber \\
\P_{\left(23\right),tt} & =-\omega_{s}^{2}\,\P_{\left(23\right)}+\left(c_{\textrm{m}}^{\textrm{d}}\right)^{2}\,\P_{\left(23\right),11}\,.
\end{align}

\subsubsection*{Equation 5}

We have to determine $\left(\mathbf{Eq}_{1}\right)_{22}-\left(\mathbf{Eq}_{1}\right)_{33}.$
So

\begin{align*}
\eta_{\,1}\P_{2,tt}^{D} & =2\,\me\left(u_{2,2}-\frac{1}{3}\,u_{k,k}-\P_{2}^{D}\right)-2\,\mh\P_{2}^{D}-\me L_{c}^{2}\left(\Delta_{22}-\frac{1}{3}\,\textrm{tr}\left(\Delta\right)\right)\\
 & =\frac{2}{3}\,\me u_{1,1}-\left(2\,\me+2\,\mh\right)\P_{2}^{D}-\me L_{c}^{2}\left(\Delta_{22}-\frac{1}{3}\,\textrm{tr}\left(\Delta\right)\right),
\end{align*}
and 
\begin{align*}
\eta_{\,1}\P_{3,tt}^{D} & =2\,\me\left(u_{3,3}-\frac{1}{3}\,u_{k,k}-\P_{3}^{D}\right)-2\,\mh\P_{3}^{D}-\me L_{c}^{2}\left(\Delta_{33}-\frac{1}{3}\,\textrm{tr}\left(\Delta\right)\right)\\
 & =\frac{2}{3}\,\me u_{1,1}-\left(2\,\me+2\,\mh\right)\P_{3}^{D}-\me L_{c}^{2}\left(\Delta_{33}-\frac{1}{3}\,\textrm{tr}\left(\Delta\right)\right).
\end{align*}
Thus, for $\left(\mathbf{Eq}_{1}\right)_{22}-\left(\mathbf{Eq}_{1}\right)_{33}$
we find 
\begin{align}
\eta_{\,1}\P_{,tt}^{V} & =-\left(2\,\me+2\,\mh\right)\P^{V}-\me L_{c}^{2}\left(\Delta_{22}-\Delta_{33}\right)
\end{align}
and having that 
\begin{align}
\left(\curl\,\ds\,\curl\,\P\right)_{22} & =\frac{1}{2}\left(\varepsilon_{2mn}\varepsilon_{npq}\P_{2q,pm}+\varepsilon_{2mn}\varepsilon_{2pq}\P_{nq,pm}\right)-\frac{1}{3}\,\varepsilon_{2m2}\varepsilon_{kpq}\P_{kq,pm}\nonumber \\
 & =\frac{1}{2}\left(\varepsilon_{21n}\varepsilon_{n1q}\P_{2q,11}+\varepsilon_{21n}\varepsilon_{21q}\P_{nq,11}\right)\nonumber \\
 & =\frac{1}{2}\left(\varepsilon_{213}\varepsilon_{312}\P_{22,11}+\varepsilon_{213}\varepsilon_{213}\P_{33,11}\right)\nonumber \\
 & =\frac{1}{2}\left(-\P_{22,11}+\P_{33,11}\right)=-\frac{1}{2}\P_{,11}^{V}\\
 & =-\left(\curl\,\ds\,\curl\,\P\right)_{33},\nonumber \\
\left(\curl\,\skew\,\curl\,\P\right)_{22} & =\frac{1}{2}\left(\varepsilon_{2mn}\varepsilon_{npq}\P_{2q,pm}-\varepsilon_{2mn}\varepsilon_{2pq}\P_{nq,pm}\right)\nonumber \\
 & =\frac{1}{2}\left(\varepsilon_{213}\varepsilon_{312}\P_{22,11}-\varepsilon_{213}\varepsilon_{213}\P_{33,11}\right)\nonumber \\
 & =-\frac{1}{2}\left(\P_{22,11}+\P_{33,11}\right)=\left(\curl\,\skew\,\curl\,\P\right)_{33},\nonumber \\
\left(\curl\left(\frac{1}{3}\tr\left(\curl\,P\right)\id\right)\right)_{22} & =0=\left(\curl\left(\frac{1}{3}\tr\left(\curl\,P\right)\id\right)\right)_{33}.\nonumber 
\end{align}
Thus we have
\begin{align}
\eta_{\,1}\P_{,tt}^{V} & =-\left(2\,\me+2\,\mh\right)\P^{V}+\me L_{c}^{2}\,\alpha_{1}\,\P_{,11}^{V}\nonumber \\
\P_{,tt}^{V} & =-\omega_{s}^{2}\P^{V}+\left(c_{\textrm{m}}^{d}\right)^{2}\P_{,11}^{V}.
\end{align}

\subsection*{Equations in $\protect\skew\,\protect\P_{,tt}$}

The PDEs system

\begin{align*}
\eta_{\,2}\,\textrm{skew}\,\P_{,tt} & =2\,\mc\,\skew\left(\grad\u-\P\right)\\
 & \quad-\me\,L_{c}^{2}\,\skew\left(\alpha_{1}\,\curl\,\ds\,\curl\,\P+\alpha_{2}\,\curl\,\skew\,\curl\,\P+\alpha_{3}\,\curl\left(\frac{1}{3}\tr\left(\curl\,P\right)\id\right)\right),
\end{align*}
has only three independent equations.

\subsubsection*{Equations 1,2}

\begin{align}
\eta_{\,2}\P_{\left[1\xi\right],tt} & =2\,\mc\,\left(\u_{\left[1,\xi\right]}-\P_{\left[1\xi\right]}\right)-\me\,L_{c}^{2}\,\Delta_{\left[1\xi\right]}\\
 & =2\,\mc\,\left(-\u_{\xi,1}-\P_{\left[1\xi\right]}\right)-\me\,L_{c}^{2}\,\Delta_{\left[1\xi\right]},\qquad\xi\in\left\{ 2,3\right\} \nonumber 
\end{align}
we have to calculate the term $\Delta_{\left[1\xi\right]}$:
\begin{align*}
\left(\curl\,\ds\,\curl\,\P\right)_{\left[1\xi\right]} & =\frac{1}{4}\left(\varepsilon_{\xi mn}\varepsilon_{npq}\P_{1q,pm}+\varepsilon_{\xi mn}\varepsilon_{1pq}\P_{nq,pm}-\varepsilon_{1mn}\varepsilon_{npq}\P_{\xi q,pm}-\varepsilon_{1mn}\varepsilon_{\xi pq}\P_{nq,pm}\right)\\
 & =\frac{1}{4}\left(\varepsilon_{\xi1n}\varepsilon_{n1q}\P_{1q,11}+\varepsilon_{\xi1n}\varepsilon_{11q}\P_{nq,11}-\varepsilon_{11n}\varepsilon_{n1q}\P_{\xi q,11}-\varepsilon_{11n}\varepsilon_{\xi1q}\P_{nq,11}\right)\\
 & =\frac{1}{4}\left(\varepsilon_{\xi1n}\varepsilon_{n1q}\P_{1q,11}\right)=-\frac{1}{4}\left(\P_{\left(1\xi\right),11}+\P_{\left[1\xi\right],11}\right),\\
\left(\curl\,\skew\,\curl\,\P\right)_{\left[1\xi\right]} & =-\frac{1}{4}\left(\P_{\left(1\xi\right),11}+\P_{\left[1\xi\right],11}\right),\\
\left(\curl\left(\frac{1}{3}\tr\left(\curl\,P\right)\id\right)\right)_{\left[1\xi\right]} & =\frac{1}{2}\left(\frac{1}{3}\varepsilon_{\xi m1}\varepsilon_{kpq}\P_{kq,pm}-\frac{1}{3}\varepsilon_{1m\xi}\varepsilon_{kpq}\P_{kq,pm}\right)=\frac{1}{2}\left(\frac{1}{3}\varepsilon_{\xi11}\varepsilon_{k1q}\P_{kq,11}-\frac{1}{3}\varepsilon_{11\xi}\varepsilon_{k1q}\P_{kq,11}\right)=0.
\end{align*}
So we have

\begin{align*}
\eta_{\,2}\P_{\left[1\xi\right],tt} & =2\,\mc\,\left(\u_{\left[1,\xi\right]}-\P_{\left[1\xi\right]}\right)-\me\,L_{c}^{2}\,\Delta_{\left[1\xi\right]}\\
\P_{\left[1\xi\right],tt} & =-\frac{1}{2}\,\omega_{r}^{2}\,\u_{\xi,1}-\omega_{r}^{2}\,\P_{\left[1\xi\right]}+\me\,L_{c}^{2}\,\frac{\alpha_{1}+\alpha_{2}}{4\,\eta_{2}}\left(\P_{\left(1\xi\right),11}+\P_{\left[1\xi\right],11}\right).
\end{align*}

\subsubsection*{Equation 3}

\begin{align}
\eta_{\,2}\P_{\left[23\right],tt} & =2\,\mc\left(u_{\left[2,3\right]}-\P_{\left[23\right]}\right)-\me L_{c}^{2}\,\Delta_{\left[23\right]}=-2\,\mc\,\P_{\left[23\right]}-\me L_{c}^{2}\,\Delta_{\left[23\right]}.
\end{align}
We have to calculate the term $\Delta_{\left[23\right]}$:
\begin{align*}
\left(\curl\,\ds\,\curl\,\P\right)_{\left[23\right]} & =\frac{1}{4}\left(\varepsilon_{3mn}\varepsilon_{npq}\P_{2q,pm}+\varepsilon_{3mn}\varepsilon_{2pq}\P_{nq,pm}-\varepsilon_{2mn}\varepsilon_{npq}\P_{3q,pm}-\varepsilon_{2mn}\varepsilon_{3pq}\P_{nq,pm}\right)\\
 & \quad-\frac{1}{3}\,\varepsilon_{3m2}\varepsilon_{kpq}\P_{kq,pm}\\
 & =\frac{1}{4}\left(\varepsilon_{31n}\varepsilon_{n1q}\P_{2q,11}+\varepsilon_{31n}\varepsilon_{21q}\P_{nq,11}-\varepsilon_{21n}\varepsilon_{n1q}\P_{3q,11}-\varepsilon_{21n}\varepsilon_{31q}\P_{nq,11}\right)\\
 & \quad-\frac{1}{3}\,\varepsilon_{312}\varepsilon_{k1q}\P_{kq,11}\\
 & =\frac{1}{4}\left(\varepsilon_{312}\varepsilon_{213}\P_{23,11}+\varepsilon_{312}\varepsilon_{213}\P_{23,11}-\varepsilon_{213}\varepsilon_{312}\P_{32,11}-\varepsilon_{213}\varepsilon_{312}\P_{32,11}\right)\\
 & \quad-\frac{1}{3}\left(\varepsilon_{213}\P_{23,11}+\varepsilon_{312}\P_{32,11}\right)\\
 & =\frac{1}{2}\left(-\P_{23,11}+\P_{32,11}\right)-\frac{1}{3}\left(-\P_{23,11}+\P_{32,11}\right)=-\frac{1}{3}\P_{\left[23\right],11},\\
\left(\curl\,\skew\,\curl\,\P\right)_{\left[23\right]} & =\frac{1}{4}\left(\varepsilon_{312}\varepsilon_{213}\P_{23,11}-\varepsilon_{312}\varepsilon_{213}\P_{23,11}+\varepsilon_{213}\varepsilon_{312}\P_{32,11}-\varepsilon_{213}\varepsilon_{312}\P_{32,11}\right)=0,\\
\left(\curl\left(\frac{1}{3}\tr\left(\curl\,P\right)\id\right)\right)_{\left[23\right]} & =\frac{1}{2}\left(\frac{1}{3}\varepsilon_{3m2}\varepsilon_{kpq}\P_{kq,pm}-\frac{1}{3}\varepsilon_{2m3}\varepsilon_{kpq}\P_{kq,pm}\right)=\frac{1}{2}\left(\frac{1}{3}\varepsilon_{312}\varepsilon_{k1q}\P_{kq,11}-\frac{1}{3}\varepsilon_{213}\varepsilon_{k1q}\P_{kq,11}\right)\\
 & =\frac{1}{6}\left(\varepsilon_{213}\P_{23,11}+\varepsilon_{312}\P_{32,11}+\varepsilon_{213}\P_{23,11}+\varepsilon_{312}\P_{32,11}\right)=-\frac{2}{3}\P_{\left[23\right],11}.
\end{align*}
So we have 
\begin{align*}
\eta_{\,2}\P_{\left[23\right],tt} & =-2\,\mc\,\P_{\left[23\right]}+\me L_{c}^{2}\,\left(\frac{\alpha_{1}+2\,\alpha_{3}}{3}\right)\P_{\left[23\right],11}\\
\P_{\left[23\right],tt} & =-\omega_{r}^{2}\,\P_{\left[23\right]}+\left(c_{\mathrm{m}}^{\mathrm{vd}}\right)^{2}\P_{\left[23\right],11}.
\end{align*}

\subsection*{Equations in the spherical part of $\protect\P_{,tt}$}

\begin{align*}
\frac{1}{3}\,\eta_{\,3}\,\textrm{tr}\left(\P_{,tt}\right) & {\displaystyle \,\,=\left(\frac{2}{3}\,\me+\le\right)\,\tr\left(\grad\u-\P\right)-\left(\frac{2}{3}\,\mh+\lh\right)\,\tr\left(\P\right)}\\
 & \quad-\me\,L_{c}^{2}\,\frac{1}{3}\tr\left(\alpha_{1}\,\curl\,\ds\,\curl\,\P+\alpha_{2}\,\curl\,\skew\,\curl\,\P+\frac{\alpha_{3}}{3}\,\curl\left(\tr\left(\curl\,P\right)\id\right)\right).
\end{align*}
Considering the expression (\ref{eq: trace delta}) for the $\textrm{tr}\left(\Delta\right)$,
we have 
\begin{align*}
\P_{,tt}^{S} & =\frac{2\,\me+3\,\le}{3\,\eta_{3}}\,u_{1,1}-\omega_{p}^{2}\,P^{S}+\frac{\me\,L_{c}^{2}\,\alpha_{2}}{\eta_{3}}\left(\frac{2}{3}\,\P_{,11}^{S}-\frac{1}{3}P_{,11}^{D}\right).
\end{align*}

\subsection{Determination of slopes of the acoustic branches}

We want to evaluate the first and second derivative in 0 of the expressions
\[
{\textstyle \det}\,\mathsf{E}_{\a}\left(k,\widehat{\omega}{}_{\textrm{aco};\alpha}\left(k\right)\right)=\sum_{p,q=1}^{3}\psi_{pq}^{\left(\alpha\right)}\left(\boldsymbol{m}\right)k^{2p}\widehat{\omega}_{\textrm{aco};\alpha}^{2q}\left(k\right)+\sum_{p=1}^{3}\varphi_{p}^{\left(\a\right)}\left(\boldsymbol{m}\right)k^{2p}+\sum_{q=1}^{3}\zeta_{q}^{\left(\a\right)}\left(\boldsymbol{m}\right)\widehat{\omega}_{\textrm{aco};\alpha}^{2q}\left(k\right)+\sigma^{\left(\a\right)}\left(\boldsymbol{m}\right),
\]
with $\a\in\left\{ 1,2\right\} $. In order to work with a more readable
notation, in what follows we suppress the dependence by $\boldsymbol{m}$
and $k$ of the relative functions. The derivative of $\widehat{\omega}{}_{\textrm{aco};\alpha}$
with respect to $k$ is denoted by $\widehat{\omega}'_{\textrm{aco};\alpha}$
. We have
\begin{align*}
\frac{d}{dk}\,{\textstyle \det}\,\mathsf{E}_{\a}\left(k,\widehat{\omega}{}_{\textrm{aco};\alpha}\right) & =\sum_{p,q=1}^{3}\psi_{pq}^{\left(\a\right)}\left(2p\,k^{2p-1}\widehat{\omega}_{\textrm{aco};\alpha}^{2q}+2q\,k^{2p}\widehat{\omega}_{\textrm{aco};\alpha}^{2q-1}\,\widehat{\omega}'_{\textrm{aco};\a}\right)\\
 & \quad+\sum_{p=1}^{3}2p\,\varphi_{p}^{\left(\a\right)}k^{2p-1}+\sum_{q=1}^{3}2q\,\zeta_{q}^{\left(\a\right)}\widehat{\omega}_{\textrm{aco};\alpha}^{2q-1}\,\widehat{\omega}'_{\textrm{aco};\a},
\end{align*}
and so, remembering that $\widehat{\omega}_{\textrm{aco};\alpha}\left(0\right)=0$,
this condition does not give any information on the value of $\widehat{\omega}'_{\textrm{aco};\alpha}\left(0\right)$.
For this reason, we continue with the second derivative. We compute
separately the derivative of the four terms of the first derivative
of ${\textstyle \det}\,\mathsf{E}_{\a}\left(k,\widehat{\omega}{}_{\textrm{aco};\alpha}\left(k\right)\right)$
obtaining:
\begin{align*}
\frac{d}{dk}\,\sum_{p,q=1}^{3}2p\,\psi_{pq}^{\left(\a\right)}k^{2p-1}\widehat{\omega}_{\textrm{aco};\alpha}^{2q} & =\sum_{p,q=1}^{3}2p\,\psi_{pq}^{\left(\a\right)}\left(\left(2p-1\right)k^{2p-2}\widehat{\omega}_{\textrm{aco};\alpha}^{2q}+2q\,k^{2p-1}\widehat{\omega}_{\textrm{aco};\alpha}^{2q-1}\widehat{\omega}'_{\textrm{aco};\a}\right),\\
\frac{d}{dk}\,\sum_{p,q=1}^{3}2q\,\psi_{pq}^{\left(\a\right)}k^{2p}\widehat{\omega}_{\textrm{aco};\alpha}^{2q-1}\,\widehat{\omega}'_{\textrm{aco};\a} & =\sum_{p,q=1}^{3}2q\,\psi_{pq}^{\left(\alpha\right)}\left(2p\,k^{2p-1}\widehat{\omega}_{\textrm{aco};\alpha}^{2q-1}\,\widehat{\omega}'_{\textrm{aco};\a}+k^{2p}\left(\left(2q-1\right)\widehat{\omega}_{\textrm{aco};\alpha}^{2q-2}\,\left(\widehat{\omega}'_{\textrm{aco};\a}\right)^{2}+\widehat{\omega}_{\textrm{aco};\alpha}^{2q-1}\,\widehat{\omega}''_{\textrm{aco};\a}\right)\right),\\
\frac{d}{dk}\,\sum_{p=1}^{3}2p\,\varphi_{p}^{\left(\a\right)}k^{2p-1} & =\sum_{p=2}^{3}2p\left(2p-1\right)\varphi_{p}^{\left(\a\right)}k^{2p-2}+2\varphi_{1}^{\left(\a\right)},\\
\frac{d}{dk}\,\sum_{q=1}^{3}2q\,\zeta_{q}^{\left(\a\right)}\widehat{\omega}_{\textrm{aco};\alpha}^{2q-1}\,\widehat{\omega}'_{\textrm{aco};\a} & =\sum_{p=2}^{3}2p\,\zeta_{q}^{\left(\a\right)}\left(\left(2q-1\right)\widehat{\omega}_{\textrm{aco};\alpha}^{2q-2}\,\left(\widehat{\omega}'_{\textrm{aco};\a}\right)^{2}+\widehat{\omega}_{\textrm{aco};\alpha}^{2q-1}\,\widehat{\omega}''_{\textrm{aco};\a}\right)+2\,\zeta_{1}^{\left(\a\right)}\left(\left(\widehat{\omega}'_{\textrm{aco};\a}\right)^{2}+\widehat{\omega}_{\textrm{aco};\alpha}\,\widehat{\omega}''_{\textrm{aco};\a}\right).
\end{align*}
Thus
\begin{equation}
0=\frac{d^{2}}{dk^{2}}\,\left.{\textstyle \det}\,\mathsf{E}_{\a}\left(k,\widehat{\omega}{}_{\textrm{aco};\alpha}\left(k\right)\right)\right|_{k=0}=2\,\zeta_{1}^{\left(\a\right)}\left(\widehat{\omega}'_{\textrm{aco};\a}\left(0\right)\right)^{2}+2\,\varphi_{1}^{\left(\a\right)}.\label{eq: tang in 0}
\end{equation}
Solving the equations (\ref{eq: tang in 0}) we find
\begin{gather*}
2\,\zeta_{2}^{\left(\a\right)}\left(\widehat{\omega}'_{\textrm{aco};\a}\left(0\right)\right)^{2}+2\,\varphi_{1}^{\left(\a\right)}=0\qquad\Longleftrightarrow\qquad\left(\widehat{\omega}'_{\textrm{aco};\a}\left(0\right)\right)^{2}=\frac{-\,\varphi_{1}^{\left(\a\right)}}{\zeta_{1}^{\left(\a\right)}}
\end{gather*}
and so, considering only the positive roots, 
\begin{gather*}
\widehat{\omega}'_{\textrm{aco};1}\left(0\right)=\sqrt{\frac{-\,\varphi_{1}^{\left(1\right)}}{\zeta_{1}^{\left(1\right)}}},\qquad\textrm{and}\qquad\widehat{\omega}'_{\textrm{aco};2}\left(0\right)=\sqrt{\frac{-\,\varphi_{1}^{\left(2\right)}}{\zeta_{1}^{\left(2\right)}}}.
\end{gather*}

\subsection{Derivation of strong equations for the Cosserat model and indeterminate
couple stress model}

In this appendix we give the strong field equations for the Cosserat
model and the indeterminate couple stress model.

\subsubsection*{Cosserat model}

The potential weighted energy for the Cosserat model is the following
\begin{align}
W_{\textrm{cos}} & =\me\left\Vert \sym\,\nabla u\right\Vert ^{2}+\mc\left\Vert \skew\left(\nabla u-\P\right)\right\Vert ^{2}+\frac{\le}{2}\left(\textrm{tr}\,\nabla u\right)^{2}\label{eq:coss1-1}\\
 & \quad+\mu_{e}\,\frac{L_{c}^{2}}{2}\left(\alpha_{1}\left\Vert \ds\,\curl\,\skew\,\P\right\Vert ^{2}+\alpha_{2}\left\Vert \skew\,\curl\,\skew\,\P\right\Vert ^{2}+\frac{1}{3}\,\alpha_{3}\left(\tr\,\curl\,\skew\,\P\right)^{2}\right).\nonumber 
\end{align}
The first variation is computed exactly in the same way as in (\ref{eq:PDE system}),
giving the following system of PDEs:
\begin{align}
\rho\,u_{,tt} & =\textrm{Div}\left[2\,\me\,\sym\,\nabla u+\le\,\textrm{tr}\left(\nabla u\right)\id+2\,\mc\,\skew\left(\nabla u-P\right)\right],\label{eq:Cosserat strong}\\
\eta_{2}\,\skew\,\P_{,tt} & =-\me\,L_{c}^{2}\,\skew\,\curl\left[\alpha_{1}\dev\,\sym\,\curl\,\skew\,\P+\alpha_{2}\,\skew\,\curl\,\skew\,\P+\frac{\alpha_{3}}{3}\,\textrm{tr}\left(\curl\,\skew\,\P\right)\id\right]\nonumber \\
 & \quad+2\,\mc\,\skew\left(\nabla u-P\right).\nonumber 
\end{align}
The system of PDEs (\ref{eq:Cosserat strong}) is not really suitable
for the numerical implementation. Working only with the skew symmetric
part of the micro-distortion tensor $\P$, it is more convenient to
see $\skew\,P$ like a vector thanks to the identification of the
Lie algebra $\so$ with $\R^{3}$ by means of the $\textrm{axl}$-operator.
In this way we can work only with the six independent equations of
the system (\ref{eq:Cosserat strong}). We remember that for
\[
A=\begin{pmatrix}0 & -a_{3} & a_{2}\\
a_{3} & 0 & -a_{1}\\
-a_{2} & a_{1} & 0
\end{pmatrix}\in\so
\]
the $\textrm{axl}:\so\fr\R^{3}$ operator is defined as follows:
\[
\textrm{axl}\,A:=\left(a_{1},a_{2},a_{3}\right),\qquad\left(\textrm{axl}\,A\right)_{k}=-\frac{1}{2}\varepsilon_{ijk}A_{ij}.
\]
Thanks to the identities \cite{neff2008curl}
\begin{align*}
-\curl\,A & =\left(\nabla\axl\,A\right)^{T}-\textrm{tr}\left[\left(\nabla\axl\,A\right)^{T}\right]\id,\\
\nabla\axl\,A & =-\left(\curl\,A\right)^{T}+\frac{1}{2}\,\textrm{tr}\left[\left(\curl\,A\right)^{T}\right]\id,
\end{align*}
verified for every $A\in\so$, the system of PDEs (\ref{eq:Cosserat strong})
can be rewritten in a completely equivalent form as
\begin{align*}
\rho\,u_{,tt} & =\textrm{Div}\left[2\,\me\,\sym\,\nabla u+\le\,\textrm{tr}\left(\nabla u\right)\id+2\,\mc\,\skew\left(\nabla u-P\right)\right],\\
\eta_{2}\,\left(\axl\,\skew\,\P\right)_{,tt} & =\me\,L_{c}^{2}\,\textrm{Div}\left[\frac{\alpha_{1}}{2}\,\dev\,\sym\,\nabla\left(\axl\,\skew\,\P\right)+\frac{\alpha_{2}}{2}\,\skew\,\nabla\left(\axl\,\skew\,\P\right)+\frac{2\,\alpha_{3}}{3}\,\textrm{tr}\left(\nabla\left(\axl\,\skew\,\P\right)\right)\id\right]\\
 & \quad+2\,\mc\,\axl\,\skew\left(\nabla u-P\right).
\end{align*}

\subsubsection*{Indeterminate couple stress model}

The potential weighted energy for the indeterminate couple stress
model is the following \cite{ghiba2016variant} 
\begin{align}
W_{\textrm{ind}\,1}\left(\sym\,\nabla u,\curl\,\sym\,\grad u\right) & =\me\left\Vert \sym\,\nabla u\right\Vert ^{2}+\frac{\le}{2}\left(\textrm{tr}\,\nabla u\right)^{2}\label{eq:energy ind 1}\\
 & \quad+\me\,\frac{L_{c}^{2}}{2}\left(\alpha_{1}\left\Vert \dev\,\sym\,\curl\,\sym\,\nabla u\right\Vert ^{2}+\alpha_{2}\left\Vert \skew\,\curl\,\sym\,\nabla u\right\Vert ^{2}\right).\nonumber 
\end{align}
The first variation of this energy density gives the following system
of PDEs:
\begin{align*}
\rho\,u_{,tt} & =\textrm{Div}\left[2\,\me\,\sym\,\nabla u+\le\,\textrm{tr}\left(\nabla u\right)\id\right]\\
 & \quad+\textrm{Div}\left[\me\,L_{c}^{2}\,\sym\,\curl\,\left(2\,\alpha_{1}\,\dev\,\sym\,\curl\,\sym\nabla u+2\,\alpha_{2}\,\skew\,\curl\,\sym\nabla u\right)\right].
\end{align*}
The problem can be completely reformulated in terms of the gradient
of the skew symmetric part of $\grad u$. Indeed, thanks to the equivalence
\[
\grad\left(\axl\,\skew\,\grad u\right)=\left(\curl\,\sym\,\nabla u\right)^{T},
\]
the following energy density
\begin{align}
W_{\textrm{ind}\,2}\left(\sym\,\nabla u,\grad\left(\axl\,\skew\,\grad u\right)\right) & =\me\left\Vert \sym\,\nabla u\right\Vert ^{2}+\frac{\le}{2}\left(\textrm{tr}\,\nabla u\right)^{2}\label{eq:energy ind 2}\\
 & \quad+\me\,\frac{L_{c}^{2}}{2}\left(\alpha_{1}\left\Vert \dev\,\sym\,\grad\left(\axl\,\skew\,\grad u\right)\right\Vert ^{2}+\alpha_{2}\left\Vert \skew\,\grad\left(\axl\,\skew\,\grad u\right)\right\Vert ^{2}\right).\nonumber 
\end{align}
 is completely equivalent to (\ref{eq:energy ind 1}). The associated
system of Euler- Lagrange equations is

\begin{align*}
\rho\,u_{,tt} & =\textrm{Div}\left[2\,\me\,\sym\,\nabla u+\le\,\textrm{tr}\left(\nabla u\right)\id\right]\\
 & \quad-\textrm{Div}\left[\me\,L_{c}^{2}\,\textrm{anti}\left\{ \textrm{Div}\,\left(\alpha_{1}\,\dev\,\sym\,\grad\left(\axl\,\skew\,\grad u\right)+\alpha_{2}\,\skew\,\grad\left(\axl\,\skew\,\grad u\right)\right)\right\} \right],
\end{align*}
where $\textrm{anti}$ is the inverse operator of $\axl$ defined
as follows
\[
\textrm{anti}:\R^{3}\fr\so,\qquad\left(\textrm{anti}\left(u\right)\right)_{ij}=-\varepsilon_{ijk}u_{k}.
\]

{\footnotesize{}\bibliographystyle{plain}
\bibliography{biblio}
}{\footnotesize \par}
\end{document}